\newcommand*{\dif}{\mathop{}\!\mathrm{d}}
\newcommand{\sign}[1]{\mathrm{sgn}(#1)}
\DeclareMathOperator*{\res}{Res}
\DeclareMathOperator*{\im}{Im}
\DeclareMathOperator*{\re}{Re}
\newtheorem{theorem}{Theorem}[section]
\newtheorem{lemma}[theorem]{Lemma}
\newtheorem{proposition}[theorem]{Proposition}
\newtheorem{definition}[theorem]{Definition}
\newtheorem{dbar-RHP}[theorem]{$\bar{\partial}$-RH problem}
\theoremstyle{remark}
\newtheorem{remark}[theorem]{Remark}
\theoremstyle{definition}
\numberwithin{equation}{section}
\renewcommand{\thefootnote}{\fnsymbol{footnote}}
\begin{document}

\title{Confluent hypergeometric kernel determinant on multiple large intervals}

\author{Taiyang Xu\footnotemark[1], \quad Lun Zhang\footnotemark[1]~\footnotemark[2], \quad Zhengyang Zhao\footnotemark[1]}

\renewcommand{\thefootnote}{\fnsymbol{footnote}}
\footnotetext[1]{School of Mathematical Sciences, Fudan University, Shanghai 200433, China.
E-mail: \texttt{\{tyxu19,lunzhang\}@fudan.edu.cn} and \texttt{zhaozy24@m.fudan.edu.cn}.}
\footnotetext[2]{Center for Applied Mathematics and Shanghai Key Laboratory for Contemporary Applied Mathematics, Fudan University, Shanghai 200433, China.}
\date{\today}

\maketitle
	
\begin{abstract}
The confluent hypergeometric point process represents a universality class which arises in a variety of different but related areas. It particularly describes the local statistics of eigenvalues in the bulk of spectrum near a Fisher-Hartwig singular point for a broad class of unitary ensembles. It is the aim of this work to investigate large gap asymptotics of this process over a union of disjoint intervals $\cup_{j=0}^{n}(sa_j,sb_j)$, where $a_0<b_0<\dots<a_m<0<b_m<\dots<a_n<b_n$ for some $0\leq m \leq n$. 
As $s\to +\infty$, we establish a general asymptotic formula up to and including the oscillatory term of order $1$, which involves a $\theta$-functions-combination integral along a linear flow on an $n$-dimensional torus. If the linear flow has  ``good Diophantine properties'' or the ergodic properties, we further improve the error estimate or the leading term for the asymptotics of the integral. These results can be combined for the case $n=1$, which lead to a precise large gap asymptotics up to an undetermined constant.
\\
\\
{\bf Keywords:} Confluent hypergeometric point process, Fredholm determinant, large gap asymptotics, Riemann-Hilbert problem.
\\
\\{\bf AMS subject classifications:} 33C15, 41A60, 60B20, 60G55.
\end{abstract}

\setcounter{tocdepth}{2} \tableofcontents

\section{Introduction}
The confluent hypergeometric kernel with two parameters $\alpha>-1/2$ and $\beta\in i\mathbb{R}$ is defined by 
\begin{align}
\label{def:CHF kernel with two parameters}
    K^{(\alpha,\hspace*{0.1em} \beta)}(x,y)=\frac{1}{2\pi i}\frac{\Gamma(1+\alpha+\beta)\Gamma(1+\alpha-\beta)}{\Gamma(1+2\alpha)^2}\frac{A(x)B(y)-A(y)B(x)}{x-y},
\end{align}
where $\Gamma(z)$ denotes the usual Gamma function and
\begin{align}
    &A(x):={{\chi}_{\beta}(x)}^{\frac12}|2x|^{\alpha}e^{-ix}\phi(1+\alpha+\beta, 1+2\alpha; 2ix),\label{def: A(x)}\\
    &B(x):={{\chi}_{\beta}(x)}^{\frac12}|2x|^{\alpha}e^{ix}\phi(1+\alpha-\beta, 1+2\alpha; -2ix),
    \label{def: B(x)}
\end{align}
with
\begin{align}
    &{\chi}_{\beta}(x):=\left\{
    \begin{aligned}
    &e^{-i\pi\beta}, \ & x\geqslant 0, \\
    &e^{i\pi\beta}, \ & x<0.
    \end{aligned}
    \right.
\end{align}
Here, 
\begin{align}\label{equ:expansion for phi}
    \phi(a,b;z)=\sum_{n=0}^{\infty}\frac{(a)_{n}}{(b)_n}\frac{z^n}{n!}, \qquad b\neq 0,-1,-2,\ldots,
\end{align}
is the confluent hypergeometric function (cf. \cite[Chapter 13]{NISTbook}), where  $(z)_{n}:=z(z+1)\cdots(z+n-1)=\Gamma(z+n)/\Gamma(z)$ is the Pochhammer symbol.

Since $\phi(a,b;z)$ is entire in $a$ and $z$, it is easily seen that the functions $A(x)$ and $B(x)$ are complex conjugated. This also implies that $K^{(\alpha,\hspace*{0.1em} \beta)}(x,y)$ is real-valued for $x,y\in\mathbb{R}$. Using the recurrence formulae for $\phi(a,b;z)$ (see \eqref{equ:CHF relation-3} and \eqref{equ:CHF relation-4} below) and the reverse formula
\begin{align}\label{equ:phi reverse formulae}
    \phi(a,b;z)=\phi(b-a,b;-z)e^{z},
\end{align} 
one has the following equivalent form of the confluent hypergeometric kernel:
\begin{multline}\label{def:equivalent CHF kernel}
    K^{(\alpha,\hspace*{0.1em} \beta)}(x,y)=\frac{1}{\pi}\frac{\Gamma(1+\alpha+\beta)\Gamma(1+\alpha-\beta)}{(1+2\alpha)\Gamma(1+2\alpha)^2}{{\chi}_{\beta}}(x)^{\frac12}{{\chi}_{\beta}(y)}^{\frac12}e^{-i(x+y)}\frac{4^{\alpha}|xy|^{\alpha}}{x-y}\\
    \times\left[x\mathcal{P}(x)\mathcal{Q}(y)-y\mathcal{P}(y)\mathcal{Q}(x)\right],
\end{multline}
where 
\begin{align}\label{def:mathcalP and mathcalQ}
\mathcal{P}(x):=\phi(1+\alpha+\beta,2+2\alpha;2ix) \quad {\rm and} \quad 
\mathcal{Q}(x):=\phi(\alpha+\beta,2\alpha;2ix).
\end{align}
The functions $\mathcal{P}$ and $\mathcal{Q}$ are related to $A$ and $B$ through the relations
\begin{align}
&x\mathcal{P}(x)=\frac{1+2\alpha}{2i}\chi_{\beta}(x)^{-1/2}|2x|^{-\alpha}e^{ix}\left(A(x)-B(x)\right),\label{equ:xmathcalP(x) relation to A and B}\\
&\mathcal{Q}(x)=\frac{1}{2}\chi_{\beta}(x)^{-1/2}|2x|^{-\alpha}e^{ix}\left[\left(1+\frac{\beta}{\alpha}\right)A(x)+\left(1-\frac{\beta}{\alpha}\right)B(x)\right].\label{equ:xmathcalQ(x) relation to A and B}
\end{align}


The determinantal point process associated with the confluent hypergeometric kernel represents a universality class which arises in a variety of different but related problems. In \cites{Bor-Ol-CMP}, Borodin and Olshanski considered the Hua-Pickrell measures (a two-parameter family of unitarily invariant probability measures) on the space of infinite Hermitian matrices. They showed that the pushforwards of these measures are determinantal on $\mathbb{R}\setminus\{0\}$ with correlation kernels given by $ K^{(\alpha,\hspace*{0.1em} \beta)}(1/x,1/y)/(xy)$. Later work of Borodin and Deift \cites{Bor-Dei-CPAM} further revealed that the hypergeometric kernel for a particle system that originates in the representation theory of the infinite-dimensional unitary group degenerates in a certain limit to the confluent hypergeometric kernel \eqref{def:CHF kernel with two parameters}. 

In the context of random matrix theory, the confluent hypergeometric point process attracts the most interest since it describes the local statistics of eigenvalues in the bulk of the spectrum near a Fisher-Hartwig singular point for a broad class of unitary ensemble of random matrices. The Fisher-Hartwig singularity is a  combination of both a root-type and a jump-type singularity that are relevant to the parameters $\alpha$ and $\beta$. For unitary random matrix ensembles generated by a concrete weight function on the unit circle with a Fisher-Hartwig singularity, we refer to \cites{Dei-Kra-Vas-IMRN} for a proof. For other random matrix ensembles, this fact is known for either $\beta=0$ or $\alpha=0$. The case when $\beta=0$ corresponds to a root-type singularity in the spectrum. In this case, we have (see \cite[Equation (13.6.9)]{NISTbook})
\begin{align}
\phi(\alpha,2\alpha;2ix)=\Gamma\left(\alpha+\frac{1}{2}\right)e^{ix}\left(\frac{x}{2}\right)^{-\alpha+\frac{1}{2}}J_{\alpha-\frac{1}{2}}(x),
\end{align}
where $J_{\nu }$, $\nu\in \mathbb{C}$, is the Bessel function of the first kind (cf. \cite[Chapter 10]{NISTbook}). This, together with \eqref{def:equivalent CHF kernel}, implies that
\begin{multline}\label{def:type-I Bessel kernel}
    K^{(\alpha,\hspace*{0.1em} 0)}(x,y)\equiv K^{(\text{Bessel1})}(x,y)
    \\
    =\left(\frac{|x|}{x}\right)^{\alpha}
    \left(\frac{|y|}{y}\right)^{\alpha}\frac{\sqrt{xy}}{2}\frac{J_{\alpha+\frac{1}{2}}(x)J_{\alpha-\frac{1}{2}}(x)-J_{\alpha+\frac{1}{2}}(y)J_{\alpha-\frac{1}{2}}(x)}{x-y}.
\end{multline}
The kernel \eqref{def:type-I Bessel kernel} is called the type-I Bessel kernel\footnote{The type-II Bessel kernel $ K^{(\text{Bessel2})}(x,y):=\frac{J_\alpha(\sqrt x)\sqrt y J'_\alpha(\sqrt y)-\sqrt x J_\alpha'(\sqrt x)J_\alpha(\sqrt y)}{2(x-y)}$ was introduced by Forrester \cite{Forr93}. The associated point process characterizes the local statistics of eigenvalues near the edge of the spectrum where the density of states exhibits an inverse square root singularity and  particularly describes the smallest eigenvalue distribution of the Laguerre-type unitary ensemble \cites{Van}.}, and has appeared in \cites{Ake-Dam-Mag-Nis-NuclearB,Kui-Van-CMP,Nag-Sle-JMP,Witte-Forrester-Nonlinearity}. The case when $\alpha=0$ corresponds to a jump-type singularity in the spectrum and the kernel \eqref{def:CHF kernel with two parameters} can be found in \cite{Foul-Mart-Sousa-CostrApp,TibboelPhdThesis} up to a scaling factor. Moreover, if $\alpha=\beta=0$, it is readily seen from \eqref{equ:expansion for phi} and \eqref{equ:phi reverse formulae} that 
\eqref{def:CHF kernel with two parameters} reduces to the sine kernel, i.e., 
\begin{align}\label{def: sine kernel}
K^{(0,\hspace*{0.1em} 0)}(x,y)\equiv K^{(\text{sine})}(x,y)=\frac{\sin (x-y)}{\pi(x-y)}.
\end{align}
It is one of the most common and well-studied correlation kernels encountered in random matrix theory.

Let
\begin{align}\label{def: Contour Sigma}
\Sigma:=\bigcup_{j=0}^{n}(a_j,b_j), 
\qquad n\in\{0\}\cup\mathbb{N},
\end{align}
be a finite union of $n+1$ disjoint intervals, where $a_j$, $b_j$, $j=0,1,\dots, n$, are some fixed real numbers. Denote by $\mathcal{K}^{(\alpha,\hspace*{0.1em} \beta)}$ the integral operator with confluent hypergeometric kernel $K^{(\alpha,\hspace*{0.1em} \beta)}$.  From the general theory of determinantal point process \cite{Sosh2000}, it is well-known that the Fredholm determinant 
\begin{align}\label{equ:CHF Fredholm det}
\mathcal{F}(\Sigma):=
\det \left(1-\mathcal{K}^{(\alpha,\hspace*{0.1em} \beta)}|_{\Sigma}\right) 
\end{align}
is the probability of finding no points (a.k.a gap probability) from the confluent hypergeometric process on $\Sigma$.  As noticed in \cite{Bor-Dei-CPAM,Witte-Forrester-Nonlinearity,Xu-Zhao-Zhao-PhysD-2024}, $\mathcal{F}$ is related to the Painlev\'{e} V system; see also the pioneering work \cites{JMMS} for the sine kernel determinant.



The fundamental problem with a long history is then to establish large gap asymptotics. For $n=0$, by considering asymptotics of the Toeplitz determinant with a Fisher-Hartwig singularity, Deift, Krasovsky and Vasilevska showed in \cite{Dei-Kra-Vas-IMRN} that 
\begin{equation}\label{eq:n=0}
 \log\mathcal{F}(-s,s)=
-\frac{s^2}{2}+2\alpha s+(\beta^2-\alpha^2-\frac{1}{4})\log s+\log C+\mathcal{O}(s^{-1}),  \qquad s\to +\infty,
\end{equation}
where 
\begin{align}
C:=\frac{\sqrt{\pi}{G}^2(1/2)G(1+2\alpha)}{2^{2\alpha^2}G(1+\alpha+\beta)G(1+\alpha-\beta)}
\end{align}
is a constant with $G(\cdot)$ being the Barnes’ $G$-function. A proof of \eqref{eq:n=0} using a different approach can be found in \cite{Xu-Zhao-CMP-2020}; see also recent works \cite{Dai-Zhai-SAPM-2022,Xu-Zhao-Zhao-PhysD-2024} for asymptotic studies of the deformed $\mathcal{F}$ in the context of thinned process. Intensive studies of large gap asymptotics on a single interval  have unraveled their rich structures and elegant forms. We particularly refer to \cites{Cloi-Mehta-JMP1973, Deift-Its-Kra-Zhou-JCAM2007, Dyson-1976-CMP, Ehrhardt-CMP-2006, Widom-1971-Indiana} for the sine process, \cites{Baik-Buckingham-DiFranco-CMP2008, Deift-Its-Kra-Zhou-CMP2008-Airy, TWAiry} for the Airy process, \cites{Dei-Kra-Vas-IMRN,Ehrhardt-AdV-2010,TWBessel} for the type-II Bessel process, \cite{CLMCMP,CLMNon,CGS} for the gerneralized Bessel process, and \cite{Dai-Xu-Zhang-Pearcey-CMP2021, YZSIAM} for the Pearcey process and its hard-edge version.

The large gap asymptotics 
acting on a union of disjoint intervals, however, has been less investigated so far. This kind of  problem was first rigorously studied by Widom for the sine process in \cite{Widom-1995-CMP}, where he obtained an explicit leading term and an oscillatory subleading term in terms of a solution of a Jacobi inversion problem. The oscillatory term was later explicitly characterized by Riemann $\theta$-functions in the landmark work \cite{Dei-Its-Zhou-Ann1997} of Deift, Its and Zhou. 
More precisely, on account of the relation \eqref{def: sine kernel}, it was shown in  
 \cite[Equation (1.35)]{Dei-Its-Zhou-Ann1997} that, as $s\to +\infty$,
\begin{align}\label{asy: sine process on n intervals}
\log{\det(1-\mathcal{K}^{(0,\hspace*{0.1em} 0)}}|_{\cup_{k=0}^{n}(sa_k,sb_k)})=
-\gamma_0s^2+\log\theta(\vec{V}(s))+\hat{G}_1\log s+o(\log s), 
\end{align}
where $\gamma_0$ is given in \eqref{equ: gamma_0} below, the $\theta$-function is defined in \eqref{def:theta func}, and $\hat{G}_1$ is a constant independent of $s$ that can be written in terms of a limit of an integral involving a combination of $\theta$-functions. 
The error estimate in \eqref{asy: sine process on n intervals} can be improved to be $\mathcal{O}(s^{-1})$ by considering the so-called ``good Diophantine properties'' of a certain vector (see Definition \ref{def:good Dio}).
A recent work of Fahs and Krasovsky \cite{Fahs-Krasovsky-2024CPAM} showed that $\hat{G}_1=-1/2$ for the case of two disjoint intervals, which corresponds to $n=1$ in \eqref{asy: sine process on n intervals}. Large gap asymptotics of the type-II Bessel process on $n$ intervals and of the Airy process on two intervals can be found in \cite{Blackstone-CC-JL-CPAM} and  \cites{Blackstone-CC-JL-RMTA,Blackstone-CC-JL-IMRN,Krasovsky-Maroudas-AdV2024}, respectively. It is worthwhile to  mention that, in the case of two intervals, the multiplicative constant is explicitly determined in \cite{Fahs-Krasovsky-2024CPAM} for the sine process and in \cite{Krasovsky-Maroudas-AdV2024} for the Airy process.


It is the aim of the present work to establish large-$s$ asymptotics of  
$
\mathcal{F}(s\Sigma)=\det\left(1-\mathcal{K}^{(\alpha, \hspace*{0.1em}\beta)}|_{s\Sigma}\right),
$
which particularly generalizes the known results in  \eqref{eq:n=0} and \eqref{asy: sine process on n intervals}. Our results are stated in the next section.


\section{Statement of results}\label{sec: statement of results}
\subsection{The Riemann surface and associated $\theta$-function}
To state the main results, we start with some ingredients from the theory of Riemann surface \cites{FK-Riemann-Surface}.


Let $\mathcal{W}$ be a two-sheeted Riemann surface of genus $n$ associated to 
\begin{align}\label{equ:sqrt(R(z))}
\sqrt{\mathcal{R}(z)}:=\sqrt{\prod_{j=0}^{n}\left(z-a_j\right)\left(z-b_j\right)}.
\end{align}
The two sheets are glued together through $\Sigma=\cup_{j=0}^{n}(a_j,b_j)$. For definiteness, we take $\sqrt{\mathcal{R}(z)}\sim z^{n+1}$ as $z\to\infty$ on the first sheet, while $\sqrt{\mathcal{R}(z)}\sim -z^{n+1}$ as $z\to\infty$ on the second sheet. Let $\{A_j\}_{j=0}^{n}$ and $\{B_j\}_{j=1}^{n}$ be the cycles illustrated in the Figure \ref{fig:A,B-Cycle}. The cycle $A_j$ lies entirely on the first sheet and surrounds the interval $(a_j,b_j)$, $j=0,\ldots,n$, the cycle $B_j$ passes from the first sheet, through the cut $(a_0,b_0)$ to the second sheet, and back again through $(a_j,b_j)$, $j=1,\ldots,n$. The cycles $\{A_j, B_j\}_{j=1}^{n}$ form the canonical 
homology basis for $\mathcal{W}$.

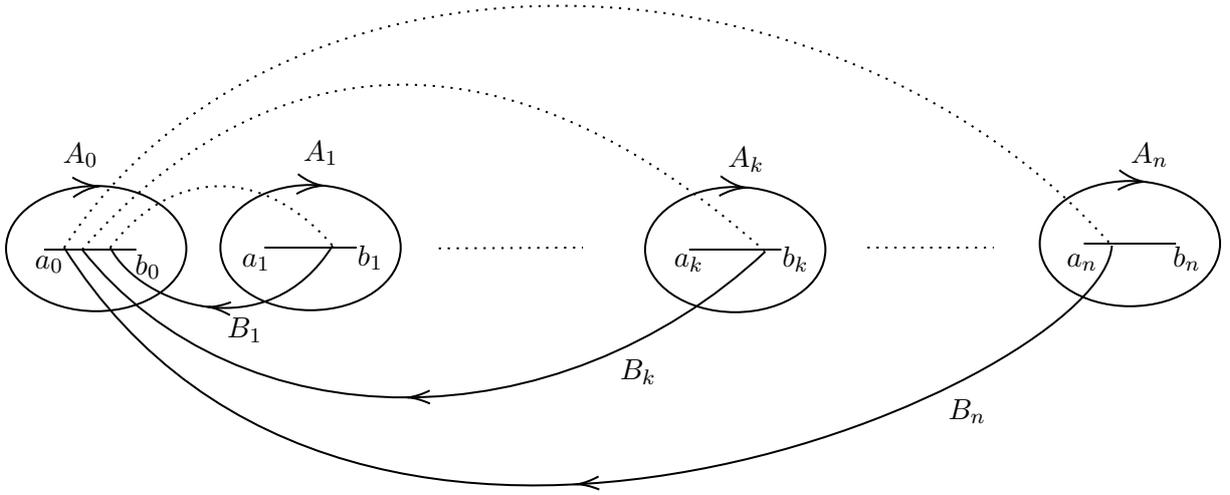
\begin{figure}[t]
\centering
\tikzset{every picture/.style={line width=0.75pt}} 
\begin{tikzpicture}[x=0.75pt,y=0.75pt,yscale=-1,xscale=1]
\draw    (33,151.6) -- (79,151.6) ;
\draw    (143,150.6) -- (189,150.6) ;
\draw    (552,148.6) -- (598,148.6) ;
\draw  [dash pattern={on 0.84pt off 2.51pt}]  (230,151) -- (302,150.6) ;
\draw  [dash pattern={on 0.84pt off 2.51pt}]  (444,150.6) -- (507,150.6) ;
\draw   (14,151.1) .. controls (14,133.7) and (34.15,119.6) .. (59,119.6) .. controls (83.85,119.6) and (104,133.7) .. (104,151.1) .. controls (104,168.5) and (83.85,182.6) .. (59,182.6) .. controls (34.15,182.6) and (14,168.5) .. (14,151.1) -- cycle ;
\draw  [dash pattern={on 0.84pt off 2.51pt}]  (43,150.6) .. controls (149,-13.4) and (426,-9.4) .. (566,149.6) ;
\draw  [dash pattern={on 0.84pt off 2.51pt}]  (52,151.6) .. controls (93,104.6) and (213,-9.4) .. (393,152.6) ;
\draw    (355,151.6) -- (401,151.6) ;
\draw  [dash pattern={on 0.84pt off 2.51pt}]  (66,150.6) .. controls (97,109.6) and (143,109.6) .. (177,149.6) ;
\draw [line width=0.75]    (66,150.6) .. controls (81,179.6) and (145,201.6) .. (177,149.6) ;
\draw [shift={(115.13,180.47)}, rotate = 4.39] [color={rgb, 255:red, 0; green, 0; blue, 0 }  ][line width=0.75]    (10.93,-3.29) .. controls (6.95,-1.4) and (3.31,-0.3) .. (0,0) .. controls (3.31,0.3) and (6.95,1.4) .. (10.93,3.29);
\draw [line width=0.75]    (52,151.6) .. controls (130,250.6) and (289,250.6) .. (393,152.6) ;
\draw [shift={(214.63,225.97)}, rotate = 358.99] [color={rgb, 255:red, 0; green, 0; blue, 0 }  ][line width=0.75]    (10.93,-3.29) .. controls (6.95,-1.4) and (3.31,-0.3) .. (0,0) .. controls (3.31,0.3) and (6.95,1.4) .. (10.93,3.29)   ;
\draw [line width=0.75]    (43,150.6) .. controls (188,382.6) and (562,217.6) .. (566,149.6) ;
\draw [shift={(299.03,269.63)}, rotate = 356.69] [color={rgb, 255:red, 0; green, 0; blue, 0 }  ][line width=0.75]    (10.93,-3.29) .. controls (6.95,-1.4) and (3.31,-0.3) .. (0,0) .. controls (3.31,0.3) and (6.95,1.4) .. (10.93,3.29)   ;
\draw   (121,150.6) .. controls (121,133.2) and (141.15,119.1) .. (166,119.1) .. controls (190.85,119.1) and (211,133.2) .. (211,150.6) .. controls (211,168) and (190.85,182.1) .. (166,182.1) .. controls (141.15,182.1) and (121,168) .. (121,150.6) -- cycle ;
\draw   (159.71,113.61) .. controls (163.7,116.69) and (167.77,118.65) .. (171.93,119.5) .. controls (167.7,119.76) and (163.39,121.13) .. (159,123.6) ;
\draw   (47.15,115.3) .. controls (51.45,117.91) and (55.72,119.42) .. (59.94,119.81) .. controls (55.76,120.53) and (51.63,122.36) .. (47.54,125.3) ;
\draw   (333,151.6) .. controls (333,134.2) and (353.15,120.1) .. (378,120.1) .. controls (402.85,120.1) and (423,134.2) .. (423,151.6) .. controls (423,169) and (402.85,183.1) .. (378,183.1) .. controls (353.15,183.1) and (333,169) .. (333,151.6) -- cycle ;
\draw   (370.71,114.61) .. controls (374.7,117.69) and (378.77,119.65) .. (382.93,120.5) .. controls (378.7,120.76) and (374.39,122.13) .. (370,124.6) ;
\draw   (530,148.6) .. controls (530,131.2) and (550.15,117.1) .. (575,117.1) .. controls (599.85,117.1) and (620,131.2) .. (620,148.6) .. controls (620,166) and (599.85,180.1) .. (575,180.1) .. controls (550.15,180.1) and (530,166) .. (530,148.6) -- cycle ;
\draw   (569.71,111.61) .. controls (573.7,114.69) and (577.77,116.65) .. (581.93,117.5) .. controls (577.7,117.76) and (573.39,119.13) .. (569,121.6) ;
\draw (27,153.4) node [anchor=north west][inner sep=0.75pt]  [font=\normalsize]  {$a_{0}$};
\draw (77,152.4) node [anchor=north west][inner sep=0.75pt]  [font=\normalsize]  {$b_{0}$};
\draw (130.2,152.4) node [anchor=north west][inner sep=0.75pt]  [font=\normalsize]  {$a_{1}$};
\draw (346,152.4) node [anchor=north west][inner sep=0.75pt]  [font=\normalsize]  {$a_{k}$};
\draw (542,152.4) node [anchor=north west][inner sep=0.75pt]  [font=\normalsize]  {$a_{n}$};
\draw (188,148) node [anchor=north west][inner sep=0.75pt]  [font=\normalsize]  {$b_{1}$};
\draw (400,148.4) node [anchor=north west][inner sep=0.75pt]  [font=\normalsize]  {$b_{k}$};
\draw (595,148.4) node [anchor=north west][inner sep=0.75pt]  [font=\normalsize]  {$b_{n}$};
\draw (41,96.4) node [anchor=north west][inner sep=0.75pt]    {$A_{0}$};
\draw (161,95.4) node [anchor=north west][inner sep=0.75pt]    {$A_{1}$};
\draw (373,98.4) node [anchor=north west][inner sep=0.75pt]    {$A_{k}$};
\draw (574,96.4) node [anchor=north west][inner sep=0.75pt]    {$A_{n}$};
\draw (123,184.4) node [anchor=north west][inner sep=0.75pt]    {$B_{1}$};
\draw (319,205.4) node [anchor=north west][inner sep=0.75pt]    {$B_{k}$};
\draw (483,225.4) node [anchor=north west][inner sep=0.75pt]    {$B_{n}$};
\end{tikzpicture}
\caption{The canonical homology basis $\{A_j, B_j\}_{j=1}^{n}$ for the Riemann surface $\mathcal{W}$.}
\label{fig:A,B-Cycle}
\end{figure}

Define an $(n+1)\times(n+1)$ matrix $\mathbb{A}$ and a column vector $\vec{a}$ by 
\begin{align}\label{def:matrix A}
\mathbb{A}:=(a_{k,l})_{0\leqslant k\leqslant n, 0\leqslant l\leqslant n}=
\begin{pmatrix}
a_{0,0} & a_{0,1} & \cdots & a_{0,n} \\
a_{1,0} & a_{1,1} & \cdots & a_{1,n} \\
\vdots  & \vdots & \vdots  & \vdots \\
a_{n,0} & a_{n,1} & \cdots & a_{n,n}
\end{pmatrix}
, \quad \vec{a}=(a_{0,n+1}, a_{1,n+1},\dots,a_{n,n+1})^{\rm T},
\end{align}
where 
\begin{align}\label{equ:a_{k,j}}
a_{k,l}:=\oint_{A_k}\frac{z^{l}}{\sqrt{\mathcal{R}(z)}}\dif z=2i(-1)^{n-k+1}\int_{a_k}^{b_k}\frac{z^{l}}{|\mathcal{R}(z)|^{\frac{1}{2}}}\dif z, 
\end{align}
and the superscript $^{\rm T}$ represents the transpose operation. As the endpoints $a_k$ and $b_k$ are real, it follows from \eqref{equ:a_{k,j}} that $a_{k,l}$ are purely imaginary for all $0\leqslant k\leqslant n, 0\leqslant l\leqslant n$. Let 
\begin{align}\label{equ:holo differential}
\vec{\omega}:=(\omega_1, \omega_2, \dots,\omega_{n})=\frac{\dif z}{\sqrt{\mathcal{R}(z)}}(1,z,\dots,z^{n-1})\tilde{\mathbb{A}}^{-1}
\end{align}
be the basis of holomorphic one-form,  where 
\begin{align}\label{def:tilde matrix A}
\tilde{\mathbb{A}}:=(a_{k, l})_{k=1,\dots,n}^{l=0,\dots,n-1}=
\begin{pmatrix}
a_{1,0} & a_{1,1} & \cdots & a_{1,n-1} \\
a_{2,0} & a_{2,1} & \cdots & a_{2,n-1} \\
\vdots  & \vdots & \vdots  & \vdots \\
a_{n,0} & a_{n,1} & \cdots & a_{n,n-1}
\end{pmatrix}
\end{align}
is invertible \cite[V.II.1, Page 324]{FK-Riemann-Surface}, and can be obtained by eliminating the first row and the last column of $\mathbb{A}$. The vector $\vec{\omega}$ satisfies the normalization conditions
\begin{align}\label{equ: normalization relation of omega}
\oint_{A_k}\omega_j=\delta_{jk}, \qquad j,k=1,\dots,n. 
\end{align}
Following the classical theory of Riemann surfaces, we introduce the associated Riemann matrix of $B_j$-period, $j=1,\dots,n$, by
\begin{equation}\label{def:tau matrix}
\tau:=\left(\tau_{ij}\right)_{i,j=1}^{n}=\left(\oint_{B_j}\omega_i\right)_{i,j=1}^{n}.
\end{equation}
By \cite[Page 63, Proposition]{FK-Riemann-Surface}, one has that $\tau$ is symmetric and has a positive definite imaginary part, i.e., $-i\tau$ is positive definite. 

The associated multi-dimensional Riemann $\theta$-function is defined by 
\begin{align}\label{def:theta func}
\theta\left(\vec{z}\right)=\sum_{\vec{m}\in\mathbb{Z}^{n}}e^{2\pi i\vec{m}^{\mathrm{T}}\vec{z}+i\pi\vec{m}^{\mathrm {T}}\tau\vec{m}}, \qquad \vec{z}=(z_1,\dots,z_n)^{\rm T}\in\mathbb{C}^{n}.
\end{align}
One could verify $\theta$ is an even entire function, and satisfies the periodic properties
\begin{align}\label{property:theta function}
\theta(\vec{z}\pm\vec{e}_j)=\theta(\vec{z}), \qquad 
\theta(\vec{z}\pm\vec{\tau}_j)=e^{\mp 2\pi iz_j-\pi i\tau_{jj}}\theta(\vec{z}),
\end{align}
where $\vec{e}_j$ denotes the standard column vector in $\mathbb{C}^{n}$ with $1$ in the $j$-th position and zero elsewhere, $\vec{\tau}_j:=\tau\vec{e}_j$ is the $j$-th column of matrix $\tau$ defined in \eqref{def:tau matrix} and $z_j$ is the $j$-th element of $\vec{z}$.

We next set
\begin{align}\label{def-intro-Omega_j}
\Omega_j:=2\sum_{k=0}^{j-1}(-1)^{n-k}\int_{b_k}^{a_{k+1}}\frac{\mathrm{p}(s)}{|\mathcal{R}(s)|^{\frac{1}{2}}}\dif s, \quad j=1,\ldots,n,
\end{align}
where 
\begin{align}\label{equ:mathbf{p}(z)}
    \mathrm{p}(z)=z^{n+1}+\sum_{j=0}^{n}p_{j}z^{j}
\end{align}
is a polynomial of degree $n+1$ satisfying
\begin{align}\label{equ:A-cycle=0}
\oint_{A_k}\frac{\mathrm{p}(s)}{\sqrt{\mathcal{R}(s)}}\dif s=0, \qquad k=0,1,\dots,n.
\end{align}
Clearly, the coefficients of $\mathrm{p}$ are uniquely given by 
\begin{align}\label{equ:coeff of p}
(p_0, \dots, p_n)^{\mathrm{T}}=-\mathbb{A}^{-1}\vec{a}.
\end{align}
We will show $\Omega_j>0$ in Proposition \ref{prop:g-func} below. 
Given $s\geqslant 0$, we then introduce a column vector 
\begin{align}\label{def-intro-vecV}
\vec{V}(s):={(V_{1}(s),V_2(s),\dots, V_n(s))}^{\rm T}\in\mathbb{R}^{n},
\end{align}
where
\begin{align}\label{def:Vj}
V_j(s)=\frac{s}{2\pi}\Omega_j+\frac{1}{2\pi}\im(\zeta_j)\in\mathbb{R}, \quad j=1,\dots,n,
\end{align}
and where $\zeta_j$ is defined in \eqref{equ: sol of zetaj} below. 

Finally, we define a function $\mathcal{L}:\mathbb{C}\times\mathbb{R}^{n}/\mathbb{Z}^{n}\to\mathbb{C}$ by 
\begin{align}\label{def-intro-mathcalL}
\mathcal{L}(z,\vec{\mu})=\frac{h(z)}{\mathrm{p}(z)}\eta(z,\vec{\mu})
\end{align}
where
\begin{align}\label{def: h(z) and eta(z,mu)}
h(z)=\prod_{k=0}^{n}(z-a_k)+\prod_{k=0}^{n}(z-b_k), \quad
\eta\left(z,\vec{\mu}\right)=\frac{\theta(\vec{0})^2\theta(\vec{\mathcal{A}}(z)+\vec{\mu}+\vec{d})\theta(\vec{\mathcal{A}}(z)-\vec{\mu}+\vec{d})}{\theta(\vec{\mu})^2\theta(\vec{\mathcal{A}}(z)+\vec{d})^2}.
\end{align}
Here, the vector $\vec{d}$ is defined in \eqref{def:choice of d} below and 
\begin{align}\label{equ:Abel map on mathbbC}
\vec{\mathcal{A}}(z):=\int_{a_0}^{z}\vec{\omega}^{\rm T},
\end{align}
where the integration is taken on the first sheet. The function $\vec{\mathcal{A}}(z)$ can be extended to the Riemann surface $\mathcal{W}$ and then becomes the Abel's map \cite{FK-Riemann-Surface}. Also, one can see that $\mathcal{L}(z,\cdot)$ is periodic in $\mathbb{R}^{n}$ with period lattice $\mathbb{Z}^{n}$.


\subsection{Large gap asymptotics}
Recall the confluent hypergeometric kernel determinant $\mathcal{F}(\Sigma)$ defined in \eqref{equ:CHF Fredholm det}. Our first theorem provides a general asymptotic formula of the determinant.


\begin{theorem}[The general case]\label{thm: general case}
Let $\Sigma=\cup_{j=0}^{n}(a_j,b_j)$ be fixed such that $a_0<b_0<\cdots<a_m<0<b_m<\cdots<a_n<b_n$ for some  $0\leqslant m\leqslant n$. For $\alpha>-1/2$ and $\beta\in i\mathbb{R}$, we have, as $s\to+\infty$,
\begin{multline}\label{asy result: general case}
\log\mathcal{F}(s\Sigma)=-\gamma_0s^2-2i\mathcal{D}_{\infty,1}s+\log\theta\left(\vec{V}(s)\right)+(\beta^2-\alpha^2)\log s 
\\ 
-\frac{1}{16}\sum_{j=0}^{n}\int_{\hat{s}}^{s}\left(\mathcal{L}\left(a_j, \vec{V}(t)\right)+\mathcal{L}\left(b_j, \vec{V}(t)\right)\right)\frac{\dif t}{t}+\breve{C}_{1}+\mathcal{O}(s^{-1}),
\end{multline}
where 
\begin{align}\label{equ: gamma_0}
\gamma_0=-\frac{1}{\pi i}\sum_{j=0}^{n}\int_{a_j}^{b_j}\frac{z\mathrm{p}(z)}{\sqrt{\mathcal{R}(z)}_{+}}\dif z\in\mathbb{R},
\end{align}
$\mathcal{D}_{\infty,1}$ given in \eqref{def:Dinfty1-a} is purely imaginary and depends on the parameters $\alpha$ and $\beta$, the vector $\vec{V}(s)\in\mathbb{R}^{n}$ is defined in \eqref{def-intro-vecV}, $\hat s>0$ is a sufficiently large number independent of $s$, $\mathcal{L}(p,\vec{V}(t))$ is real for $p\in\mathcal{I}_{e}:=\{a_j,b_j\}_{j=0}^{n}$ with the function $\mathcal{L}$ given in \eqref{def-intro-mathcalL} and  $\breve{C}_1$ is an undetermined constant independent of $s$. Moreover, for $p\in\mathcal{I}_{e}$, as $s\to+\infty$ we have 
\begin{align}\label{equ: hat{mathcalL}_p}
\int_{\hat{s}}^{s}\mathcal{L}\left(p,\vec{V}(t)\right)\frac{\dif t}{t}=\hat{\mathcal{L}}_{p}\log s+o(\log s), \quad
\hat{\mathcal{L}}_{p}:=\lim_{T\to+\infty}\frac{1}{T}\int_{0}^{T}\mathcal{L}\left(p,\vec{V}(t)\right)\dif t.
\end{align}
\end{theorem}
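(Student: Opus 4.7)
The plan is to combine a differential identity for $\partial_s \log \mathcal{F}(s\Sigma)$ with a Deift--Zhou nonlinear steepest descent analysis of the Riemann--Hilbert (RH) problem associated to the integrable kernel \eqref{def:equivalent CHF kernel} restricted to $s\Sigma$, and then integrate the identity in $s$ so that every term in \eqref{asy result: general case} is matched with a specific contribution produced by the steepest descent.

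The starting point is the $2\times 2$ RH problem $Y=Y(z;s)$ whose jumps encode $\mathcal{K}^{(\alpha,\beta)}|_{s\Sigma}$, together with a Jimbo--Miwa-type differential identity expressing $\partial_s \log \mathcal{F}(s\Sigma)$ as a finite sum of traces involving the boundary values of $Y$ at the rescaled endpoints $sa_j$, $sb_j$ and at the Fisher--Hartwig point $0$. After the rescaling $z \mapsto z/s$, I normalise $Y$ at infinity by a $g$-function built on the Riemann surface $\mathcal{W}$: the $A$-cycle conditions \eqref{equ:A-cycle=0} force the choice $g(z)=\int_{a_0}^{z}\mathrm{p}(\zeta)\mathcal{R}(\zeta)^{-1/2}\dif\zeta$, and the first two coefficients in the expansion of $g$ at $\infty$ furnish the constants $\gamma_0$ and $\mathcal{D}_{\infty,1}$ that eventually produce $-\gamma_0 s^2 - 2i\mathcal{D}_{\infty,1}s$. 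Opening lenses around each $(a_j,b_j)$ reduces the jumps to piecewise-constant matrices on $\Sigma$. The outer model is the classical Fay/$\theta$-function parametrix $P^{(\infty)}(z)$, whose entries are ratios of $\theta(\vec{\mathcal{A}}(z)\pm\vec{V}(s)+\vec d)/\theta(\vec{\mathcal{A}}(z)+\vec d)$; this is where the argument $\vec V(s)$ of \eqref{def:Vj} enters, and the normalisation at infinity contributes the factor $\theta(\vec V(s))^{-1}$. Local parametrices are Airy-type near each endpoint $a_j,b_j$ and, crucially, a confluent hypergeometric parametrix near $z=0$ whose subleading coefficient produces the Fisher--Hartwig logarithm $(\beta^2-\alpha^2)\log s$. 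The error RH problem is small-norm with $R=I+\mathcal{O}(s^{-1})$.

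Feeding the matching data back into the differential identity yields, after algebraic simplification using the $\theta$-function identities encoded in \eqref{def: h(z) and eta(z,mu)},
\begin{align*}
\partial_s \log\mathcal{F}(s\Sigma) = -2\gamma_0 s - 2i\mathcal{D}_{\infty,1} + \frac{d}{ds}\log\theta(\vec V(s)) + \frac{\beta^2-\alpha^2}{s} - \frac{1}{16s}\sum_{j=0}^{n}\bigl(\mathcal{L}(a_j,\vec V(s))+\mathcal{L}(b_j,\vec V(s))\bigr) + \mathcal{O}(s^{-2}),
\end{align*}
so that integrating from a reference point $\hat s$ to $s$ and absorbing all $\hat s$-dependent pieces into $\breve C_1$ gives \eqref{asy result: general case}. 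I expect the main obstacle to be extracting the Airy subleading contribution in the precise form $\mathcal{L}(a_j,\vec V)+\mathcal{L}(b_j,\vec V)$: one has to track the chain of transformations $Y\mapsto T\mapsto S\mapsto R$ through each endpoint disc and then apply a sequence of quotient identities for Riemann $\theta$-functions to show that the combination that actually lands in the differential identity factorises as $h(z)\mathrm{p}(z)^{-1}\eta(z,\vec V)$, with $h$ and $\eta$ as in \eqref{def: h(z) and eta(z,mu)}.

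The second statement \eqref{equ: hat{mathcalL}_p} is a purely ergodic consequence of the structure of $\vec V(t)$. By \eqref{def:Vj}, $t\mapsto \vec V(t)$ is a linear, hence quasi-periodic, flow on the torus $\mathbb{R}^{n}/\mathbb{Z}^{n}$ with constant velocity $\vec\Omega/(2\pi)$, while $\vec\mu\mapsto \mathcal{L}(p,\vec\mu)$ is continuous and $\mathbb{Z}^{n}$-periodic by \eqref{def-intro-mathcalL}. The Birkhoff/Weyl theorem on the sub-torus equal to the closure of the orbit yields $T^{-1}\int_{0}^{T}\mathcal{L}(p,\vec V(t))\dif t \to \hat{\mathcal{L}}_p$ as $T\to+\infty$. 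Setting $F(T):=\int_{0}^{T}\mathcal{L}(p,\vec V(t))\dif t-\hat{\mathcal{L}}_p T=o(T)$ and integrating by parts,
\begin{align*}
\int_{\hat s}^{s}\mathcal{L}(p,\vec V(t))\frac{\dif t}{t}=\hat{\mathcal{L}}_p(\log s-\log\hat s)+\frac{F(s)}{s}-\frac{F(\hat s)}{\hat s}+\int_{\hat s}^{s}\frac{F(t)}{t^{2}}\dif t,
\end{align*}
and each of the last three terms is $o(\log s)$ since $F(t)/t\to 0$. This gives the claimed Ces\`aro form and identifies the constant $\hat{\mathcal{L}}_p$.
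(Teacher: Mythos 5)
Your outline captures the overall architecture (RH problem, $g$-function, $\theta$-function outer parametrix, local parametrices, small-norm estimate, integration of a differential identity, then the Birkhoff/almost-periodic argument for \eqref{equ: hat{mathcalL}_p}), but it diverges from the paper in one crucial choice and contains several technical errors that would derail the computation. The crucial divergence is the form of the differential identity. You propose to express $\partial_s\log\mathcal{F}(s\Sigma)$ in terms of the boundary values of $Y$ at the rescaled endpoints $sa_j,sb_j$ and at $0$; the paper explicitly avoids this route because, for $n>1$, the resulting combination of local data cannot be simplified to a usable form. Instead, the paper establishes (Proposition~\ref{prop: differential identity}, built on the nontrivial kernel identity of Lemma~\ref{lem: partial_s scaled CHF kernel}) a differential identity that involves only the $z\to\infty$ coefficient $X_1(s)$ of the undressed RH problem, namely $\partial_s\log\mathcal{F}(s\Sigma)=i\bigl((X_1)_{11}-(X_1)_{22}\bigr)-(\alpha^2-\beta^2)/s$. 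This is the key idea that makes the $n$-interval calculation tractable, and it is missing from your proposal. Note also that this is where the term $(\beta^2-\alpha^2)\log s$ actually comes from: it is the $-(\alpha^2-\beta^2)/s$ contribution in the identity, traceable to $\Phi_{{\rm CH},1}$, not from a local parametrix at $z=0$; in fact the steepest-descent analysis here builds no local parametrix at the origin at all, because the confluent hypergeometric parametrix $\Phi_{\rm CH}$ is used once, globally, in the undressing transformation $Y\mapsto X$ of Section~\ref{subsec: Preliminaries}, and the resulting behavior of $X$ at $0$ is already matched by the Szeg\H{o} factor $\mathcal{D}(z)\sim z^{-\alpha-\beta}$ inside $P^{(\infty)}$.

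Beyond that structural point, there are three further inaccuracies. First, the lenses in the transformation $T\mapsto S$ are opened around the \emph{gaps} $(b_j,a_{j+1})$, not around the spectral intervals $(a_j,b_j)$; after introducing the $g$-function the jumps on $(a_j,b_j)$ are already constant, and it is the oscillatory jumps on the gaps that must be split. Second, the local parametrices at $a_j,b_j$ are \emph{Bessel} parametrices, not Airy: the behavior \eqref{equ:S asym at endpoints} is logarithmic (hard-edge type) rather than of square-root type, so the Airy model is the wrong one and would not satisfy the matching condition. Third, $\mathcal{D}_{\infty,1}$ is the subleading coefficient in the expansion of the Szeg\H{o} function $\mathcal{D}$ at infinity \eqref{equ: expansion of mathcalD at z=infty}--\eqref{def:Dinfty1-a}, not a coefficient of the $g$-function; only $\gamma_0$ comes from $g$. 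Once the analysis is set up correctly, the $-\frac{1}{16s}\sum_j(\mathcal{L}(a_j,\vec V(s))+\mathcal{L}(b_j,\vec V(s)))$ term arises from the residue $R_1^{(1)}$ of the small-norm solution (via \eqref{equ: JR(1)p(-1) (11) and (22)-a}--\eqref{equ: i/s(R(1)1,11-R(1)1,22)}), and the $\frac{\dif}{\dif s}\log\theta(\vec V(s))-2i\mathcal D_{\infty,1}$ term arises from $i((P^{(\infty)}_1)_{11}-(P^{(\infty)}_1)_{22})$ via the Riemann bilinear identity and the theta-periodicity \eqref{property:theta function}. Your argument for the second statement \eqref{equ: hat{mathcalL}_p} is essentially correct and agrees with Lemma~\ref{lemma: mean value-1st prop}: the existence of the Ces\`aro mean follows from almost-periodicity of $t\mapsto\mathcal{L}(p,\vec V(t))$, and the integration-by-parts estimate you wrote indeed gives $o(\log s)$.
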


If $n=0$ and $\Sigma=(-1,1)$, our result reduces to \eqref{eq:n=0} except the multiplicative constant. Indeed, the terms which are associated to the $\theta$-functions, such as $\log\theta(\vec{V}(s))$ and $\vec{V}$, should be interpreted as $0$ for $n=0$. The polynomial $\mathrm{p}$ simply reads $\mathrm{p}(z)=z$ by noticing that $\Sigma=(-1,1)$. 
Thus, it follows from \eqref{equ: gamma_0} that 
\begin{align}\label{equ: gamm_0 in the case of n=0}
\gamma_0=-\frac{1}{\pi i}\int_{-1}^{1}\frac{z^2}{i\sqrt{1-z^2}}\dif z=
\frac{2}{\pi}\int_0^{1}\frac{z^2}{\sqrt{1-z^2}}\dif z=\frac{2}{\pi}\cdot\frac{\pi}{4}=\frac{1}{2}.
\end{align}
Using the definition of $\mathcal{D}_{\infty,1}$ in \eqref{def:Dinfty1-a} with $n=0$,  we have 
\begin{align}\label{equ: mathcalDinfty1 in the case of n=0}
 \mathcal{D}_{\infty,1}&=\frac{\beta}{\pi i}\int_{-1}^{1}\frac{\xi\log|\xi|}{i\sqrt{1-\xi^2}}\dif\xi+\frac{\alpha}{2}\left(\int_{-1}^{0}-\int_{0}^{1}\right)\frac{\xi}{i\sqrt{1-\xi^2}}\dif\xi
\nonumber 
\\
&=-\alpha\int_{0}^{1}\frac{\xi}{i\sqrt{1-\xi^2}}\dif\xi=i\alpha,   
\end{align}
where the first integral vanishes due to the fact that $\xi\log|\xi|/\sqrt{1-\xi^2}$ is odd on $(-1,1)$. As for the function $\mathcal{L}(z,\vec{V}(t))$, it now reads
\begin{align}\label{equ: mathcalL in the case of n=0}
\mathcal{L}(z,\vec{V}(t))=\mathcal{L}(z,0)\overset{\eqref{def-intro-mathcalL}}{=}\frac{(z+1)+(z-1)}{z} = 2.
\end{align}
Substituting \eqref{equ: gamm_0 in the case of n=0}--\eqref{equ: mathcalL in the case of n=0} into \eqref{asy result: general case}, we recover \eqref{eq:n=0} up to the constant term. 

For general $n$ with $\alpha=\beta=0$, $\mathcal{F}(s\Sigma)$ is actually the sine kernel determinant on a union of disjoint intervals, and one can check that our formula \eqref{asy result: general case} is consistent with \cite[Equation (1.34)]{Dei-Its-Zhou-Ann1997}. Due to the effect of the parameters $\alpha$ and $\beta$, we note that two new terms $-2i \mathcal{D}_{\infty,1}s$ and $(\beta^2-\alpha^2)\log s$ emerge, and the vector $\vec{V}$ also differs from that in \cite{Dei-Its-Zhou-Ann1997}.



Inspired by \cites{Blackstone-CC-JL-CPAM, Dei-Its-Zhou-Ann1997}, the asymptotics \eqref{asy result: general case} can be further improved if $\vec{V}$ in \eqref{def-intro-vecV} satisfies some suitable conditions defined in what follows. 
\begin{definition}\label{def:good Dio}
Consider the linear flow
\begin{align}\label{def: linear flow}
(0,+\infty)\ni s \mapsto (V_{1}(s) \mod 1, \quad V_{2}(s) \mod  1, ~~
\ldots, ~~ V_{n}(s) \mod  1).
\end{align}
\begin{itemize}
\item The linear flow \eqref{def: linear flow} has  ``good Diophantine properties''
if there exist $\delta_1, \delta_2>0$ such that 
\begin{align}\label{condition: diophantine}
\vert\vec{m}^{\rm T}\vec{\Omega}\vert\geqslant \delta_1{\Vert\vec{m}\Vert}_{2}^{-\delta_2} 
\end{align}
for all $\vec{m}\in\mathbb{Z}^{n\times1}$ with $\vec{m}^{\rm T}\vec{\Omega}\neq0$,
where $\vec{\Omega}:={(\Omega_1,\dots,\Omega_n)}^{\rm T}$ and ${\Vert\vec{m}\Vert}_{2}={|\vec{m}^{\rm T}\vec{m}|}^{\frac{1}{2}}$.
\item The linear flow \eqref{def: linear flow} is ergodic in the $n$-dimensional torus $\mathbb{R}^{n}/\mathbb{Z}^{n}$ if $\{\vec{V}(s) \mod \mathbb{Z}^{n}\}_{s>0}$ is dense in $\mathbb{R}^{n}/\mathbb{Z}^{n}$; cf. \cite[Definition 1.3.2]{Katok-Hassel-DS}. Equivalently, the linear flow \eqref{def: linear flow} is ergodic in $\mathbb{R}^{n}/\mathbb{Z}^{n}$ if 
$\{\Omega_j\}_{j=1}^{n}$ are rationally independent \cite[Proposition 1.5.1]{Katok-Hassel-DS}, that is, if there exist $(c_1,c_2,\dots,c_n)\in\mathbb{Z}^{n}$ such that 
\begin{align}\label{condition: ergodic}
c_1\Omega_1+c_2\Omega_2+\cdots+c_n\Omega_n=0,
\end{align}
then $c_1=c_2=\cdots=c_n=0$.
\end{itemize}
\end{definition}

Let
\begin{align}
\mathcal{S}_{D}:=\left\{\vec{\Omega}: \ \vec{\Omega} \textrm{ has ``good Diophantine properties''}  \right\}, \quad 
\mathcal{S}_{E}:=\left\{\vec{\Omega}: \ \vec{\Omega} \textrm{ is rationally independent}  \right\}.
\end{align}
be two subsets of $(0,+\infty)^n$. They are generic in the sense that the Lebesgue measures of $(0,+\infty)^{n}\setminus\mathcal{S}_{D}$ and $(0,+\infty)^{n}\setminus\mathcal{S}_{E}$ are zero. If $n=1$, we have $\mathcal{S}_{D}=\mathcal{S}_{E}=(0,+\infty)$. If $n\geq 2$, however, it comes out that $\mathcal{S}_{D}$ and $\mathcal{S}_{E}$ are non-inclusive (i.e., $\mathcal{S}_{D}\nsubseteq\mathcal{S}_{E}$, $\mathcal{S}_{E}\nsubseteq\mathcal{S}_{D}$) and $\mathcal{S}_{D}\neq{(0,+\infty)}^{n}\neq\mathcal{S}_{E}$ in general. Some concrete examples of $\mathcal{S}_{D}$ and $\mathcal{S}_{E}$ can be found in \cite[Page 3305]{Blackstone-CC-JL-CPAM}. Similar to the arguments given in \cite[Page 211]{Dei-Its-Zhou-Ann1997}, one can show that the range of the map $(a_0,b_0,\dots,a_n,b_n)\mapsto\vec{\Omega}$ contains an open ball in ${(0,+\infty)}^{n}$. This implies that all of the following four cases
\begin{align}
\vec{\Omega}\notin\mathcal{S}_{D}\cup\mathcal{S}_{E}, \quad \vec{\Omega}\in\mathcal{S}_{D}\setminus\mathcal{S}_{E}, \quad
\vec{\Omega}\in\mathcal{S}_{E}\setminus\mathcal{S}_{D},
\quad 
\vec{\Omega}\in\mathcal{S}_{D}\cap\mathcal{S}_{E},
\end{align}
can and do emerge for the some $a_j$ and $b_j$. Note that the first case does not occur for $n=2$ since $\mathcal{S}_{D}\cup\mathcal{S}_{E}={(0,+\infty)}^{2}$.

The vector $\vec{\Omega}$ such that $\eqref{condition: diophantine}$ holds true is usually called \textit{Diophantine vector}. The benefit of introducing the good  Diophantine properties is to ensure that the components of $\vec{\Omega}$ cannot be approximated by rationals too rapidly, thereby preventing excessive clustering of orbits \cite{Katok-Hassel-DS}, which finally leads to the improvement of the error estimate in \eqref{asy result: general case} as sated in the next theorem. 
\begin{theorem}[The ``good Diophantine properties'' case]
\label{thm: Diophantine case}
Let $\Sigma=\cup_{j=0}^{n}(a_j,b_j)$ be fixed such that $a_0<b_0<\cdots<a_m<0<b_m<\cdots<a_n<b_n$ for some  $0\leqslant m\leqslant n$ and assume that $\vec{\Omega}\in \mathcal{S}_D$.  As $s\to +\infty$, one has
\begin{align}\label{equ: hat{mathcalL}_p-Dio case}
\int_{\hat{s}}^{s}\mathcal{L}\left(p,\vec{V}(t)\right)\frac{\dif t}{t}=\hat{\mathcal{L}}_{p}\log s+C_p+\mathcal{O}(s^{-1}),  
\end{align}
where $p\in\mathcal{I}_{e}$ and $C_p$ is independent of $s$. Thus, we have,
as $s\to+\infty$,
\begin{multline}\label{asy result: Diophantine case}
\log\mathcal{F}(s\Sigma)=-\gamma_0s^2-2i\mathcal{D}_{\infty,1}s+\log\theta\left(\vec{V}(s)\right)
\\
+\left[\beta^2-\alpha^2-\frac{1}{16}\sum_{j=0}^{n}(\hat{\mathcal{L}}_{a_j}+\hat{\mathcal{L}}_{b_j})\right]\log s+\breve{C}_{2}+\mathcal{O}(s^{-1}),
\end{multline}
where $\breve{C}_{2}=\breve{C}_1-\frac{1}{16}\sum_{j=0}^{n}(C_{a_j}+C_{b_j})$ is a constant independent of $s$ with $\breve{C}_1$ as in 
\eqref{asy result: general case}.
\end{theorem}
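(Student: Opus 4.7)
The proof reduces, via Theorem~\ref{thm: general case}, to establishing the sharper integral estimate \eqref{equ: hat{mathcalL}_p-Dio case} at each endpoint $p\in\mathcal{I}_e$; once this is in hand, summing over $p$ and substituting into \eqref{asy result: general case} immediately yields \eqref{asy result: Diophantine case} with $\breve{C}_2=\breve{C}_1-\frac{1}{16}\sum_{j=0}^{n}(C_{a_j}+C_{b_j})$. The plan is to analyse the oscillatory integral $\int_{\hat s}^{s}\mathcal{L}(p,\vec V(t))\,\dif t/t$ via a Fourier expansion on the torus, and to exploit the Diophantine lower bound \eqref{condition: diophantine} to control the small divisors that appear.

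First, since $-i\tau$ is positive definite and the flow $\vec V(s)$ is real, the function $\vec\mu\mapsto\mathcal{L}(p,\vec\mu)$ is real-analytic and $\mathbb{Z}^n$-periodic on $\mathbb{R}^n$, so it admits an absolutely convergent Fourier expansion $\mathcal{L}(p,\vec\mu)=\sum_{\vec k\in\mathbb{Z}^n}c_{\vec k}(p)\,e^{2\pi i\vec k^{\mathrm{T}}\vec\mu}$ whose coefficients $c_{\vec k}(p)$ decay faster than any polynomial in $\|\vec k\|_2$. Plugging in $\vec\mu=\vec V(t)=\tfrac{t}{2\pi}\vec\Omega+\tfrac{1}{2\pi}\im(\vec\zeta)$ and splitting the sum according to whether $\lambda_{\vec k}:=\vec k^{\mathrm{T}}\vec\Omega$ vanishes gives $\mathcal{L}(p,\vec V(t))=A_p+\sum_{\lambda_{\vec k}\neq 0}\tilde c_{\vec k}(p)\,e^{i\lambda_{\vec k}t}$, where $A_p:=\sum_{\lambda_{\vec k}=0}c_{\vec k}(p)e^{i\vec k^{\mathrm{T}}\im(\vec\zeta)}$. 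A direct Weyl-equidistribution argument on the subtorus $\{\vec k^{\mathrm T}\vec\Omega=0\}$ identifies $A_p$ with the time average $\hat{\mathcal{L}}_p$ defined in \eqref{equ: hat{mathcalL}_p}.

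The main step is the termwise integration against $\dif t/t$. For each nonzero frequency, a single integration by parts yields $\int_{\hat s}^{s}e^{i\lambda_{\vec k}t}\,\tfrac{\dif t}{t}=D_{\vec k}-\int_s^{\infty}e^{i\lambda_{\vec k}t}\,\tfrac{\dif t}{t}$, where $D_{\vec k}:=\int_{\hat s}^{\infty}e^{i\lambda_{\vec k}t}\,\tfrac{\dif t}{t}$ is a conditionally convergent constant with $|D_{\vec k}|\leq C/|\lambda_{\vec k}|$, and the tail is bounded by $C'/(|\lambda_{\vec k}|s)$. Here the Diophantine hypothesis \eqref{condition: diophantine} enters decisively: the bound $|\lambda_{\vec k}|^{-1}\leq\delta_1^{-1}\|\vec k\|_2^{\delta_2}$ combined with the super-polynomial decay of $c_{\vec k}(p)$ makes both $\sum_{\lambda_{\vec k}\neq 0}|\tilde c_{\vec k}(p)D_{\vec k}|$ and $\sum_{\lambda_{\vec k}\neq 0}|\tilde c_{\vec k}(p)|/|\lambda_{\vec k}|$ absolutely convergent, which legitimizes termwise integration and collapses the remainder to $\mathcal{O}(s^{-1})$. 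This proves \eqref{equ: hat{mathcalL}_p-Dio case} with $C_p=-\hat{\mathcal{L}}_p\log\hat s+\sum_{\lambda_{\vec k}\neq 0}\tilde c_{\vec k}(p)D_{\vec k}$. The summability of these Fourier sums is the principal obstacle: without the quantitative Diophantine control, one cannot bound the small divisors $\lambda_{\vec k}^{-1}$ uniformly and one is forced back to the soft $o(\log s)$ error of Theorem~\ref{thm: general case}, which is precisely why the hypothesis $\vec\Omega\in\mathcal{S}_D$ cannot be dispensed with in this improvement.
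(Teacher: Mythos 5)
Your proposal is correct and rests on the same core mechanism as the paper's argument: Fourier-expand $\mathcal{L}(p,\cdot)$ on the torus, split the sum according to whether $\vec{m}^{\rm T}\vec{\Omega}$ vanishes, and invoke the Diophantine lower bound \eqref{condition: diophantine} together with the (super-polynomial, in fact exponential) decay of Fourier coefficients to tame the small divisors. The paper organizes this a little differently: it first proves, under the Diophantine hypothesis, the Ces\`{a}ro-mean estimate $\frac{1}{t}\int_{\hat s}^{t}\mathcal{Y}(\vec{V}(t'))\dif t'=\hat{\mathcal{Y}}+\mathcal{O}(t^{-1})$ (Lemma \ref{lemma: mean value-2nd prop}, whose proof is exactly your Fourier/small-divisor computation applied to $\int_{\hat s}^{t}$ rather than to $\int_{\hat s}^{t}\dif t'/t'$), and then inserts this estimate into the unconditional integration-by-parts identity \eqref{equ: mean value equality-a} of Lemma \ref{lemma: mean value-1st prop}, which converts the Ces\`{a}ro control into the weighted-integral statement \eqref{equ: hat{mathcalL}_p-Dio case}. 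You instead integrate the Fourier series termwise directly against $\dif t/t$, introducing the conditionally convergent constants $D_{\vec k}$ and bounding the tails $\int_s^\infty e^{i\lambda_{\vec k}t}\dif t/t$ by $\mathcal{O}(1/(|\lambda_{\vec k}|s))$ via one integration by parts; the Diophantine bound then makes the two sums absolutely convergent and yields $C_p$ explicitly. Both routes are valid; yours is slightly more direct, while the paper's intermediate lemma is formulated so that it also serves the unconditional ``general case'' via Lemma \ref{lemma: mean value-1st prop}. One small overstatement: identifying $A_p$ with $\hat{\mathcal{L}}_p$ does not require any equidistribution input on a subtorus—it is an elementary computation of the Ces\`{a}ro limit of each Fourier mode (the nonresonant modes average to zero, the resonant ones persist), which is how the paper does it at the end of the proof of Lemma \ref{lemma: mean value-2nd prop}.
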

\begin{remark}
Analogues of the asymptotic formula \eqref{asy result: Diophantine case} have appeared in  \cite[Equation (1.37)]{Dei-Its-Zhou-Ann1997} for the sine kernel determinant and in \cite[Theorem 1.3]{Blackstone-CC-JL-CPAM} for the type-II Bessel kernel determinant.
\end{remark}
The vector $\vec{\Omega}$ which is rationally independent is usually utilized to eliminates the influence of resonance, ensuring that orbits densely fill the entire torus \cite{Katok-Hassel-DS}. In this case, the asymptotic formula given in Theorem \ref{thm: general case} can be improved in the following way. 
\begin{theorem}[The ergodic case]\label{thm: ergodic case}
Let $\Sigma=\cup_{j=0}^{n}(a_j,b_j)$ be fixed such that $a_0<b_0<\cdots<a_m<0<b_m<\cdots<a_n<b_n$ for some  $0\leqslant m\leqslant n$ and assume that $\vec{\Omega}\in \mathcal{S}_E$. As $s\to +\infty$, one has 
\begin{align}\label{equ: hat{mathcalL}_p-ergodic case}
\int_{\hat{s}}^{s}\mathcal{L}\left(p,\vec{V}(t)\right)\frac{\dif t}{t}=\hat{\mathcal{L}}_{p}\log s+o(\log s), \quad
\hat{\mathcal{L}}_{p}=\frac{h(p)}{\mathrm{p}(p)}\int_{{[0,1)}^{n}}\eta(p; u_1, u_2,\dots u_n )\dif u_1\cdots\dif u_n.
\end{align}
Thus, we have, as $s\to+\infty$,
\begin{align}\label{asy result: ergodic case}
\log\mathcal{F}(s\Sigma)=-\gamma_0s^2-2i\mathcal{D}_{\infty,1}s+
\left[\beta^2-\alpha^2-\frac{1}{16}\sum_{j=0}^{n}(\hat{\mathcal{L}}_{a_j}+\hat{\mathcal{L}}_{b_j})\right]\log s+o(\log s),
\end{align}
where the constant $\hat{\mathcal{L}}_p$, $p\in\mathcal{I}_e$, is explicitly given by the $n$-fold integral in \eqref{equ: hat{mathcalL}_p-ergodic case}.
\end{theorem}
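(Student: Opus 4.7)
The plan is to derive the ergodic formula for $\hat{\mathcal{L}}_p$ by a classical Weyl-equidistribution argument and then feed it back into the general formula \eqref{asy result: general case} of Theorem \ref{thm: general case}. First I would observe that, under the hypothesis $\vec{\Omega}\in\mathcal{S}_E$, the flow
\[
t\mapsto \vec{V}(t)\bmod \mathbb{Z}^n = \left(\frac{t}{2\pi}\vec{\Omega} + \vec{v}_0\right)\bmod \mathbb{Z}^n,
\]
with $\vec{v}_0$ the constant shift determined by the $\im(\zeta_j)$ in \eqref{def:Vj}, has velocity $\vec{\Omega}/(2\pi)$ whose components are $\mathbb{Q}$-linearly independent. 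Hence the flow is uniquely ergodic on the $n$-torus $\mathbb{T}^n:=\mathbb{R}^n/\mathbb{Z}^n$ with respect to the Haar measure $\dif u_1\cdots \dif u_n$ (cf.\ \cite[Proposition 1.5.1]{Katok-Hassel-DS}). Weyl's equidistribution theorem for uniquely ergodic continuous flows then yields, for every $f\in C(\mathbb{T}^n)$ and every starting point,
\[
\lim_{T\to+\infty}\frac{1}{T}\int_0^T f(\vec{V}(t))\dif t = \int_{[0,1)^n} f(\vec{u})\dif u_1\cdots\dif u_n.
\]

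Applying this to $f(\vec{\mu})=\mathcal{L}(p,\vec{\mu})$ for each $p\in\mathcal{I}_e$ identifies the limit defining $\hat{\mathcal{L}}_p$ in Theorem \ref{thm: general case} with the explicit $n$-fold integral $\hat{\mathcal{L}}_p=\frac{h(p)}{\mathrm{p}(p)}\int_{[0,1)^n}\eta(p;u_1,\ldots,u_n)\dif u_1\cdots \dif u_n$, which is precisely \eqref{equ: hat{mathcalL}_p-ergodic case}. Substituting this explicit value back into \eqref{asy result: general case} then produces \eqref{asy result: ergodic case}, the $o(\log s)$ remainder being inherited verbatim from Theorem \ref{thm: general case}. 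The conversion from the time-average limit to the weighted integral $\int_{\hat s}^s \mathcal{L}(p,\vec{V}(t))\dif t/t$ in \eqref{equ: hat{mathcalL}_p-ergodic case} is the same Abel-summation computation that already produced the $o(\log s)$ term in Theorem \ref{thm: general case}, so nothing new is required there.

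The main obstacle is justifying that Weyl's theorem applies to $f=\mathcal{L}(p,\cdot)$, i.e.\ that this function is continuous (not merely a.e.\ continuous) on all of $\mathbb{T}^n$. The potential singularities come from the factor $\theta(\vec{\mu})^2$ in the denominator of $\eta(p,\vec{\mu})$ in \eqref{def: h(z) and eta(z,mu)}, and possibly from the factor $\theta(\vec{\mathcal{A}}(p)+\vec{d})^2$ if this vanishes at a branch point. Both obstructions are resolved by exploiting three ingredients: the evenness $\theta(-\vec{z})=\theta(\vec{z})$; the quasi-periodicity identities \eqref{property:theta function}; and the fact that, for $p\in\mathcal{I}_e$, the Abel image $\vec{\mathcal{A}}(p)$ lies at a half-period modulo $\mathbb{Z}^n + \tau\mathbb{Z}^n$ (a standard consequence of Abel's theorem on the hyperelliptic curve $\mathcal{W}$). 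Combining (i)--(iii) shows that the two factors $\theta(\vec{\mathcal{A}}(p)\pm\vec{\mu}+\vec{d})$ in the numerator of $\eta$ vanish to exactly the right order on the divisor $\{\theta(\vec{\mu})=0\}$ (and the analogous statement at the other potential zero), so that $\eta(p,\cdot)$, hence $\mathcal{L}(p,\cdot)$, extends continuously to the entire torus.

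Once continuity is secured, the remainder of the argument is essentially automatic. In contrast to Theorem \ref{thm: Diophantine case}, no quantitative Diophantine estimate enters: the present argument is purely ``soft'' because unique ergodicity provides no effective convergence rate, which is why only an $o(\log s)$ remainder (rather than $\mathcal{O}(s^{-1})$) can be extracted. This matches the statement of the theorem exactly.
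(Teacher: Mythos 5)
Your high-level strategy matches the paper's: show that the Ces\`aro time average of $\mathcal{L}(p,\vec V(\cdot))$ equals the space average over the torus under rational independence of $\vec\Omega$, then substitute into the general asymptotics of Theorem~\ref{thm: general case}. The paper proves this in Lemma~\ref{lemma: Birkhoff} by a short, self-contained Fourier-series computation (only the $\vec m=\vec0$ mode survives after time averaging, because $\vec m^{\rm T}\vec\Omega=0$ forces $\vec m=\vec0$ when $\vec\Omega\in\mathcal{S}_E$); you instead invoke unique ergodicity of linear flows with rationally independent frequencies and Weyl's equidistribution theorem as a black box. These are equivalent and yield identical conclusions (in particular both give convergence for every starting point, not merely a.e.\ starting point, which is indeed what is needed), so the core of your argument is sound.

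Where your writeup goes off track is the passage on continuity of $\mathcal{L}(p,\cdot)$. This is not a genuine obstacle. In the setting of the paper --- a two-sheeted hyperelliptic Riemann surface over real branch points --- the period matrix $\tau$ is purely imaginary, hence $\theta(\vec\mu)$ is a real analytic, everywhere \emph{nonvanishing} function of $\vec\mu\in\mathbb{R}^n/\mathbb{Z}^n$; likewise $\theta(\vec{\mathcal{A}}(p)\pm\vec\mu+\vec d)$ for $p\in\mathcal{I}_e$. This is a standard fact, inherited from \cite{Dei-Its-Zhou-Ann1997} (in particular it underlies Proposition~\ref{prop:zeros of theta function} and the paper's remark immediately after Lemma~\ref{lemma: mean value-1st prop} that $\mathcal{L}(p,\cdot)$ is real analytic on $\mathbb{R}^n/\mathbb{Z}^n$). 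So there are no singularities to cancel, and invoking Weyl's theorem requires nothing beyond this nonvanishing. Worse, the ``resolution'' you sketch would not work even if $\theta(\vec\mu)$ did vanish somewhere on the real torus: the zero sets of $\theta(\vec{\mathcal{A}}(p)\pm\vec\mu+\vec d)$ (as functions of $\vec\mu$) are \emph{translates} of the theta divisor $\{\theta(\vec\mu)=0\}$ by the nonzero shift $\pm(\vec{\mathcal{A}}(p)+\vec d)$, and such a translate generically does not contain the untranslated divisor, so the numerator would not supply the compensating zeros. In short: replace that paragraph by a one-line appeal to the real-analyticity (and hence continuity) of $\mathcal{L}(p,\cdot)$ on $\mathbb{R}^n/\mathbb{Z}^n$, with a pointer to \cite{Dei-Its-Zhou-Ann1997} for the nonvanishing of $\theta$ on the real torus, and the proof closes cleanly.
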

\begin{remark}
If $\vec{\Omega}\in \mathcal{S}_D\cap \mathcal{S}_E$,  it follows from Theorems  \ref{thm: Diophantine case} and \ref{thm: ergodic case} that the asymptotics of $\mathcal{F}(s\Sigma)$ takes the form \eqref{asy result: Diophantine case} but with  $\hat{\mathcal{L}}_p$ given in \eqref{equ: hat{mathcalL}_p-ergodic case}.
\end{remark}

Finally, we focus on the case that $n=1$. In this case, it has been shown in  \cite{Fahs-Krasovsky-2024CPAM} that $\hat{\mathcal{L}}_{p}\equiv2$ for $\alpha=\beta=0$. We show this is true for general $\alpha$ and $\beta$. This, together with the fact that $\vec{\Omega}\in \mathcal{S}_D\cap \mathcal{S}_E$ for $n=1$, implies the following result. 


\begin{theorem}[The $n=1$ case]\label{thm: n=1 case}
Let $\Sigma:=(a_0,b_0)\cup(a_1,b_1)$ be fixed such $a_0<b_0<\cdots<a_m<0<b_m<\cdots<a_1<b_1$ for some  $0\leqslant m\leqslant 1$. For $\alpha>-1/2$ and $\beta\in i\mathbb{R}$, we have,  as $s\to+\infty$,
\begin{align}
\label{asy result: n=1 case}
\log\mathcal{F}(s\Sigma)=-\gamma_0s^2-2i\mathcal{D}_{\infty,1}s+\log\theta\left(V_{1}(s)\right)
+\left(\beta^2-\alpha^2-\frac{1}{2}\right)\log s+C+\mathcal{O}(s^{-1}),
\end{align}
where $C$ is an undetermined constant independent of $s$.
\end{theorem}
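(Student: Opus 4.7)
The plan is to specialize Theorem~\ref{thm: Diophantine case} to $n=1$ and then identify the coefficient of $\log s$ by evaluating $\hat{\mathcal{L}}_p$ explicitly at the four branch points.

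When $n=1$, the vector $\vec{\Omega}=(\Omega_1)$ reduces to a single positive real (positivity comes from Proposition~\ref{prop:g-func}), so the Diophantine condition \eqref{condition: diophantine} and the ergodicity condition \eqref{condition: ergodic} are both trivially satisfied; hence $\vec{\Omega}\in\mathcal{S}_D\cap\mathcal{S}_E$. By the remark after Theorem~\ref{thm: ergodic case}, the expansion \eqref{asy result: Diophantine case} is valid with each $\hat{\mathcal{L}}_p$ given by the explicit one-dimensional integral in \eqref{equ: hat{mathcalL}_p-ergodic case}. Consequently it suffices to prove $\hat{\mathcal{L}}_p\equiv 2$ for every $p\in\mathcal{I}_e=\{a_0,b_0,a_1,b_1\}$: then
\[
-\tfrac{1}{16}\sum_{j=0}^{1}\left(\hat{\mathcal{L}}_{a_j}+\hat{\mathcal{L}}_{b_j}\right)=-\tfrac{1}{16}(4\cdot 2)=-\tfrac{1}{2},
\]
which matches exactly the $-\tfrac{1}{2}\log s$ term in \eqref{asy result: n=1 case}.

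For the evaluation of $\hat{\mathcal{L}}_p$, I would first observe that the integrand on the right-hand side of \eqref{equ: hat{mathcalL}_p-ergodic case} is built exclusively from the hyperelliptic data of $\Sigma$: $h$, $\mathrm{p}$, the Abel map $\vec{\mathcal{A}}$, the period $\tau$, the Riemann vector $\vec{d}$, and the $\theta$-function. None of these depends on $\alpha$ or $\beta$; the parameters enter the asymptotics of Theorem~\ref{thm: general case} only via the phase shifts $\zeta_j$ inside $\vec{V}(s)$, and these are precisely what is averaged out by the $[0,1)$-integration in the ergodic limit. Thus $\hat{\mathcal{L}}_p$ is a geometric invariant of $\Sigma$ alone, and therefore coincides with its value in the sine-kernel case $\alpha=\beta=0$, for which Fahs and Krasovsky \cite{Fahs-Krasovsky-2024CPAM} have already shown $\hat{\mathcal{L}}_p\equiv 2$.

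The main obstacle I foresee is a careful check that our conventions for $\vec{\mathcal{A}}$, $\vec{d}$, and the canonical basis $\{A_j,B_j\}$ agree (up to benign relabelling) with those of \cite{Fahs-Krasovsky-2024CPAM}, so that the geometric invariant $\hat{\mathcal{L}}_p$ indeed transfers verbatim from the sine-kernel computation. Should this matching be awkward, a self-contained alternative is to compute $\int_0^1 \eta(p,u)\,du$ directly: at each branch point one has the half-period relation $2\vec{\mathcal{A}}(p)\in\mathbb{Z}+\tau\mathbb{Z}$, which together with the quasi-periodicity \eqref{property:theta function} and standard Jacobi addition formulas for $\theta$ turns the integrand into an elliptic function of $u$ on a fundamental parallelogram with one controlled double pole; a residue computation then yields $\int_0^1 \eta(p,u)\,du=2\mathrm{p}(p)/h(p)$, confirming $\hat{\mathcal{L}}_p=2$ and hence \eqref{asy result: n=1 case}.
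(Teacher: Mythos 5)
Your reduction is exactly the paper's: specialize Theorem~\ref{thm: Diophantine case} (and the remark after Theorem~\ref{thm: ergodic case}) to $n=1$, note $\vec{\Omega}\in\mathcal{S}_D\cap\mathcal{S}_E$ is automatic, and boil the whole statement down to proving $\hat{\mathcal{L}}_p\equiv 2$ at the four branch points. Where you and the paper diverge is how that last identity is established.

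Your primary route — observing that the ergodic formula $\hat{\mathcal{L}}_p=\frac{h(p)}{\mathrm{p}(p)}\int_{[0,1)}\eta(p,u)\,du$ is built only from the hyperelliptic data of $\Sigma$ and is therefore $\alpha,\beta$-independent, so the value must agree with Fahs--Krasovsky's $\alpha=\beta=0$ computation — is sound in principle. The $\alpha,\beta$-dependence indeed sits entirely in the phase shifts $\zeta_j$ inside $\vec{V}(s)$, which the torus integration wipes out; this is exactly why Birkhoff's theorem makes the constant geometric. The caveat you flag yourself (matching the Abel-map base point, choice of odd half-periods, the canonical basis, and the definition of $\vec{d}$ with Fahs--Krasovsky's) is the real content of making this rigorous, and it is not trivial; without that check the argument is incomplete as written.

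The paper instead executes your ``self-contained alternative'' directly, and your sketch of it is accurate in outline: it introduces the first-kind theta $\theta_1(z)=ie^{-\pi iz+\pi i\tau/4}\theta(z-\frac{1+\tau}{2})$, writes $\eta(p,\cdot)$ as an elliptic function of the torus variable with a controlled double pole at the odd half-period, and applies a residue lemma (quoted from Fahs--Krasovsky, Lemma~A.1) to get
\begin{equation*}
\int_0^1 \eta(p,u)\,du = 1 - \frac{\theta(0)\theta''(0)}{\theta_1'(0)^2}\cdot\frac{\theta_1(\mathcal{A}(p)+d)^2}{\theta(\mathcal{A}(p)+d)^2}.
\end{equation*}
The remaining work, which your sketch underplays, is the pair of theta identities (Lemma~\ref{lemma: lemma to state key identities}) needed to evaluate the right side: one uses a Liouville argument applied to two auxiliary meromorphic functions $f_1,f_2$ built from $\mathcal{E}(z)\pm\mathcal{E}(z)^{-1}$ and ratios of $\theta_1,\theta$ (whence $\frac{\theta(0)^2\theta_1(\mathcal{A}(p)+d)^2}{\theta_1'(0)^2\theta(\mathcal{A}(p)+d)^2}h(p)=\frac{4}{a_{1,0}^2}$), and the other turns $-\frac{4\theta''(0)}{a_{1,0}^2\theta(0)}$ into the polynomial quantity $2\mathrm{p}(p)-h(p)$ via the normalization $x_1+x_2=\frac{a_0+b_0+a_1+b_1}{2}$. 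Combining these gives $\hat{\mathcal{L}}_p=\frac{1}{\mathrm{p}(p)}\bigl(h(p)+2\mathrm{p}(p)-h(p)\bigr)=2$. In short: your logical skeleton is correct; your transfer argument is a legitimate but unfinished shortcut; and the paper does the explicit theta-function computation that your alternative sketch gestures at, with the two nontrivial Liouville-theorem identities being the part you would still need to supply.
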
 

\begin{remark}
Numerical experiments suggest that the integral $\hat{\mathcal{L}}_{p}\equiv2$ in \eqref{equ: hat{mathcalL}_p-ergodic case} holds for general $n>2$, whose proof remains open. Also, we can not evaluate the constant of integration $C$ with current method, which in general is a challenging task \cite{Fahs-Krasovsky-Maroudas-inMemWidom, Krasovsky-ICMP-2009}.
\end{remark}

We conclude our main results by noting that although it is assumed that $0\in \Sigma$ throughout this paper, it is also possible to consider the case that $0 \notin \Sigma$ by a modification of our proofs.

\paragraph*{Organization of the paper}
The rest of this paper is devoted to the proofs of our main results. 
Following a general framework established in \cite{Dei-Its-Zhou-Ann1997}, the starting point is an identity that relates $\partial_{s}\log\mathcal{F}(s\Sigma)$ to a Riemann-Hilbert (RH) problem with constant jumps, which is presented in Section \ref{sec: RH characterization}. While this differential identity is easy to check for $\alpha=\beta=0$ (i.e., for the sine kernel determinant), it is highly non-trivial for general $\alpha$ and $\beta$; see Proposition \ref{prop: differential identity} below for the precise statement.  
The Deift/Zhou steepest analysis of the associated RH problem then consists the main body of this paper. After carrying out the first two
transformations in Section \ref{sec: DZ steepest analysis}, we construct global and local paramertices in Sections \ref{sec: global parametrix} and 
\ref{sec: local parametrix}, respectively. The analysis finishes with a small norm RH problem given in Section \ref{sec: small norm RH problem}. The outcome of our analysis is the proofs of our main results presented in 
Section \ref{sec: proof of theorems}.


\paragraph*{Notations} Throughout this paper, the following notations will be used.
\begin{itemize}
\item As usual, the classical Pauli matrices $\{\sigma_j\}_{j=1,2,3}$ are defined by
\begin{equation}\label{def:PauliM}
\sigma_1:=\begin{pmatrix}0 & 1 \\ 1 & 0\end{pmatrix}, \quad
\sigma_2:=\begin{pmatrix}0 & -i \\ i & 0\end{pmatrix}, \quad
\sigma_3:=\begin{pmatrix}1 & 0 \\ 0 & -1\end{pmatrix}.
\end{equation}
\item We denote by $\mathcal{I}_{e}$ be the union of endpoints of $\Sigma$, i.e., $\mathcal{I}_{e}:=\{a_j,b_j\}_{j=0}^{n}$.
\item If $M$ is a matrix, then $M_{ij}$ stands for its $(i,j)$-th entry and $M^{\rm T}$ stands for its transpose. An unimportant entry of $M$ is usually denoted by $*$. We use $I$ to denote an identity matrix.
\item For a matrix valued function $M(z)$ defined for $z\in\mathbb{C}\setminus{\Gamma}$, where $\Gamma$ is an oriented contour, we use $M_{+}(z)$ ($M_{-}(z)$) for the limits of $M(z)$ as 
$z'\to z\in\Gamma$ from the positive side (negative side). The positive side (negative side) is on our left (right) as we follow $\Gamma$ according to its orientation.
\item We use $\mathbb{C}^+$ and $\mathbb{C}^-$ to denote the upper and the lower half complex plane, respectively.
\end{itemize}

\section{An RH characterization of $\mathcal{F}(s\Sigma)$}\label{sec: RH characterization}

In this section, our goal is to establish a relation between $\partial_{s}\log\mathcal{F}(s\Sigma)$ and an RH problem with constant jumps. A key observation here is a remarkable differential identity of the scaled confluent hypergeometric kernel, which might be of independent interest.


\subsection{Preliminaries}\label{subsec: Preliminaries}
By \eqref{def:CHF kernel with two parameters}, we rewrite the confluent hypergeometric kernel $K^{(\alpha,\hspace*{0.1em}\beta)}$ as an integrable form in the sense of IIKS \cites{Its-Ize-Kore-Slav}, i.e.,
\begin{align}
    K^{(\alpha,\hspace*{0.1em}\beta)}(x,y)=\frac{\vec{f}(x)^{\text{T}}\vec{h}(y)}{x-y}=\frac{\sum_{k=1}^2f_{k}(x)h_k(y)}{x-y},
\end{align}
where 
\begin{align}\label{def:f and h}
    \vec{f}(x)=\begin{pmatrix}
        f_1(x) \\
        f_2(x)
    \end{pmatrix}:=\frac{1}{\sqrt{2\pi i}}
    \begin{pmatrix}
        \frac{\Gamma(1+\alpha-\beta)}{\Gamma(1+2\alpha)}B(x) \\
        \frac{\Gamma(1+\alpha+\beta)}{\Gamma(1+2\alpha)}A(x)
    \end{pmatrix},
~~
    \vec{h}(y)=\begin{pmatrix}
        h_1(y) \\
        h_2(y)
    \end{pmatrix}:=\frac{1}{\sqrt{2\pi i}}
    \begin{pmatrix}
        -\frac{\Gamma(1+\alpha+\beta)}{\Gamma(1+2\alpha)}A(y) \\
        \frac{\Gamma(1+\alpha-\beta)}{\Gamma(1+2\alpha)}B(y)
    \end{pmatrix},
\end{align}
and where the functions $A$ and $B$ are defined in \eqref{def: A(x)} and \eqref{def: B(x)}. Note that $\sum_{k=1}^{2}f_{k}(z)g_{k}(z)=0$.

Using the definition of Fredholm determinant and a change of variable, it's readily seen that
\begin{align}\label{equ:k|sSigma ks|Sigma}
\mathcal{F}(s\Sigma)=\det\left(1-\mathcal{K}^{(\alpha,\hspace*{0.1em}\beta)}|_{s\Sigma}\right)=\det\left(1-\mathcal{K}_{s}^{(\alpha,\hspace*{0.1em}\beta)}|_{\Sigma}\right),
\end{align}
where $\mathcal{K}^{(\alpha,\hspace*{0.1em}\beta)}_{s}$ is the trace-class integral operator acting on $L^{2}(\Sigma)$
with kernel 
\begin{align}\label{equ: scaled kernel}
K_{s}^{(\alpha,\hspace*{0.1em}\beta)}(x,y):=sK^{(\alpha,\hspace*{0.1em}\beta)}(sx,sy).
\end{align}
By the well-known differential equalities for the trace-class integral operators, it follows that
\begin{align}\label{equ: partialsF(sSigma)=Trace}
\partial_{s}\log \mathcal{F}(s\Sigma)
&\overset{\eqref{equ:k|sSigma ks|Sigma}}{=}\partial_s\log\det\left(1-\mathcal{K}_{s}^{(\alpha,\hspace*{0.1em}\beta)}|_{\Sigma}\right) =\partial_s{\rm Tr}\left(\log\left(1-\mathcal{K}_{s}^{(\alpha,\hspace*{0.1em}\beta)}|_{\Sigma}\right)\right)\nonumber\\&=-{\rm Tr}\left(\left(1-\mathcal{K}^{(\alpha,\hspace*{0.1em}\beta)}_{s}|_{\Sigma}\right)^{-1}\partial_s\mathcal{K}^{(\alpha,\hspace*{0.1em}\beta)}_{s}|_{\Sigma}\right).
\end{align}

It comes out that the resolvent operator $\mathcal{R}_{s\Sigma}:={(1-\mathcal{K}^{(\alpha,\hspace*{0.1em} \beta)}|_{s\Sigma})}^{-1}-1$ 
is also integrable. Indeed, by setting
\begin{align}
\vec{F}(u)=\begin{pmatrix}
    F_1(u) \\
    F_2(u)
\end{pmatrix}:=\left(1-\mathcal{K}^{(\alpha,\hspace*{0.1em}\beta)}|_{s\Sigma}\right)^{-1}\vec{f}(u), \quad 
\vec{H}(u)=\begin{pmatrix}
        H_1(u) \\
        H_2(u)
    \end{pmatrix}:=\left(1-\mathcal{K}^{(\alpha,\hspace*{0.1em}\beta)}|_{s\Sigma}\right)^{-1}\vec{h}(u),
\end{align}
we have that the kernel $R_{s\Sigma}(x,y)$ of the resolvent operator $\mathcal{R}_{s\Sigma}$ is given by
\begin{align}
R_{s\Sigma}(x,y)=\frac{\vec{F}^{\rm T}(x)\vec{H}(y)}{x-y}.
\end{align}
By the general theory of integrable operators \cite[Lemmas 2.8 and 2.12]{Dei-Its-Zhou-Ann1997}, one has
\begin{align}
\vec{F}(z)=Y_{+}(z)\vec{f}(z), \quad \vec{H}(z)={Y_{+}(z)}^{-\rm T}\vec{h}(z),
\end{align}
where
\begin{align}
 Y(z):=I-\int_{s\Sigma}\frac{\vec{F}(x)\vec{h}(x)^{\text{T}}}{x-z}\dif x
\end{align}
 solves the following RH problem.
\paragraph{RH problem for $Y$}
\begin{itemize}
\item[\rm {(a)}]  $Y(z)$ is holomorphic for $z\in\mathbb{C}\setminus{\cup_{j=0}^{n}[sa_j, sb_j]}$, where the orientation of $(sa_j,sb_j)$ is taken from the left to the right for $j=0,\ldots,n$.
\item[\rm {(b)}]  For $z\in\cup_{j=0}^{n}(sa_j,sb_j)$, we have 
\begin{align}
    Y_{+}(z)=Y_{-}(z)\left(I-2\pi i\vec{f}(z)\vec{h}(z)^{{\rm T}}\right),
\end{align}
where the functions $\vec{f}$ and $\vec{h}$ are defined in \eqref{def:f and h}.
\item [\rm {(c)}] As $z\rightarrow\infty$, we have 
\begin{align}
    Y(z)=I+\frac{Y_{1}(s)}{z}+\mathcal{O}(z^{-2}),
\end{align}
where
\begin{align}\label{equ: Y_1(s)}
Y_1(s)=\int_{s\Sigma}\begin{pmatrix}
F_1(x)h_1(x) & F_1(x)h_2(x)\\
F_2(x)h_1(x) & F_2(x)h_2(x)
\end{pmatrix}\dif x.
\end{align}
\item[\rm {(d)}] As $z\to sp$ for $p\in\mathcal{I}_{e}$, we have 
\begin{align}
    Y(z)=\mathcal{O}\left(\log\left(z-sp\right)\right). 
\end{align}
\end{itemize}

We now transform the RH problem for $Y$ to a new one with constant jumps by using the confluent hypergeometric parametrix $\Phi_{\rm CH}$ introduced in Appendix \ref{appendix-A-paragraph-RHP Phi}, which is also known as an undressing procedure. 
To begin with, it is readily seen from \eqref{def:f and h} and \eqref{solution4phi} that
\begin{equation}\label{def:f asso Phi}
\vec{f}(z)=e^{\frac{\beta}{2} \pi i \sigma_3} \Phi_{\rm CH}(2 z) e^{\pm \frac{\alpha}{2} \pi i \sigma_3} e^{\mp\frac{\beta}{2} \pi i \sigma_3}\left(\frac{1}{\sqrt{2 \pi i}}\right)^{\sigma_3}\left(\begin{array}{l}
1 \\
0
\end{array}\right),\end{equation}
and
\begin{equation}\label{def:h asso Phi}
\vec{h}(z)= e^{-\frac{\beta}{2} \pi i \sigma_3}{\Phi_{\rm CH}(2z)}^{\rm -T} e^{\mp \frac{\alpha}{2} \pi i \sigma_3} e^{\pm\frac{\beta}{2} \pi i \sigma_3}\left(\frac{1}{\sqrt{2 \pi i}}\right)^{-\sigma_3}\left(\begin{array}{l}
0 \\
1
\end{array}\right),\end{equation}
for $z\in \Omega_{\Phi_{\rm CH},1}$ and $\pm \re z>0$, where the region $\Omega_{\Phi_{\rm CH},1}$ is illustrated in Figure \ref{fig:RHP Phi}. Let $\varphi\in(0,\pi/4)$ and set
\begin{equation}
    \begin{aligned}
    &\Gamma_{X,r}:=b_n+e^{i\varphi}\mathbb{R}^{+}, \quad  && \Gamma_{X,r}^*:=b_n+e^{-i\varphi}\mathbb{R}^{+}, \quad
    \Gamma_{X,l}:=a_0+e^{-i\varphi}\mathbb{R}^{-},  \\
    & \Gamma_{X,l}^{*}:=a_0+e^{i\varphi}\mathbb{R}^{-}, 
 \quad  && \Gamma_{X, d}=e^{-i\pi/2}\mathbb{R}^{+}. 
\end{aligned}
\end{equation}
Clearly, the above rays divide the complex plane into five regions denoted by $\Omega_{X,j}$, $j=1,\ldots,5$, as illustrated in Figure \ref{fig:RHP X}.

We then define
\begin{equation}\label{def:RHP X}
    X(z)=(2s)^{\beta \sigma_3} e^{-\frac{\beta}{2} \pi i \sigma_3} Y(sz) e^{\frac{\beta}{2} \pi i \sigma_3} \left\{\begin{aligned}
  &\Phi_{\rm CH}(2sz), \quad &&z\in\Omega_{X,j} ,~~ j=2,5,\\
  & \widehat{\Phi}_{{\rm CH},j}(2sz),  \quad &&z \in\Omega_{X,j}, ~~ j=1,3,4,
  \end{aligned}
  \right.
  \end{equation}
where $\widehat{\Phi}_{{\rm CH},j}(z)$, $j=1,3,4$, denotes the analytic extension of $\Phi_{\rm CH}(z)$ from the region $\Omega_{\Phi_{\rm CH},j}$ to $\Omega_{X,j}$ for $j=1,3,4$, respectively; see Figures \ref{fig:RHP Phi} and \ref{fig:RHP X} for illustrations of the relevant regions.

\begin{figure}
\centering
\tikzset{every picture/.style={line width=0.75pt}} 
\begin{tikzpicture}[x=0.75pt,y=0.75pt,yscale=-1,xscale=1]
\draw    (84,126) -- (123.5,126) ;
\draw [shift={(123.5,126)}, rotate = 0] [color={rgb, 255:red, 0; green, 0; blue, 0 }  ][fill={rgb, 255:red, 0; green, 0; blue, 0 }  ][line width=0.75]      (0, 0) circle [x radius= 3.35, y radius= 3.35]   ;
\draw [shift={(109.75,126)}, rotate = 180] [color={rgb, 255:red, 0; green, 0; blue, 0 }  ][line width=0.75]    (10.93,-3.29) .. controls (6.95,-1.4) and (3.31,-0.3) .. (0,0) .. controls (3.31,0.3) and (6.95,1.4) .. (10.93,3.29)   ;
\draw [shift={(84,126)}, rotate = 0] [color={rgb, 255:red, 0; green, 0; blue, 0 }  ][fill={rgb, 255:red, 0; green, 0; blue, 0 }  ][line width=0.75]      (0, 0) circle [x radius= 3.35, y radius= 3.35]   ;
\draw    (33.5,75) -- (84,126) ;
\draw [shift={(62.97,104.76)}, rotate = 225.28] [color={rgb, 255:red, 0; green, 0; blue, 0 }  ][line width=0.75]    (10.93,-3.29) .. controls (6.95,-1.4) and (3.31,-0.3) .. (0,0) .. controls (3.31,0.3) and (6.95,1.4) .. (10.93,3.29)   ;
\draw    (34.5,175) -- (84,126) ;
\draw [shift={(63.51,146.28)}, rotate = 135.29] [color={rgb, 255:red, 0; green, 0; blue, 0 }  ][line width=0.75]    (10.93,-3.29) .. controls (6.95,-1.4) and (3.31,-0.3) .. (0,0) .. controls (3.31,0.3) and (6.95,1.4) .. (10.93,3.29)   ;
\draw    (123.5,126) -- (162.5,126) ;
\draw [shift={(162.5,126)}, rotate = 0] [color={rgb, 255:red, 0; green, 0; blue, 0 }  ][fill={rgb, 255:red, 0; green, 0; blue, 0 }  ][line width=0.75]      (0, 0) circle [x radius= 3.35, y radius= 3.35]   ;
\draw [shift={(149,126)}, rotate = 180] [color={rgb, 255:red, 0; green, 0; blue, 0 }  ][line width=0.75]    (10.93,-3.29) .. controls (6.95,-1.4) and (3.31,-0.3) .. (0,0) .. controls (3.31,0.3) and (6.95,1.4) .. (10.93,3.29)   ;
\draw    (162.5,126) -- (203.5,126) ;
\draw [shift={(203.5,126)}, rotate = 0] [color={rgb, 255:red, 0; green, 0; blue, 0 }  ][fill={rgb, 255:red, 0; green, 0; blue, 0 }  ][line width=0.75]      (0, 0) circle [x radius= 3.35, y radius= 3.35]   ;
\draw [shift={(189,126)}, rotate = 180] [color={rgb, 255:red, 0; green, 0; blue, 0 }  ][line width=0.75]    (10.93,-3.29) .. controls (6.95,-1.4) and (3.31,-0.3) .. (0,0) .. controls (3.31,0.3) and (6.95,1.4) .. (10.93,3.29)   ;
\draw  [dash pattern={on 0.84pt off 2.51pt}]  (203.5,126) -- (213.97,126.27) -- (283.5,127) ;
\draw    (320.5,127) -- (321,239) ;
\draw [shift={(320.78,189)}, rotate = 269.74] [color={rgb, 255:red, 0; green, 0; blue, 0 }  ][line width=0.75]    (10.93,-3.29) .. controls (6.95,-1.4) and (3.31,-0.3) .. (0,0) .. controls (3.31,0.3) and (6.95,1.4) .. (10.93,3.29)   ;
\draw    (441.5,127) -- (482.5,127) ;
\draw [shift={(482.5,127)}, rotate = 0] [color={rgb, 255:red, 0; green, 0; blue, 0 }  ][fill={rgb, 255:red, 0; green, 0; blue, 0 }  ][line width=0.75]      (0, 0) circle [x radius= 3.35, y radius= 3.35]   ;
\draw [shift={(468,127)}, rotate = 180] [color={rgb, 255:red, 0; green, 0; blue, 0 }  ][line width=0.75]    (10.93,-3.29) .. controls (6.95,-1.4) and (3.31,-0.3) .. (0,0) .. controls (3.31,0.3) and (6.95,1.4) .. (10.93,3.29)   ;
\draw [shift={(441.5,127)}, rotate = 0] [color={rgb, 255:red, 0; green, 0; blue, 0 }  ][fill={rgb, 255:red, 0; green, 0; blue, 0 }  ][line width=0.75]      (0, 0) circle [x radius= 3.35, y radius= 3.35]   ;
\draw    (482.5,127) -- (526.5,127) ;
\draw [shift={(526.5,127)}, rotate = 0] [color={rgb, 255:red, 0; green, 0; blue, 0 }  ][fill={rgb, 255:red, 0; green, 0; blue, 0 }  ][line width=0.75]      (0, 0) circle [x radius= 3.35, y radius= 3.35]   ;
\draw [shift={(510.5,127)}, rotate = 180] [color={rgb, 255:red, 0; green, 0; blue, 0 }  ][line width=0.75]    (10.93,-3.29) .. controls (6.95,-1.4) and (3.31,-0.3) .. (0,0) .. controls (3.31,0.3) and (6.95,1.4) .. (10.93,3.29)   ;
\draw  [dash pattern={on 0.84pt off 2.51pt}]  (362.5,127) -- (441.5,127) ;
\draw    (283.5,127) -- (320.5,127) ;
\draw [shift={(320.5,127)}, rotate = 0] [color={rgb, 255:red, 0; green, 0; blue, 0 }  ][fill={rgb, 255:red, 0; green, 0; blue, 0 }  ][line width=0.75]      (0, 0) circle [x radius= 3.35, y radius= 3.35]   ;
\draw [shift={(308,127)}, rotate = 180] [color={rgb, 255:red, 0; green, 0; blue, 0 }  ][line width=0.75]    (10.93,-3.29) .. controls (6.95,-1.4) and (3.31,-0.3) .. (0,0) .. controls (3.31,0.3) and (6.95,1.4) .. (10.93,3.29)   ;
\draw [shift={(283.5,127)}, rotate = 0] [color={rgb, 255:red, 0; green, 0; blue, 0 }  ][fill={rgb, 255:red, 0; green, 0; blue, 0 }  ][line width=0.75]      (0, 0) circle [x radius= 3.35, y radius= 3.35]   ;
\draw    (320.5,127) -- (362.5,127) ;
\draw [shift={(362.5,127)}, rotate = 0] [color={rgb, 255:red, 0; green, 0; blue, 0 }  ][fill={rgb, 255:red, 0; green, 0; blue, 0 }  ][line width=0.75]      (0, 0) circle [x radius= 3.35, y radius= 3.35]   ;
\draw [shift={(347.5,127)}, rotate = 180] [color={rgb, 255:red, 0; green, 0; blue, 0 }  ][line width=0.75]    (10.93,-3.29) .. controls (6.95,-1.4) and (3.31,-0.3) .. (0,0) .. controls (3.31,0.3) and (6.95,1.4) .. (10.93,3.29)   ;
\draw    (526.5,127) -- (570.5,127) ;
\draw [shift={(570.5,127)}, rotate = 0] [color={rgb, 255:red, 0; green, 0; blue, 0 }  ][fill={rgb, 255:red, 0; green, 0; blue, 0 }  ][line width=0.75]      (0, 0) circle [x radius= 3.35, y radius= 3.35]   ;
\draw [shift={(554.5,127)}, rotate = 180] [color={rgb, 255:red, 0; green, 0; blue, 0 }  ][line width=0.75]    (10.93,-3.29) .. controls (6.95,-1.4) and (3.31,-0.3) .. (0,0) .. controls (3.31,0.3) and (6.95,1.4) .. (10.93,3.29)   ;
\draw    (570.5,127) -- (620,78) ;
\draw [shift={(599.51,98.28)}, rotate = 135.29] [color={rgb, 255:red, 0; green, 0; blue, 0 }  ][line width=0.75]    (10.93,-3.29) .. controls (6.95,-1.4) and (3.31,-0.3) .. (0,0) .. controls (3.31,0.3) and (6.95,1.4) .. (10.93,3.29)   ;
\draw    (570.5,127) -- (621,178) ;
\draw [shift={(599.97,156.76)}, rotate = 225.28] [color={rgb, 255:red, 0; green, 0; blue, 0 }  ][line width=0.75]    (10.93,-3.29) .. controls (6.95,-1.4) and (3.31,-0.3) .. (0,0) .. controls (3.31,0.3) and (6.95,1.4) .. (10.93,3.29)   ;
\draw (77,132.4) node [anchor=north west][inner sep=0.75pt]    {$a_{0}$};
\draw (117,130.4) node [anchor=north west][inner sep=0.75pt]    {$b_{0}$};
\draw (157.5,132.4) node [anchor=north west][inner sep=0.75pt]    {$a_{1}$};
\draw (195,129.4) node [anchor=north west][inner sep=0.75pt]    {$b_{1}$};
\draw (277,132.4) node [anchor=north west][inner sep=0.75pt]    {$a_{m}$};
\draw (316,109.4) node [anchor=north west][inner sep=0.75pt]    {$0$};
\draw (353.5,129.4) node [anchor=north west][inner sep=0.75pt]    {$b_{m}$};
\draw (429.25,132.4) node [anchor=north west][inner sep=0.75pt]    {$a_{n-1}$};
\draw (471.25,130.4) node [anchor=north west][inner sep=0.75pt]    {$b_{n-1}$};
\draw (518.25,132.4) node [anchor=north west][inner sep=0.75pt]    {$a_{n}$};
\draw (564.25,131.4) node [anchor=north west][inner sep=0.75pt]    {$b_{n}$};
\draw (599,55.4) node [anchor=north west][inner sep=0.75pt]    {$\Gamma _{X,r}$};
\draw (602,183.4) node [anchor=north west][inner sep=0.75pt]    {$\Gamma _{X,r}^{*}$};
\draw (41,57.4) node [anchor=north west][inner sep=0.75pt]    {$\Gamma _{X,l}$};
\draw (47,164.4) node [anchor=north west][inner sep=0.75pt]    {$\Gamma _{X,l}^{*}$};
\draw (307,62.4) node [anchor=north west][inner sep=0.75pt]    {$\Omega _{X,1}$};
\draw (30,112.4) node [anchor=north west][inner sep=0.75pt]    {$\Omega _{X,2}$};
\draw (192,182.4) node [anchor=north west][inner sep=0.75pt]    {$\Omega _{X,3}$};
\draw (433,182.4) node [anchor=north west][inner sep=0.75pt]    {$\Omega _{X,4}$};
\draw (596,118.4) node [anchor=north west][inner sep=0.75pt]    {$\Omega _{X,5}$};
\draw (322,227.4) node [anchor=north west][inner sep=0.75pt]    {$\Gamma _{X,d}$};
\end{tikzpicture}
\caption{The jump contours and regions of the RH problem for $X$.}\label{fig:RHP X}
\end{figure}
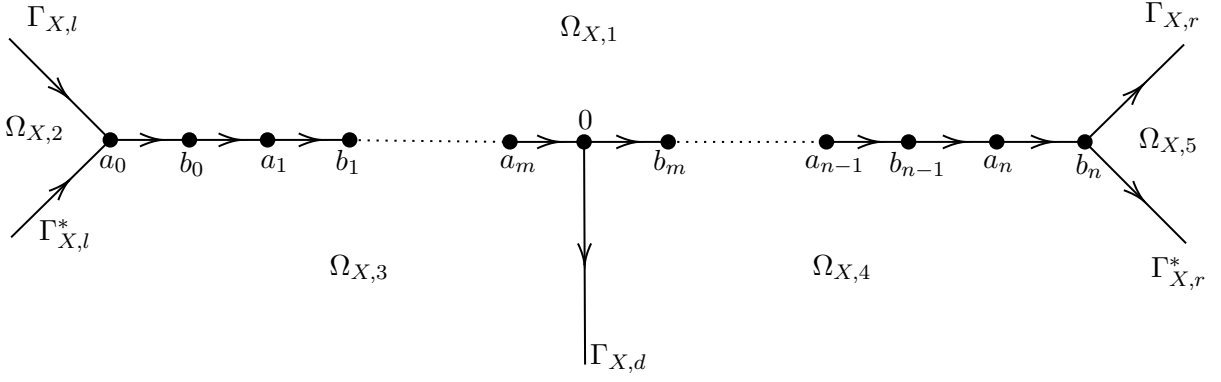
On account of the RH problems for $\Phi_{\rm CH}$ and $Y$, it is straightforward to check that $X$ satisfies the following RH problem.
  \paragraph{RH problem for $X$}
  \begin{itemize} 
      \item[\rm (a)] $X(z)$ is holomorphic for $z\in\mathbb{C}\backslash\Gamma_{X}$, 
     where 
      \begin{align}\label{def:Gamma_X}
      \Gamma_{X}:=\cup_{j\in\{r,l\}}\left(\Gamma_{X,j}\cup\Gamma_{X,j}^*\right)\cup\Gamma_{X,d}\cup[a_0,b_n];
      \end{align}
see Figure \ref{fig:RHP X} for an illustration. 
  
  \item[\rm (b)] For $z\in\Gamma_{X}$, we have 
  \begin{equation}\label{eq:Xjump}
    X_{+}(z)=X_{-}(z)J_{X}(z),
  \end{equation}
  where
  \begin{equation}\label{equ:jump of X}
      J_X(z)=\left\{\begin{array}{ll}
  {\begin{pmatrix}
  1 & 0 \\
  e^{-(\alpha-\beta)\pi i} & 1
  \end{pmatrix}},& {z \in \Gamma_{X,r}}, \\
  {\begin{pmatrix}
      1 & -e^{(\alpha-\beta)\pi i} \\
  0 & 1
  \end{pmatrix}}
  ,& {z \in \Gamma_{X,r}^*} ,\\
  {\begin{pmatrix}
      1 & 0 \\
  e^{(\alpha-\beta)\pi i} & 1
  \end{pmatrix}
  
 }, & { z \in \Gamma_{X,l}}, \\
  {\begin{pmatrix}
      1 & -e^{-(\alpha-\beta)\pi i} \\
  0 & 1
  \end{pmatrix}
  
  },& { z \in \Gamma_{X,l}^*}, \\
  {e^{2 \beta \pi i \sigma_3}},& {z \in \Gamma_{X,d} },\\
  {\begin{pmatrix}
       0 & -e^{-(\alpha-\beta)\pi i } \\
  e^{(\alpha-\beta)\pi i } & 0
  \end{pmatrix}
 
 },& {z \in \cup_{j=0}^{m-1}(a_j,b_j)}\cup(a_m,0), \\
  {\begin{pmatrix}
      0 & -e^{-(\alpha-\beta)\pi i } \\
    e^{(\alpha-\beta)\pi i } & 1
  \end{pmatrix}}   
    ,& z \in \cup_{j=0}^{m-1}(b_j,a_{j+1}), \\
  {\begin{pmatrix}
      0 & -e^{(\alpha-\beta)\pi i } \\
  e^{-(\alpha-\beta)\pi i } & 0
  \end{pmatrix}}
  ,& {z \in \cup_{j=m+1}^{n}(a_j,b_j)}\cup(0,b_m),\\
  {\begin{pmatrix}
      0 & -e^{(\alpha-\beta)\pi i } \\
    e^{-(\alpha-\beta)\pi i } & 1 
  \end{pmatrix}},& z \in \cup_{j=m}^{n-1}(b_j,a_{j+1}).
  \end{array}\right.
  \end{equation}
  
  
  \item[\rm (c)]   As $z\rightarrow \infty$, we have 
  \begin{equation}\label{equ:X asym at inf}
      X(z)=\left(I+\frac{X_1(s)}{z}+\mathcal{O}\left(z^{-2}\right)\right) z^{-\beta \sigma_3} e^{-isz \sigma_3},
  \end{equation}
  where 
  \begin{align}\label{equ: X_1(s)}
  X_{1}(s)=(2s)^{\beta\sigma_3}
  \begin{pmatrix}
    \frac{1}{2s}\left(\Phi_{{\rm CH},1}\right)_{11}+\frac{1}{s}{\left(Y_{1}(s)\right)}_{11} & \frac{1}{2s}(\Phi_{{\rm CH},1})_{12}+\frac{1}{s}{(Y_{1}(s))}_{12}e^{-\pi i\beta}\\
    \frac{1}{2s}(\Phi_{{\rm CH},1})_{21}+\frac{1}{s}{(Y_{1}(s))}_{21}e^{\pi i\beta} & \frac{1}{2s}(\Phi_{{\rm CH},1})_{22}+\frac{1}{s}{(Y_{1}(s))}_{22}
  \end{pmatrix}(2s)^{-\beta\sigma_3},
  \end{align}
  with $\Phi_{{\rm CH},1}$ and $Y_1$ given in \eqref{def:PhiCH1} and \eqref{equ: Y_1(s)}, respectively.

  \item [\rm (d)]As $z\rightarrow p$ from $\Omega_{X,1}$ with $p\in\mathcal{I}_{e}$, we have
  \begin{align}\label{equ:X asym at a_j,b_j-1}
      X(z)=X_{0}^{(p)}(z)
      \begin{pmatrix}
          1 & \sign{p}\cdot\frac{-1}{2\pi i}e^{\sign{p}(\alpha-\beta)\pi i} \log \left(z-p\right) \\
  0 & 1
      \end{pmatrix},
  \end{align}
where $X_0^{(p)}(z)$ is analytic near $z=p$, the branch cut of $\log(z-p)$ is taken to be the positive and negative real axis for $p<0$ and $p>0$, respectively. 
The behavior of $X(z)$ as $z\rightarrow p$ from other regions is determined by using the jump condition \eqref{eq:Xjump} and \eqref{equ:jump of X}.
  
\item[\rm (e)]   As $z \rightarrow 0$ from $\Omega_{X,1}$, we have
\begin{equation}\label{equ:X asym at 0}
X(z)=X_0(z)z^{\alpha\sigma_3},
\end{equation}
where $X_0(z)$ is analytic near $z=0$ and $z^\alpha$ takes the principal branch. The behavior of $X(z)$ as $z\rightarrow 0$  from other regions is  determined by using the jump condition \eqref{eq:Xjump} and \eqref{equ:jump of X}.
\end{itemize}

\subsection{The differential identity}\label{subsec: differential identity}
As in \cite{Fahs-Krasovsky-2024CPAM} for the sine kernel determinant or in \cite{Blackstone-CC-JL-RMTA,Blackstone-CC-JL-IMRN,Krasovsky-Maroudas-AdV2024} for the Airy kernel determinant, one could relate $\partial_s\log \mathcal{F}(s\Sigma)$ to the local behavior of $X$ at $z=p\in \mathcal{I}_e$. This differential identity, however, will create difficulties in further simplification of final asymptotic formula, especially for $n>1$. Instead, we are able to establish a differential identity involving the large-$z$ behavior of $X$ as stated in the following proposition. 
\begin{proposition}\label{prop: differential identity}
With $X_1$ given in \eqref{equ: X_1(s)}, we have  
    \begin{align}\label{eq:Fdiff}
    \partial_s\log\mathcal{F}(s\Sigma)=i\left((X_{1}(s))_{11}-(X_{1}(s))_{22}\right)-\frac{\alpha^2-\beta^2}{s},
    \end{align}
where $\mathcal{F}(\Sigma)$ is defined in \eqref{equ:CHF Fredholm det}.
\end{proposition}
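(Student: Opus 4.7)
The plan is to start from the standard log-derivative identity \eqref{equ: partialsF(sSigma)=Trace} and, through a sequence of manipulations exploiting the integrable structure of the kernel together with the Lax equation satisfied by the confluent hypergeometric parametrix $\Phi_{{\rm CH}}$, convert it into an expression involving $X_{1}(s)$. The starting point is
\begin{equation*}
\partial_{s}\log\mathcal{F}(s\Sigma) = -{\rm Tr}\left[\left(1-\mathcal{K}^{(\alpha,\beta)}_{s}|_{\Sigma}\right)^{-1}\partial_{s}\mathcal{K}^{(\alpha,\beta)}_{s}|_{\Sigma}\right],
\end{equation*}
where the $s$-derivative of the scaled kernel reads
\begin{equation*}
\partial_{s}K^{(\alpha,\beta)}_{s}(x,y) = \frac{x\,\vec{f}'(sx)^{{\rm T}}\vec{h}(sy) + y\,\vec{f}(sx)^{{\rm T}}\vec{h}'(sy)}{x-y}.
\end{equation*}
The key algebraic input is that $A(x)$ and $B(x)$ satisfy the confluent hypergeometric equation, which by \eqref{def:f asso Phi}--\eqref{def:h asso Phi} translates into first-order linear systems $\vec{f}'(z) = \mathcal{U}(z)\vec{f}(z)$ and $\vec{h}'(z) = -\mathcal{U}(z)^{{\rm T}}\vec{h}(z)$, whose Lax matrix has the form $\mathcal{U}(z) = -i\sigma_{3} + \mathcal{U}_{0}/z$ for an explicit constant matrix $\mathcal{U}_{0} = \mathcal{U}_{0}(\alpha,\beta)$ inherited from the ODE for $\Phi_{{\rm CH}}$.

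Next, substituting this Lax equation and using the IIKS form of the resolvent (with $\vec{F}=Y_{+}\vec{f}$ and $\vec{H}=Y_{+}^{-{\rm T}}\vec{h}$), the trace can be rewritten as a contour integral around $s\Sigma$. Deforming this contour to a large circle at infinity plus a small loop around $z=0$, the residue at infinity produces a contribution of the form $(i/s)\,{\rm tr}(\sigma_{3}Y_{1}(s))$ from the coefficient $Y_{1}(s)$ in \eqref{equ: Y_1(s)}, while the residue at $z=0$ contributes a term proportional to $\alpha^{2}/s$ arising from the $|x|^{\alpha}$-type behavior of $\vec{f},\vec{h}$ at the origin. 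Although the resolvent kernel has logarithmic singularities at $sa_{j},sb_{j}$, the boundary contributions there collapse to finite terms through cancellations with the $1/(x-y)$ factor in the Lax-rewritten integrand. The final step is to translate $Y_{1}(s)$ into $X_{1}(s)$ via \eqref{equ: X_1(s)}: the diagonal entries of $\Phi_{{\rm CH},1}$ entering that identity are known explicitly from the large-$z$ expansion of $\Phi_{{\rm CH}}$, and combine with the $\alpha^{2}/s$ contribution to assemble the net correction $-(\alpha^{2}-\beta^{2})/s$, leaving the main term $i[(X_{1}(s))_{11} - (X_{1}(s))_{22}]$.

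The main obstacle lies in the second step: rigorously extracting the finite endpoint contributions at $sa_{j},sb_{j}$ together with the $z=0$ residue when deforming the contour. This requires carefully combining the logarithmic endpoint behavior of $Y$, the $|x|^{\alpha}$-type singular behavior of $\vec{f},\vec{h}$ at the origin, and the Lax structure, in order to see the exact cancellations. An alternative, geometrically cleaner route is to observe that $\partial_{s}X(z)\cdot X(z)^{-1}$ is in fact entire in $z$---its jumps across $\Gamma_{X}$ vanish because $J_{X}$ in \eqref{equ:jump of X} is $s$-independent, and the local structures \eqref{equ:X asym at a_j,b_j-1}--\eqref{equ:X asym at 0} are manifestly preserved under $\partial_{s}$---so from \eqref{equ:X asym at inf} it must coincide with the polynomial $-iz\sigma_{3} - i[X_{1}(s),\sigma_{3}]$. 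A Jimbo--Miwa--Ueno tau-function argument then identifies $i\,{\rm tr}(\sigma_{3}X_{1})$ as the leading part of $\partial_{s}\log\mathcal{F}(s\Sigma)$, with the $-(\alpha^{2}-\beta^{2})/s$ correction again carried by the local monodromy exponents of $X$ at $z=0$ and $z=\infty$.
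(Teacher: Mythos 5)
Your Lax-equation idea is on the right track, and in fact if you push it through carefully it reproduces the paper's key observation — but your description of how it is executed contains a concrete error, and the argument as written does not close.

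Here is the point you are missing. Once you write $\vec{f}'(z)=\mathcal{U}(z)\vec{f}(z)$, $\vec{h}'(z)=-\mathcal{U}(z)^{\rm T}\vec{h}(z)$ with $\mathcal{U}(z)=-i\sigma_3+\mathcal{U}_0/z$, the numerator $x\vec{f}'(sx)^{\rm T}\vec{h}(sy)+y\vec{f}(sx)^{\rm T}\vec{h}'(sy)$ becomes $\vec{f}(sx)^{\rm T}\bigl[x\mathcal{U}(sx)^{\rm T}-y\mathcal{U}(sy)^{\rm T}\bigr]\vec{h}(sy)$, and the $\mathcal{U}_0/z$ parts cancel \emph{identically}: $x\cdot\mathcal{U}_0^{\rm T}/(sx)-y\cdot\mathcal{U}_0^{\rm T}/(sy)=0$. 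You are left with exactly $-i(x-y)\,\vec{f}(sx)^{\rm T}\sigma_3\vec{h}(sy)$, so the $1/(x-y)$ cancels pointwise and $\partial_sK_s^{(\alpha,\beta)}(x,y)=-i\vec{f}(sx)^{\rm T}\sigma_3\vec{h}(sy)$, a genuine rank-two product kernel. There is no contour to deform, no residue at $z=0$, and no delicate cancellation of endpoint contributions to justify — the entire trace becomes
$\partial_s\log\mathcal{F}(s\Sigma)=i\int_\Sigma\bigl(F_1(sx)h_1(sx)-F_2(sx)h_2(sx)\bigr)\,\dif x=\tfrac{i}{s}\bigl((Y_1)_{11}-(Y_1)_{22}\bigr)$, read off directly from \eqref{equ: Y_1(s)}. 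This is precisely what the paper proves, except it establishes the factorization by a direct CHF-recurrence computation (Lemma~\ref{lem: partial_s scaled CHF kernel}) rather than by invoking the Lax form. Your approach, correctly carried out, would have been a conceptually cleaner way to obtain the same "remarkable differential identity."

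The concrete error in your write-up is the attribution of the $-(\alpha^2-\beta^2)/s$ term. It does \emph{not} arise from a residue at $z=0$ in any contour-deformed trace integral, nor from the $|x|^\alpha$-type behavior of $\vec{f},\vec{h}$ near the origin — the trace integral itself contributes only $\tfrac{i}{s}\operatorname{tr}(\sigma_3Y_1)$, with nothing left over. The correction term comes entirely from the undressing transformation $Y\to X$: comparing \eqref{equ: X_1(s)} one has $(Y_1)_{jj}=s(X_1)_{jj}-\tfrac12(\Phi_{{\rm CH},1})_{jj}$, and substituting $(\Phi_{{\rm CH},1})_{22}=-(\Phi_{{\rm CH},1})_{11}=(\alpha^2-\beta^2)i$ produces the $-(\alpha^2-\beta^2)/s$ shift. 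Your alternative route via the entirety of $\partial_sX\cdot X^{-1}$ and a JMU tau-function argument is a legitimate framework, but as written it is a sketch, not a proof: you have not verified that $\partial_sX\cdot X^{-1}$ has no pole at $z=0$ (the local factor $z^{\alpha\sigma_3}$ in \eqref{equ:X asym at 0} is $s$-independent, which is the needed fact, but it must be said), and the identification of $\log\mathcal{F}$ with a JMU tau function — and hence where the monodromy-exponent correction enters — is asserted rather than demonstrated. The paper's route avoids all of this by keeping the computation entirely at the level of $Y_1$ and its explicit relation to $X_1$.
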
 
A key ingredient in the proof of the above proposition is the following differential identity for the scaled confluent hypergeometric kernel.
\begin{lemma} \label{lem: partial_s scaled CHF kernel}
Let $K^{(\alpha,\beta)}_s(x,y)$ be the scaled  confluent hypergeometric kernel given in \eqref{equ: scaled kernel}, we have 
\begin{multline}\label{equ:patial_s scaled kernel in terms of equivalent CHF kernel}
\partial_{s}K_{s}^{(\alpha,\beta)}(x,y)=\frac{1}{\pi}\frac{\Gamma(1+\alpha+\beta)\Gamma(1+\alpha-\beta)}{(1+2\alpha)\Gamma(1+2\alpha)^2}{\chi}_{\beta}(x)^{1/2}{\chi}_{\beta}(y)^{1/2}e^{-i(x+y)}4^{\alpha}|xy|^{\alpha}\\
\times\left[\left(1+2\alpha\right)\mathcal{Q}(sx)\mathcal{Q}(sy)
+s^2\frac{\alpha^2-\beta^2}{\alpha^2(1+2\alpha)}x\mathcal{P}(sx)\cdot y\mathcal{P}(sy)- is\frac{\beta}{\alpha}\left(x\mathcal{P}(sx)\mathcal{Q}(sy)+y\mathcal{P}(sy)\mathcal{Q}(sx)\right)\right],
\end{multline}
where the functions $\mathcal{P}$ and $\mathcal{Q}$ are given in 
\eqref{def:mathcalP and mathcalQ}. Equivalently, one has
\begin{align}\label{equ:patial_s scaled kernel in terms of A and B}
    \partial_{s}K_{s}^{(\alpha,\beta)}(x,y)=\frac{1}{2\pi}\frac{\Gamma(1+\alpha+\beta)\Gamma(1+\alpha-\beta)}{{\Gamma(1+2\alpha)}^2}\left(A(sx)B(sy)+A(sy)B(sx)\right),
\end{align}
where the functions $A$ and $B$ are given in \eqref{def: A(x)} and \eqref{def: B(x)}, respectively.
\end{lemma}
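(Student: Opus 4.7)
The plan is to first establish the compact form \eqref{equ:patial_s scaled kernel in terms of A and B} and then derive \eqref{equ:patial_s scaled kernel in terms of equivalent CHF kernel} from it via the linear relations \eqref{equ:xmathcalP(x) relation to A and B}--\eqref{equ:xmathcalQ(x) relation to A and B}. Starting from $K_s^{(\alpha,\beta)}(x,y) = \frac{1}{2\pi i}\frac{\Gamma(1+\alpha+\beta)\Gamma(1+\alpha-\beta)}{\Gamma(1+2\alpha)^2}\frac{A(sx)B(sy)-A(sy)B(sx)}{x-y}$ and differentiating in $s$, \eqref{equ:patial_s scaled kernel in terms of A and B} reduces to the pointwise identity
\begin{equation}\label{eq:targetLem22}
uA'(u)B(v) + vA(u)B'(v) - vA'(v)B(u) - uA(v)B'(u) = i(u-v)\bigl[A(u)B(v) + A(v)B(u)\bigr]
\end{equation}
at $u = sx$, $v=sy$. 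Note that the left-hand side vanishes along the diagonal $u=v$, consistent with the factor $(u-v)$ on the right.

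The core of the argument is a pair of first-order contiguous identities for $A$ and $B$:
\begin{equation}\label{eq:AprimeLem22}
xA'(x) = (\beta + ix)A(x) + (\alpha-\beta)B(x), \qquad xB'(x) = (\alpha+\beta)A(x) - (\beta+ix)B(x).
\end{equation}
To obtain these I would first apply Kummer's transformation \eqref{equ:phi reverse formulae} to re-express $A$ so that both $A$ and $B$ take the common form $\chi_\beta(x)^{1/2}|2x|^\alpha e^{\mp ix}\phi(a,1+2\alpha;\pm 2ix)$, the difference lying only in the first parameter. Differentiating via $\tfrac{d}{dz}\phi(a,b;z) = (a/b)\phi(a+1,b+1;z)$ produces a $\phi$ with both parameters shifted up by one, which the contiguous relation $b[\phi(a,b;z)-\phi(a-1,b;z)] = z\phi(a,b+1;z)$ rewrites as a linear combination of exactly the two $\phi$-functions appearing in $A$ and $B$.

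Plugging \eqref{eq:AprimeLem22} into the left-hand side of \eqref{eq:targetLem22} and collecting terms, the diagonal contributions $A(u)A(v)$ and $B(u)B(v)$ cancel because their coefficients come with opposite signs $\pm(\alpha\pm\beta)$, while the coefficients of the symmetric terms $A(u)B(v)$ and $A(v)B(u)$ reduce to $(\beta+iu)-(\beta+iv) = i(u-v)$. This proves \eqref{equ:patial_s scaled kernel in terms of A and B}. To deduce \eqref{equ:patial_s scaled kernel in terms of equivalent CHF kernel}, I would invert \eqref{equ:xmathcalP(x) relation to A and B}--\eqref{equ:xmathcalQ(x) relation to A and B} to obtain $A = U[\mathcal Q + \tfrac{i(\alpha-\beta)}{\alpha(1+2\alpha)}x\mathcal P]$ and $B = U[\mathcal Q - \tfrac{i(\alpha+\beta)}{\alpha(1+2\alpha)}x\mathcal P]$ with $U(x) = \chi_\beta(x)^{1/2}|2x|^\alpha e^{-ix}$, substitute into $A(sx)B(sy)+A(sy)B(sx)$ and expand. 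The three terms $\mathcal Q\mathcal Q$, $\mathcal P\mathcal P$, and the symmetric mixed term then assemble with coefficients matching \eqref{equ:patial_s scaled kernel in terms of equivalent CHF kernel} after using $U(sx)U(sy) = \chi_\beta(x)^{1/2}\chi_\beta(y)^{1/2}\, 4^\alpha s^{2\alpha}|xy|^\alpha e^{-is(x+y)}$.

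The main obstacle is identifying the exact combination of Kummer's transformation with the contiguous relation that forces $x\tfrac{d}{dx}$ to close inside the two-dimensional space spanned by $\{A,B\}$: a priori $xA'(x)$ lives in a strictly larger function space, so the fact that the ``shifted'' $\phi(a+1,b+1;z)$ generated by differentiation lines up exactly with $A$ and the Kummer-transformed $B$ is a genuinely delicate feature of the specific parameter choice $(1+\alpha\pm\beta,\, 1+2\alpha)$. Once \eqref{eq:AprimeLem22} is secured, both identities in the lemma follow by routine algebra.
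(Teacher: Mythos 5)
Your proof is correct, but it takes a genuinely different route from the paper. The paper works entirely in the $(\mathcal P,\mathcal Q)$ picture: it differentiates the scaled kernel in the form \eqref{def:equivalent CHF kernel}, uses the recurrences \eqref{equ:CHF relation-1}--\eqref{equ:CHF relation-5-Kummer} to write $\mathcal P'$ and $\mathcal Q'$ as linear combinations of $\mathcal P$ and $\mathcal Q$ (equations \eqref{equ:expression of mathcalP'(sx)}--\eqref{equ:expression of mathcalQ'(sx)}), substitutes to get \eqref{equ:patial_s scaled kernel in terms of equivalent CHF kernel}, and only then passes to the compact $(A,B)$ form \eqref{equ:patial_s scaled kernel in terms of A and B} via \eqref{equ:xmathcalP(x) relation to A and B}--\eqref{equ:xmathcalQ(x) relation to A and B}. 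You run the argument in the opposite direction: you identify the first-order closure
\begin{equation*}
xA'(x) = (\beta + ix)A(x) + (\alpha-\beta)B(x), \qquad xB'(x) = (\alpha+\beta)A(x) - (\beta+ix)B(x),
\end{equation*}
observe that plugging it into the differentiated IIKS numerator cancels the $A(u)A(v)$ and $B(u)B(v)$ terms and produces the factor $i(u-v)$, which kills the denominator and gives \eqref{equ:patial_s scaled kernel in terms of A and B} directly; then \eqref{equ:patial_s scaled kernel in terms of equivalent CHF kernel} follows by inverting \eqref{equ:xmathcalP(x) relation to A and B}--\eqref{equ:xmathcalQ(x) relation to A and B}. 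Your route is arguably the more transparent one: the first-order system makes visible the structural reason the cancellation happens (the operator $x\tfrac{d}{dx}$ preserves $\mathrm{span}\{A,B\}$), whereas the paper's computation appears as a somewhat opaque substitution. I have checked your closure relations and the resulting algebra, and they are correct.

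One small gap in the sketch: the single contiguous relation you quote, $b[\phi(a,b;z)-\phi(a-1,b;z)]=z\phi(a,b+1;z)$ (this is \eqref{equ:CHF relation-4}), does close the system for $xA'$, since after Kummer on $A$ the differentiated $\phi$ has first parameter $1+\alpha-\beta$ and the relation maps it onto the pair $\{\phi(\alpha-\beta,1+2\alpha;\cdot),\phi(1+\alpha-\beta,1+2\alpha;\cdot)\}$. But applied to $xB'$, the same relation produces $\phi(2+\alpha-\beta,1+2\alpha;\cdot)$, which is \emph{not} one of the two basis functions, so another contiguous relation (the three-term recurrence in the first parameter, e.g. DLMF 13.3.1) is needed to finish. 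Alternatively, and more cleanly, the second identity of \eqref{eq:AprimeLem22} is the complex conjugate of the first, since $B(x)=\overline{A(x)}$ for $x\in\mathbb R$, $\alpha\in\mathbb R$, $\beta\in i\mathbb R$; conjugating $xA'=(\beta+ix)A+(\alpha-\beta)B$ gives $xB'=-(\beta+ix)B+(\alpha+\beta)A$ immediately. With that repair the proof is complete. (You would also find, upon carrying out the final algebra, that the prefactor of the bracket in \eqref{equ:patial_s scaled kernel in terms of equivalent CHF kernel} should carry $s^{2\alpha}e^{-is(x+y)}$ rather than $e^{-i(x+y)}$, consistent with what \eqref{equ:patial_s scaled kernel in terms of A and B} forces; this appears to be a typographical slip in the displayed formula, not an error in your argument.)
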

\begin{proof}
A direct computation shows that 
\begin{align}\label{equ:computations of partialsKs(x,y)}
&\partial_sK_{s}^{(\alpha,\beta)}(x,y)
\nonumber
\\
&=(2\alpha+1)K^{(\alpha,\beta)}(sx,sy)-i(x+y)sK^{(\alpha,\beta)}(sx,sy)\nonumber\\
&~~+\frac{1}{\pi}\frac{\Gamma(1+\alpha+\beta)\Gamma(1+\alpha-\beta)}{(1+2\alpha)\Gamma(1+2\alpha)^2}{\chi}_{\beta}(x)^{1/2}{\chi}_{\beta}(y)^{1/2}e^{-i(x+y)}4^{\alpha}|xy|^{\alpha}\nonumber\\
&~~\times s\left[2ix^2\mathcal{P}'(sx)\mathcal{Q}(sy)-2iy^2\mathcal{P}'(sy)\mathcal{Q}(sx)+
2ixy\left(\mathcal{P}(sx)\mathcal{Q}'(sy)-\mathcal{P}(sy)\mathcal{Q}'(sx)\right)\right].
\end{align}
To evaluate the functions $\mathcal{P}'(rx)$ and $\mathcal{Q}'(rx)$, we note that the 
confluent hypergeometric function $\phi$ satisfies the derivative formula (see \cite[Equation (13.3.15)]{NISTbook})
\begin{align}
    \frac{\dif}{\dif z}\phi(a,b;z)=\frac{a}{b}\phi(1+a,1+b;z), \label{equ:CHF relation-1}
\end{align}
the recurrence relations (see \cite[Equations (13.3.2)--(13.3.4)]{NISTbook}) 
\begin{align}
& b(b-1)\phi(a,b-1;z)+b(1-b-z)\phi(a,b;z)+z(b-a)\phi(a,b+1;z)=0, \label{equ:CHF relation-2}\\
& (a-b+1)\phi(a,b;z)-a\phi(a+1,b;z)+(b-1)\phi(a,b-1;z)=0,\label{equ:CHF relation-3}\\
& b\phi(a,b;z)-b\phi(a-1,b;z)-z\phi(a,b+1;z)=0, \label{equ:CHF relation-4}
\end{align}
and an equivalent form of Kummer's differential equation (see \cite[Equation (13.3.13)]{NISTbook})
\begin{align}\label{equ:CHF relation-5-Kummer}
(a+1)z\phi(a+2,b+2;z)+(b+1)(b-z)\phi(a+1,b+1;z)-b(b+1)\phi(a,b;z)=0.
\end{align}
With the aid of the above formulae, we have
\begin{align}\label{equ: computation of mathcalP'(sx)}
\mathcal{P}'(sx)&\overset{\eqref{def:mathcalP and mathcalQ}}{=}\phi'(1+\alpha+\beta,2+2\alpha;2isx)\nonumber\\
&\overset{\eqref{equ:CHF relation-1}}{=}\frac{1+\alpha+\beta}{2+2\alpha}\phi(2+\alpha+\beta,3+2\alpha;2isx)\nonumber\\
&\overset{\eqref{equ:CHF relation-5-Kummer}}{=}\frac{1+2\alpha}{2isx}\phi(\alpha+\beta,1+2\alpha;2isx)-\frac{1+2\alpha-2isx}{2isx}\phi(1+\alpha+\beta,2\alpha+2;2isx)\nonumber\\
&\overset{\eqref{equ:CHF relation-3}}{=}\frac{(1+2\alpha)(\alpha+\beta)}{2isx(\beta-\alpha)}\phi(1+\alpha+\beta,1+2\alpha;2isx)-\frac{2\alpha(1+2\alpha)}{2isx(\beta-\alpha)}\phi(\alpha+\beta,2\alpha;2isx)\nonumber\\
&\qquad -\frac{1+2\alpha-2isx}{2isx}\phi(1+\alpha+\beta,2\alpha+2;2isx),\nonumber
\end{align}
or equivalently, by \eqref{def:mathcalP and mathcalQ}, 
\begin{equation}\label{equ:rep of mathcalP'(sx)}
    \mathcal{P}'(sx)=\frac{\alpha(1+2\alpha)}{isx(\beta-\alpha)}\mathcal{Q}'(sx)-\frac{\alpha(1+2\alpha)}{isx(\beta-\alpha)}\mathcal{Q}(sx)-\frac{1+2\alpha-2isx}{2isx}\mathcal{P}(sx).
\end{equation}
Similarly, 
\begin{align}
\mathcal{Q}'(sx)&\overset{\eqref{def:mathcalP and mathcalQ}}{=}\phi'(\alpha+\beta,2\alpha;2isx)\nonumber\\
&\overset{\eqref{equ:CHF relation-1}}=\frac{\alpha+\beta}{2\alpha}\phi(1+\alpha+\beta,1+2\alpha;2isx) \nonumber\\
&\overset{\eqref{equ:CHF relation-2}}{=}\frac{\alpha+\beta}{2\alpha+2isx}\phi(1+\alpha+\beta,2\alpha;2isx)+\frac{isx(\alpha^2-\beta^2)}{2\alpha(1+2\alpha)(\alpha+isx)}\phi(1+\alpha+\beta,2+2\alpha;2isx)\nonumber\\
&\overset{\eqref{equ:CHF relation-4}}{=}\frac{\alpha+\beta}{2\alpha(2\alpha+2isx)}\left[2\alpha\phi(\alpha+\beta,2\alpha;2isx)+2isx\phi(1+\alpha+\beta,1+2\alpha;2isx)\right]\nonumber
\\
&\qquad +\frac{isx(\alpha^2-\beta^2)}{2\alpha(1+2\alpha)(\alpha+isx)}\phi(1+\alpha+\beta,2+2\alpha;2isx). \nonumber
\end{align}
This, together with \eqref{def:mathcalP and mathcalQ}, implies that
\begin{align}\label{equ:rep of mathcalQ'(sx)}
\frac{\alpha}{\alpha+isx}\mathcal{Q}'(sx)=\frac{isx(\alpha^2-\beta^2)}{2\alpha(1+2\alpha)(\alpha+isx)}\mathcal{P}(sx)+\frac{\alpha+\beta}{2\alpha+2isx}\mathcal{Q}(sx).
\end{align}
By \eqref{equ:rep of mathcalP'(sx)} and \eqref{equ:rep of mathcalQ'(sx)}, it follows that
\begin{align}
&\mathcal{P}'(sx)=\left(1-\frac{\alpha+\beta}{2\alpha}-\frac{1+2\alpha}{2isx}\right)\mathcal{P}(sx)+\frac{1+2\alpha}{2isx}\mathcal{Q}(sx),\label{equ:expression of mathcalP'(sx)} \\
&\mathcal{Q}'(sx)=\frac{isx(\alpha^2-\beta^2)}{2\alpha^2(1+2\alpha)}\mathcal{P}(sx)+\frac{\alpha+\beta}{2\alpha}\mathcal{Q}(sx).\label{equ:expression of mathcalQ'(sx)}
\end{align}
Substituting \eqref{equ:expression of mathcalP'(sx)} and \eqref{equ:expression of mathcalQ'(sx)} into \eqref{equ:computations of partialsKs(x,y)}, we obtain 
\eqref{equ:patial_s scaled kernel in terms of equivalent CHF kernel} after straightforward calculations. Moreover, by noticing the 
relations \eqref{equ:xmathcalP(x) relation to A and B} and \eqref{equ:xmathcalQ(x) relation to A and B}, the equivalent form \eqref{equ:patial_s scaled kernel in terms of A and B} follows.
\end{proof}

\begin{remark}
    If $\alpha=\beta=0$, Lemma \ref{lem: partial_s scaled CHF kernel} simply reads
    $\partial_{s}K_{s}^{(0,\hspace*{0.1em} 0)}(x,y)=\frac{\cos (s(x-y))}{\pi}$. 
\end{remark}
\paragraph{Proof of Proposition \ref{prop: differential identity}.} 
By \eqref{equ:patial_s scaled kernel in terms of A and B}, we have 
\begin{align}
\partial_{s}K^{(\alpha,\hspace*{0.1em}\beta)}_{s}(x,y)=i{\vec{f}(sx)}^{\rm T}
\begin{pmatrix}
    -1 & 0 \\
    0  & 1
\end{pmatrix}
\vec{h}(sy),
\end{align}
where $\vec{f}$ and $\vec{h}$ are defined by \eqref{def:f and h}. Thus, the kernel of the operator $\left(1-\mathcal{K}^{(\alpha,\hspace*{0.1em}\beta)}_{s}|_{\Sigma}\right)^{-1}\partial_s\mathcal{K}^{(\alpha,\hspace*{0.1em}\beta)}_{s}|_{\Sigma}$ in \eqref{equ: partialsF(sSigma)=Trace} is given by 
\begin{align}\label{equ:kernel of a necessary trace operator}
i{\vec{F}(sx)}^{\rm T}
\begin{pmatrix}
    -1 & 0 \\
    0  & 1
\end{pmatrix}
\vec{h}(sy)=-iF_1(sx)h_1(sy)+iF_{2}(sx)h_2(sy),
\end{align}
where 
\begin{align*}
\vec{F}(u)=\begin{pmatrix}
        F_1(u) \\
        F_2(u)
    \end{pmatrix}:=\left(1-\mathcal{K}^{(\alpha,\hspace*{0.1em}\beta)}|_{s\Sigma}\right)^{-1}\vec{f}(u).
\end{align*}
This, together with \eqref{equ: partialsF(sSigma)=Trace}, implies that 
\begin{multline}\label{equ: partialsF(sSigma) relation to Fh}
\partial_{s}\log \mathcal{F}(s\Sigma)=-{\rm Tr}\left(\left(1-\mathcal{K}^{(\alpha,\hspace*{0.1em}\beta)}_{s}|_{\Sigma}\right)^{-1}\partial_s\mathcal{K}^{(\alpha,\hspace*{0.1em}\beta)}_{s}|_{\Sigma}\right)\\
\overset{\eqref{equ:kernel of a necessary trace operator}}{=}i\int_{\Sigma}F_1(sx)h_1(sx)-F_2(sx)h_2(sx)\dif x=\frac{i}{s}\int_{s\Sigma}F_1(x)h_1(x)-F_2(x)h_2(x)\dif x.
\end{multline}
In view of \eqref{equ: Y_1(s)}, it follows that  
\begin{align}\label{equ:partialsF(sSigma) = Y}
\partial_{s}\log \mathcal{F}(s\Sigma)=\frac{i}{s}\left({(Y_{1}(s))}_{11}-{(Y_{1}(s))}_{22}\right).
\end{align}
By \eqref{equ: X_1(s)}, we have
\begin{align*}
\left(Y_1(s)\right)_{11}=s\left(X_1(s)\right)_{11}-\frac{1}{2}\left(\Phi_{{\rm CH},1}\right)_{11}, \quad 
\left(Y_1(s)\right)_{22}=s\left(X_1(s)\right)_{22}-\frac{1}{2}\left(\Phi_{{\rm CH},1}\right)_{22}.
\end{align*}
Substituting the above formulae into \eqref{equ:partialsF(sSigma) = Y}, we finally arrive at \eqref{eq:Fdiff} by noting that $\left(\Phi_{{\rm CH},1}\right)_{22}=-\left(\Phi_{{\rm CH},1}\right)_{11}=(\alpha^2-\beta^2)i$.
\qed 

\section{Asymptotic analysis of the RH problem for $X$ -- the first two transformations}
\label{sec: DZ steepest analysis}
We intend to perform Deift/Zhou steepest analysis to the  RH problem for $X$, and start with an introduction of the so-called $g$-fucntion.


\subsection{The $g$-function}
Recall the function $\mathcal{R}(z)$ defined in \eqref{equ:sqrt(R(z))}, it is easily  seen that 
\begin{align}
&\sqrt{\mathcal{R}(z)}_{+}+\sqrt{\mathcal{R}(z)}_{-}=0, && \ z\in\cup_{j=0}^{n}(a_j,b_j), \label{equ:mathcalR-jump-relation1}\\
&\sqrt{\mathcal{R}(z)}_{+}=\sqrt{\mathcal{R}(z)}_{-}, && \ z\in\cup_{j=0}^{n-1}(b_j,a_{j+1}),
\label{equ:mathcalR-jump-relation2}
\end{align}
where $\sqrt{\mathcal{R}(z)}_{+}$ and $\sqrt{\mathcal{R}(z)}_{-}$ denote the non-tangential limits of 
$\sqrt{\mathcal{R}(z')}$ as $z'\to z$ from the upper and lower half-plane of the first sheet of $\mathcal{W}$, respectively. We then introduce
\begin{align}\label{equ:g-func}
    g(z):=\int_{a_0}^{z}\frac{\mathrm{p}(x)}{\sqrt{\mathcal{R}(x)}}\dif x,
\end{align}
where the polynomial $\mathrm{p}(z)$ is defined through \eqref{equ:mathbf{p}(z)} and \eqref{equ:A-cycle=0}. Some properties of the $g$-function are collected in the following proposition for later use. 

\begin{proposition}\label{prop:g-func}
The $g$-function defined in \eqref{equ:g-func} satisfies the following properties:
\begin{itemize}
\item[\rm (a)]$g(z)$ is holomorphic for $z\in\mathbb{C}\setminus\overline{\Sigma}$.
\item[\rm (b)]For $z\in\Sigma$, $g(z)$ satisfies the jump relation
\begin{align}\label{equ:g-jump-relation2}
 g_{+}(z)+g_{-}(z)=\Omega_j, \quad z\in(a_j,b_j), \ j=0,\dots, n,
\end{align}
where $\Omega_j$ is defined in \eqref{def-intro-Omega_j}. Moreover, we have
\begin{align}\label{equ:Omega_j}
\Omega_0=0  \quad {\rm and} \quad  \Omega_j=\sum_{k=0}^{j-1}\hat{\Omega}_k, \quad j=1,\dots, n, 
\end{align}
where
\begin{align}\label{equ:hatOmega_k}
\hat{\Omega}_k=2\int_{b_k}^{a_{k+1}}\frac{\mathrm{p}(s)}{\sqrt{\mathcal{R}(s)}}\dif s=2(-1)^{n-k}\int_{b_k}^{a_{k+1}}\frac{\mathrm{p}(s)}{|\mathcal{R}(s)|^{\frac{1}{2}}}\dif s>0. 
\end{align}
\item[\rm (c)] As $z\to\infty$, we have 
\begin{align}\label{eq:gAsy}
g(z)=z+\ell-\frac{\gamma_0}{z}+\mathcal{O}(z^{-2}),
\end{align}
where 
\begin{equation}\label{def:ell}
    \ell=\int_{a_0}^{\infty}\left(\frac{\mathrm{p}(s)}{\sqrt{\mathcal{R}(s)}}-1\right)\dif s-a_0
\end{equation}
and $\gamma_0$ is given by \eqref{equ: gamma_0}.

\item[\rm (d)] There exists a neighborhood $U$ of $\cup_{j=0}^{n-1}(b_j,a_{j+1})$ such that
\begin{equation}\label{eq:Uprop}
\im g(z)\geqslant 0 \ {\rm for} \ z\in U\cap\overline{\mathbb{C}^{+}}, \quad {\rm and} \quad 
\im g(z)\leqslant 0 \ {\rm for} \ z\in U\cap\overline{\mathbb{C}^{-}},
\end{equation}
where the equality holds only for $z\in\cup_{j=0}^{n-1}(b_j,a_{j+1})$; see Figure \ref{fig:signature table of img} for an illustration with $n=2$.
\end{itemize}
\end{proposition}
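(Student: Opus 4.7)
The plan is to read off all four properties from the definition $g(z)=\int_{a_0}^{z}\mathrm{p}(x)/\sqrt{\mathcal{R}(x)}\,dx$ together with the structural constraints on $\mathrm{p}$ (monic of degree $n+1$ satisfying \eqref{equ:A-cycle=0}) and the jump relations \eqref{equ:mathcalR-jump-relation1}--\eqref{equ:mathcalR-jump-relation2}. Property (a) is immediate: the integrand is holomorphic on $\mathbb{C}\setminus\overline{\Sigma}$, and the $A$-cycle conditions kill the period around every cut, so the integral is path-independent on $\mathbb{C}\setminus\overline{\Sigma}$ and defines a holomorphic function.

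For (b), the identity \eqref{equ:mathcalR-jump-relation1} forces $g'_{+}+g'_{-}=0$ on each $(a_j,b_j)$, so $g_{+}+g_{-}$ is constant on that cut; call the constant $\Omega_j$. Continuity of $g$ at $a_j$ gives $\Omega_j=2g(a_j)$, and I would compute $g(a_j)$ along a real path that threads through the gaps: the upper and lower integrals across each intervening cut cancel by \eqref{equ:mathcalR-jump-relation1}, leaving precisely the formula \eqref{def-intro-Omega_j}, with $\Omega_0=0$ automatic. The positivity $\hat{\Omega}_k=\Omega_{k+1}-\Omega_k=2\int_{b_k}^{a_{k+1}}g'(x)\,dx>0$ then reduces to the claim $g'>0$ on each gap, which I defer to the discussion of (d).

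For (c), I would expand $\mathrm{p}/\sqrt{\mathcal{R}}=1+c_0/z+c_1/z^2+\cdots$ at $\infty$ and show $c_0=0$ by writing $c_0=(2\pi i)^{-1}\oint_{\Gamma}\mathrm{p}/\sqrt{\mathcal{R}}\,dz$ on a large counterclockwise contour $\Gamma$ and deforming $\Gamma$ onto $\sum_{k}\oint_{A_k}$, which vanishes by \eqref{equ:A-cycle=0}. The integral $\int_{a_0}^{\infty}(\mathrm{p}/\sqrt{\mathcal{R}}-1)\,dx$ then converges, and writing $g(z)=(z-a_0)+\int_{a_0}^{z}(\mathrm{p}/\sqrt{\mathcal{R}}-1)\,dx$ and sending $z\to\infty$ yields \eqref{eq:gAsy} with $\ell$ as in \eqref{def:ell}. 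A parallel contour argument applied to $z\,\mathrm{p}(z)/\sqrt{\mathcal{R}(z)}$, using the jump \eqref{equ:mathcalR-jump-relation1} to collapse each $A$-cycle integral to $\int_{a_j}^{b_j}z\mathrm{p}/\sqrt{\mathcal{R}}_{+}\,dz$, produces the explicit formula \eqref{equ: gamma_0} for $\gamma_0$.

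The crux, needed for both the positivity in (b) and the inequality in (d), is the sign statement $g'>0$ on every gap. Using \eqref{equ:a_{k,j}}, the $A$-cycle conditions rewrite as $\int_{a_j}^{b_j}\mathrm{p}(x)/|\mathcal{R}(x)|^{1/2}\,dx=0$, which forces $\mathrm{p}$ to change sign an odd number of times on each cut; since $\deg\mathrm{p}=n+1$ equals the number of cuts and each cut contributes at least one real root, exactly one simple root of $\mathrm{p}$ lies in each cut and none elsewhere. Tracking the sign of the monic $\mathrm{p}$ from $z\to-\infty$ through successive cuts and gaps pins down $\mathrm{sgn}\,\mathrm{p}(x)=(-1)^{n-k}$ on $(b_k,a_{k+1})$; combined with $\mathrm{sgn}\sqrt{\mathcal{R}}=(-1)^{n-k}$ on the same gap (from $\sqrt{\mathcal{R}}\sim z^{n+1}$ at $+\infty$ plus a sign flip across each cut), one obtains $g'(x)=|\mathrm{p}(x)|/|\mathcal{R}(x)|^{1/2}>0$ on every open gap. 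Property (d) then follows: the expansion $\im g(x+iy)=y\,g'(x)+O(y^2)$ supplies the required sign of $\im g$ in a tubular neighborhood of each open gap, while at each endpoint $b_k$ (and analogously $a_{k+1}$) the local behavior $g(z)-g(b_k)\sim c\sqrt{z-b_k}$ with $c>0$ supplies the same sign in a small disk, and a compactness argument glues these local pieces into a single open set $U$ on which \eqref{eq:Uprop} holds. I expect the sign analysis, relying on the root-counting identity together with the careful bookkeeping of $\sqrt{\mathcal{R}}$'s sign across alternating intervals, to be the main step; once $g'>0$ on the gaps is in hand, the rest is routine.
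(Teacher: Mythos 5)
Your proposal is correct and follows essentially the same line of argument as the paper: use the $A$-cycle conditions to obtain path-independence (a) and vanishing of the residue at $\infty$ (c), use the jump of $\sqrt{\mathcal{R}}$ across cuts to compute $g_++g_-$ (b), and establish positivity of $g'$ on the gaps by root-counting and sign-tracking of $\mathrm{p}$ and $\sqrt{\mathcal{R}}$, then invoke Cauchy–Riemann (together with local square-root behavior at endpoints) for (d). The only difference is a minor reordering — the paper does the root/sign analysis up front and applies it to both (b) and (d), while you defer it to (d) — and you supply slightly more detail on the endpoint behavior in (d).
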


\begin{proof}
We start with the observation that \eqref{equ:A-cycle=0} can be rewritten as 
\begin{align}\label{equ:aj-bjInt=0}
\int_{a_j}^{b_j}\frac{\mathrm{p}(s)}{\sqrt{\mathcal{R}(s)}_{+}}\dif s=0, \quad j=0,\dots,n,
\end{align}
where the path lies on the first sheet of $\mathcal{W}$. By a straightforward calculation, it is clear that 
\begin{align}
\sqrt{\mathcal{R}(z)}_{+}=i(-1)^{n-j}|\mathcal{R}(z)|^{1/2}\in i\mathbb{R}, \quad z\in(a_j,b_j), \quad j=0,1,\dots,n.
\end{align}
Since $\mathrm{p}(z) \in \mathbb{R}$ for $z\in (a_j,b_j)$, it follows from \eqref{equ:aj-bjInt=0} that $\mathrm{p}$ must have at least one zero on the intervals $(a_j,b_j)$. Note that $\mathrm{p}(z)$ is a polynomial of degree $n+1$, we conclude that $\mathrm{p}(z)$ has exactly one simple zero on $(a_j,b_j)$, which  we denote by $x_j\in(a_j,b_j)$ for $j=0,1,\dots,n$. Note that $\mathrm{p}(z)\to+\infty$ as $z\to+\infty$, it follows that for $j=0,1,\dots,n$,
\begin{align}
\sign{\mathrm{p}(a_j)}=(-1)^{n-j+1}, 
\end{align}
and 
\begin{align}\label{equ: sgn(mathbf{p}(z))}
\sign{\mathrm{p}(z)}=(-1)^{n-j}, \qquad z\in(b_j,a_{j+1}).
\end{align}

We next prove our claims one by one. 
\begin{itemize}

\item[\rm (a)] It  follows directly from the definition of $g$ in \eqref{equ:g-func}. In particular, by \eqref{equ:mathcalR-jump-relation2} and \eqref{equ:aj-bjInt=0}, we 
have $g_{+}(x)=g_{-}(x)$ for $x \in \left(b_j,a_{j+1}\right)$, $j=0,\dots,n-1$.

    
\item[\rm (b)] 
If $z\in(a_0,b_0)$, we have $g_{+}(z)+g_{-}(z)=0$ by \eqref{equ:mathcalR-jump-relation1}. If $z\in(a_j,b_j)$, $j=1,\ldots,n$, it also follows from \eqref{equ:mathcalR-jump-relation2} that 
    \begin{align}\label{eq:g++g-}
    g_{+}(z)+g_{-}(z)=2\left(\int_{b_{0}}^{a_1}+\int_{b_{1}}^{a_{2}}+\dots\int_{b_{j-1}}^{a_{j}}\right)\frac{\mathrm{p}(s)}{\sqrt{\mathcal{R}(s)}}\dif s.
    \end{align}
By a direct calculation, we have
    \begin{align}
    \int_{b_k}^{a_{k+1}}\frac{\mathrm{p}(s)}{\sqrt{\mathcal{R}(s)}}\dif s=
    (-1)^{n-k}\int_{b_k}^{a_{k+1}}\frac{\mathrm{p}(s)}{|\mathcal{R}(s)|^{\frac{1}{2}}}\dif s, \quad k=0,\dots,n-1,
    \end{align}
This, together with \eqref{eq:g++g-}, implies \eqref{equ:g-jump-relation2}--\eqref{equ:hatOmega_k}. Finally, we observe from 
\eqref{equ:hatOmega_k} and \eqref{equ: sgn(mathbf{p}(z))}
that 
    \begin{align*}
    \sign {\hat{\Omega}_k}={(-1)}^{n-k}\cdot{(-1)}^{n-k}=1,
    \end{align*}
    thus $\hat{\Omega}_k>0$. This also shows $\Omega_j>0$ by using \eqref{equ:Omega_j}.

    \item[\rm (c)]
It follows from \eqref{equ:A-cycle=0} that $\mathrm{p}(z)/\sqrt{\mathcal{R}(z)}$ has 
no residue at infinity, which yields 
    \begin{align}\label{equ: asy of p/mathcalR at infty}
    \frac{\mathrm{p}(z)}{\sqrt{\mathcal{R}(z)}}=1+\frac{\gamma_0}{z^2}+\mathcal{O}\left(\frac{1}{z^3}\right), \qquad  z\to\infty,
    \end{align}
where $\gamma_0=\frac{1}{2\pi i}\oint_{\mathcal{C}}\frac{z\mathrm{p}(z)}{\sqrt{\mathcal{R}(z)}}\dif z$ with  $\mathcal{C}$ being any sufficiently large circle oriented in the counterclockwise manner. By a contour deformation, we see the equivalent definition of $\gamma_0$ given in \eqref{equ: gamma_0}. Since $\mathrm{p}$ is real-valued on the real line and $1/{\sqrt{\mathcal{R}(z)}}_{+}\in i\mathbb{R}$ for $z\in(a_j,b_j)$, $j=0,1,\dots,n$, it is immediate that $\gamma_0\in\mathbb{R}$. Inserting \eqref{equ: asy of p/mathcalR at infty} into \eqref{equ:g-func}, we obtain \eqref{eq:gAsy} and \eqref{def:ell}.

    
\item[\rm (d)] By \eqref{equ:g-func}, \eqref{equ:aj-bjInt=0} and \eqref{equ: sgn(mathbf{p}(z))}, one can check directly that $g(x)$ is real and strictly increasing on $\cup_{j=0}^{n-1}(b_j,a_{j+1})$. The existence of a neighborhood $U$ such that \eqref{eq:Uprop} holds is then assured by the Cauchy-Riemann equations; see also Figure \ref{fig:signature table of img} for an illustration. 
\end{itemize}
This completes the proof of Proposition \ref{prop:g-func}.
\end{proof}
\begin{figure}[htbp]
    \centering
    \includegraphics[height=6.28cm,width=9.99cm]{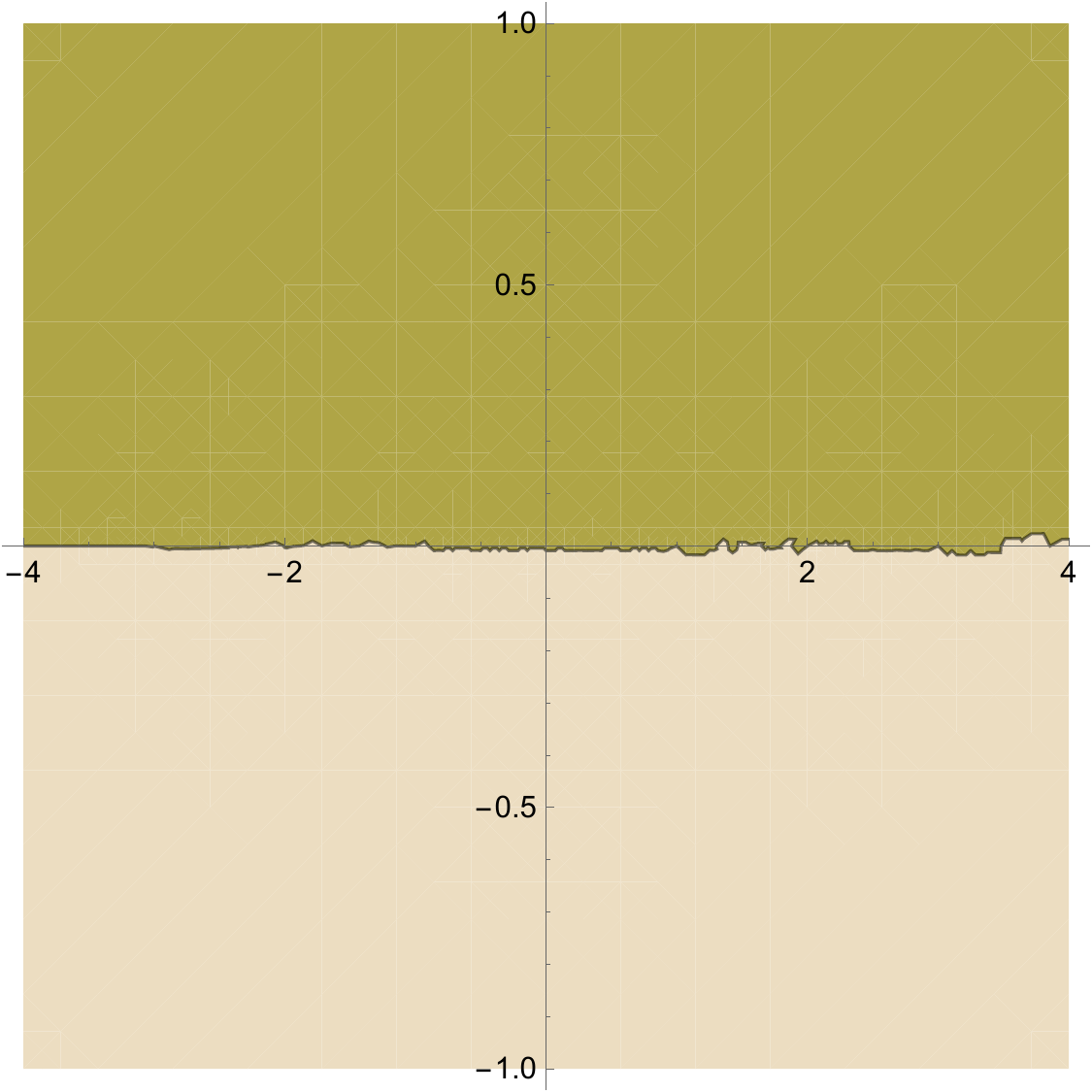}
    \caption{Signature table of $\im g$ on the complex plane for $n=2$ with $a_0=-3$, $b_0=-2$, $a_1=-1$, $b_1=1$, $a_2=2$, $b_2=3$ in \eqref{equ:g-func}. The dark and light colored parts denote where $\im g>0$ and $\im g<0$, respectively.}
    \label{fig:signature table of img}
\end{figure}

\subsection{First transformation: $X\to T$}
With the aid of the $g$-function defined in \eqref{equ:g-func}, the first transformation is defined by 
\begin{align}\label{def: T}
T(z):=e^{-is\ell\sigma_3}X(z)e^{isg(z)\sigma_3}, 
\end{align}
where $\ell$ is given in \eqref{def:ell}. 
Combining the RH problem for $X$ and Proposition \ref{prop:g-func}, it follows that $T$ satisfies the following RH problem.

\paragraph{RH problem for $T$}
\begin{itemize}
\item [\rm (a)] $T(z)$ is holomorphic for $z\in\mathbb{C}\backslash\Gamma_{X}$, 
      where $\Gamma_{X}$ is defined in \eqref{def:Gamma_X}.
      
\item [\rm (b)] For $z\in\Gamma_{X}$, we have 
  \begin{equation}
    T_{+}(z)=T_{-}(z)J_{T}(z),
  \end{equation}
  where
  \begin{equation}\label{equ:jump of T}
      J_T(z)=\left\{\begin{array}{ll}
  \begin{pmatrix}
  1 & 0 \\
  e^{-(\alpha-\beta)\pi i+2isg(z)} & 1
  \end{pmatrix},& {z \in \Gamma_{X,r}}, \\
  {\begin{pmatrix}
       1 & -e^{(\alpha-\beta)\pi i-2isg(z)} \\
  0 & 1
  \end{pmatrix}},& {z \in \Gamma_{X,r}^*} ,\\
  {\begin{pmatrix}
      1 & 0 \\
  e^{(\alpha-\beta)\pi i+2isg(z)} & 1
  \end{pmatrix}}, & { z \in \Gamma_{X,l}}, \\
  {\begin{pmatrix}
       1 & -e^{-(\alpha-\beta)\pi i-2isg(z)} \\
  0 & 1
  \end{pmatrix}},& { z \in \Gamma_{X,l}^*}, \\
  {e^{2 \beta \pi i \sigma_3}},& {z \in \Gamma_{X,d} },\\
  {\begin{pmatrix}
       0 & -e^{-(\alpha-\beta)\pi i-is\Omega_j } \\
  e^{(\alpha-\beta)\pi i+is\Omega_j } & 0
  \end{pmatrix}},& z \in \cup_{j=0}^{m-1}(a_j,b_j)\cup(a_m,0), \\
  {\begin{pmatrix}
       0 & -e^{-(\alpha-\beta)\pi i-2isg(z) } \\
    e^{(\alpha-\beta)\pi i+2isg(z) } & 1
  \end{pmatrix}},& z \in \cup_{j=0}^{m-1}(b_j,a_{j+1}), \\
  {\begin{pmatrix}
      0 & -e^{(\alpha-\beta)\pi i-is\Omega_j } \\
  e^{-(\alpha-\beta)\pi i+is\Omega_j } & 0
  \end{pmatrix}},& z \in \cup_{j=m+1}^{n}(a_j,b_j)\cup(0,b_m),\\
  {\begin{pmatrix}
      0 & -e^{(\alpha-\beta)\pi i-2isg(z) } \\
    e^{-(\alpha-\beta)\pi i+2isg(z) } & 1
  \end{pmatrix}},& z \in \cup_{j=m}^{n-1}(b_j,a_{j+1}).
  \end{array}\right.
\end{equation}
  

\item[\rm (c)]   As $z\rightarrow \infty$ with  $\arg z\in(-\frac{\pi}{2},\frac{3\pi}{2})$, we have
  \begin{equation}\label{equ:T asym at inf}
      T(z)=\left(I+\frac{T_1(s)}{z}+\mathcal{O}\left(z^{-2}\right)\right) z^{-\beta \sigma_3},
 \end{equation}
where $T_1$ is independent of $z$.

\item[\rm (d)] As $z \to p$ with $p\in\mathcal{I}_{e}$, we have
\begin{align}
T(z)=\mathcal{O}(\log(z-p)). 
\end{align}

\item[\rm (e)] As $z \rightarrow 0$ from $\Omega_{X,1}$, we have
  \begin{equation}\label{equ:T asym at 0}
      T(z)=T_0(z)z^{\alpha\sigma_3},
  \end{equation}
 where $T_0(z)$ is holomorphic in the neighborhood of $z=0$.

\end{itemize}

\subsection{Second transformation: $T\to S$}
The second transformation involves lenses-opening around $\cup_{j=0}^{n-1}(b_j,a_{j+1})$. 
To proceed, we note the following two decompositions of $J_{T}$ given in \eqref{equ:jump of T}.
\begin{align}\label{equ: decomposition of J_T-1}
   & \begin{pmatrix}
    0 & -e^{-(\alpha-\beta)\pi i-2isg(z) } \\
    e^{(\alpha-\beta)\pi i+2isg(z) } & 1
    \end{pmatrix}
    \nonumber \\ 
    &=
   \begin{pmatrix}
  1 & -e^{-(\alpha-\beta)\pi i-2isg(z)} \\
  0 & 1
  \end{pmatrix}
  \begin{pmatrix}
  1 & 0 \\
  e^{(\alpha-\beta)\pi i+2isg(z)} & 1
  \end{pmatrix},  \qquad  z\in\cup_{j=0}^{m-1}(b_j,a_{j+1}),
\end{align}
and
\begin{align}\label{equ: decomposition of J_T-2}
   & \begin{pmatrix}
    0 & -e^{(\alpha-\beta)\pi i-2isg(z) } \\
    e^{-(\alpha-\beta)\pi i+2isg(z) } & 1
    \end{pmatrix}
    \nonumber
    \\
    &=
  \begin{pmatrix}
  1 & -e^{(\alpha-\beta)\pi i-2isg(z)} \\
  0 & 1
  \end{pmatrix}
  \begin{pmatrix}
  1 & 0 \\
  e^{-(\alpha-\beta)\pi i+2isg(z)} & 1
  \end{pmatrix}, \qquad z\in\cup_{j=m}^{n-1}(b_j,a_{j+1}).
\end{align}

Let $\Gamma_{S,j}$ and $\Gamma_{S,j}^{*}$ be two curves starting at $b_j$ and ending at $a_{j+1}$ for $j=0,1, \dots, n-1$, where $\Gamma_{S,j}\subseteq \mathbb{C}^{+}$ and $\Gamma^{*}_{S,j}\subseteq \mathbb{C}^{-}$; see Figure \ref{fig:RHP S} for an illustration. We also denote the ``lens" domains which are delimited by boundaries $\Gamma_{S,j}$ and $\Gamma_{S,j}^{*}$ by $\mathrm{L}_j$. It is required that the angles between contours $\Gamma_{S,j}$, $\Gamma_{S,j}^{*}$ and the real axis are chosen to be sufficiently small such that $\mathrm{L}_j$ belongs
to the neighborhood $U$ given in item (d) of Proposition \ref{prop:g-func}.


The second transformation is defined by 
\begin{equation}\label{def: S}
   S(z)=T(z)\left\{\begin{array}{ll}
  \begin{pmatrix}
  1 & 0 \\
  -e^{(\alpha-\beta)\pi i+2isg(z)} & 1
  \end{pmatrix},& z \in \cup_{j=0}^{m-1}\mathrm{L}_j\cap\mathbb{C}^{+}, \\
  \begin{pmatrix}
       1 & 0 \\
  -e^{-(\alpha-\beta)\pi i+2isg(z)} & 1
  \end{pmatrix},& z \in \cup_{j=m}^{n-1}\mathrm{L}_j\cap\mathbb{C}^{+}, \\
  \begin{pmatrix}
      1 & -e^{-(\alpha-\beta)\pi i-2isg(z)} \\
  0 & 1
  \end{pmatrix},& z \in \cup_{j=0}^{m-1}\mathrm{L}_j\cap\mathbb{C}^{-} ,\\
  \begin{pmatrix}
      1 & -e^{(\alpha-\beta)\pi i-2isg(z)} \\
  0 & 1 
  \end{pmatrix},& z \in \cup_{j=m}^{n-1}\mathrm{L}_j\cap\mathbb{C}^{-} ,\\
  I,& {\rm elsewhere}.
  \end{array}\right. 
\end{equation}
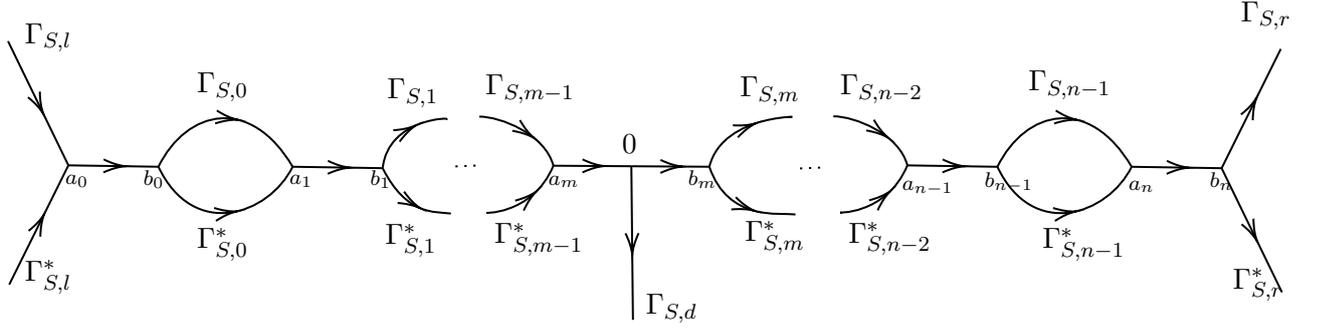
\begin{figure}
\tikzset{every picture/.style={line width=0.75pt}} 
\begin{tikzpicture}[x=0.75pt,y=0.75pt,yscale=-1,xscale=1]
\draw    (10.07,92.85) -- (40.21,155.29) ;
\draw [shift={(27.74,129.47)}, rotate = 244.23] [color={rgb, 255:red, 0; green, 0; blue, 0 }  ][line width=0.75]    (10.93,-3.29) .. controls (6.95,-1.4) and (3.31,-0.3) .. (0,0) .. controls (3.31,0.3) and (6.95,1.4) .. (10.93,3.29)   ;
\draw    (10.66,215.28) -- (40.21,155.29) ;
\draw [shift={(28.09,179.9)}, rotate = 116.22] [color={rgb, 255:red, 0; green, 0; blue, 0 }  ][line width=0.75]    (10.93,-3.29) .. controls (6.95,-1.4) and (3.31,-0.3) .. (0,0) .. controls (3.31,0.3) and (6.95,1.4) .. (10.93,3.29)   ;
\draw    (40.21,155.29) -- (85.14,156) ;
\draw [shift={(68.67,155.74)}, rotate = 180.91] [color={rgb, 255:red, 0; green, 0; blue, 0 }  ][line width=0.75]    (10.93,-3.29) .. controls (6.95,-1.4) and (3.31,-0.3) .. (0,0) .. controls (3.31,0.3) and (6.95,1.4) .. (10.93,3.29)   ;
\draw    (85.14,156) .. controls (99.38,126.99) and (131.14,120.08) .. (152.14,156) ;
\draw [shift={(124.71,132.38)}, rotate = 187.44] [color={rgb, 255:red, 0; green, 0; blue, 0 }  ][line width=0.75]    (10.93,-3.29) .. controls (6.95,-1.4) and (3.31,-0.3) .. (0,0) .. controls (3.31,0.3) and (6.95,1.4) .. (10.93,3.29)   ;
\draw    (85.14,156) .. controls (102.66,193.18) and (140.17,180.34) .. (152.14,156) ;
\draw [shift={(124.23,178.45)}, rotate = 172.02] [color={rgb, 255:red, 0; green, 0; blue, 0 }  ][line width=0.75]    (10.93,-3.29) .. controls (6.95,-1.4) and (3.31,-0.3) .. (0,0) .. controls (3.31,0.3) and (6.95,1.4) .. (10.93,3.29)   ;
\draw    (197.07,156.71) .. controls (209.07,180.71) and (223.27,178.54) .. (231.13,179.53) ;
\draw [shift={(215.65,176.75)}, rotate = 209.54] [color={rgb, 255:red, 0; green, 0; blue, 0 }  ][line width=0.75]    (10.93,-3.29) .. controls (6.95,-1.4) and (3.31,-0.3) .. (0,0) .. controls (3.31,0.3) and (6.95,1.4) .. (10.93,3.29)   ;
\draw  [dash pattern={on 0.84pt off 2.51pt}]  (233.41,155.83) -- (239.25,155.59) -- (245.29,155.34) ;
\draw    (245.33,130.78) .. controls (256.01,130.78) and (278.18,139.67) .. (282.29,155.48) ;
\draw [shift={(272.6,141.51)}, rotate = 215.16] [color={rgb, 255:red, 0; green, 0; blue, 0 }  ][line width=0.75]    (10.93,-3.29) .. controls (6.95,-1.4) and (3.31,-0.3) .. (0,0) .. controls (3.31,0.3) and (6.95,1.4) .. (10.93,3.29)   ;
\draw    (249.44,179.19) .. controls (245.33,179.19) and (271.61,181.16) .. (282.29,155.48) ;
\draw [shift={(273.69,168.5)}, rotate = 139.82] [color={rgb, 255:red, 0; green, 0; blue, 0 }  ][line width=0.75]    (10.93,-3.29) .. controls (6.95,-1.4) and (3.31,-0.3) .. (0,0) .. controls (3.31,0.3) and (6.95,1.4) .. (10.93,3.29)   ;
\draw    (360.14,156) .. controls (367.14,128) and (412.14,131) .. (399.14,131) ;
\draw [shift={(383.27,134.15)}, rotate = 158.3] [color={rgb, 255:red, 0; green, 0; blue, 0 }  ][line width=0.75]    (10.93,-3.29) .. controls (6.95,-1.4) and (3.31,-0.3) .. (0,0) .. controls (3.31,0.3) and (6.95,1.4) .. (10.93,3.29)   ;
\draw    (360.14,156) .. controls (369.56,183.76) and (395.28,179.01) .. (403.14,180) ;
\draw [shift={(382.48,178.11)}, rotate = 199.9] [color={rgb, 255:red, 0; green, 0; blue, 0 }  ][line width=0.75]    (10.93,-3.29) .. controls (6.95,-1.4) and (3.31,-0.3) .. (0,0) .. controls (3.31,0.3) and (6.95,1.4) .. (10.93,3.29)   ;
\draw  [dash pattern={on 0.84pt off 2.51pt}]  (405.41,156.85) -- (411.25,156.6) -- (417.29,156.35) ;
\draw    (422.04,130.62) .. controls (432.71,130.62) and (454.89,139.51) .. (458.99,155.31) ;
\draw [shift={(449.31,141.34)}, rotate = 215.16] [color={rgb, 255:red, 0; green, 0; blue, 0 }  ][line width=0.75]    (10.93,-3.29) .. controls (6.95,-1.4) and (3.31,-0.3) .. (0,0) .. controls (3.31,0.3) and (6.95,1.4) .. (10.93,3.29)   ;
\draw    (426.14,179.02) .. controls (422.04,179.02) and (448.32,181) .. (458.99,155.31) ;
\draw [shift={(450.4,168.34)}, rotate = 139.82] [color={rgb, 255:red, 0; green, 0; blue, 0 }  ][line width=0.75]    (10.93,-3.29) .. controls (6.95,-1.4) and (3.31,-0.3) .. (0,0) .. controls (3.31,0.3) and (6.95,1.4) .. (10.93,3.29)   ;
\draw    (152.14,156) -- (197.07,156.71) ;
\draw [shift={(180.61,156.45)}, rotate = 180.91] [color={rgb, 255:red, 0; green, 0; blue, 0 }  ][line width=0.75]    (10.93,-3.29) .. controls (6.95,-1.4) and (3.31,-0.3) .. (0,0) .. controls (3.31,0.3) and (6.95,1.4) .. (10.93,3.29)   ;
\draw    (197.07,156.71) .. controls (199.92,131.12) and (232.77,131.12) .. (228.66,132.11) ;
\draw [shift={(213.53,135.25)}, rotate = 151.26] [color={rgb, 255:red, 0; green, 0; blue, 0 }  ][line width=0.75]    (10.93,-3.29) .. controls (6.95,-1.4) and (3.31,-0.3) .. (0,0) .. controls (3.31,0.3) and (6.95,1.4) .. (10.93,3.29)   ;
\draw    (282.29,155.48) -- (321.21,155.74) ;
\draw [shift={(307.75,155.65)}, rotate = 180.39] [color={rgb, 255:red, 0; green, 0; blue, 0 }  ][line width=0.75]    (10.93,-3.29) .. controls (6.95,-1.4) and (3.31,-0.3) .. (0,0) .. controls (3.31,0.3) and (6.95,1.4) .. (10.93,3.29)   ;
\draw    (321.21,155.74) -- (360.14,156) ;
\draw [shift={(346.68,155.91)}, rotate = 180.39] [color={rgb, 255:red, 0; green, 0; blue, 0 }  ][line width=0.75]    (10.93,-3.29) .. controls (6.95,-1.4) and (3.31,-0.3) .. (0,0) .. controls (3.31,0.3) and (6.95,1.4) .. (10.93,3.29)   ;
\draw    (458.99,155.31) -- (503.93,156.02) ;
\draw [shift={(487.46,155.76)}, rotate = 180.91] [color={rgb, 255:red, 0; green, 0; blue, 0 }  ][line width=0.75]    (10.93,-3.29) .. controls (6.95,-1.4) and (3.31,-0.3) .. (0,0) .. controls (3.31,0.3) and (6.95,1.4) .. (10.93,3.29)   ;
\draw    (503.93,156.02) .. controls (518.16,127.02) and (549.93,120.1) .. (570.93,156.02) ;
\draw [shift={(543.5,132.4)}, rotate = 187.44] [color={rgb, 255:red, 0; green, 0; blue, 0 }  ][line width=0.75]    (10.93,-3.29) .. controls (6.95,-1.4) and (3.31,-0.3) .. (0,0) .. controls (3.31,0.3) and (6.95,1.4) .. (10.93,3.29)   ;
\draw    (503.93,156.02) .. controls (521.45,193.21) and (558.96,180.36) .. (570.93,156.02) ;
\draw [shift={(543.02,178.48)}, rotate = 172.02] [color={rgb, 255:red, 0; green, 0; blue, 0 }  ][line width=0.75]    (10.93,-3.29) .. controls (6.95,-1.4) and (3.31,-0.3) .. (0,0) .. controls (3.31,0.3) and (6.95,1.4) .. (10.93,3.29)   ;
\draw    (570.93,156.02) -- (615.86,156.73) ;
\draw [shift={(599.39,156.47)}, rotate = 180.91] [color={rgb, 255:red, 0; green, 0; blue, 0 }  ][line width=0.75]    (10.93,-3.29) .. controls (6.95,-1.4) and (3.31,-0.3) .. (0,0) .. controls (3.31,0.3) and (6.95,1.4) .. (10.93,3.29)   ;
\draw    (615.86,156.73) -- (645.4,96.74) ;
\draw [shift={(633.28,121.36)}, rotate = 116.22] [color={rgb, 255:red, 0; green, 0; blue, 0 }  ][line width=0.75]    (10.93,-3.29) .. controls (6.95,-1.4) and (3.31,-0.3) .. (0,0) .. controls (3.31,0.3) and (6.95,1.4) .. (10.93,3.29)   ;
\draw    (615.86,156.73) -- (646,219.17) ;
\draw [shift={(633.54,193.36)}, rotate = 244.23] [color={rgb, 255:red, 0; green, 0; blue, 0 }  ][line width=0.75]    (10.93,-3.29) .. controls (6.95,-1.4) and (3.31,-0.3) .. (0,0) .. controls (3.31,0.3) and (6.95,1.4) .. (10.93,3.29)   ;
\draw    (321.21,155.74) -- (322,233) ;
\draw [shift={(321.67,200.37)}, rotate = 269.42] [color={rgb, 255:red, 0; green, 0; blue, 0 }  ][line width=0.75]    (10.93,-3.29) .. controls (6.95,-1.4) and (3.31,-0.3) .. (0,0) .. controls (3.31,0.3) and (6.95,1.4) .. (10.93,3.29);
\draw (37.14,158.4) node [anchor=north west][inner sep=0.75pt]  [font=\scriptsize]  {$a_{0}$};
\draw (76.14,156.4) node [anchor=north west][inner sep=0.75pt]  [font=\scriptsize]  {$b_{0}$};
\draw (149.14,158.4) node [anchor=north west][inner sep=0.75pt]  [font=\scriptsize]  {$a_{1}$};
\draw (189.32,156.4) node [anchor=north west][inner sep=0.75pt]  [font=\scriptsize]  {$b_{1}$};
\draw (315.14,138.4) node [anchor=north west][inner sep=0.75pt]    {$0$};
\draw (278.29,158.88) node [anchor=north west][inner sep=0.75pt]  [font=\scriptsize]  {$a_{m}$};
\draw (348.14,156.4) node [anchor=north west][inner sep=0.75pt]  [font=\scriptsize]  {$b_{m}$};
\draw (455.03,161.42) node [anchor=north west][inner sep=0.75pt]  [font=\scriptsize]  {$a_{n-1}$};
\draw (496.03,157.4) node [anchor=north west][inner sep=0.75pt]  [font=\scriptsize]  {$b_{n-1}$};
\draw (568.03,161.42) node [anchor=north west][inner sep=0.75pt]  [font=\scriptsize]  {$a_{n}$};
\draw (608.45,157.4) node [anchor=north west][inner sep=0.75pt]  [font=\scriptsize]  {$b_{n}$};
\draw (17,82.23) node [anchor=north west][inner sep=0.75pt]    {$\Gamma _{S,l}$};
\draw (17,201.57) node [anchor=north west][inner sep=0.75pt]    {$\Gamma _{S,l}^{*}$};
\draw (103,107.23) node [anchor=north west][inner sep=0.75pt]    {$\Gamma _{S,0}$};
\draw (103,185.23) node [anchor=north west][inner sep=0.75pt]    {$\Gamma _{S,0}^{*}$};
\draw (624,71.4) node [anchor=north west][inner sep=0.75pt]    {$\Gamma _{S,r}$};
\draw (620,205.4) node [anchor=north west][inner sep=0.75pt]    {$\Gamma _{S,r}^{*}$};
\draw (198,110.23) node [anchor=north west][inner sep=0.75pt]    {$\Gamma _{S,1}$};
\draw (197,183.23) node [anchor=north west][inner sep=0.75pt]    {$\Gamma _{S,1}^{*}$};
\draw (247,108.23) node [anchor=north west][inner sep=0.75pt]    {$\Gamma _{S,m-1}$};
\draw (251.44,182.59) node [anchor=north west][inner sep=0.75pt]    {$\Gamma _{S,m-1}^{*}$};
\draw (374,109.23) node [anchor=north west][inner sep=0.75pt]    {$\Gamma _{S,m}$};
\draw (376.44,181.59) node [anchor=north west][inner sep=0.75pt]    {$\Gamma _{S,m}^{*}$};
\draw (424,108.23) node [anchor=north west][inner sep=0.75pt]    {$\Gamma _{S,n-2}$};
\draw (428.14,182.42) node [anchor=north west][inner sep=0.75pt]    {$\Gamma _{S,n-2}^{*}$};
\draw (518,106.23) node [anchor=north west][inner sep=0.75pt]    {$\Gamma _{S,n-1}$};
\draw (525,185.23) node [anchor=north west][inner sep=0.75pt]    {$\Gamma _{S,n-1}^{*}$};
\draw (327,218.4) node [anchor=north west][inner sep=0.75pt]    {$\Gamma _{S,d}$};
\end{tikzpicture}
\caption{The jump contour $\Gamma_S$ of the RH problem for $S$.}
\label{fig:RHP S}
\end{figure}
In view of the RH problem for $T$ and the decompositions \eqref{equ: decomposition of J_T-1} and \eqref{equ: decomposition of J_T-2}, it is straightforward to verify that $S$ satisfies the following RH problem.
\paragraph{RH problem for $S$}
\begin{itemize}
\item [\rm (a)] $S(z)$ is holomorphic for $z\in\mathbb{C}\backslash\Gamma_{S}$, where 
\begin{align}\label{def:Gamma_s}
\Gamma_{S}:=\left(\cup_{j\in\{r,l\}}\left(\Gamma_{S,j}\cup\Gamma_{S,j}^*\right)\right)
\bigcup\left(\cup_{j=0}^{n-1}\left(\Gamma_{S,j}\cup\Gamma_{S,j}^*\right)\right)\bigcup
\left(\cup_{j=0}^{n}\overline{(a_j,b_j)}\right)\bigcup\Gamma_{S,d};
\end{align}
see Figure \ref{fig:RHP S} for an illustration. 
\item [\rm (b)] For $z\in\Gamma_{S}$, we have 
  \begin{equation}
    S_{+}(z)=S_{-}(z)J_{S}(z),
  \end{equation}
  where
\begin{equation}\label{equ:jump of S}
      J_S(z)=\left\{\begin{array}{ll}
  {\begin{pmatrix}
    1 & 0 \\
  e^{-(\alpha-\beta)\pi i+2isg(z)} & 1
  \end{pmatrix}},& {z \in \Gamma_{S,r}}, \\
  {\begin{pmatrix}
    1 & -e^{(\alpha-\beta)\pi i-2isg(z)} \\
    0 & 1
  \end{pmatrix}},& {z \in \Gamma_{S,r}^*} ,\\
  {\begin{pmatrix}
    1 & 0 \\
    e^{(\alpha-\beta)\pi i+2isg(z)} & 1 
  \end{pmatrix}}, & { z \in \Gamma_{S,l}}, \\
  {\begin{pmatrix}
    1 & -e^{-(\alpha-\beta)\pi i-2isg(z)} \\
    0 & 1
  \end{pmatrix}},& { z \in \Gamma_{S,l}^*}, \\
  
  {\begin{pmatrix}
      0 & -e^{-(\alpha-\beta)\pi i-is\Omega_j } \\
  e^{(\alpha-\beta)\pi i+is\Omega_j } & 0
  \end{pmatrix}},& z \in \cup_{j=0}^{m-1}(a_j,b_j)\cup(a_m,0), \\
  {\begin{pmatrix}
      0 & -e^{(\alpha-\beta)\pi i-is\Omega_j } \\
  e^{-(\alpha-\beta)\pi i+is\Omega_j } & 0
  \end{pmatrix}},& z \in \cup_{j=m+1}^{n}(a_j,b_j)\cup(0,b_m),\\
  \begin{pmatrix}
       1 & 0 \\
    e^{(\alpha-\beta)\pi i+2isg(z) } & 1
  \end{pmatrix},& z \in \cup_{j=0}^{m-1}\Gamma_{S,j}, \\  
   \begin{pmatrix}
       1 & 0 \\
    e^{-(\alpha-\beta)\pi i+2isg(z) } & 1
   \end{pmatrix},& z \in \cup_{j=m}^{n-1}\Gamma_{S,j}, \\  
\begin{pmatrix}
    1 & -e^{-(\alpha-\beta)\pi i-2isg(z)} \\
    0  & 1
\end{pmatrix},& z \in \cup_{j=0}^{m-1}\Gamma_{S,j}^*, \\
\begin{pmatrix}
    1 & -e^{(\alpha-\beta)\pi i-2isg(z)} \\
    0  & 1
\end{pmatrix},& z \in \cup_{j=m}^{n-1}\Gamma_{S,j}^*,\\
{e^{2 \beta \pi i \sigma_3}},& z \in \Gamma_{S,d}.
\end{array}\right.
\end{equation}

\item[\rm (c)] As $z\rightarrow \infty$ with $\arg z\in(-\frac{\pi}{2},\frac{3\pi}{2})$, we have
  \begin{equation}\label{equ:S asym at inf}
      S(z)=\left(I+\frac{S_1(s)}{z}+\mathcal{O}\left(z^{-2}\right)\right) z^{-\beta \sigma_3},
  \end{equation}
  where $S_1$ is independent of $s$. 

\item[\rm (d)] As $z \to p$ with $p\in\mathcal{I}_{e}$, we have
\begin{align}\label{equ:S asym at endpoints}
S(z)=\mathcal{O}(\log(z-p)). 
\end{align}

\item[\rm (e)] As $z \rightarrow 0$ from $\mathbb{C}^+$ , we have
  \begin{equation}\label{equ:S asym at 0}
      S(z)=S_0(z)z^{\alpha\sigma_3},
  \end{equation}
   where $S_0(z)$ is holomorphic in a neighborhood of $z=0$. 
\end{itemize}

\section{Global parametrix}\label{sec: global parametrix}
In view of items (c) and (d) of Proposition \ref{prop:g-func}, it follows that there exists some constant $c_0>0$ such that 
\begin{equation}\label{equ:JS-I}
J_{S}(z)=I+\mathcal{O}(e^{-c_0s|z|}), \qquad s\to+\infty,
\end{equation}
uniformly for $z$ bounded away from fixed neighborhoods of  $a_{j}$ and $b_{j}$, $j=0,\dots, n$. This leads us to  consider the following global parametrix.

\paragraph{RH problem for $P^{(\infty)}$}
\begin{itemize}
\item [\rm (a)] $P^{(\infty)}$ is holomorphic for $z\in\mathbb{C}\setminus(\overline{\Sigma}\cup(-i\infty,0])$.
\item[\rm (b)] For $z\in\Sigma \cup(-i\infty,0)$, we have
  \begin{equation}
    P^{(\infty)}_{+}(z)= P^{(\infty)}_{-}(z)J_{ P^{(\infty)}}(z),
  \end{equation}
  where
\begin{equation}\label{equ:jump of Pinfty}
      J_{P^{(\infty)}}(z)=\left\{\begin{array}{ll}
  {e^{-\frac{(\alpha-\beta)\pi i}{2}\sigma_3}
\begin{pmatrix}
0 & -1\\
1 & 0
\end{pmatrix}
e^{\frac{(\alpha-\beta)\pi i}{2}\sigma_3}},& {z\in(a_0,b_0)}, \\
  {e^{-\frac{(\alpha-\beta)\pi i}{2}\sigma_3}
\begin{pmatrix}
    0 & -e^{-is\Omega_j}\\
    e^{is\Omega_j} & 0
\end{pmatrix}e^{\frac{(\alpha-\beta)\pi i}{2}\sigma_3}},& {z\in\cup_{j=1}^{m-1}(a_j,b_j)\cup(a_m,0)} ,\\
  {e^{\frac{(\alpha-\beta)\pi i}{2}\sigma_3}
\begin{pmatrix}
0 & -e^{-is\Omega_j}\\
e^{is\Omega_j} & 0
\end{pmatrix} e^{-\frac{(\alpha-\beta)\pi i}{2}\sigma_3}}, & { z\in\cup_{j=m+1}^{n}(a_j,b_j)\cup(0,b_m)}, \\
{e^{2 \beta \pi i \sigma_3}},& z \in (-i\infty,0).
\end{array}\right.
\end{equation}

\item [\rm (c)] As $z\to\infty$, we have 
\begin{align}\label{equ:asy Pinfty at z=infty}
P^{(\infty)}(z)=\left(I+\frac{P^{(\infty)}_{1}(s)}{z}+\mathcal{O}(z^{-2})\right)z^{-\beta\sigma_3},
\end{align}
where $P^{(\infty)}_{1}$ is independent of $z$.
\item [\rm (d)] As $z\to p$ with $p\in \mathcal{I}_e$, we have
\begin{align}\label{equ:asy Pinfty at z=endpoints}
P^{(\infty)}(z)=\mathcal{O}\left(\left(z-p\right)^{-\frac{1}{4}}\right).
\end{align}
\item [\rm (e)] As $z\to 0$ from $\mathbb{C}^+$, we have 
\begin{align}\label{equ:asy Pinfty at z=0}
P^{(\infty)}(z)=
P^{(\infty)}_{0}(z)z^{\alpha\sigma_3}
\end{align}
where $P^{(\infty)}_{0}(z)$ is holomorphic in a neighborhood of the origin. 
\end{itemize}

In what follows, we will give an explicit solution of the global parametrix.


\paragraph{The Szeg\H{o} function $\mathcal{D}$}  
Let 
\begin{align}\label{def:D-function}
\mathcal{D}(z):=\exp\left\{\frac{\sqrt{\mathcal{R}(z)}}{2\pi i}\int_{\Sigma}\frac{\mathcal{H}(\xi)}{\xi-z}\dif\xi\right\}
\end{align}
be a  Szeg\H{o} type function, where  
\begin{equation}\label{def:mathcalH}
\mathcal{H}(z):=
\frac{\log z^{-2\beta}+\sign{z}(\alpha-\beta)\pi i+\zeta_j}{\sqrt{\mathcal{R}(z)}_{+}}, \quad z\in(a_j,b_j), \quad j=0,1,\dots,n.
\end{equation}
Here, $\sqrt{\mathcal{R}(z)}$ is given by \eqref{equ:sqrt(R(z))}, $\zeta_0:=0$ and $\zeta_{j}$, $j=1,\dots,n$, is determined by \eqref{equ: sol of zetaj} below to ensure $\mathcal{D}(z)$ is bounded as $z\to\infty$. We will use $\mathcal{D}$ to reduce the RH problem for $P^{(\infty)}$ to a solvable one. Some properties of $\mathcal{D}$ are listed in the following proposition.
\begin{proposition}\label{prop:D-function} 
The $\mathcal{D}$ function defined in \eqref{def:D-function} satisfies the following properties.
\begin{itemize}
    \item [\rm (a)] $\mathcal{D}(z)$ is holomorphic for $z\in\mathbb{C}\setminus\overline{\Sigma}$.
    \item [\rm (b)] For $z\in(a_j,b_j)$, $j=0,1,\dots,n$, we have 
    \begin{equation}\label{eq:Djump}
        \mathcal{D}_{+}(z)\mathcal{D}_{-}(z)=z^{-2\beta}e^{\sign{z}(\alpha-\beta)\pi i+\zeta_j}.
    \end{equation} 
    \item [\rm (c)] As $z\to\infty$, we have 
    \begin{align}\label{equ: expansion of mathcalD at z=infty}
    \mathcal{D}(z)=\mathcal{D}_{\infty}\left(1+\frac{\mathcal{D}_{\infty,1}}{z}+\mathcal{O}(z^{-2})\right),
    \end{align}
    where 
    \begin{align}
    &\mathcal{D}_{\infty}:=\exp\left\{-\frac{1}{2\pi i}\int_{\Sigma}\xi^{n}\mathcal{H}(\xi)\dif\xi\right\},\label{def:Dinfty}\\
    &\mathcal{D}_{\infty,1}:=\frac{\sum_{k=0}^{n}(a_k+b_k)}{4\pi i}\int_{\Sigma}\xi^n\mathcal{H}(\xi)\dif\xi-\frac{1}{2\pi i}\int_{\Sigma}\xi^{n+1}\mathcal{H}(\xi)\dif\xi \label{def:Dinfty1-pre}\\
    &=\frac{1}{4\pi i}\left(\frac{\sum_{k=0}^{n}(a_k+b_k)}{2}\sum_{j=1}^{n}\zeta_ja_{j,n}-\sum_{j=1}^{n}\zeta_ja_{j,n+1}\right)\nonumber\\
    &\hspace{1em}+\frac{\beta}{\pi i}\int_{\Sigma}\frac{\xi^{n+1}\log|\xi|}{\sqrt{\mathcal{R}(\xi)}_{+}}\dif\xi-\frac{\beta\sum_{k=0}^{n}(a_k+b_k)}{2\pi i}\int_{\Sigma}\frac{\xi^{n}\log|\xi|}{\sqrt{\mathcal{R}(\xi)}_{+}}\dif\xi\nonumber\\
    &\hspace{1em}+\frac{\alpha}{2}\left(\sum_{j=0}^{m-1}\int_{a_j}^{b_j}+\int_{a_m}^{0}-\int_{0}^{b_m}-\sum_{j=m+1}^{n}\int_{a_j}^{b_j}\right)\frac{\xi^{n+1}}{\sqrt{\mathcal{R}(\xi)}_{+}}\dif\xi\nonumber\\
    &\hspace{1em}-\frac{\alpha\sum_{k=0}^{n}(a_k+b_k)}{4}\left(\sum_{j=0}^{m-1}\int_{a_j}^{b_j}+\int_{a_m}^{0}-\int_{0}^{b_m}-\sum_{j=m+1}^{n}\int_{a_j}^{b_j}\right)\frac{\xi^{n}}{\sqrt{\mathcal{R}(\xi)}_{+}}\dif\xi
    \in i \mathbb{R}.\label{def:Dinfty1-a}  
    \end{align}
   Here, $\zeta_1,\dots,\zeta_{n}$, are
    all purely imaginary and  uniquely determined by 
    \begin{align}\label{equ: sol of zetaj}
    \begin{pmatrix}
        \zeta_1\\
        \zeta_2\\ 
        \vdots\\ 
        \zeta_{n}
    \end{pmatrix}
    =2{\tilde{\mathbb{A}}}^{-\rm T}
    \begin{pmatrix}
        \mathcal{F}_0\\
        \mathcal{F}_1\\
        \vdots\\
        \mathcal{F}_{n-1}
    \end{pmatrix},
    \end{align}
    where $\tilde{\mathbb{A}}$ is given by \eqref{def:tilde matrix A} and 
    \begin{align}\label{def:mathcalFl}
    \mathcal{F}_{l}=2\beta\int_{\Sigma}\frac{\xi^{l}\log|\xi|}{\sqrt{\mathcal{R}(\xi)}_{+}}\dif\xi-\pi i\alpha\int_{\Sigma}\xi^{l}\frac{\sign{\xi}}{\sqrt{\mathcal{R}(\xi)}_{+}}\dif\xi, \quad l=0,1,\dots,n-1.
    \end{align}
    \item [\rm (d)] As $z\to p$ from $\mathbb{C}^+$ with $p\in\mathcal{I}_{e}$, we have 
    \begin{align}\label{equ: asy mathcalD at endpoints}
    \mathcal{D}(z)=\mathcal{D}_{p}\left(1+\mathcal{O}({(z-p)}^{\frac{1}{2}})\right),
    \end{align}
    where
    \begin{align}\label{equ:mathcalD_p}
    \mathcal{D}_{p}=\exp\left\{\frac{\log p^{-2\beta}+\sign{p}(\alpha-\beta)\pi i+\zeta_j}{2}\right\}.
    \end{align}
    \item [\rm (e)] As $z\to 0$ from $\mathbb{C}^+$, we have
    \begin{align}\label{equ: asy mathcalD at 0}
     \mathcal{D}(z)=\mathcal{O}(1)\cdot z^{-\alpha-\beta}.
    \end{align}
\end{itemize}
\end{proposition}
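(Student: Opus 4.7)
The plan is to read off each property directly from the defining contour integral $L(z):=\log\mathcal{D}(z)=\frac{\sqrt{\mathcal{R}(z)}}{2\pi i}\int_{\Sigma}\frac{\mathcal{H}(\xi)}{\xi-z}\dif\xi$. Part (a) is immediate from the holomorphicity of $\sqrt{\mathcal{R}(z)}$ on $\mathbb{C}\setminus\overline{\Sigma}$ and of the Cauchy transform off $\Sigma$. For part (b), I would combine the Plemelj--Sokhotski formulas, which say that the jump of the Cauchy transform at $z\in\Sigma$ equals $\mathcal{H}(z)$, with the identity $\sqrt{\mathcal{R}(z)}_{+}+\sqrt{\mathcal{R}(z)}_{-}=0$ from \eqref{equ:mathcalR-jump-relation1}, to obtain $L_{+}(z)+L_{-}(z)=\sqrt{\mathcal{R}(z)}_{+}\mathcal{H}(z)$; by the explicit form \eqref{def:mathcalH} this is exactly $\log z^{-2\beta}+\sign{z}(\alpha-\beta)\pi i+\zeta_j$, matching \eqref{eq:Djump} after exponentiation.

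For part (c), I would expand $1/(\xi-z)=-\sum_{k\geqslant 0}\xi^{k}z^{-k-1}$ for large $z$ and use $\sqrt{\mathcal{R}(z)}=z^{n+1}-\tfrac{1}{2}\sum_{k=0}^{n}(a_k+b_k)z^{n}+\cdots$ to rewrite $L(z)$ as a formal Laurent series. Boundedness at infinity forces the coefficients of $z^{n},z^{n-1},\dots,z^{1}$ to vanish, which is equivalent to the $n$ moment conditions $\int_{\Sigma}\xi^{l}\mathcal{H}(\xi)\dif\xi=0$, $l=0,\dots,n-1$. Using the identity $\int_{a_j}^{b_j}\xi^{l}/\sqrt{\mathcal{R}(\xi)}_{+}\dif\xi=a_{j,l}/2$ (a direct consequence of \eqref{equ:a_{k,j}}) together with $\zeta_0:=0$, these conditions rewrite as the linear system $\tilde{\mathbb{A}}^{\mathrm T}(\zeta_1,\dots,\zeta_n)^{\mathrm T}=2(\mathcal{F}_0,\dots,\mathcal{F}_{n-1})^{\mathrm T}$, which has a unique solution by the invertibility of $\tilde{\mathbb{A}}$ quoted from \cite{FK-Riemann-Surface}, yielding \eqref{equ: sol of zetaj}. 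The constant term of $L(z)$ produces $\mathcal{D}_{\infty}$ as in \eqref{def:Dinfty}, while the coefficient of $z^{-1}$ gives \eqref{def:Dinfty1-pre}; splitting $\mathcal{H}$ into its $\beta$-, $\alpha$- and $\zeta$-contributions, and using the moment relations to rewrite the $\zeta$-part in terms of $a_{j,n}$ and $a_{j,n+1}$, reorganizes this into \eqref{def:Dinfty1-a}. Finally, since $a_{j,l}\in i\mathbb{R}$, $\sqrt{\mathcal{R}(\xi)}_{+}\in i\mathbb{R}$ on $\Sigma$, $\beta\in i\mathbb{R}$ and $\alpha\in\mathbb{R}$, each $\mathcal{F}_l$ is real and $\tilde{\mathbb{A}}^{-\mathrm T}$ is purely imaginary, so both $\zeta_j$ and each summand in \eqref{def:Dinfty1-a} lie in $i\mathbb{R}$.

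For parts (d) and (e), I would perform a local analysis of $\int_{\Sigma}\mathcal{H}(\xi)/(\xi-z)\dif\xi$. Near an endpoint $p\in\mathcal{I}_e$, the Cauchy transform inherits an inverse-square-root singularity from the $1/\sqrt{\mathcal{R}(\xi)}_{+}$ factor, which is cancelled exactly by the prefactor $\sqrt{\mathcal{R}(z)}\sim C(z-p)^{1/2}$; isolating the leading $(\xi-p)^{-1/2}$ piece of $\mathcal{H}$, integrating it explicitly, and passing to the limit $z\to p$ produces the finite value $\mathcal{D}_p$ displayed in \eqref{equ:mathcalD_p} together with an $O((z-p)^{1/2})$ remainder, proving \eqref{equ: asy mathcalD at endpoints}. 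Near $z=0$, the only singular contributions come from $\log\xi^{-2\beta}$ and from the jump of $\sign{\xi}$ on the interval $(a_m,b_m)$ that contains $0$; direct local expansions of $\int_{a_m}^{b_m}\log|\xi|/(\xi-z)\dif\xi$ and $\int_{a_m}^{b_m}\sign{\xi}/(\xi-z)\dif\xi$ produce $\log(-z)$-type terms whose coefficients, after multiplication by $\sqrt{\mathcal{R}(0)}/(2\pi i)$, combine to $-(\alpha+\beta)$, exponentiating to the stated $z^{-\alpha-\beta}$ in \eqref{equ: asy mathcalD at 0}.

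The main technical obstacle will be part (c): reorganizing \eqref{def:Dinfty1-pre} into the explicit form \eqref{def:Dinfty1-a} requires careful branch bookkeeping for $\log\xi^{-2\beta}$ as $\xi$ crosses $0$, consistent tracking of the alternating signs of $\sqrt{\mathcal{R}(\xi)}_{+}$ across successive intervals $(a_j,b_j)$, and a clean separation of the $\log|\xi|$, $\sign{\xi}$ and $\zeta_j$ pieces, all while keeping the resulting formula manifestly purely imaginary. A secondary delicate point is checking that the $n$ vanishing relations produced by the large-$z$ expansion are algebraically independent, which reduces once more to the invertibility of $\tilde{\mathbb{A}}$.
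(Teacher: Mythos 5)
Your proposal is correct and follows essentially the same route as the paper: Sokhotski--Plemelj for (a)--(b), a Laurent expansion of $\sqrt{\mathcal{R}(z)}\cdot\tfrac{1}{2\pi i}\int_\Sigma\mathcal{H}(\xi)/(\xi-z)\,\dif\xi$ at infinity for (c), and Muskhelishvili-type local expansions (with the prefactor $\sqrt{\mathcal{R}(z)}$ cancelling the endpoint singularity, and a Cauchy-deformation argument for the $\log$ term at the origin) for (d)--(e). The one step you fold into ``careful branch bookkeeping'' but do not isolate explicitly is that passing from the raw moment conditions to the displayed $\mathcal{F}_l$ in \eqref{def:mathcalFl}, and from \eqref{def:Dinfty1-pre} to the reorganized \eqref{def:Dinfty1-a}, each produces an extra real-valued cross term which must be shown to vanish; the paper does this via residue-at-infinity computations for $z^{l}/\sqrt{\mathcal{R}(z)}$ (using $\res_{z=\infty}[z^l/\sqrt{\mathcal{R}(z)}]=0$ for $l\leqslant n-1$, $\res_{z=\infty}[z^n/\sqrt{\mathcal{R}(z)}]=1$, and $\res_{z=\infty}[z^{n+1}/\sqrt{\mathcal{R}(z)}]=\tfrac{1}{2}\sum_k(a_k+b_k)$), and you would need the same residue identities to make the reorganization close.
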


\begin{proof}
Since items (a) and (b) follow directly from \eqref{def:D-function} and the Sokhotski-Plemlj formula, we focus on the proofs of items (c)--(e). 

\begin{itemize}
\item[\rm (c)] As $z\rightarrow\infty$, it is readily seen that
\begin{align}\label{equ: asy of integral at z=infty}
\int_{\Sigma}\frac{\mathcal{H}(\xi)}{\xi-z}\dif\xi
&=-\frac{1}{z}\int_{\Sigma}\mathcal{H}(\xi)\dif\xi
-\frac{1}{z^2}\int_{\Sigma}\xi\mathcal{H}(\xi)\dif\xi-
\frac{1}{z^3}\int_{\Sigma}\xi^2\mathcal{H}(\xi)\dif\xi
\nonumber 
\\
&\quad -\dots-\frac{1}{z^n}\int_{\Sigma}\xi^{n-1}\mathcal{H}(\xi)\dif\xi
-\frac{1}{z^{n+1}}\int_{\Sigma}\xi^n\mathcal{H}(\xi)\dif\xi+\mathcal{O}\left(\frac{1}{z^{n+2}}\right),
\end{align}
and 
\begin{align}\label{equ: asy of sqrtmathcalR+ at z=infty}
\sqrt{\mathcal{R}(z)}=z^{n+1}-\frac{1}{2}\sum_{j=0}^{n}(a_j+b_j)z^{n}+\mathcal{O}(z^n). 
\end{align}
To ensure $\mathcal{D}(z)$ is bounded as $z\to\infty$, one has 
\begin{align}
\int_{\Sigma}\xi^{l}\mathcal{H}(\xi)\dif\xi=0,  \qquad l=0,1,\dots,n-1.
\end{align}
Inserting \eqref{def:mathcalH} into the above formula, it follows that
\begin{align}\label{equ:sumzetajint=mathFl}
\sum_{j=1}^{n}\zeta_{j}\int_{a_j}^{b_j}\frac{\xi^{l}}{\sqrt{\mathcal{R}(\xi)}_{+}}\dif\xi=\mathcal{F}_l, \qquad l=0,\dots,n-1, \ \ j=1,\dots,n,
\end{align}
where 
\begin{align}
\mathcal{F}_{l}=-\int_{\Sigma}\xi^{l}\frac{\log\xi^{-2\beta}}{\sqrt{\mathcal{R}(\xi)}_{+}}\dif\xi-\pi i(\alpha-\beta)\int_{\Sigma}\xi^l\frac{\sign{\xi}}{\sqrt{\mathcal{R}(\xi)}_{+}}\dif\xi.
\end{align}
A straightforward calculation shows that
\begin{align}\label{equ:mathcalFl cal}
\mathcal{F}_{l}=2\beta\int_{\Sigma}\frac{\xi^{l}\log|\xi|}{\sqrt{\mathcal{R}(\xi)}_{+}}\dif\xi-\pi i\alpha\int_{\Sigma}\xi^{l}\frac{\sign{\xi}}{\sqrt{\mathcal{R}(\xi)}_{+}}\dif\xi+
\pi i\beta\int_{\Sigma}\frac{\xi^{l}}{\sqrt{\mathcal{R}(\xi)}_{+}}\dif\xi.
\end{align}
Note that $\sqrt{\mathcal{R}(\xi)}_{+}$ is purely imaginary for 
$\xi\in\Sigma$, $\alpha\in\mathbb{R}$ and $\beta\in i\mathbb{R}$,
the first two terms on the right-hand side of \eqref{equ:mathcalFl cal} are all real. To evaluate the third term, we see from the residue theorem that
\begin{align}\label{eq:3rdint}
\pi i\beta\int_{\Sigma}\frac{\xi^{l}}{\sqrt{\mathcal{R}(\xi)}_{+}}\dif\xi
=-2\pi i\res_{z=\infty} \left[\frac{z^{l}}{\sqrt{\mathcal{R}(z)}}\right]\cdot\pi i\beta.
\end{align}
As it is easily seen from \eqref{equ: asy of sqrtmathcalR+ at z=infty}  that
\begin{align}\label{equ: 1/sqrtR asy at z=infty}
\frac{1}{\sqrt{\mathcal{R}(z)}}=z^{-n-1}+\frac{\sum_{k=0}^{n}(a_k+b_k)}{2}z^{-n-2}+o(z^{-n-2}), \qquad \ z\to\infty,
\end{align}
it is immediate that $\res_{z=\infty} [{z^{l}}/\sqrt{\mathcal{R}(z)}]=0$ 
for $l=0,1,\dots,n-1$. This, together with \eqref{equ:mathcalFl cal} and \eqref{eq:3rdint}, implies \eqref{def:mathcalFl} and $\mathcal{F}_{l}\in \mathbb{R}$.

Next, observe that 
\begin{align*}
    \int_{a_j}^{b_j}\frac{\xi^{l}}{\sqrt{\mathcal{R}(\xi)}_{+}}\dif\xi=\frac{1}{2}\oint_{A_j}\frac{\xi^{l}}{\sqrt{\mathcal{R}(\xi)}}\dif\xi, \quad l,j=0,\dots,n-1,
\end{align*}
one can rewrite \eqref{equ:sumzetajint=mathFl} as
\begin{equation}\label{equ:system of zetaj and mathcalFl}
    \sum_{j=1}^{n}\frac{\zeta_j}{2}a_{j,l}=\mathcal{F}_l,
\end{equation}
where $a_{j,l}$ is given by \eqref{equ:a_{k,j}}. Solving the linear system 
\eqref{equ:system of zetaj and mathcalFl} gives us \eqref{equ: sol of zetaj}. 
As $\mathcal{F}_{l}\in\mathbb{R}$, we conclude $\zeta_j\in i\mathbb{R}$ by noticing that the elements of ${\tilde{\mathbb{A}}}^{-\rm T}$ are purely imaginary. Using \eqref{equ: asy of integral at z=infty} and \eqref{equ: asy of sqrtmathcalR+ at z=infty}, we obtain \eqref{equ: expansion of mathcalD at z=infty}--\eqref{def:Dinfty1-pre}. 

We finally show \eqref{def:Dinfty1-a} and prove $\mathcal{D}_{\infty,1}$ is purely imaginary. From \eqref{def:mathcalH}, \eqref{equ:a_{k,j}} and
\eqref{def:Dinfty1-pre}, it follows that  
\begin{align*}
&\mathcal{D}_{\infty,1}=
\frac{1}{4\pi i}\left(\frac{\sum_{k=0}^{n}(a_k+b_k)}{2}\sum_{j=1}^{n}\zeta_ja_{j,n}-\sum_{j=1}^{n}\zeta_ja_{j,n+1}\right)\nonumber\\
&\hspace{3.5em}+\frac{\beta}{\pi i}\left(\sum_{j=0}^{n}\int_{a_j}^{b_j}\frac{\xi^{n+1}\log\xi}{\sqrt{\mathcal{R}(\xi)}_{+}}\dif\xi-\frac{\sum_{k=0}^{n}(a_k+b_k)}{2}\sum_{j=0}^{n}\int_{a_j}^{b_j}\frac{\xi^n\log\xi}{\sqrt{\mathcal{R}(\xi)}_{+}}\dif\xi\right)\nonumber\\
&\hspace{3.5em}+\frac{\alpha-\beta}{2}\left(\frac{\sum_{k=0}^{n}(a_k+b_k)}{2}\sum_{j=0}^{n}\int_{a_j}^{b_j}\frac{\xi^n\sign{\xi}}{\sqrt{\mathcal{R}(\xi)}_{+}}\dif\xi-\sum_{j=0}^{n}\int_{a_j}^{b_j}\frac{\xi^{n+1}\sign{\xi}}{\sqrt{\mathcal{R}(\xi)}_{+}}\dif\xi\right), 
\end{align*}
or equivalently,  
\begin{align}\label{equ: mathcalDinfty,1 cal}
&\mathcal{D}_{\infty,1}:=
\frac{1}{4\pi i}\left(\frac{\sum_{k=0}^{n}(a_k+b_k)}{2}\sum_{j=1}^{n}\zeta_ja_{j,n}-\sum_{j=1}^{n}\zeta_ja_{j,n+1}\right)\nonumber\\
&\hspace{3.5em}+\frac{\beta}{\pi i}\int_{\Sigma}\frac{\xi^{n+1}\log|\xi|}{\sqrt{\mathcal{R}(\xi)}_{+}}\dif\xi-\frac{\beta\sum_{k=0}^{n}(a_k+b_k)}{2\pi i}\int_{\Sigma}\frac{\xi^{n}\log|\xi|}{\sqrt{\mathcal{R}(\xi)}_{+}}\dif\xi\nonumber\\
&\hspace{3.5em}+\frac{\alpha}{2}\left(\sum_{j=0}^{m-1}\int_{a_j}^{b_j}+\int_{a_m}^{0}-\int_{0}^{b_m}-\sum_{j=m+1}^{n}\int_{a_j}^{b_j}\right)\frac{\xi^{n+1}}{\sqrt{\mathcal{R}(\xi)}_{+}}\dif\xi\nonumber\\
&\hspace{3.5em}-\frac{\alpha\sum_{k=0}^{n}(a_k+b_k)}{4}\left(\sum_{j=0}^{m-1}\int_{a_j}^{b_j}+\int_{a_m}^{0}-\int_{0}^{b_m}-\sum_{j=m+1}^{n}\int_{a_j}^{b_j}\right)\frac{\xi^{n}}{\sqrt{\mathcal{R}(\xi)}_{+}}\dif\xi\nonumber\\
&\hspace{3.5em}+\uwave{
\frac{\beta}{2}\int_{\Sigma}\frac{\xi^{n+1}}{\sqrt{\mathcal{R}(\xi)}_{+}}\dif\xi
-\frac{\beta}{4}\sum_{k=0}^{n}(a_k+b_k)\int_{\Sigma}\frac{\xi^{n}}{\sqrt{\mathcal{R}(\xi)}_{+}}\dif\xi}.
\end{align}
Since $\zeta_j,a_{k,l},\beta$ and $\sqrt{\mathcal{R}(\xi)}_{+}$, $\xi\in\Sigma$, are all purely imaginary and $\alpha\in\mathbb{R}$, it follows that 
the first four terms on the right-hand side of \eqref{equ: mathcalDinfty,1 cal} are all purely imaginary. Now we claim that the underlined part in \eqref{equ: mathcalDinfty,1 cal} vanishes, which leads to  \eqref{def:Dinfty1-a}. Indeed, we see from the residue theorem that 
\begin{multline}\label{equ: an equ ho to 0}
\frac{\beta}{2}\int_{\Sigma}\frac{\xi^{n+1}}{\sqrt{\mathcal{R}(\xi)}_{+}}\dif\xi
-\frac{\beta}{4}\sum_{k=0}^{n}(a_k+b_k)\int_{\Sigma}\frac{\xi^{n}}{\sqrt{\mathcal{R}(\xi)}_{+}}\dif\xi\\
=-2\pi i\cdot\frac{\beta}{2}\res_{z=\infty}\left[\frac{z^{n+1}}{\sqrt{\mathcal{R}(z)}}\right]+2\pi i\cdot\frac{\beta}{4}\sum_{k=0}^{n}(a_k+b_k)\res_{z=\infty}\left[\frac{z^{n}}{\sqrt{\mathcal{R}(z)}}\right].
\end{multline}
Using \eqref{equ: 1/sqrtR asy at z=infty} again, it's readily seen 
that $\res_{z=\infty}\left[{z^{n+1}}/{\sqrt{\mathcal{R}(z)}}\right]=\frac{1}{2}\sum_{k=0}^{n}(a_k+b_k)$ as well as $\res_{z=\infty}\left[{z^{n}}/{\sqrt{\mathcal{R}(z)}}\right]=1$, which implies that \eqref{equ: an equ ho to 0} is equal to $0$ and $\mathcal{D}_{\infty,1}\in i\mathbb{R}$.

\item[\rm (d)] For $j\in\{0,1,\dots,n\}$, we note that
\begin{align}\label{equ:split the integral of H/(xi-z)}
\frac{1}{2\pi i}\int_{\Sigma}\frac{\mathcal{H}(\xi)}{\xi-z}\dif\xi
&=\frac{1}{2\pi i}\int_{a_j}^{b_j}\frac{\log\xi^{-2\beta}+\sign{\xi}(\alpha-\beta)\pi i+\zeta_j}{\sqrt{\mathcal{R}(\xi)}_{+}(\xi-z)}\dif\xi
\nonumber 
\\
&\quad +
\frac{1}{2\pi i}\int_{\Sigma\setminus[a_j,b_j]}\frac{\mathcal{H}(\xi)}{\xi-z}\dif\xi.
\end{align}
Applying the formula (29.5) from \cite[Chapter 4, $\S$29]{Mus-SingularIntegrals-book} for $\gamma=\frac{1}{2}$ to the first term on the right-hand side of \eqref{equ:split the integral of H/(xi-z)}, one has, as $z\to a_j$,
\begin{multline}\label{equ:first term asy at aj}
\frac{1}{2\pi i}\int_{a_j}^{b_j}\frac{\log \xi^{-2\beta}+\zeta_j+\sign{\xi}(\alpha-\beta)\pi i}{(\xi-z)\sqrt{\mathcal{R}(\xi)}_{+}}\\
\sim\frac{1}{2}\cdot\frac{\log a_j^{-2\beta}+\zeta_j+\sign{a_j}(\alpha-\beta)\pi i}{i(-1)^{n-j}|\prod_{\substack{k=0 \\ k\neq j}}^{n}(a_j-a_k)
\prod_{k=0}^{n}(a_j-b_k)|^{\frac{1}{2}}}\cdot (z-a_j)^{-\frac{1}{2}}.
\end{multline}
Meanwhile, as $z\to a_j$ from $\mathbb{C}^+$,  direct computations show that 
\begin{align}\label{equ:second term asy at aj}
\frac{1}{2\pi i}\int_{\Sigma\setminus[a_j,b_j]}\frac{\mathcal{H}(\xi)}{\xi-z}\dif\xi=\frac{1}{2\pi i}\int_{\Sigma\setminus[a_j,b_j]}\frac{\mathcal{H}(\xi)}{\xi-a_j}\dif\xi+\mathcal{O}(z-a_j),
\end{align}
and 
\begin{align}\label{equ:sqrtR asy at aj}
\sqrt{\mathcal{R}(z)}=i(-1)^{n-j}(z-a_j)^{\frac{1}{2}}
\prod_{\substack{k=0 \\ k\neq j}}^{n}|a_j-a_k|^{\frac{1}{2}}
\prod_{k=0}^{n}|a_j-b_k|^{\frac{1}{2}}
\left(1+\mathcal{O}\left(z-a_j\right)\right).
\end{align}
A combination of \eqref{equ:split the integral of H/(xi-z)}--\eqref{equ:sqrtR asy at aj} gives us the statement in item (d) with $p=a_j$. Similarly, one can show the claim holds if $z\to b_j$ from $\mathbb{C}^+$.

\item[\rm (e)] 
To proceed, we note that 
\begin{align}\label{equ:split the integral of H/(xi-z) zto0}
&\frac{\sqrt{\mathcal{R}(z)}}{2\pi i}\int_{\Sigma}\frac{\mathcal{H}(\xi)}{\xi-z}\dif\xi
\nonumber
\\
&  =\frac{\sqrt{\mathcal{R}(z)}}{2\pi i}\int_{\Sigma\setminus[a_m,b_m]}
\frac{\mathcal{H}(\xi)}{\xi-z}\dif\xi+\frac{\sqrt{\mathcal{R}(z)}}{2\pi i}\int_{a_m}^{0}\frac{\zeta_m-(\alpha-\beta)\pi i}{\sqrt{\mathcal{R}(\xi)}_{+}(\xi-z)}
\nonumber \\
& \quad 
+\frac{\sqrt{\mathcal{R}(z)}}{2\pi i}\int_{0}^{b_m}\frac{\zeta_m+(\alpha-\beta)\pi i}{\sqrt{\mathcal{R}(\xi)}_{+}(\xi-z)}
+\frac{\sqrt{\mathcal{R}(z)}}{2\pi i}\int_{a_m}^{b_m}\frac{-2\beta\log\xi}{\sqrt{\mathcal{R}(\xi)}_{+}(\xi-z)}\dif\xi.
\end{align}
As $z\to 0$ from $\mathbb{C}^+$, we have
\begin{align}\label{equ: est D(z) zto0-a}
\frac{\sqrt{\mathcal{R}(z)}}{2\pi i}\int_{\Sigma\setminus[a_m,b_m]}
\frac{\mathcal{H}(\xi)}{\xi-z}\dif\xi=\mathcal{O}(1),
\end{align}
and by the formula (29.4) in \cite[Chapter 4, $\S$29]{Mus-SingularIntegrals-book} with $\gamma=0$,
\begin{multline}\label{equ: est D(z) zto0-b}
\frac{\sqrt{\mathcal{R}(z)}}{2\pi i}\int_{a_m}^{0}\frac{\zeta_m-(\alpha-\beta)\pi i}{\sqrt{\mathcal{R}(\xi)}_{+}(\xi-z)}\dif\xi+\frac{\sqrt{\mathcal{R}(z)}}{2\pi i}\int_{0}^{b_m}\frac{\zeta_m+(\alpha-\beta)\pi i}{\sqrt{\mathcal{R}(\xi)}_{+}(\xi-z)}\dif\xi\\
\sim -(\alpha-\beta)\log z.
\end{multline}
To deal with the last term on the right-hand side of \eqref{equ:split the integral of H/(xi-z) zto0}, we extend the function $\frac{\log \xi}{\sqrt{\mathcal{R}(\xi)}_{+}}$ analytically  to the upper half plane $\mathbb{C}^{+}$ and obtain
\begin{multline}\label{equ: Cauchy integral}
\frac{\sqrt{\mathcal{R}(z)}}{2\pi i}\int_{a_m}^{b_m}\frac{-2\beta\log\xi}{\sqrt{\mathcal{R}(\xi)}_{+}(\xi-z)}\dif\xi\\
=\frac{\sqrt{\mathcal{R}(z)}}{2\pi i}\oint_{\Gamma'}\frac{-2\beta\log\xi}{\sqrt{\mathcal{R}(\xi)}(\xi-z)}\dif\xi-\frac{\sqrt{\mathcal{R}(z)}}{2\pi i}\int_{\Gamma}\frac{-2\beta\log\xi}{\sqrt{\mathcal{R}(\xi)}(\xi-z)}\dif\xi,
\end{multline}
where $\Gamma':=\Gamma \cup (a_m,b_m)$; see Figure \ref{fig: Cauchy integral} for an illustration. As $z\to 0$ from $\mathbb{C}^+$, we have by Cauchy's formula that 
\begin{align}\label{equ: est D(z) zto0-c}
\frac{\sqrt{\mathcal{R}(z)}}{2\pi i}\int_{a_m}^{b_m}\frac{-2\beta\log\xi}{\sqrt{\mathcal{R}(\xi)}_{+}(\xi-z)}\dif\xi
=-2\beta\log z+\mathcal{O}(1).
\end{align}
The estimate \eqref{equ: asy mathcalD at 0} follows by combining \eqref{equ:split the integral of H/(xi-z) zto0}, \eqref{equ: est D(z) zto0-a}, \eqref{equ: est D(z) zto0-b} and 
\eqref{equ: est D(z) zto0-c}.
\end{itemize}
This completes the proof of Proposition \ref{prop:D-function}. 
\end{proof}
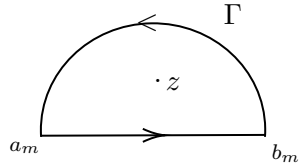
\begin{figure}[htbp]
\centering
\tikzset{every picture/.style={line width=0.75pt}} 
\begin{tikzpicture}[x=0.75pt,y=0.75pt,yscale=-1,xscale=1]
\draw    (248,153) -- (360,152.6) ;
\draw [shift={(310,152.78)}, rotate = 179.8] [color={rgb, 255:red, 0; green, 0; blue, 0 }  ][line width=0.75]    (10.93,-3.29) .. controls (6.95,-1.4) and (3.31,-0.3) .. (0,0) .. controls (3.31,0.3) and (6.95,1.4) .. (10.93,3.29)   ;
\draw  [draw opacity=0] (248.28,153.58) .. controls (248.22,152.7) and (248.18,151.8) .. (248.17,150.91) .. controls (247.76,121.27) and (272.49,96.9) .. (303.4,96.47) .. controls (334.32,96.04) and (359.72,119.71) .. (360.14,149.34) .. controls (360.16,150.77) and (360.12,152.18) .. (360.03,153.57) -- (304.15,150.13) -- cycle ; \draw   (248.28,153.58) .. controls (248.22,152.7) and (248.18,151.8) .. (248.17,150.91) .. controls (247.76,121.27) and (272.49,96.9) .. (303.4,96.47) .. controls (334.32,96.04) and (359.72,119.71) .. (360.14,149.34) .. controls (360.16,150.77) and (360.12,152.18) .. (360.03,153.57) ;  
\draw (231,154.4) node [anchor=north west][inner sep=0.75pt]  [font=\scriptsize]  {$a_m$};
\draw (362,156) node [anchor=north west][inner sep=0.75pt]  [font=\scriptsize]  {$b_m$};
\draw (294,90.4) node [anchor=north west][inner sep=0.75pt]    {$< $};
\draw (309,122.4) node [anchor=north west][inner sep=0.75pt]    {$z$};
\draw (303,123) node [anchor=north west][inner sep=0.75pt]   [align=left] {.};
\draw (338,85.4) node [anchor=north west][inner sep=0.75pt]    {$\Gamma $};
\end{tikzpicture}
\caption{The contour of integral in \eqref{equ: Cauchy integral}.}
\label{fig: Cauchy integral}
\end{figure}

With $\mathcal{D}_{\infty}$ and $\mathcal{D}$ given in \eqref{def:Dinfty1-pre} and \eqref{def:D-function}, we define
\begin{align}\label{def:hatPinfty}
\hat{P}^{(\infty)}(z):=\mathcal{D}_{\infty}^{-\sigma_3}P^{(\infty)}(z)z^{\beta\sigma_3}\mathcal{D}(z)^{\sigma_3}.
\end{align}
It's readily seen that $\hat{P}^{(\infty)}$ satisfies the following RH problem.

\paragraph{RH problem for $\hat{P}^{(\infty)}$}
\begin{itemize} 
\item [\rm (a)] $\hat{P}^{(\infty)}(z)$ is holomorphic for $z\in\mathbb{C}\setminus\overline{\Sigma}$.
\item[\rm (b)] For $z\in\Sigma$, we have 
 \begin{equation}
    \hat{P}^{(\infty)}_{+}(z)= \hat{P}^{(\infty)}_{-}(z)J_{\hat{P}^{(\infty)}}(z),
  \end{equation}
  where
\begin{equation}
      J_{\hat{P}^{(\infty)}}(z)=\left\{\begin{array}{ll}
  {\begin{pmatrix}
0 & -1\\
1 & 0
\end{pmatrix}},& {z\in(a_0,b_0)}, \\
  {
\begin{pmatrix}
0 & -e^{-2\pi iV_j}\\
e^{2\pi iV_j} & 0
\end{pmatrix}},& {z\in\cup_{j=1}^{n}(a_j,b_j)},
\end{array}\right.
\end{equation}
and $V_j$ is given in \eqref{def:Vj}.
\item [\rm (c)] As $z\to\infty$, we have 
\begin{align}
\hat{P}^{(\infty)}(z)=I+\frac{\hat{P}^{(\infty)}_1(s)}{z}+\mathcal{O}{(z^{-2})},
\end{align}
where $\hat{P}^{(\infty)}_1$ is independent of $z$.
\item [\rm (d)] As $z\to p $ with $p\in\mathcal{I}_{e}$, we have
\begin{align}
\hat{P}^{(\infty)}(z)=\mathcal{O}(\left(z-p\right)^{-\frac{1}{4}}).
\end{align}
\end{itemize}
Following \cite{Dei-Its-Zhou-Ann1997}, we can solve the above RH problem explicitly by using multi-dimensional Riemann $\theta$-functions, hence the RH problem for $P^{(\infty)}$ by \eqref{def:hatPinfty}. 

\paragraph{$\mathcal{E}$ model problem} 
Define 
\begin{align}\label{def:mathcalE}
\mathcal{E}(z):=\prod_{k=0}^{n}\left(\frac{z-b_k}{z-a_k}\right)^{\frac{1}{4}}, \quad z\in\mathbb{C}\setminus\overline{\Sigma},
\end{align}
where the branch cut is chosen such that $\mathcal{E}(z)>0$ for $z>b_n$. Clearly, $\mathcal{E}_{+}(z)=i\mathcal{E}_{-}(z)$ for $z\in\Sigma$. Let 
\begin{align}\label{def:mathcalN}
\mathcal{N}(z):=\begin{pmatrix}
\frac{\mathcal{E}(z)+\mathcal{E}(z)^{-1}}{2} & -\frac{\mathcal{E}(z)-\mathcal{E}(z)^{-1}}{2i}\\
\frac{\mathcal{E}(z)-\mathcal{E}(z)^{-1}}{2i} & \frac{\mathcal{E}(z)+\mathcal{E}(z)^{-1}}{2}
\end{pmatrix}.
\end{align}
It is readily seen that $\mathcal{N}$ satisfies the following RH problem.
\paragraph{RH problem for $\mathcal{N}$}
\begin{itemize}
\item [\rm (a)] $\mathcal{N}(z)$ is holomorphic for $z\in\mathbb{C}\setminus\overline{\Sigma}$.
\item [\rm (b)] For $z\in\Sigma$, we have  
\begin{align}\label{equ:jump of mathcalN}
\mathcal{N}_{+}(z)=\mathcal{N}_{-}(z)
\begin{pmatrix}
    0 & -1 \\
    1 & 0
\end{pmatrix}.
\end{align}
\item [\rm (c)] As $z\to\infty$, we have $\mathcal{N}(z)=I+\mathcal{O}(z^{-1})$.
\end{itemize}
The following proposition reveals the distribution of zeros for the functions $\mathcal{E}(z)\pm \mathcal{E}(z)^{-1}$.
\begin{proposition}\label{prop: zeros of mathcal{E}}
    The function $\mathcal{E}(z)-\mathcal{E}(z)^{-1}$ has exactly $n$ zeros denoted by $z_j$ such that $z_j\in (b_{j-1}, a_j)$ for $j=1,\dots,n$. The function $\mathcal{E}(z)+\mathcal{E}(z)^{-1}$ has no zeros in 
    $\mathbb{C}\setminus\overline{\Sigma}$.
\end{proposition}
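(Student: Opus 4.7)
The plan is to reduce both statements to analyzing the zeros of a single polynomial. Observe that
\begin{equation*}
\mathcal{E}(z) - \mathcal{E}(z)^{-1} = 0 \iff \mathcal{E}(z)^2 = 1, \qquad \mathcal{E}(z) + \mathcal{E}(z)^{-1} = 0 \iff \mathcal{E}(z)^2 = -1,
\end{equation*}
since $\mathcal{E}$ is non-vanishing on $\mathbb{C}\setminus\overline{\Sigma}$. Because $\mathcal{E}(z)^4 = \prod_{k=0}^{n}(z-b_k)/(z-a_k)$ and the branch is normalized so that $\mathcal{E}^2>0$ on $(b_n,\infty)$ with $\mathcal{E}^2(\infty)=1$, one verifies the representation
\begin{equation*}
\mathcal{E}(z)^2 = \frac{\sqrt{\mathcal{R}(z)}}{\prod_{k=0}^{n}(z-a_k)}, \qquad z\in\mathbb{C}\setminus\overline{\Sigma},
\end{equation*}
where $\sqrt{\mathcal{R}(z)}$ is the branch in \eqref{equ:sqrt(R(z))}. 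Squaring once more, both candidate equations $\mathcal{E}^2 = \pm 1$ become $\mathcal{E}^4 = 1$, which after clearing denominators is equivalent to $P(z) = 0$, where
\begin{equation*}
P(z) := \prod_{k=0}^{n}(z-a_k) - \prod_{k=0}^{n}(z-b_k).
\end{equation*}
The leading $z^{n+1}$ terms cancel, so $\deg P \leq n$ and $\mathcal{E}^4-1$ has at most $n$ zeros in $\mathbb{C}\setminus\overline{\Sigma}$.

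The first step is to locate all zeros of $P$. A direct sign count at the endpoints of the complementary gaps yields
\begin{equation*}
\sign{P(b_{j-1})} = (-1)^{n-j+1}, \qquad \sign{P(a_j)} = (-1)^{n-j}, \qquad j=1,\ldots,n,
\end{equation*}
which are opposite, so by the intermediate value theorem $P$ has at least one zero in each of the $n$ gaps $(b_{j-1},a_j)$. Since $\deg P \leq n$, these exhaust the zeros of $P$, and in particular they are all simple and real.

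The second step is to decide, at each such $z_j$, whether $\mathcal{E}(z_j)^2 = +1$ or $-1$. I would establish that $\mathcal{E}(z)^2 > 0$ on all of $\mathbb{R}\setminus\overline{\Sigma}$. Starting from $\mathcal{E}(z)^2>0$ on $(b_n,\infty)$ and moving $z$ leftward through the upper half-plane, one tracks the arguments of the factors $(z-a_k)$ and $(z-b_k)$: for $z\in(b_{j-1},a_j)$ approached from above, each pair with $k\geq j$ accumulates $\pi$ in argument, so $\sqrt{\mathcal{R}_+(z)}$ and $\prod_{k=0}^{n}(z-a_k)$ both carry the sign $(-1)^{n-j+1}$, and their ratio is positive. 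An analogous count handles $(-\infty,a_0)$. Therefore $\mathcal{E}^2(z_j)=+1$ for every zero $z_j$ of $P$, which shows that $\mathcal{E}-\mathcal{E}^{-1}$ has exactly the $n$ simple real zeros $z_j\in(b_{j-1},a_j)$ claimed.

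To finish, the factorization $\mathcal{E}^4-1 = (\mathcal{E}^2-1)(\mathcal{E}^2+1)$ together with the fact that all $n$ zeros of $\mathcal{E}^4-1$ in $\mathbb{C}\setminus\overline{\Sigma}$ already arise from $\mathcal{E}^2-1$ forces $\mathcal{E}^2+1$, equivalently $\mathcal{E}+\mathcal{E}^{-1}$, to be nowhere vanishing on $\mathbb{C}\setminus\overline{\Sigma}$. The main technical step I expect to require care is the argument-tracking establishing $\mathcal{E}^2>0$ on all bounded gaps, which amounts to a homotopy around the branch cuts; the remaining steps are a brief application of the intermediate value theorem and a degree count.
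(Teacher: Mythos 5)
Your proof is correct and takes essentially the same route as the paper's: reduce both statements to $\mathcal{E}^4 = 1$, i.e.\ to the vanishing of the polynomial $P(z) = \prod_k(z-a_k) - \prod_k(z-b_k)$ (the paper uses $\varsigma = -P$), locate $n$ zeros in the gaps by sign changes and a degree bound, and then resolve the sign ambiguity $\mathcal{E}^2(z_j) = \pm 1$ by verifying $\mathcal{E}^2 > 0$ on the gaps via the branch normalization. Your write-up is somewhat more detailed than the paper's brief remark ``the sign of the root (given our choice of the branch cut) implies that $\mathcal{E}(z)^2 > 0$,'' but the underlying argument is identical.
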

\begin{proof}
If $\mathcal{E}(z)\pm\mathcal{E}(z)^{-1}=0$, we have $\mathcal{E}(z)^2\pm 1=0$, or equivalently, $\mathcal{E}(z)^{4}=1$. This implies that 
\begin{align}
\varsigma(z):=\prod_{i=0}^{n}(z-b_i)-\prod_{i=0}^{n}(z-a_i)=0.
\end{align}
By a straightforward calculation, it follows that $\sign{\varsigma(b_{j-1})}=(-1)^{n-j}=-\sign{\varsigma(a_j)}$ for $j=1,\dots,n$. Hence $\varsigma(z)$ has at least one zero $z_j$ on the gap $(b_{j-1},a_j)$. Since $\varsigma(z)$ is a polynomial of degree $n$, these $\{z_{j}\}_{j=1}^{n}$ are precisely all the zeros. Although $\mathcal{E}(z_j)^{2}=\pm 1$, the sign of the root (given our choice of the branch cut) implies that $\mathcal{E}(z)^{2}>0$ for $z$ belonging to the gap, which finally shows that $\{z_{j}\}_{j=1}^{n}$ are the zeros of $\mathcal{E}(z)-\mathcal{E}(z)^{-1}$, whereas $\mathcal{E}(z)+\mathcal{E}(z)^{-1}$ has no zeros in $\mathbb{C}\setminus\overline{\Sigma}$.
\end{proof}
\begin{remark}
In fact, all the zeros of $\mathcal{E}(z)-\mathcal{E}(z)^{-1}$ are on the first sheet of the Riemann surface $\mathcal{W}$, while all the zeros of $\mathcal{E}(z)+\mathcal{E}(z)^{-1}$ lie on the second sheet of the Riemann surface $\mathcal{W}$.
\end{remark}

\paragraph{Abel's map and related properties}
Recall the Abel's map  defined in \eqref{equ:Abel map on mathbbC}, we collect some of its properties in the next proposition.  
\begin{proposition}\label{prop:jump of mathcalA}
With $\vec{\mathcal{A}}(z)$ defined in \eqref{equ:Abel map on mathbbC}, we have, for $j=1,\dots,n$,
    \begin{align}
    &\vec{\mathcal{A}}_{+}(z)+\vec{\mathcal{A}}_{-}(z)=-\vec{\tau}_j, && z\in(a_j,b_j),\label{equ:jump of mathcalA-equ1}\\
    &\vec{\mathcal{A}}_{+}(z)+\vec{\mathcal{A}}_{-}(z)=\vec{0}, && z\in(a_0,b_0),
    \label{equ:jump of mathcalA-equ2}\\
    &\vec{\mathcal{A}}_{+}(z)-\vec{\mathcal{A}}_{-}(z)=-\sum_{k=j}^{n}\vec{e}_k, &&
    z\in(b_{j-1},a_j), \label{equ:jump of mathcalA-equ3}\
    \end{align}
where $\vec{\tau}_j$ is $j$-th column of the matrix $\tau$ defined in \eqref{def:tau matrix} and $\vec{e}_j$ denotes the standard column vector in $\mathbb{C}^{n}$ with $1$ in the $j$-th position and zero elsewhere. Moreover, we have
\begin{align}
& \vec{\mathcal{A}}(a_0)=\vec{0}, \quad \vec{\mathcal{A}}(b_n)=-\frac{1}{2}\vec{\tau}_n,
\label{equ: vecA(a_j) and vecA(b_j)-a}\\
&\vec{\mathcal{A}}_{\pm}(a_j)=\mp\frac{1}{2}\sum_{k=j}^{n}\vec{e}_k-\frac{1}{2}\vec{\tau}_j,  &&j=1,\dots,n,\label{equ: vecA(a_j) and vecA(b_j)-b}\\
& \vec{\mathcal{A}}_{\pm}(b_j)=\mp\frac{1}{2}\sum_{k=j+1}^{n}\vec{e}_k-\frac{1}{2}\vec{\tau}_j,  &&j=0, 1,\dots,n-1.\label{equ: vecA(a_j) and vecA(b_j)-c}
\end{align}
\end{proposition}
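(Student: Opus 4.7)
The strategy is to first establish the two jump relations \eqref{equ:jump of mathcalA-equ1}--\eqref{equ:jump of mathcalA-equ3}, and then derive the specific endpoint values \eqref{equ: vecA(a_j) and vecA(b_j)-a}--\eqref{equ: vecA(a_j) and vecA(b_j)-c} by combining them. For the sum formulas \eqref{equ:jump of mathcalA-equ1}--\eqref{equ:jump of mathcalA-equ2}, the key observation is that $\sqrt{\mathcal{R}(z)}_+=-\sqrt{\mathcal{R}(z)}_-$ on each cut $(a_j,b_j)$, hence $\vec{\omega}_++\vec{\omega}_-=0$ there. Differentiating $\vec{\mathcal{A}}_+(z)+\vec{\mathcal{A}}_-(z)$ along the cut then gives zero, so the sum is constant on each $(a_j,b_j)$ and it suffices to evaluate it at $z=a_j$. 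For $j=0$ we have $\vec{\mathcal{A}}_\pm(a_0)=\vec{0}$ directly from \eqref{equ:Abel map on mathbbC}. For $j\geqslant 1$, I would represent $\vec{\mathcal{A}}_+(a_j)+\vec{\mathcal{A}}_-(a_j)$ as a cycle integral on $\mathcal{W}$: the concatenation of the upper-side first-sheet path from $a_0$ to $a_j$ with the second-sheet lift of the lower-side path running from $a_j$ back to $a_0$ (obtained by descending through $(a_j,b_j)$ and re-emerging through $(a_0,b_0)$) is homologous to $\pm B_j$. Since $\vec{\omega}^{\rm T}$ changes sign when passing to the second sheet, this cycle integral evaluates exactly to $\vec{\mathcal{A}}_+(a_j)+\vec{\mathcal{A}}_-(a_j)$; matched against $\oint_{B_j}\vec{\omega}^{\rm T}=\vec{\tau}_j$ from \eqref{def:tau matrix}, a careful orientation check yields the value $-\vec{\tau}_j$.

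For the difference formula \eqref{equ:jump of mathcalA-equ3} on the gap $(b_{j-1},a_j)$, $\vec{\mathcal{A}}_+(z)-\vec{\mathcal{A}}_-(z)$ equals the integral of $\vec{\omega}^{\rm T}$ over the closed loop in $\mathbb{C}\setminus\overline{\Sigma}$ obtained by concatenating the upper integration path with the reverse of the lower one. By contour deformation, this loop is homologous on $\mathcal{W}$ to a combination of the $A$-cycles encircling the cuts located between $a_0$ and $z$. I then invoke the normalization $\oint_{A_k}\omega_l=\delta_{kl}$ from \eqref{equ: normalization relation of omega} together with the homological identity $A_0+A_1+\cdots+A_n\sim 0$ on $\mathcal{W}$ (which is valid because the sum on the first sheet is homotopic to a small loop around the regular point at infinity on the first sheet, and therefore forces $\oint_{A_0}\omega_l=-1$ for $l=1,\ldots,n$), producing the stated combination $-\sum_{k=j}^n\vec{e}_k$ after the orientation bookkeeping.

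With both jump formulas in hand, the specific values \eqref{equ: vecA(a_j) and vecA(b_j)-b} and \eqref{equ: vecA(a_j) and vecA(b_j)-c} follow by solving a $2\times 2$ linear system at each endpoint. At $a_j$ for $j\geqslant 1$, the gap $(b_{j-1},a_j)$ gives $\vec{\mathcal{A}}_+(a_j)-\vec{\mathcal{A}}_-(a_j)=-\sum_{k=j}^n\vec{e}_k$ and the cut $(a_j,b_j)$ gives $\vec{\mathcal{A}}_+(a_j)+\vec{\mathcal{A}}_-(a_j)=-\vec{\tau}_j$, from which the advertised expression follows by addition and subtraction. The argument at $b_j$ is analogous, combining the gap $(b_j,a_{j+1})$ with the cut $(a_j,b_j)$. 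Finally, $\vec{\mathcal{A}}(b_n)=-\frac{1}{2}\vec{\tau}_n$ in \eqref{equ: vecA(a_j) and vecA(b_j)-a} comes from the fact that there is no cut to the right of $b_n$, so $\vec{\mathcal{A}}$ is continuous at $b_n$; both limits then equal one-half of $-\vec{\tau}_n$ by the sum formula at $j=n$. The principal technical obstacle is the sign bookkeeping in the sum-formula argument, where one must identify the correct signed multiple of $B_j$; this requires a consistent orientation convention for the canonical homology basis $\{A_j,B_j\}$ depicted in Figure~\ref{fig:A,B-Cycle}, together with careful tracking of the sign flip of $\vec{\omega}^{\rm T}$ across the gluing of the two sheets of $\mathcal{W}$.
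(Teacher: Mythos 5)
Your proposal is correct and follows essentially the same route as the paper's proof: decompose the boundary values of $\vec{\mathcal{A}}$ into cut and gap contributions using $\vec{\omega}_+ = -\vec{\omega}_-$ on $\Sigma$, identify the resulting closed paths with $A$- and $B$-cycles, and invoke the normalization \eqref{equ: normalization relation of omega} together with the relation $\sum_{k=0}^n\oint_{A_k}\vec{\omega}^{\rm T}=\vec{0}$ (from the regularity of $\vec{\omega}$ at $\infty_1$). Your reformulation of the sum relation as ``constant on $(a_j,b_j)$, evaluate at $a_j$, recognize $B_j$'' is a mild repackaging of the paper's direct computation, and your solution of the $2\times 2$ linear systems at each endpoint supplies exactly the derivations of \eqref{equ: vecA(a_j) and vecA(b_j)-a}--\eqref{equ: vecA(a_j) and vecA(b_j)-c} that the paper explicitly leaves to the reader.
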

\begin{proof}
All the claims follow directly from the definition of the Abel's map. We only give the proofs of \eqref{equ:jump of mathcalA-equ1}
and \eqref{equ:jump of mathcalA-equ3}, and leave the other proofs to the interested readers. 

By a straightforward calculation, it follows that, for  $j=1,\dots,n$,
    \begin{align}
    \vec{\mathcal{A}}_{+}(z)+\vec{\mathcal{A}}_{-}(z)
&=-2\left(\int_{a_1}^{b_0}+\int_{a_2}^{b_1}+\dots+\int_{a_j}^{b_{j-1}}\right)\vec{\omega}^{\rm T}_{+}
\nonumber 
\\
    &=-\oint_{B_j}\vec{\omega}^{\rm T}=-\vec{\tau}_j, \qquad z\in(a_j,b_j),
    \end{align}
which is  \eqref{equ:jump of mathcalA-equ1}.  Similarly, 
    \begin{align}\label{equ:abelA+-abelA-}
    \vec{\mathcal{A}}_{+}(z)-\vec{\mathcal{A}}_{-}(z)
    &=2\left(\int_{a_0}^{b_0}+\int_{a_1}^{b_1}+\dots+\int_{a_{j-1}}^{b_{j-1}}\right)\vec{\omega}^{\rm T}_{+}
    \nonumber
    \\
    &=2\int_{a_0}^{b_0}\vec{\omega}^{\rm T}_{+}+\sum_{k=1}^{j-1}\vec{e}_k, \qquad z\in (b_{j-1},a_j).
    \end{align}
Note that
\begin{align}\label{equ:2inta_0b_0vecomega}
2\int_{a_0}^{b_0}\vec{\omega}^{\rm T}=-\sum_{k=j}^{n}\oint_{A_k}\vec{\omega}^{\rm T}=-\sum_{k=1}^{n}\vec{e}_k,
\end{align}
we are led to \eqref{equ:jump of mathcalA-equ3} by substituting \eqref{equ:2inta_0b_0vecomega} into \eqref{equ:abelA+-abelA-}.
\end{proof}
By \cite[Page 303, Corollary]{FK-Riemann-Surface}, there are a total of $2^{2n}$ half periods associated to the Jacobian variety $J(\mathcal{W}):=\mathbb{C}^n/\Lambda(\mathcal{W})$ with $\Lambda(\mathcal{W}):=\mathbb{Z}^{n}+\tau\mathbb{Z}^{n}$, and 
$2^{n-1}(2^n-1)$ are odd half periods. From \cite[Page 304, Remark 2]{FK-Riemann-Surface}, it follows that the $\theta$-function vanishes at each odd half-period. One can see that 
the points $\vec{\mathcal{A}}(a_j)$, $j=1,2,\dots,n$, are 
odd half-periods. Indeed, it follows from \eqref{equ: vecA(a_j) and vecA(b_j)-b} that 
\begin{align}
\vec{\mathcal{A}}(a_j)=\frac{1}{2}\vec{\lambda}_{j}+\frac{1}{2}\tau\vec{\mu}_j \quad 
{\rm mod} \ \Lambda(\mathcal{W}),
\end{align}
for certain vectors $\vec{\lambda}_{j}$ and $\vec{\mu}_j$ belonging to $\mathbb{Z}^{n}$. As $\vec{\lambda}_{j}^{\rm T}\mu_j$ is an odd number, we have that
$\vec{\mathcal{A}}(a_j)$ are odd half-periods. Now we define
\begin{align}\label{def:choice of d}
\vec{d}:=\mathcal{K}+\sum_{j=1}^{n}\int_{a_0}^{P_1(z_j)}\vec{\omega}^{\rm T},
\end{align}
where $\mathcal{K}$ is the vector of Riemann constants to be determined and $P_{1}(z_j)$ denotes the pre-images of the zero $z_j\in(b_{j-1},a_j)$ given in Proposition \ref{prop: zeros of mathcal{E}} on the first sheet of $\mathcal{W}$. Similarly, we also denote by $P_{2}(z_j)$ to be the pre-images of $z_j$ on the second sheet. 
Recall that we have $n$ points $\vec{\mathcal{A}}(a_j)$, $j=1,\dots,n$, which are odd half periods in $J(\mathcal{W})$. Utilizing
\cite[Page 325, VII.1.2, Equation (1.2.1)]{FK-Riemann-Surface}, we obtain
\begin{align}\label{equ: Riemann Constant}
\mathcal{K}:=\sum_{j=1}^{n}\vec{\mathcal{A}}(a_j)=
\sum_{j=1}^{n}\int_{a_0}^{a_j}\vec{\omega}^{\rm T}
\quad {\rm mod} \ \Lambda(\mathcal{W}).
\end{align}

The final proposition in this part is an analogue of \cite[Lemmas 3.27 and 3.36]{Dei-Its-Zhou-Ann1997} and we omit the proof here. 
\begin{proposition}\label{prop:zeros of theta function}
The multi-valued function $\theta(\vec{\mathcal{A}}(P)\pm \vec{d})$ is not identically zero on $\mathcal{W}$. $\theta(\vec{\mathcal{A}}(P)+\vec{d})=0$ if and only if $P\in\left\{P_2(z_k)\right\}_{k=1}^{n}$, and $\theta(\vec{\mathcal{A}}(P)-\vec{d})=0$ if and only if $P\in\left\{P_1(z_k)\right\}_{k=1}^{n}$.
In addition,  one has
    \begin{align}\label{equ: vecA(infty)+vecd== mod Lambda}
    \vec{\mathcal{A}}(\infty)+\vec{d}=0 \quad {\rm mod} \ \Lambda,
    \end{align}
and 
    \begin{align}
    \theta\left(\vec{\mathcal{A}}(\infty)+\vec{V}(s)+\vec{d}\right)\neq 0, \qquad  s\in\mathbb{R},
    \end{align}
where $\vec{V}(s)$ is defined in \eqref{def-intro-vecV} and \eqref{def:Vj} and $\vec{d}$ is given in \eqref{def:choice of d}.
\end{proposition}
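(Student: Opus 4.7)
The plan is to combine the Riemann vanishing theorem with the hyperelliptic structure of $\mathcal{W}$ and the specific form of $\vec{d}$, following the same template as \cite[Lemmas 3.27 and 3.36]{Dei-Its-Zhou-Ann1997}. Throughout, I will write $\infty_1,\infty_2$ for the two points of $\mathcal{W}$ lying above $z=\infty$ (both finite since $\deg\mathcal{R}=2n+2$), and $P\mapsto P^{*}$ for the hyperelliptic involution.

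First I would treat $\theta(\vec{\mathcal{A}}(P)-\vec{d})$. Rewriting $\vec{\mathcal{A}}(P)-\vec{d}=\vec{\mathcal{A}}(P)-\mathcal{K}-\sum_{j=1}^{n}\vec{\mathcal{A}}(P_1(z_j))$, Riemann's vanishing theorem implies that either the function vanishes identically on $\mathcal{W}$ or its zero set is exactly $\{P_1(z_k)\}_{k=1}^{n}$, depending on whether the divisor $\mathcal{D}_{0}:=\sum_{j=1}^{n}P_1(z_j)$ is special. To rule out speciality I would verify that no holomorphic differential $\omega=p(z)\,dz/\sqrt{\mathcal{R}(z)}$ with $\deg p\leqslant n-1$ can vanish on $\mathcal{D}_{0}$: this would force $p(z_j)=0$ for the $n$ distinct values $z_j\in(b_{j-1},a_j)$ furnished by the preceding proposition, hence $p\equiv 0$. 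For the opposite sign, the evenness of $\theta$ together with the hyperelliptic identity $\vec{\mathcal{A}}(P^{*})\equiv-\vec{\mathcal{A}}(P)\mod\Lambda(\mathcal{W})$ (valid because the basepoint $a_0$ is a branch point, so $a_0=a_0^{*}$) gives $\theta(\vec{\mathcal{A}}(P)+\vec{d})=\theta(\vec{\mathcal{A}}(P^{*})-\vec{d})$, whose zero set is therefore $\{P_2(z_k)\}_{k=1}^{n}$.

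For the identity $\vec{\mathcal{A}}(\infty)+\vec{d}\equiv 0\mod\Lambda(\mathcal{W})$, I would apply Abel's theorem to a suitable meromorphic function on $\mathcal{W}$ constructed from $\mathcal{E}(z)\pm\mathcal{E}(z)^{-1}$, $\sqrt{\mathcal{R}(z)}$ and $\prod_{k=0}^{n}(z-a_k)$. Using the zero structure of $\mathcal{E}\pm\mathcal{E}^{-1}$ from the preceding proposition and the standard divisors of $\sqrt{\mathcal{R}}$ at the branch points and at $\infty_{1,2}$, one matches the divisors so as to produce the linear equivalence $\infty_1+\sum_{j=1}^{n}a_j+\sum_{j=1}^{n}P_1(z_j)\sim(2n+1)a_0$. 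Since the Abel map is taken from $a_0$, applying Abel's theorem to this equivalence and using $\mathcal{K}=\sum_{j=1}^{n}\vec{\mathcal{A}}(a_j)\mod\Lambda(\mathcal{W})$ delivers exactly $\vec{\mathcal{A}}(\infty_1)+\vec{d}\equiv 0$; consequently $\theta(\vec{\mathcal{A}}(\infty)+\vec{V}(s)+\vec{d})=\theta(\vec{V}(s))$.

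The main obstacle is the final assertion, $\theta(\vec{V}(s))\neq 0$ for all $s\in\mathbb{R}$. Although $\vec{V}(s)\in\mathbb{R}^{n}$ by construction, the theta function can vanish on real codimension-one subsets of the Jacobian, so the non-vanishing does not follow from elementary series estimates on $\theta(\vec{z})=\sum_{\vec{m}}e^{2\pi i\vec{m}^{\mathrm{T}}\vec{z}}e^{-\pi\vec{m}^{\mathrm{T}}(-i\tau)\vec{m}}$. I would argue by contradiction in the spirit of \cite[Lemma 3.36]{Dei-Its-Zhou-Ann1997}: if $\theta(\vec{V}(s_0))=0$ for some $s_0\in\mathbb{R}$, Riemann's theorem produces an effective divisor $\mathcal{D}'$ of degree $n-1$ with $\vec{V}(s_0)\equiv\mathcal{K}+\vec{\mathcal{A}}(\mathcal{D}')$. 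The reality of $\vec{V}(s_0)$, together with the real structure on $\mathcal{W}$ induced by the real branch points and the positivity of $-i\tau$, should force $\mathcal{D}'$ to be invariant under the anti-holomorphic involution on $\mathcal{W}$; a case analysis over the real ovals (the preimages of $\Sigma$) then contradicts the non-speciality established in the first step. Making this reality argument precise is the heart of the proof.
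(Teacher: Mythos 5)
The paper itself offers no proof of this proposition: it is stated with the remark that it is ``an analogue of \cite[Lemmas 3.27 and 3.36]{Dei-Its-Zhou-Ann1997} and we omit the proof here.'' Your proposal therefore cannot be compared against the paper's argument directly; it must be assessed on its own merits as a reconstruction of the DIZ argument. With that caveat, the overall template you adopt --- Riemann vanishing theorem, non-speciality of $\mathcal{D}_0=\sum_{j=1}^n P_1(z_j)$, the hyperelliptic involution, and Abel's theorem --- is indeed the right one and is precisely what DIZ do.

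The first two blocks of your proof are correct. For $\theta(\vec{\mathcal{A}}(P)-\vec{d})$: since $\vec{d}-\mathcal{K}=\sum_j\vec{\mathcal{A}}(P_1(z_j))$ and the $z_j$ lie strictly inside the gaps $(b_{j-1},a_j)$, the zero set of a holomorphic differential $p(z)\,dz/\sqrt{\mathcal{R}(z)}$, $\deg p\leqslant n-1$, at all $n$ distinct non-branch points forces $p\equiv 0$, so $i(\mathcal{D}_0)=0$ and Riemann's theorem gives exactly the claimed zero set; the ``$+$'' sign then follows from evenness of $\theta$ and $\vec{\mathcal{A}}(P^{*})\equiv -\vec{\mathcal{A}}(P)$ (valid since the basepoint $a_0$ is a branch point). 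For \eqref{equ: vecA(infty)+vecd== mod Lambda}, your claimed equivalence $\infty_1+\sum_{j=1}^{n}a_j+\sum_{j=1}^{n}P_1(z_j)\sim(2n+1)a_0$ is correct and can be verified by combining the divisors of $\prod_k(z-b_k)-\sqrt{\mathcal{R}(z)}$ (whose zeros on sheet one are exactly $\{P_1(z_j)\}\cup\{b_k\}$), of $\sqrt{\mathcal{R}(z)}$, and of $\prod_k(z-b_k)$; the residual divisor reduces to $\sum_{j=0}^{n}(a_j-b_j)\sim 0$, which itself follows from $\text{div}(\sqrt{\mathcal{R}})$ and $\text{div}(\prod(z-b_k))$.

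The genuine gap is exactly where you flag it, but the sketch of the last step contains a logical misstep that should be corrected. You state that the zero $\theta(\vec{V}(s_0))=0$ produces a degree-$(n-1)$ divisor $\mathcal{D}'$ and that a ``case analysis over the real ovals \ldots\ contradicts the non-speciality established in the first step.'' That non-speciality, however, concerned the degree-$n$ divisor $\mathcal{D}_0$, not $\mathcal{D}'$; there is no direct contradiction between them. Moreover, the intermediate claim that reality of $\vec{V}(s_0)$ and of $\mathcal{K}$ ``forces $\mathcal{D}'$ to be invariant under the anti-holomorphic involution'' only yields $\mathcal{D}'\sim\overline{\mathcal{D}'}$, which upgrades to $\mathcal{D}'=\overline{\mathcal{D}'}$ only after one knows $\mathcal{D}'$ is the unique effective divisor in its linear equivalence class --- i.e.\ after an independent non-speciality argument for $\mathcal{D}'$. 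The correct way to close this step is to invoke (or re-prove) the positivity of $\theta$ on the real torus $\mathbb{R}^n/\mathbb{Z}^n$ for a real hyperelliptic curve with all $2n+2$ branch points real, that is, an $M$-curve with $n+1$ real ovals; this is the content of DIZ Lemma~3.36 and of the classical results of Dubrovin--Natanzon on real theta divisors. As written, your proposal correctly identifies that this is the hard step, but the claimed route through ``step-one non-speciality'' does not supply it.
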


\paragraph{The $\theta$-function parametrix} 
Define
\begin{align}\label{def:mathcalG}
\mathcal{G}(\vec{z}):=\mathcal{G}(\vec{z};s)=\frac{\theta\left(\vec{z}+\vec{V}(s)\right)}{\theta(\vec{z})}, \quad \vec{z}\in\mathbb{C}^{n}.
\end{align}
By Proposition \ref{prop:jump of mathcalA} and properties of  the $\theta$-function, the next proposition follows.
\begin{proposition}
With $\mathcal{G}$ defined in \eqref{def:mathcalG}, the functions 
$\mathcal{G}(\vec{\mathcal{A}}(z)\pm\vec{d})$ and $\mathcal{G}(-\vec{\mathcal{A}}(z)\pm\vec{d})$ are single-valued and meromorphic for $z\in\mathbb{C}\setminus\overline{\Sigma}$, and 
satisfy the jump relations
\begin{align}
&\mathcal{G}_{+}\left(\vec{\mathcal{A}}(z)\pm\vec{d}\right)=\mathcal{G}_{-}\left(-\vec{\mathcal{A}}(z)\pm\vec{d}\right)e^{2\pi iV_{j}(s)}, & z\in(a_j,b_j), \label{equ:jump of mathcalG-1}\ j=0,1,\dots,n,\\
&\mathcal{G}_{+}\left(-\vec{\mathcal{A}}(z)\pm\vec{d}\right)=\mathcal{G}_{-}\left(\vec{\mathcal{A}}(z)\pm\vec{d}\right)e^{-2\pi iV_{j}(s)}, & z\in(a_j,b_j), \ j=0,1,\dots,n, \label{equ:jump of mathcalG-2}
\end{align}
where $V_0=0$ and $V_j$, $j=1,\dots,n$, are defined in \eqref{def:Vj}. Particularly, the relations \eqref{equ:jump of mathcalG-1} and \eqref{equ:jump of mathcalG-2} are also valid for $z\in\mathcal{I}_{e}$.
\end{proposition}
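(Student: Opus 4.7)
The plan is to verify the three claims---single-valuedness, meromorphy, and the two jump relations---one at a time, relying only on the periodicity and quasi-periodicity \eqref{property:theta function} of the theta function together with the jump structure of the Abel map recorded in Proposition~\ref{prop:jump of mathcalA}. For \textbf{single-valuedness}, I would first note that on $\mathbb{C}\setminus\overline{\Sigma}$ the value of $\vec{\mathcal{A}}(z)=\int_{a_0}^{z}\vec{\omega}^{\rm T}$ is determined only modulo the $A$-periods, i.e., modulo $\mathbb{Z}^{n}$. Under a shift $\vec{\mathcal{A}}\mapsto \vec{\mathcal{A}}+\vec{e}_k$ the first identity in \eqref{property:theta function} leaves both $\theta(\vec{\mathcal{A}}+\vec{V}(s)\pm\vec{d})$ and $\theta(\vec{\mathcal{A}}\pm\vec{d})$ individually invariant, so the ratio $\mathcal{G}(\vec{\mathcal{A}}(z)\pm\vec{d})$ and, analogously, $\mathcal{G}(-\vec{\mathcal{A}}(z)\pm\vec{d})$ are well defined as functions of $z$. \textbf{Meromorphy} is then automatic: $\theta$ is entire, so the only possible singularities are the zeros of the denominator, which by Proposition~\ref{prop:zeros of theta function} are the isolated points $\{z_{k}\}_{k=1}^{n}$ lying in the gaps $(b_{k-1},a_{k})$, hence outside $\overline{\Sigma}$.

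For the jump \eqref{equ:jump of mathcalG-1} on $(a_{j},b_{j})$ with $j\geq 1$, I would combine the relation $\vec{\mathcal{A}}_{+}(z)=-\vec{\mathcal{A}}_{-}(z)-\vec{\tau}_{j}$ from \eqref{equ:jump of mathcalA-equ1} with the quasi-periodicity identity $\theta(\vec{w}-\vec{\tau}_{j})=e^{2\pi i w_{j}-\pi i \tau_{jj}}\theta(\vec{w})$, applied separately to the numerator at $\vec{w}=-\vec{\mathcal{A}}_{-}(z)+\vec{V}(s)\pm\vec{d}$ and to the denominator at $\vec{w}=-\vec{\mathcal{A}}_{-}(z)\pm\vec{d}$. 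The common prefactor $e^{2\pi i(-(\mathcal{A}_{-}(z))_{j}\pm d_{j})-\pi i \tau_{jj}}$ cancels between numerator and denominator, leaving precisely $e^{2\pi i V_{j}(s)}\,\mathcal{G}_{-}(-\vec{\mathcal{A}}(z)\pm\vec{d})$. The companion relation \eqref{equ:jump of mathcalG-2} follows by the mirror computation after observing that $-\vec{\mathcal{A}}_{+}(z)=\vec{\mathcal{A}}_{-}(z)+\vec{\tau}_{j}$, which flips the sign of the shift $-\vec{\tau}_{j}\to +\vec{\tau}_{j}$ and hence yields $e^{-2\pi i V_{j}(s)}$. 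The interval $(a_{0},b_{0})$ is simpler: by \eqref{equ:jump of mathcalA-equ2} one has $\vec{\mathcal{A}}_{+}(z)=-\vec{\mathcal{A}}_{-}(z)$ with no $\vec{\tau}$-shift, and so the prefactor is trivially $1$, which matches the convention $V_{0}=0$.

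Finally, the extension to the endpoints $\mathcal{I}_{e}$ is by continuity: the one-sided limits $\vec{\mathcal{A}}_{\pm}(a_{j})$ and $\vec{\mathcal{A}}_{\pm}(b_{j})$ given explicitly by \eqref{equ: vecA(a_j) and vecA(b_j)-a}--\eqref{equ: vecA(a_j) and vecA(b_j)-c} still verify the same linear relations used in the interior argument, and $\theta$ is entire, so both sides of \eqref{equ:jump of mathcalG-1}--\eqref{equ:jump of mathcalG-2} extend continuously and the equalities persist. The main obstacle is purely bookkeeping: checking that the $\pm\vec{d}$ contribution enters symmetrically in numerator and denominator so that no spurious $\vec{d}$-dependent phase survives the ratio, and correctly tracking the sign of the $\mp\vec{\tau}_{j}$ shift in the two cases so that the exponent comes out as $\pm 2\pi i V_{j}(s)$ rather than, say, $\pm 2\pi i(V_{j}(s)+d_{j})$. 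Everything else reduces to two applications of \eqref{property:theta function} and one invocation of Proposition~\ref{prop:jump of mathcalA}.
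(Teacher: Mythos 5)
Your proof is correct and follows essentially the same route as the paper: single-valuedness is checked by reducing the jump of $\vec{\mathcal{A}}$ across the gaps (equivalently, the monodromy around the intervals) to integer shifts killed by the $\mathbb{Z}^n$-periodicity of $\theta$, meromorphy is read off from Proposition~\ref{prop:zeros of theta function}, and the jump relations on $(a_j,b_j)$ come from $\vec{\mathcal{A}}_{+}+\vec{\mathcal{A}}_{-}=-\vec{\tau}_j$ together with the quasi-periodicity identity \eqref{property:theta function}, with the $e^{\mp 2\pi i(\ldots)-\pi i\tau_{jj}}$ prefactors cancelling in the ratio except for the $e^{\pm 2\pi iV_j(s)}$ piece. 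The extension to the endpoints by continuity via \eqref{equ: vecA(a_j) and vecA(b_j)-a}--\eqref{equ: vecA(a_j) and vecA(b_j)-c} matches the paper's closing remark.
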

\begin{proof}
For $z\in(b_{j-1},a_j)$, $j=1,\dots,n$, it follows from \eqref{equ:jump of mathcalA-equ3} and the periodicity of the $\theta$-function that 
\begin{align*}
    \theta(\vec{\mathcal{A}}_{+}(z)+\vec{V}(s))&=\theta(\vec{\mathcal{A}}_{-}(z)-\sum_{k=j}^{n}\vec{e}_k+\vec{V}(s))= \theta(\vec{\mathcal{A}}_{-}(z)+\vec{V}(s)),
    \\
    \theta(\vec{\mathcal{A}}_{+}(z))&=\theta(\vec{\mathcal{A}}_{-}(z)-\sum_{k=j}^{n}\vec{e}_k)=\theta(\vec{\mathcal{A}}_{-}(z)). 
\end{align*}
Moreover, $\theta(\vec{\mathcal{A}}(z)+\vec{d})$ and $\theta(-\vec{\mathcal{A}}(z)+\vec{d})=\theta(\vec{\mathcal{A}}(z)-\vec{d})$ are not identically equal to zero by Proposition \ref{prop:zeros of theta function}. As a consequence, we conclude that $\mathcal{G}(\pm\vec{\mathcal{A}}(z)+\vec{d})$ are single-valued and meromorphic for $z\in\mathbb{C}\setminus\overline{\Sigma}$. The arguments for $\mathcal{G}(\pm\vec{\mathcal{A}}(z)-\vec{d})$ are similar.

To show the jump realtions  \eqref{equ:jump of mathcalG-1}, we see from  \eqref{equ:jump of mathcalA-equ2} that it holds for $j=0$. If $z\in(a_j,b_j)$, $j=1,\dots,n$, one has
\begin{align}
\mathcal{G}_{+}\left(\vec{\mathcal{A}}(z)\pm\vec{d}\right)
&=\frac{\theta\left(\vec{\mathcal{A}}_{+}(z)+\vec{V}(s)\pm\vec{d}\right)}{\theta\left(\vec{\mathcal{A}}_{+}(z)\pm\vec{d}\right)}
\overset{\eqref{equ:jump of mathcalA-equ1}}{=}
\frac{\theta\left(-\vec{\mathcal{A}}_{-}(z)-\vec{\tau}_j+\vec{V}(s)\pm\vec{d}\right)}{\theta\left(-\vec{\mathcal{A}}_{-}(z)-\vec{\tau}_j\pm\vec{d}\right)}\nonumber\\
&\overset{\eqref{property:theta function}}{=}
\frac{e^{2\pi i\left(-\vec{\mathcal{A}}_{-,j}(z)+V_{j}(s)\pm d_{j}\right)-\pi i\tau_{jj}}\theta\left(-\vec{\mathcal{A}}_{-}(z)+\vec{V}(s)\pm\vec{d}\right)}{e^{2\pi i\left(-\vec{\mathcal{A}}_{-,j}(z)\pm d_{j}\right)-\pi i\tau_{jj}}\theta\left(-\vec{\mathcal{A}}_{-}(z)\pm\vec{d}\right)}\nonumber\\
&=\mathcal{G}_{-}\left(-\vec{\mathcal{A}}(z)\pm \vec{d}\right)e^{2\pi iV_{j}(s)},
\end{align}
as required, where the subscript $_j$ denotes the $j$-th element of the corresponding vector. A further appeal to \eqref{equ: vecA(a_j) and vecA(b_j)-a}--\eqref{equ: vecA(a_j) and vecA(b_j)-c} shows that \eqref{equ:jump of mathcalG-1} is also valid for $z\in\mathcal{I}_{e}$.

Finally, the proof of \eqref{equ:jump of mathcalG-2} is similar to that of  \eqref{equ:jump of mathcalG-1}, and we omit the details here.  
\end{proof}

With the aid of $\mathcal{G}$, we define
\begin{align}\label{def:Q(z)}
Q(z):=
\begin{pmatrix}
    \mathcal{N}_{11}(z)\mathcal{G}\left(\vec{\mathcal{A}}(z)+\vec{d}\right) & \mathcal{N}_{12}(z)\mathcal{G}\left(-\vec{\mathcal{A}}(z)+\vec{d}\right) \\ 
    \mathcal{N}_{21}(z)\mathcal{G}\left(\vec{\mathcal{A}}(z)-\vec{d}\right) &
    \mathcal{N}_{22}(z)\mathcal{G}\left(-\vec{\mathcal{A}}(z)-\vec{d}\right)
\end{pmatrix},
\end{align}
where recall that $\mathcal{N}_{ij}$ is the $(i,j)$-entry of $\mathcal{N}$ defined in \eqref{def:mathcalN}. The next proposition shows that $Q$ is the solution of the following RH problem. 
\begin{proposition}\label{Prop:Q}
The function $Q$ defined in \eqref{def:Q(z)} satisfies the following RH problem.
\paragraph{RH problem for $Q$}
\begin{itemize}
\item [\rm (a)] $Q(z)$ is holomorphic for $z\in\mathbb{C}\setminus\overline{\Sigma}$.
\item [\rm (b)] For $z\in(a_j,b_j)$, $j=0,1,\dots,n$, we have
\begin{align}\label{equ:jump of Q}
Q_{+}(z)=Q_{-}(z)
\begin{pmatrix}
    0 & -e^{-2\pi iV_{j}(s)} \\
    e^{2\pi iV_{j}(s)} & 0
\end{pmatrix},
\end{align}
where $V_{0}=0$, and $V_{j}$, $j=1,\dots,n$, are given in \eqref{def:Vj}.
\item [\rm (c)] As $z\to\infty$, we have
\begin{align}\label{eq:Qasy}
Q(z)=Q_{\infty}+\mathcal{O}(z^{-1}),
\end{align}
where
\begin{align}\label{def:Qinfty}
Q_{\infty}:=
\begin{pmatrix}
    \mathcal{G}\left(\vec{\mathcal{A}}(\infty)+\vec{d}\right) & 0\\
    0 & \mathcal{G}\left(-\vec{\mathcal{A}}(\infty)-\vec{d}\right)
\end{pmatrix}.
\end{align}
\item [\rm (d)] As $z\to 0$ from $\mathbb{C}^{+}$, we have
\begin{align}\label{equ:Qasyto0+}
Q(z)=Q_0+\mathcal{O}(z),
\end{align}
where $Q_0$ is a constant matrix.
\end{itemize}
\end{proposition}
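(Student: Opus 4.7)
The plan is to verify items (a)--(d) of the RH problem in sequence for the matrix $Q$ given by \eqref{def:Q(z)}, using the properties established in Propositions~\ref{prop: zeros of mathcal{E}}, \ref{prop:jump of mathcalA}, and \ref{prop:zeros of theta function}, together with the jump relation \eqref{equ:jump of mathcalN} for $\mathcal{N}$ and the jump relations \eqref{equ:jump of mathcalG-1}--\eqref{equ:jump of mathcalG-2} for $\mathcal{G}$.

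For item (a), the matrix $\mathcal{N}$ is holomorphic on $\mathbb{C}\setminus\overline{\Sigma}$ by construction, and each of the four scalar functions $\mathcal{G}(\pm\vec{\mathcal{A}}(z)\pm\vec{d})$ is single-valued and meromorphic there by the preceding discussion. Using the evenness of $\theta$, the denominators of the entries of $Q$ reduce to $\theta(\vec{\mathcal{A}}(z)+\vec{d})$ (for the diagonal entries $Q_{11},Q_{22}$) and $\theta(\vec{\mathcal{A}}(z)-\vec{d})$ (for the off-diagonal entries $Q_{12},Q_{21}$). By Proposition~\ref{prop:zeros of theta function} the former vanishes only at the second-sheet points $P_2(z_k)$ and therefore produces no pole on the first sheet; the latter vanishes precisely at $z=z_1,\dots,z_n$ on the first sheet, creating potential simple poles in $Q_{12}$ and $Q_{21}$. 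These are exactly cancelled by the simple zeros of $\mathcal{N}_{12}(z),\mathcal{N}_{21}(z)\propto \mathcal{E}(z)-\mathcal{E}(z)^{-1}$ at the same points $z_k$, as guaranteed by Proposition~\ref{prop: zeros of mathcal{E}}. Hence $Q$ is holomorphic on $\mathbb{C}\setminus\overline{\Sigma}$.

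Item (b) is an entrywise computation. The jump $\mathcal{N}_{+}=\mathcal{N}_{-}\begin{pmatrix}0 & -1\\ 1 & 0\end{pmatrix}$ gives $(\mathcal{N}_{11,+},\mathcal{N}_{12,+})=(\mathcal{N}_{12,-},-\mathcal{N}_{11,-})$ and $(\mathcal{N}_{21,+},\mathcal{N}_{22,+})=(\mathcal{N}_{22,-},-\mathcal{N}_{21,-})$, while \eqref{equ:jump of mathcalG-1}--\eqref{equ:jump of mathcalG-2} supply the corresponding jumps for each $\mathcal{G}(\pm\vec{\mathcal{A}}(z)\pm\vec{d})$. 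Multiplying these out yields $Q_{+,11}=e^{2\pi i V_j(s)}\,Q_{-,12}$, $Q_{+,12}=-e^{-2\pi i V_j(s)}\,Q_{-,11}$, and analogously for the second row, which is exactly \eqref{equ:jump of Q}. For item (c), as $z\to\infty$ we have $\mathcal{E}(z)=1+\mathcal{O}(z^{-1})$, so $\mathcal{N}(z)=I+\mathcal{O}(z^{-1})$, while the continuity of $\mathcal{G}$ at $\pm\vec{\mathcal{A}}(\infty)\pm\vec{d}$ is assured by the non-vanishing of the relevant denominators, itself a consequence of \eqref{equ: vecA(infty)+vecd== mod Lambda} combined with the last assertion of Proposition~\ref{prop:zeros of theta function}. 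This gives \eqref{eq:Qasy}--\eqref{def:Qinfty}. For item (d), approaching $z=0\in(a_m,b_m)$ from $\mathbb{C}^{+}$, the functions $\mathcal{E}$, $\mathcal{N}$ and $\vec{\mathcal{A}}_{+}$ all extend holomorphically through $z=0$ (no branch cut is crossed), and the $\theta$-denominators do not vanish at $\vec{\mathcal{A}}_{+}(0)\pm\vec{d}$ because $0$ is not one of the points $z_k\in(b_{j-1},a_j)$; hence $Q(z)$ is holomorphic at $0$ from above, yielding \eqref{equ:Qasyto0+}.

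The main technical point is the pole/zero cancellation in item (a): the specific choice of $\vec{d}$ in \eqref{def:choice of d}, built from the pre-images $P_{1}(z_{j})$, is precisely what synchronises the first-sheet zeros of $\theta(\vec{\mathcal{A}}(z)-\vec{d})$ with the zeros of $\mathcal{E}(z)-\mathcal{E}(z)^{-1}$ at the same points $z_{j}$ identified in Proposition~\ref{prop: zeros of mathcal{E}}. Once this cancellation is in place, the remaining verifications are algebraic bookkeeping with the jump formulas.
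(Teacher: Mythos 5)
Your proposal is correct and follows essentially the same route as the paper: item (a) via the pole/zero cancellation between the zeros of $\theta(\vec{\mathcal{A}}(z)-\vec{d})$ at $z_1,\dots,z_n$ and the zeros of $\mathcal{E}-\mathcal{E}^{-1}$ in $\mathcal{N}_{12},\mathcal{N}_{21}$; item (b) by combining the jump \eqref{equ:jump of mathcalN} for $\mathcal{N}$ with \eqref{equ:jump of mathcalG-1}--\eqref{equ:jump of mathcalG-2} for $\mathcal{G}$; item (c) from $\mathcal{N}=I+\mathcal{O}(z^{-1})$; and item (d) from the regularity of the $+$-boundary values near $z=0$. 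The only cosmetic difference is that you argue item (d) by analytic continuation of the boundary restriction across $(a_m,b_m)$, while the paper writes out the explicit constant term of $\mathcal{N}$ and $\mathcal{G}$ at $0^+$; both yield \eqref{equ:Qasyto0+}.
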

\begin{proof}
\begin{itemize}
\item [\rm (a)] To check the analyticity in $\mathbb{C}\setminus\overline{\Sigma}$, one only needs to show that there are no poles introduced from the zeros of the $\theta$-function appeared in the denominators of $\mathcal{G}$. Since $\theta(\vec{\mathcal{A}}(z)+\vec{d})\neq 0$ for $z\in\mathbb{C}\setminus\overline{{\Sigma}}$, it follows that the diagonal entries $Q_{11}$ and $Q_{22}$ have no poles. Next, we have by Proposition \ref{prop:zeros of theta function} that $\theta(\vec{\mathcal{A}}(z)-\vec{d})=0$ if and only if $z=z_j(=P_1(z_j))$ for $j=1,\dots,n$. However, the poles resulted from the denominators of $\mathcal{G}\left(-\vec{\mathcal{A}}(z)+\vec{d}\right)$ and $\mathcal{G}\left(\vec{\mathcal{A}}(z)-\vec{d}\right)$ are exactly canceled by the zeros of $\mathcal{E}(z)-\mathcal{E}(z)^{-1}$ in $\mathcal{N}_{12}$ and $\mathcal{N}_{21}$. This implies that $Q_{12}$ and $Q_{21}$ have no poles in $\mathbb{C}\setminus\overline{\Sigma}$ as well. 
    \item [\rm (b)] By  \eqref{equ:jump of mathcalN}, \eqref{equ:jump of mathcalG-1} and \eqref{equ:jump of mathcalG-2}, the jump relation \eqref{equ:jump of Q} for $Q$
    follows immediately.
    \item [\rm (c)] Since $\mathcal{N}(z)=I+\mathcal{O}(z^{-1})$ as $z\to\infty$, the asymptotics \eqref{eq:Qasy} and \eqref{def:Qinfty} follows directly from \eqref{def:Q(z)}.
    \item [\rm (d)] As $z\to 0$ from $\mathbb{C}^{+}$, one has
    \begin{align}\label{equ:mathcaN asy to 0}
        \mathcal{N}(z)=\frac{1}{2}\begin{pmatrix}
            c_0^{(+)}+\mathcal{O}(z) & ic_0^{(-)}+\mathcal{O}(z) \\
            -ic_0^{(-)}+\mathcal{O}(z) & c_0^{(+)}+\mathcal{O}(z)
        \end{pmatrix},
    \end{align}
    where $c_0^{(\pm)}=e^{i\pi/4}\prod_{k=0}^{n}|b_k/a_k|^{1/4}\pm e^{-i\pi/4}\prod_{k=0}^{n}|a_k/b_k|^{1/4}$. Also using \eqref{equ:holo differential}, we have 
    $\mathcal{G}(\pm\vec{\mathcal{A}}(z)\pm\vec{d})=\pm\tilde{c}+\mathcal{O}(z)$,  as $z\to 0$ from $\mathbb{C}^{+}$, where $\tilde{c}$ denotes some constant. Combining this with \eqref{equ:mathcaN asy to 0}, we obtain \eqref{equ:Qasyto0+}.
\end{itemize}
This completes the proof of Proposition \ref{Prop:Q}. 
\end{proof}

We are now ready to solve the RH problem for $P^{(\infty)}$.
\begin{proposition}
    The solution to the RH problem for $P^{(\infty)}$ is given by  
\begin{align}\label{equ:sol of Pinfty}
P^{(\infty)}(z):=\mathcal{D}_{\infty}^{\sigma_3}Q_{\infty}^{-1}Q(z)\mathcal{D}(z)^{-\sigma_3}z^{-\beta\sigma_3},
\end{align}
where $\mathcal{D}$, $\mathcal{D}_{\infty}$, $Q$, and $Q_{\infty}$ are defined in \eqref{def:D-function}, \eqref{def:Dinfty}, \eqref{def:Q(z)}, \eqref{def:Qinfty},  respectively.
\end{proposition}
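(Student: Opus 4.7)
The plan is to verify directly that the proposed formula \eqref{equ:sol of Pinfty} satisfies every item (a)--(e) of the RH problem for $P^{(\infty)}$, taking as inputs the RH problem for $Q$ (Proposition \ref{Prop:Q}) and the properties of the Szeg\H{o}-type function $\mathcal{D}$ (Proposition \ref{prop:D-function}). Analyticity in $\mathbb{C}\setminus(\overline{\Sigma}\cup(-i\infty,0])$ is immediate: $Q$ and $\mathcal{D}$ are holomorphic in $\mathbb{C}\setminus\overline{\Sigma}$, while $z^{-\beta\sigma_3}$ is holomorphic in $\mathbb{C}\setminus(-i\infty,0]$ under the convention $\arg z\in(-\pi/2,3\pi/2)$, and the diagonal prefactors $\mathcal{D}_\infty^{\sigma_3}Q_\infty^{-1}$ are constant.

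For the jumps on $\Sigma$, observe that $z^{-\beta\sigma_3}$ is continuous across the real axis, so the jump reduces to
\[
J_{P^{(\infty)}}(z) = z^{\beta\sigma_3}\mathcal{D}_-(z)^{\sigma_3}\bigl(Q_-(z)^{-1}Q_+(z)\bigr)\mathcal{D}_+(z)^{-\sigma_3}z^{-\beta\sigma_3}.
\]
Inserting \eqref{equ:jump of Q} for $Q_-^{-1}Q_+$ and \eqref{eq:Djump} for $\mathcal{D}_+\mathcal{D}_-$, and using the identity $e^{2\pi iV_j(s)}=e^{is\Omega_j+\zeta_j}$ (a direct consequence of \eqref{def:Vj} and $\zeta_j\in i\mathbb{R}$), the $z^{\pm 2\beta}$ factors cancel and the result becomes
\[
\begin{pmatrix}0 & -e^{-is\Omega_j+\sign{z}(\alpha-\beta)\pi i}\\ e^{is\Omega_j-\sign{z}(\alpha-\beta)\pi i} & 0\end{pmatrix},
\]
which agrees with the jumps prescribed in \eqref{equ:jump of Pinfty} on each subinterval of $\Sigma$ upon expanding the $\sigma_3$-conjugation by $e^{\pm(\alpha-\beta)\pi i\sigma_3/2}$. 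On $(-i\infty,0]$ only $z^{-\beta\sigma_3}$ contributes a jump, and under the chosen orientation and branch this equals $e^{2\beta\pi i\sigma_3}$, matching \eqref{equ:jump of Pinfty}.

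For the asymptotics at infinity, combining $Q(z)=Q_\infty+\mathcal{O}(z^{-1})$ from \eqref{eq:Qasy} with the expansion $\mathcal{D}(z)^{-\sigma_3}=\mathcal{D}_\infty^{-\sigma_3}\bigl(I-\mathcal{D}_{\infty,1}\sigma_3/z+\mathcal{O}(z^{-2})\bigr)$ from \eqref{equ: expansion of mathcalD at z=infty} yields $P^{(\infty)}(z)=(I+P_1^{(\infty)}/z+\mathcal{O}(z^{-2}))z^{-\beta\sigma_3}$, which is \eqref{equ:asy Pinfty at z=infty}. The behavior at the endpoints $p\in\mathcal{I}_e\setminus\{0\}$ is inherited from the quarter-root singularities of $\mathcal{E}(z)$ in the entries of $Q(z)$ via \eqref{def:mathcalN} and \eqref{def:Q(z)}, while $\mathcal{D}(z)=\mathcal{O}(1)$ by \eqref{equ: asy mathcalD at endpoints} and $z^{-\beta\sigma_3}=\mathcal{O}(1)$.

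The most delicate verification is at the origin (item (e)), where both $\mathcal{D}(z)^{-1}$ and $z^{-\beta}$ are genuinely singular. Using $\mathcal{D}(z)=\mathcal{O}(1)\cdot z^{-\alpha-\beta}$ from \eqref{equ: asy mathcalD at 0}, the diagonal product can be rewritten as $\mathcal{D}(z)^{-\sigma_3}z^{-\beta\sigma_3}=\widetilde{P}(z)\,z^{\alpha\sigma_3}$ for some $\widetilde{P}(z)$ diagonal and holomorphic at $z=0$; combined with the holomorphicity of $Q(z)$ at the origin from \eqref{equ:Qasyto0+}, this yields $P^{(\infty)}(z)=P_0^{(\infty)}(z)z^{\alpha\sigma_3}$ with $P_0^{(\infty)}$ holomorphic near $z=0$. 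The main obstacle in the whole argument is the careful bookkeeping of branch conventions required here and in the jump calculation on $\Sigma$: one must confirm that the $z^{-2\beta}$-factor produced by $\mathcal{D}_+\mathcal{D}_-$ in \eqref{eq:Djump} is consistently interpreted under $\arg z\in(-\pi/2,3\pi/2)$ across all intervals of $\Sigma$, especially those straddling the origin, so that it cancels exactly with the outer $z^{\pm\beta\sigma_3}$ factors.
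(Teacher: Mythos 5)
Your proof follows exactly the same route as the paper's: verify each item (a)--(e) of the RH problem for $P^{(\infty)}$ directly by conjugating the jumps of $Q$ (Proposition \ref{Prop:Q}) with the Szeg\H{o} factor $\mathcal{D}^{-\sigma_3}z^{-\beta\sigma_3}$, using $\mathcal{D}_+\mathcal{D}_-=z^{-2\beta}e^{\sign{z}(\alpha-\beta)\pi i+\zeta_j}$ and $e^{2\pi iV_j(s)}=e^{is\Omega_j+\zeta_j}$ to cancel the $z^{\pm 2\beta}$ factors and recover the anti-diagonal jump, and using the local expansions of $\mathcal{D}$ and $Q$ at $\infty$, at the endpoints, and at the origin for items (c)--(e). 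The computation and conclusions are correct and match the paper's argument; the only difference is that the paper goes on to record the explicit formulas for $(P^{(\infty)}_1)_{11}$ and $(P^{(\infty)}_1)_{22}$ and the leading coefficients $\left(P^{(\infty)}\right)_{p}^{(-1/4)}$, because these are needed later, but they are not required to establish the proposition itself.
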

\begin{proof}
We only need to check $P^{(\infty)}$ satisfies items (a)--(e) of the RH problem. 
\begin{itemize}
\item [(a)] Note that the branch cut of $z^{-\beta\sigma_3}$ is taken along the negative imaginary axis,  this follows directly from the fact that $\mathcal{D}(z)$ and $Q(z)$ are analytic in $\mathbb{C}\setminus\overline{\Sigma}$. 


\item [(b)] For $z\in(a_j,b_j)$, $j=0,1,\dots,n$, we have
\begin{align}\label{eq:Pinfty}
&{P^{(\infty)}}(z)^{-1}P^{(\infty)}(z)
=z^{\beta\sigma_3}\mathcal{D}_{-}^{\sigma_3}(z)Q_{-}(z)^{-1}Q_{+}(z)\mathcal{D}_{+}(z)^{-\sigma_3}z^{-\beta\sigma_3}
\nonumber\\
&\overset{\eqref{equ:jump of Q}}{=}
\begin{pmatrix}
    0 & -e^{-2\pi iV_{j}(s)}\mathcal{D}(z)_{-}\mathcal{D}_{+}(z)z^{2\beta}
    \\
    e^{2\pi iV_{j}(s)}\mathcal{D}_{-}(z)^{-1}\mathcal{D}_{+}(z)^{-1}z^{-2\beta} & 0
\end{pmatrix}.
\end{align}
This, together with the definition of $V_{j}(s)$ given in \eqref{def:Vj} and \eqref{eq:Djump}, implies the jump relation \eqref{equ:jump of Pinfty} on $\Sigma$.
As $Q(z)$ and $\mathcal{D}(z)$ have no jumps on $(-i\infty,i0)$, we have 
for $z\in(-i\infty,0)$,
\begin{align}
{P^{(\infty)}}(z)^{-1}P^{(\infty)}(z)=z_{-}^{\beta\sigma_3}z_{+}^{-\beta\sigma_3}=e^{2\pi i\beta\sigma_3}. 
\end{align}
\item [(c)] 
As $\vec{\omega}$ given in \eqref{equ:holo differential} is a holomorphic differential, it follows from \eqref{equ:Abel map on mathbbC} that 
\begin{align}\label{equ: asy of mathcalA at infty}
\vec{\mathcal{A}}(z)=\vec{\mathcal{A}}(\infty)-
\frac{\left(\tilde{\mathbb{A}}^{-1}\right)_{n}^{\rm T}}{z}+\mathcal{O}(z^{-2}), \qquad z\to\infty,
\end{align}
where $\tilde{\mathbb{A}}$ is given in \eqref{def:tilde matrix A} and $(\tilde{\mathbb{A}}^{-1})_{n}$ denotes the $n$-th row of $\tilde{\mathbb{A}}^{-1}$. With the aid of \eqref{equ: asy of mathcalA at infty}, we have, as $z\to\infty$, 
\begin{align}
&\mathcal{G}(\vec{\mathcal{A}}(z)+\vec{d})=\mathcal{G}(\vec{\mathcal{A}}(\infty)+\vec{d})-\left(\tilde{\mathbb{A}}^{-1}\right)_{n}\nabla\mathcal{G}(\vec{\mathcal{A}}(\infty)+\vec{d})\cdot z^{-1}+\mathcal{O}(z^{-2}), \label{equ: G expansion at infty-1}
\\
&\mathcal{G}(-\vec{\mathcal{A}}(z)-\vec{d})=\mathcal{G}(-\vec{\mathcal{A}}(\infty)-\vec{d})+\left(\tilde{\mathbb{A}}^{-1}\right)_{n}\nabla\mathcal{G}(-\vec{\mathcal{A}}(\infty)-\vec{d})\cdot z^{-1}+\mathcal{O}(z^{-2}), \label{equ: G expansion at infty-2}
\end{align} where $\nabla\mathcal{G}$ is the gradient of $\mathcal{G}$ that is of size $n\times 1$. Inserting \eqref{equ: expansion of mathcalD at z=infty} and the above two formulae into \eqref{equ:sol of Pinfty} gives us \eqref{equ:asy Pinfty at z=infty}. In particular, by \eqref{equ: expansion of mathcalD at z=infty}, \eqref{equ: G expansion at infty-1} and \eqref{equ: G expansion at infty-2}, we have
\begin{align}
& \left(P^{(\infty)}_{1}\right)_{11}=-\frac{\left(\tilde{\mathbb{A}}^{-1}\right)_{n}\cdot\nabla\mathcal{G}(\vec{\mathcal{A}}(\infty)+\vec{d})}{\mathcal{G}(\vec{\mathcal{A}}(\infty)+\vec{d})}-\mathcal{D}_{\infty,1},\label{equ:P(infty)1,11}\\
& \left(P^{(\infty)}_{1}\right)_{22}=\frac{\left(\tilde{\mathbb{A}}^{-1}\right)_{n}\cdot\nabla\mathcal{G}(-\vec{\mathcal{A}}(\infty)-\vec{d})}{\mathcal{G}(-\vec{\mathcal{A}}(\infty)-\vec{d})}+\mathcal{D}_{\infty,1}.\label{equ:P(infty)1,22}
\end{align}
\item [(d)] 
For $j=0,1,\dots,n$, it follows from \eqref{def:mathcalE} that,  if $z\to a_j$ from $\mathbb{C}^+$,
\begin{align}\label{equ: asy of mathcalE at z=a_j}
\mathcal{E}(z)=\mathcal{E}_{a_j}^{(-\frac{1}{4})}(z-a_j)^{-\frac{1}{4}}+\mathcal{O}((z-a_j)^{\frac{3}{4}}), \quad \mathcal{E}_{a_j}^{(-\frac{1}{4})}=e^{\frac{i\pi}{4}}\frac{\prod_{k=0}^{n}|a_j-b_k|^{\frac{1}{4}}}{\prod_{\substack{k=0 \\ k\neq j}}^{n}|a_j-a_k|^{\frac{1}{4}}},
\end{align}
and if $z\to b_j$ from $\mathbb{C}^+$,
\begin{align}
{\mathcal{E}(z)}^{-1}=\mathcal{E}_{b_j}^{(-\frac{1}{4})}(z-b_j)^{-\frac{1}{4}}+\mathcal{O}((z-b_j)^{\frac{3}{4}}), \quad \mathcal{E}_{b_j}^{(-\frac{1}{4})}=\frac{\prod_{k=0}^{n}|b_j-a_k|^{\frac{1}{4}}}{\prod_{\substack{k=0 \\ k\neq j}}^{n}|b_j-b_k|^{\frac{1}{4}}}.
\end{align}
We then obtain from \eqref{def:mathcalN} that if $z\to a_j$ from $\mathbb{C}^+$,
\begin{align}\label{equ: asy expansion of mathcalN at a_j}
\mathcal{N}(z)=\mathcal{N}_{a_j}^{(-\frac{1}{4})}(z-a_j)^{-\frac{1}{4}}+\mathcal{O}((z-a_j)^{\frac{1}{4}}),
\quad \mathcal{N}_{a_j}^{(-\frac{1}{4})}=\frac{1}{2}\mathcal{E}_{a_j}^{(-\frac{1}{4})}
\begin{pmatrix}
    1 & i \\
    -i & 1
\end{pmatrix}.
\end{align}
and if $z\to b_j$ from  $\mathbb{C}^+$, 
\begin{align}\label{equ: asy expansion of mathcalN at b_j}
\mathcal{N}(z)=\mathcal{N}_{b_j}^{(-\frac{1}{4})}(z-b_j)^{-\frac{1}{4}}+\mathcal{O}((z-b_j)^{\frac{1}{4}}),
\quad
\mathcal{N}_{b_j}^{(-\frac{1}{4})}=\frac{1}{2}\mathcal{E}_{b_j}^{(-\frac{1}{4})}
\begin{pmatrix}
    1 & -i \\
    i & 1
\end{pmatrix}.
\end{align}
These, together with \eqref{def:Q(z)}, \eqref{equ:sol of Pinfty} and \eqref{equ: asy mathcalD at endpoints}, imply that
\begin{align}\label{equ: Pinfty expansion at z=p}
    P^{(\infty)}(z)=\left(P^{(\infty)}\right)_{p}^{(-\frac{1}{4})}(z-p)^{-\frac{1}{4}}+\mathcal{O}(\left(z-p\right)^{\frac{1}{4}}), \qquad z\to p, 
\end{align}
where 
\begin{align}
&\left(P^{(\infty)}\right)_{a_j}^{(-\frac{1}{4})}=\frac{1}{2}\mathcal{E}_{a_j}^{(-\frac{1}{4})}\mathcal{D}_{\infty}^{\sigma_3}Q_{\infty}^{-1}
\begin{pmatrix}
\mathcal{G}\left(\vec{\mathcal{A}}(a_j)+\vec{d}\right) & i\mathcal{G}(-\vec{\mathcal{A}}(a_j)+\vec{d}) \\
-i\mathcal{G}\left(\vec{\mathcal{A}}(a_j)-\vec{d}\right) & \mathcal{G}\left(-\vec{\mathcal{A}}(a_j)-\vec{d}\right)
\end{pmatrix}a_j^{-\beta\sigma_3}\mathcal{D}_{a_j}^{-\sigma_3}, 
\label{equ:Pinftya_j(-1/4)}
\\
&\left(P^{(\infty)}\right)_{b_j}^{(-\frac{1}{4})}=\frac{1}{2}\mathcal{E}_{b_j}^{(-\frac{1}{4})}\mathcal{D}_{\infty}^{\sigma_3}Q_{\infty}^{-1}
\begin{pmatrix}
\mathcal{G}\left(\vec{\mathcal{A}}(b_j)+\vec{d}\right) & -i\mathcal{G}(-\vec{\mathcal{A}}(b_j)+\vec{d}) \\
i\mathcal{G}\left(\vec{\mathcal{A}}(b_j)-\vec{d}\right) & \mathcal{G}\left(-\vec{\mathcal{A}}(b_j)-\vec{d}\right)
\end{pmatrix}b_j^{-\beta\sigma_3}\mathcal{D}_{b_j}^{-\sigma_3}.
\label{equ:Pinftyb_j(-1/4)}
\end{align}
Here, we remind the readers of the fact that $\mathcal{G}(\vec{\mathcal{A}}_{+}(p)+\vec{d})=\mathcal{G}(\vec{\mathcal{A}}_{-}(p)+\vec{d})$ for $p\in\mathcal{I}_{e}$ by combining \eqref{equ: vecA(a_j) and vecA(b_j)-a}--\eqref{equ: vecA(a_j) and vecA(b_j)-c} with \eqref{property:theta function}.
\item [(e)] Using \eqref{equ: asy mathcalD at 0} and \eqref{equ:Qasyto0+}, we have, as $z\to0$ from $\mathbb{C}^{+}$,
\begin{align}
P^{(\infty)}(z)=\mathcal{D}_{\infty}^{\sigma_3}Q_{\infty}^{-1}(*+\mathcal{O}(z))z^{\alpha\sigma_3}, 
\end{align}
where * is a constant matrix, as required.
\end{itemize}
As a consequence, the right-hand side of \eqref{equ:sol of Pinfty} indeed solves the RH problem for $P^{(\infty)}$.
\end{proof}

It is easy to check that $\det P^{(\infty)}(z)=1$, thus,
\begin{align}
{P^{(\infty)}(z)}^{-1}=\begin{pmatrix}
    P^{(\infty)}_{22}(z) & -P^{(\infty)}_{12}(z) \\
    -P^{(\infty)}_{21}(z) & P^{(\infty)}_{11}(z)
\end{pmatrix}.
\end{align}
Using \eqref{equ: Pinfty expansion at z=p}, we have, as $z\to p$ from $\mathbb{C}^+$,
\begin{align}\label{equ: Pinftyinverse expansion at z=p}
{P^{(\infty)}(z)}^{-1}=\left(P^{(\infty)}_{\rm inv}\right)_{p}^{(-\frac{1}{4})}(z-p)^{-\frac{1}{4}}+\mathcal{O}((z-p)^{\frac{1}{4}}),
\end{align}
where 
\begin{equation}
\left(P^{(\infty)}_{\rm inv}\right)_{p}^{(-\frac{1}{4})}=
\begin{pmatrix}
\left(P^{(\infty)}\right)_{p,22}^{(-\frac{1}{4})} & 
-\left(P^{(\infty)}\right)_{p,12}^{(-\frac{1}{4})} \\
-\left(P^{(\infty)}\right)_{p,21}^{(-\frac{1}{4})} &
\left(P^{(\infty)}\right)_{p,11}^{(-\frac{1}{4})}
\end{pmatrix}.
\end{equation}

\section{Local parametrices at the endpoints of $\Sigma$}
\label{sec: local parametrix}
As the convergence of $J_S$ to $I$ for large positive $s$ is not uniform near the endpoints of $\Sigma$, we need to construct a local parametrix near each $p\in\mathcal{I}_{e}$.
Let 
\begin{align}
U^{(p)}:=\left\{z: |z-p|<\varrho \right\}
\end{align}
be a small disk around $p$, where $\varrho>0$ independent of $s$ is chosen sufficiently small such that these disks do not intersect each other; see Figure \ref{fig:neigh of aj-bj} for an illustration. The local parametrix reads as follows.
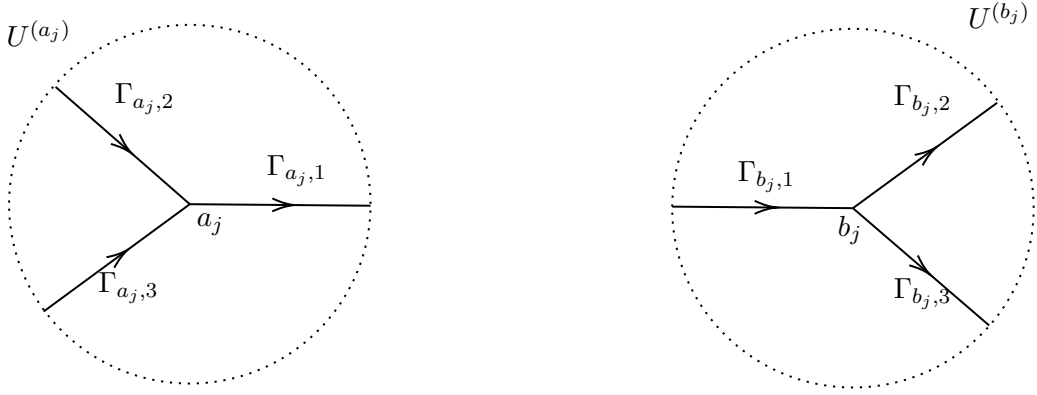
\begin{figure}[htbp]
\centering
\tikzset{every picture/.style={line width=0.75pt}} 
\begin{tikzpicture}[x=0.75pt,y=0.75pt,yscale=-1,xscale=1]
\draw    (84,80.6) -- (151,139.6) ;
\draw [shift={(122,114.07)}, rotate = 221.37] [color={rgb, 255:red, 0; green, 0; blue, 0 }  ][line width=0.75]    (10.93,-3.29) .. controls (6.95,-1.4) and (3.31,-0.3) .. (0,0) .. controls (3.31,0.3) and (6.95,1.4) .. (10.93,3.29)   ;
\draw    (151,139.6) -- (78,193.6) ;
\draw [shift={(120.13,162.44)}, rotate = 143.51] [color={rgb, 255:red, 0; green, 0; blue, 0 }  ][line width=0.75]    (10.93,-3.29) .. controls (6.95,-1.4) and (3.31,-0.3) .. (0,0) .. controls (3.31,0.3) and (6.95,1.4) .. (10.93,3.29)   ;
\draw    (241.21,140.37) -- (151,139.6) ;
\draw [shift={(203.11,140.05)}, rotate = 180.49] [color={rgb, 255:red, 0; green, 0; blue, 0 }  ][line width=0.75]    (10.93,-3.29) .. controls (6.95,-1.4) and (3.31,-0.3) .. (0,0) .. controls (3.31,0.3) and (6.95,1.4) .. (10.93,3.29)   ;
\draw    (482,141.6) -- (391.79,140.83) ;
\draw [shift={(443.89,141.27)}, rotate = 180.49] [color={rgb, 255:red, 0; green, 0; blue, 0 }  ][line width=0.75]    (10.93,-3.29) .. controls (6.95,-1.4) and (3.31,-0.3) .. (0,0) .. controls (3.31,0.3) and (6.95,1.4) .. (10.93,3.29)   ;
\draw    (554,88.6) -- (482,141.6) ;
\draw [shift={(523.64,110.95)}, rotate = 143.64] [color={rgb, 255:red, 0; green, 0; blue, 0 }  ][line width=0.75]    (10.93,-3.29) .. controls (6.95,-1.4) and (3.31,-0.3) .. (0,0) .. controls (3.31,0.3) and (6.95,1.4) .. (10.93,3.29)   ;
\draw    (482,141.6) -- (550,200.6) ;
\draw [shift={(520.53,175.03)}, rotate = 220.95] [color={rgb, 255:red, 0; green, 0; blue, 0 }  ][line width=0.75]    (10.93,-3.29) .. controls (6.95,-1.4) and (3.31,-0.3) .. (0,0) .. controls (3.31,0.3) and (6.95,1.4) .. (10.93,3.29)   ;
\draw  [dash pattern={on 0.84pt off 2.51pt}] (60.79,138.83) .. controls (61.21,89.01) and (101.95,48.96) .. (151.77,49.39) .. controls (201.59,49.81) and (241.64,90.55) .. (241.21,140.37) .. controls (240.79,190.19) and (200.05,230.24) .. (150.23,229.81) .. controls (100.41,229.39) and (60.36,188.65) .. (60.79,138.83) -- cycle ;
\draw  [dash pattern={on 0.84pt off 2.51pt}] (391.79,140.83) .. controls (392.21,91.01) and (432.95,50.96) .. (482.77,51.39) .. controls (532.59,51.81) and (572.64,92.55) .. (572.21,142.37) .. controls (571.79,192.19) and (531.05,232.24) .. (481.23,231.81) .. controls (431.41,231.39) and (391.36,190.65) .. (391.79,140.83) -- cycle ;
\draw (153,143) node [anchor=north west][inner sep=0.75pt]    {$a_{j}$};
\draw (473,144) node [anchor=north west][inner sep=0.75pt]    {$b_{j}$};
\draw (58,45.4) node [anchor=north west][inner sep=0.75pt]    {$U^{( a_{j})}$};
\draw (538,36.4) node [anchor=north west][inner sep=0.75pt]    {$U^{( b_{j})}$};
\draw (112,77.4) node [anchor=north west][inner sep=0.75pt]    {$\Gamma _{a_{j} ,2}$};
\draw (104,172.4) node [anchor=north west][inner sep=0.75pt]    {$\Gamma _{a_{j} ,3}$};
\draw (188,113.4) node [anchor=north west][inner sep=0.75pt]    {$\Gamma _{a_{j} ,1}$};
\draw (423,116.4) node [anchor=north west][inner sep=0.75pt]    {$\Gamma _{b_{j} ,1}$};
\draw (501,78.4) node [anchor=north west][inner sep=0.75pt]    {$\Gamma _{b_{j} ,2}$};
\draw (501,175.4) node [anchor=north west][inner sep=0.75pt]    {$\Gamma _{b_{j} ,3}$};
\end{tikzpicture}
\caption{The open disks centered at $a_j$ (left) and at $b_j$ (right).}
\label{fig:neigh of aj-bj}
\end{figure}

\paragraph{RH problem for $P^{(p)}$}
\begin{itemize}
\item [\rm (a)] $P^{(p)}(z)$ is holomorphic for $z\in U^{(p)} \setminus\Gamma_{S}$, where $\Gamma_S$ is defined in \eqref{def:Gamma_s}. 
\item [\rm (b)] For $z\in U^{(p)} \cap \Gamma_{S}$, we have 
\begin{align}\label{eq:localjump}
P^{(p)}_{+}(z)=P^{(p)}_{-}(z)J_{S}(z),
\end{align}
where $J_{S}(z)$ is given in \eqref{equ:jump of S}.
\item [\rm (c)] As $z\to p$, we have 
$P^{(p)}(z)=\mathcal{O}(\log (z-p))$.
\item [\rm (d)] 
As $s\to+\infty$, we have $P^{(p)}(z)\cdot {P^{(\infty)}(z)}^{-1}\to I$ 
uniformly for $z\in\partial U^{(p)}$.
\end{itemize}
Following \cites{Dei-Its-Zhou-Ann1997, KMAV-AdV-2004}, one can solve the above RH problem by using the Bessel parametrix  $\Phi_{\rm Be}$ introduced in Appendix \ref{appendix:Bessel}, as presented in what follows.


\subsection{Local parametrix at $a_j$}
Let 
\begin{align}\label{def: local map ua_j}
u_{a_j}(z):=-\left(g(z)-\frac{\Omega_j}{2}\right)^{2},
\end{align}
where $\Omega_j$ is given in \eqref{equ:Omega_j}. As $z\to a_j$, it follows from \eqref{equ:g-func} that
\begin{align}\label{equ: u_aj expansion at a_j}
u_{a_j}(z)={c}_{a_j}(z-a_j)\left(1+\mathcal{O}\left(z-a_j\right)\right), \quad 
c_{a_j}=-\frac{4\mathrm{p}(a_j)^2}{\prod_{\substack{k=0 \\ k\neq j}}^{n}(a_j-a_k)\prod_{k=0}^{n}(a_j-b_k)}>0.
\end{align}
Thus, $u_{a_j}$ is a conformal mapping from $U^{(a_j)}$ to a neighborhood of the origin. By choosing the principal branch, it is readily seen that
which implies that 
\begin{align}
 {u_{a_j}(z)}^{\frac{1}{2}}=\mp i\left(g(z)-\frac{\Omega_j}{2}\right), \qquad z\in\mathbb{C}^{\pm}\cap U^{(a_j)}.
\label{equ:relation uaj and g-omegaj/2-1} 
\end{align}

We now define
\begin{align}
 P^{(a_j)}(z):=E^{(a_j)}(z)\Phi_{\rm Be}(s^2u_{a_j}(z))H(z)e^{\left(isg(z)-\sign{a_j}\frac{(\alpha-\beta)\pi i}{2}\right)\sigma_3}, \label{def:sol of local near aj}
\end{align}
where 
\begin{equation}\label{def:analytic factor Ea_j}
    E^{(a_j)}(z):=P^{(\infty)}(z)e^{-\left(is\frac{\Omega_j}{2}-\sign{a_j}\frac{\alpha-\beta}{2}\pi i\right)\sigma_3}{H(z)}^{-1}M^{-1}\left(\pi s{u_{a_j}(z)}^{\frac{1}{2}}\right)^{\frac{1}{2}\sigma_3}, 
\end{equation}
and
\begin{align}\label{equ:constant matrix H and M}
H(z):=\left\{
\begin{aligned}
&I, &\im z>0, \\
&\begin{pmatrix}
    0 & 1\\
    -1 & 0
\end{pmatrix}, & \im z<0,
\end{aligned}
\right.
\qquad
M:=\frac{1}{\sqrt{2}}
\begin{pmatrix}
    1 & i\\
    i & 1
\end{pmatrix}.
\end{align}
\begin{proposition}\label{prop:local Pa_j}
The local parametrix defined in \eqref{def:sol of local near aj} solves the RH problem for $P^{(a_j)}$ for $j=0,1,\dots,n$.
\end{proposition}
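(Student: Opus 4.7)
The plan is to verify the four conditions (a)--(d) of the RH problem for $P^{(a_j)}$ in turn, with most of the work concentrated on checking that the prefactor $E^{(a_j)}$ is indeed holomorphic in the whole disk $U^{(a_j)}$ and that the matching condition holds.

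First I would establish the analyticity of $E^{(a_j)}$ in $U^{(a_j)}$. The potential obstructions are: (i) the jump of $P^{(\infty)}$ across $(a_j,b_j)\cap U^{(a_j)}$; (ii) the branch jump of ${u_{a_j}}^{1/4}$ and of $H(z)$ across the real line; (iii) the quarter-power singularity of $P^{(\infty)}$ at $a_j$. For (iii), the estimate \eqref{equ: Pinfty expansion at z=p} combined with the local behavior \eqref{equ: u_aj expansion at a_j} of $u_{a_j}$ at $a_j$ shows that $P^{(\infty)}(z)\,{u_{a_j}(z)}^{\sigma_3/4}$ is bounded near $a_j$, so the apparent singularity is removable. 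For (i)--(ii), one computes the jumps of each factor of $E^{(a_j)}$ across $U^{(a_j)}\cap\mathbb{R}$: on $(a_j,b_j)\cap U^{(a_j)}$ the jump of $P^{(\infty)}$ in \eqref{equ:jump of Pinfty} combines with the relation $g_+(z)+g_-(z)=\Omega_j$ from \eqref{equ:g-jump-relation2} and the jumps of $H^{-1}$ and of $u_{a_j}^{1/4}$; on the portion of the real axis outside $(a_j,b_j)$ inside $U^{(a_j)}$ only $H$ and $u_{a_j}^{1/4}$ jump. In each case the algebraic identity
\begin{equation*}
H_-^{-1}\begin{pmatrix}0 & -1\\ 1 & 0\end{pmatrix}=H_+^{-1},\qquad M^{-1}\sigma_3 M=-\sigma_2,
\end{equation*}
together with $u_{a_j,+}^{1/4}=i\,u_{a_j,-}^{1/4}$, will cancel the jump of $P^{(\infty)}$. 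Hence $E^{(a_j)}$ extends analytically across the real line inside $U^{(a_j)}$.

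Second, I would verify the jump condition \eqref{eq:localjump}. Since $E^{(a_j)}$ is analytic, the jumps of $P^{(a_j)}$ are inherited from those of $\Phi_{\rm Be}\!\bigl(s^2 u_{a_j}(z)\bigr)$ together with the jumps of $H$ and of the scalar factor $e^{(isg(z)-\mathrm{sgn}(a_j)(\alpha-\beta)\pi i/2)\sigma_3}$. Using the relation \eqref{equ:relation uaj and g-omegaj/2-1} between ${u_{a_j}}^{1/2}$ and $g-\Omega_j/2$, the contour $\Gamma_{a_j,1}\cup\Gamma_{a_j,2}\cup\Gamma_{a_j,3}$ of the Bessel model (with orientation induced by $u_{a_j}$) is mapped to $(a_j,b_j)\cup\Gamma_{S,j-1}^{*}\cup\Gamma_{S,j-1}$ inside $U^{(a_j)}$ (with the obvious modification for $j=0$), and the Bessel jumps \emph{pulled back} by $u_{a_j}$ and conjugated by $H(z)e^{(isg(z)-\mathrm{sgn}(a_j)(\alpha-\beta)\pi i/2)\sigma_3}$ reproduce exactly the piecewise expressions for $J_S$ listed in \eqref{equ:jump of S}. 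Here one must be careful with the case distinction $j<m$ vs.\ $j\geq m$, which accounts for the two different $\pm(\alpha-\beta)\pi i$ appearing in $J_S$.

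Third, for the endpoint behavior I would invoke the known logarithmic singularity of $\Phi_{\rm Be}$ at the origin (as recorded in Appendix \ref{appendix:Bessel}); since $u_{a_j}(z)$ has a simple zero at $a_j$, $P^{(a_j)}(z)=\mathcal{O}(\log(z-a_j))$, in agreement with \eqref{equ:S asym at endpoints}. Finally, for the matching condition (d) I would substitute the large-argument asymptotics of the Bessel parametrix, of the form
\begin{equation*}
\Phi_{\rm Be}(\zeta)=\bigl(\pi\zeta^{1/2}\bigr)^{-\sigma_3/2}M\!\left(I+\mathcal{O}(\zeta^{-1/2})\right)e^{\zeta^{1/2}\sigma_3},
\end{equation*}
which on the boundary $\partial U^{(a_j)}$ yields $\Phi_{\rm Be}(s^2u_{a_j}(z))\sim (\pi s\,u_{a_j}^{1/2})^{-\sigma_3/2}M e^{s u_{a_j}^{1/2}\sigma_3}$ up to $\mathcal{O}(s^{-1})$. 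The explicit choice of $E^{(a_j)}$ in \eqref{def:analytic factor Ea_j} is precisely the one that turns this leading term into $P^{(\infty)}(z)$; the scalar exponentials $e^{\pm(isg(z)-\mathrm{sgn}(a_j)(\alpha-\beta)\pi i/2)\sigma_3}$ likewise cancel out by \eqref{equ:relation uaj and g-omegaj/2-1}, giving $P^{(a_j)}(z){P^{(\infty)}(z)}^{-1}=I+\mathcal{O}(s^{-1})$ uniformly on $\partial U^{(a_j)}$.

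The main obstacle will be step one: the verification that the jumps of $P^{(\infty)}$, of $H$, of $u_{a_j}^{1/4}$ and of $M$ conspire to make $E^{(a_j)}$ single-valued near $a_j$, especially tracking the sign conventions and the role of the scalar $e^{-\mathrm{sgn}(a_j)(\alpha-\beta)\pi i\,\sigma_3/2}$ factor that differentiates the cases $j<m$ and $j\geq m$. Once this step is settled, the remaining verifications are standard once the Bessel model is accepted.
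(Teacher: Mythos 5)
Your proposal is correct and follows essentially the same route as the paper: check analyticity of $E^{(a_j)}$ across $(a_j-\varrho,a_j+\varrho)$ (jump cancellation plus removability at $a_j$), deduce the jump condition from the Bessel model and the $H$, $g$ factors, read off the logarithmic endpoint behavior, and establish the matching from the large-argument Bessel asymptotics. One small inaccuracy worth flagging: since $c_{a_j}>0$, $u_{a_j}$ maps $(a_j-\varrho,a_j)$ (where $S$ has \emph{no} jump) onto the Bessel ray $(-\infty,0)$, and there the off-diagonal Bessel jump is exactly cancelled by the jump of $H$; the off-diagonal jump of $P^{(a_j)}$ on $(a_j,a_j+\varrho)\subset(a_j,b_j)$ is produced by $H$ and the $g$-factor alone, not by a pullback of a Bessel contour as your wording suggests — but this does not affect the validity of the overall argument.
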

\begin{proof}
We first show the prefactor $E^{(a_j)}$ is analytic in
$U^{(a_j)}$. By \eqref{def:analytic factor Ea_j}, it is readily seen that the only possible jump for $E^{(a_j)}$ is on $(a_j-\varrho,a_j+\varrho)$; see Figure \ref{fig:neigh of aj-bj}. For $z\in(a_{j},a_{j}+\varrho)$, recall the  jump matrix of $P^{(\infty)}$ given in \eqref{equ:jump of Pinfty}, we have 
\begin{multline}
{E^{(a_j)}_{-}(z)}^{-1}{E^{(a_j)}_{+}}(z)=
\left(\pi su_{a_j}(z)^{\frac{1}{2}}\right)^{-\frac{1}{2}\sigma_3}M
\begin{pmatrix}
0 & 1\\
-1 & 0
\end{pmatrix}e^{(is\Omega_j-\sign{a_j}(\alpha-\beta)\pi i)\frac{\sigma_3}{2}}\times 
\\
{P^{(\infty)}_{-}(z)}^{-1}{P^{(\infty)}_{+}}(z)
e^{-(is\Omega_j-\sign{a_j}(\alpha-\beta)\pi i)\frac{\sigma_3}{2}}
\begin{pmatrix}
    1 & 0 \\
    0 & 1
\end{pmatrix}
M^{-1}
\left(\pi su_{a_j}(z)^{\frac{1}{2}}\right)^{\frac{1}{2}\sigma_3}=I.
\end{multline}
For $z\in(a_j-\varrho,a_j)$, since ${P^{(\infty)}(z)}^{-1}P^{(\infty)}(z)=I$, it follows that
\begin{align}
&{E^{(a_j)}_{-}(z)}^{-1}{E^{(a_j)}_{+}}(z)=
\left(\pi s{u_{a_j,-}(z)}^{\frac{1}{2}}\right)^{-\frac{1}{2}\sigma_3}
\begin{pmatrix}
    -i & 0 \\
    0 & i
\end{pmatrix}
\left(\pi s{u_{a_j,+}(z)}^{\frac{1}{2}}\right)^{\frac{1}{2}\sigma_3}\nonumber\\
&=\begin{pmatrix}
    -iu_{a_j,-}(z)^{-\frac{1}{4}}u_{a_j,+}(z)^{\frac{1}{4}} & 0 \\
    0 & iu_{a_j,-}(z)^{\frac{1}{4}}u_{a_j,+}(z)^{-\frac{1}{4}}
\end{pmatrix}=I.
\end{align}
Also note the expansions of $P^{(\infty)}$ and $u_{a_j}$ at $a_j$ given in \eqref{equ: Pinfty expansion at z=p} and \eqref{equ: u_aj expansion at a_j}, we conclude that $a_j$ is a removable singular point and $E^{(a_j)}$ is analytic in $U^{(a_j)}$. The analyticity
of $E^{(a_j)}$, together with  item (b) of Proposition \ref{prop:g-func} and \eqref{jump of PhiBe}, gives us the jump relation \eqref{eq:localjump}.

By considering \eqref{asy: PhiBe at 0}, the asymptotic behavior of $P^{(a_j)}$ near $a_j$ follows directly from \eqref{def:sol of local near aj} and \eqref{equ: u_aj expansion at a_j}. For the matching condition, we see from \eqref{equ:asy of PhiBe at infty} and \eqref{equ:relation uaj and g-omegaj/2-1} that, as $s\to +\infty$,
\begin{multline}\label{equ:PajPinfty-1 est}
P^{(a_j)}(z){P^{(\infty)}(z)}^{-1}=
I\pm\\ 
\frac{P^{(\infty)}(z)e^{\left(-is\frac{\Omega_j}{2}+\sign{a_j}\frac{(\alpha-\beta)\pi i}{2}\right)\sigma_3}\Phi_{\rm Be,1}
e^{\left(is\frac{\Omega_j}{2}-\sign{a_j}\frac{(\alpha-\beta)\pi i} {2}\right)\sigma_3}{P^{(\infty)}(z)}^{-1}}{su_{a_j}(z)^{\frac{1}{2}}}+\mathcal{O}(s^{-2}),
\end{multline}
uniformly for $z\in\partial U^{(a_j)}$, as required. Here, $\Phi_{\rm Be,1}$ is given in \eqref{equ: Bessel coe M,be1,be2} and the $\pm$ signs correspond to $z\in\mathbb{C}^{\pm}\cap U^{(a_j)}$, respectively.  
\end{proof}

\subsection{Local parametrix at $b_j$}
Let 
\begin{align}\label{def: local map ub_j}
u_{b_j}(z):=-\left(g(z)-\frac{\Omega_j}{2}\right)^{2},
\end{align}
where $\Omega_j$ is given in \eqref{equ:Omega_j} and define
\begin{align}
  P^{(b_j)}(z):=E^{(b_j)}(z)\sigma_3\Phi_{\rm Be}(s^2u_{b_j}(z))\sigma_3H(z)e^{\left(isg(z)-\sign{b_j}\frac{(\alpha-\beta)\pi i}{2}\right)\sigma_3}, \label{def:sol of local near bj}
  \end{align}
where
\begin{align}
E^{(b_j)}(z):=P^{(\infty)}(z)e^{-\left(is\frac{\Omega_j}{2}-\sign{b_j}\frac{\alpha-\beta}{2}\pi i\right)\sigma_3}{H(z)}^{-1}M\left(\pi su_{b_j}(z)^{\frac{1}{2}}\right)^{\frac{1}{2}\sigma_3}, 
 \label{def:analytic factor Eb_j}
\end{align}
$H$ and $M$ are given in \eqref{equ:constant matrix H and M}. 
As $z\to b_j$, one has
\begin{align}\label{equ: u_bj expansion at b_j}
u_{b_j}(z)=c_{b_j}(z-b_j)\left(1+\mathcal{O}\left(z-b_j\right)\right), \quad 
c_{b_j}=-\frac{4\mathrm{p}(b_j)^2}{\prod_{\substack{k=0 \\ k\neq j}}^{n}(b_j-b_k)\prod_{k=0}^{n}(b_j-a_k)}<0,
\end{align}
we then have the following analogue of Proposition \ref{prop:local Pa_j}.
\begin{proposition}\label{prop:local Pb_j}
    The local parametrix defined in 
    \eqref{def:sol of local near bj} solves the RH problem for $P^{(b_j)}$ for $j=0,1,\dots,n$.
\end{proposition}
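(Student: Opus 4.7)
\textbf{Proof proposal for Proposition \ref{prop:local Pb_j}.} The plan is to mirror the proof of Proposition \ref{prop:local Pa_j} step by step. The only genuinely new feature comes from the sign of $c_{b_j}$ in \eqref{equ: u_bj expansion at b_j}: one has $c_{b_j}<0$ whereas $c_{a_j}>0$. Consequently $u_{b_j}$ is orientation-reversing as a conformal map from $U^{(b_j)}$ onto a neighbourhood of the origin, and the orientation of the local lens contours around $b_j$ is reversed relative to those around $a_j$. The $\sigma_3$-conjugation of $\Phi_{\rm Be}$ in \eqref{def:sol of local near bj} and the replacement of $M^{-1}$ by $M$ in \eqref{def:analytic factor Eb_j} are designed precisely to compensate for this reversal.

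First I would verify that $E^{(b_j)}$ is analytic on $U^{(b_j)}$. The only potential jump contour inside $U^{(b_j)}$ is $(b_j-\varrho,b_j+\varrho)$. On $(b_j,b_j+\varrho)$ neither $P^{(\infty)}$ nor $u_{b_j}^{1/2}$ has a jump, so continuity there is immediate. On $(b_j-\varrho,b_j)\subset(a_j,b_j)$ one combines the jump \eqref{equ:jump of Pinfty} of $P^{(\infty)}_{\pm}$ with the branch jump $u_{b_j,+}^{1/2}=-u_{b_j,-}^{1/2}$ coming from \eqref{def: local map ub_j} and the conjugation identity relating $M$ to the off-diagonal jump matrix; a short matrix computation, together with the piecewise constant factor $e^{\pm\sign{b_j}(\alpha-\beta)\pi i\sigma_3/2}$, yields complete cancellation. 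The isolated singularity at $z=b_j$ is then removable thanks to \eqref{equ: Pinfty expansion at z=p} and the linear vanishing \eqref{equ: u_bj expansion at b_j}, exactly as in the $a_j$ case.

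Once analyticity of $E^{(b_j)}$ is in hand, the jump condition \eqref{eq:localjump} on the three local lens contours follows from item (b) of Proposition \ref{prop:g-func}, the known jumps of $\Phi_{\rm Be}$, and the fact that $\sigma_3$-conjugation flips the sign of the off-diagonal entries of lower/upper triangular jump matrices, turning the $\Phi_{\rm Be}$ jumps into the jumps of $J_S$ near $b_j$ listed in \eqref{equ:jump of S}. The behavior in item (c) then follows from \eqref{asy: PhiBe at 0} combined with \eqref{equ: u_bj expansion at b_j}. For the matching condition in item (d), I would use the large-argument expansion \eqref{equ:asy of PhiBe at infty} together with the square-root identity $u_{b_j}(z)^{1/2}=\pm i(g(z)-\Omega_j/2)$ for $z\in\mathbb{C}^\pm\cap U^{(b_j)}$, where the signs are reversed relative to \eqref{equ:relation uaj and g-omegaj/2-1} because of the orientation reversal. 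Substitution into \eqref{def:sol of local near bj} then yields $P^{(b_j)}(z)\,{P^{(\infty)}(z)}^{-1}=I+\mathcal{O}(s^{-1})$ uniformly on $\partial U^{(b_j)}$, in direct analogy with \eqref{equ:PajPinfty-1 est}, with the subleading term given by the same formula up to the insertion of $\sigma_3 \Phi_{\rm Be,1}\sigma_3$ in place of $\Phi_{\rm Be,1}$ and a flipped global sign.

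The main difficulty is purely bookkeeping: one has to line up the negative sign of $c_{b_j}$, the branches of $u_{b_j}^{1/2}$ above and below the real axis, the effect of $\sigma_3$-conjugation on the Bessel jumps, and the interplay with the factor $\sign{b_j}(\alpha-\beta)\pi i/2$ in the exponential. Once these signs and orientations are correctly tracked, no new analytical idea beyond those already used in Proposition \ref{prop:local Pa_j} is required.
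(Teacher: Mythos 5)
Your overall plan — mirror the proof of Proposition \ref{prop:local Pa_j} with the modifications forced by $c_{b_j}<0$ — is the right approach, and it is what the paper intends by presenting the statement as an ``analogue'' and omitting the proof. However, you have misread what the sign $c_{b_j}<0$ actually does, and this has propagated into several concrete sign and location errors that would derail the computations you describe.

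First, $u_{b_j}$ is not ``orientation-reversing.'' It is a holomorphic function with nonvanishing derivative at $b_j$, hence a conformal, orientation-preserving map. The effect of the negative derivative $c_{b_j}<0$ is a rotation by $\pi$ to leading order: $u_{b_j}$ sends $\mathbb{C}^{+}\cap U^{(b_j)}$ into $\mathbb{C}^{-}$ and vice versa. That is a half-plane swap, not an orientation reversal; this distinction is precisely why the paper conjugates by $\sigma_3$ rather than composing with complex conjugation.

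Second, you have the branch structure of $u_{b_j}^{1/2}$ exactly backwards. Since $c_{b_j}<0$, we have $u_{b_j}(z)<0$ for $z\in(b_j,b_j+\varrho)$ and $u_{b_j}(z)>0$ for $z\in(b_j-\varrho,b_j)$. Hence the cut of the principal-branch square root lies on $(b_j,b_j+\varrho)$, while $u_{b_j}^{1/2}$ is continuous across $(b_j-\varrho,b_j)$. You claim the opposite. Consequently your proposed cancellation on $(b_j-\varrho,b_j)$ — ``combine the jump of $P^{(\infty)}$ with the branch jump of $u_{b_j}^{1/2}$'' — does not match the actual geometry: on $(b_j-\varrho,b_j)\subset(a_j,b_j)$ only $P^{(\infty)}$ jumps (it is inside the cut), and on $(b_j,b_j+\varrho)$ only $u_{b_j}^{1/2}$ has a branch discontinuity while $P^{(\infty)}$ is analytic. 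The mechanism is therefore a mirror image of the $a_j$ case, with the roles of the two half-neighbourhoods swapped, not a combined cancellation on one side.

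Third, the sign in the square-root identity is not reversed. One checks directly that on $(b_j,b_j+\varrho)$, $g-\Omega_j/2>0$ (as $g$ is increasing there by Proposition \ref{prop:g-func}(d), or for $j=n$ since $\mathrm{p}>0$ and $\sqrt{\mathcal{R}}>0$ to the right of $b_n$), and that $u_{b_j}$ sends the boundary value from $\mathbb{C}^{+}$ to the negative real axis approached from $\mathbb{C}^{-}$, whence $u_{b_j,+}^{1/2}=-i(g-\Omega_j/2)$. Thus $u_{b_j}(z)^{1/2}=\mp i\left(g(z)-\Omega_j/2\right)$ for $z\in\mathbb{C}^{\pm}\cap U^{(b_j)}$, identical to \eqref{equ:relation uaj and g-omegaj/2-1}. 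Correspondingly, there is no ``flipped global sign'' in the matching estimate: compare \eqref{equ:PbjPinfty-1 est} with \eqref{equ:PajPinfty-1 est} — the $\pm$ prefactor is the same, and the only change is the insertion of $\sigma_3\Phi_{\rm Be,1}\sigma_3$ in place of $\Phi_{\rm Be,1}$. If you actually computed with your signs, the $I$ term in the matching would fail to emerge.

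In summary, the architecture of your argument (analyticity of $E^{(b_j)}$, jump verification, endpoint behaviour, matching) is sound and parallels what the paper expects, but the bookkeeping you claim is ``the only genuine difficulty'' is precisely where you have gone wrong: track carefully that $u_{b_j}$ swaps half-planes while remaining orientation-preserving, that the branch cut of $u_{b_j}^{1/2}$ falls to the right of $b_j$, and that the square-root identity and the $\pm$ in the matching formula retain the same signs as at $a_j$.
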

For later use, we note that, as $s \to +\infty$,
\begin{multline}\label{equ:PbjPinfty-1 est}
 P^{(b_j)}(z){P^{(\infty)}(z)}^{-1}=\\
 I\pm
 \frac{P^{(\infty)}(z)e^{\left(-is\frac{\Omega_j}{2}+\sign{b_j}\frac{(\alpha-\beta)\pi i}{2}\right)\sigma_3}\sigma_3\Phi_{\rm Be,1}\sigma_3
 e^{\left(is\frac{\Omega_j}{2}-\sign{b_j}\frac{(\alpha-\beta)\pi i} {2}\right)\sigma_3}{P^{(\infty)}(z)}^{-1}}{su_{b_j}(z)^{\frac{1}{2}}}
 +\mathcal{O}(s^{-2}),
\end{multline}
uniformly for $z\in\partial U^{(b_j)}$, where the $\pm$ signs correspond to $z\in\mathbb{C}^{\pm}\cap U^{(b_j)}$, respectively.  

\section{The small norm RH problem}\label{sec: small norm RH problem}
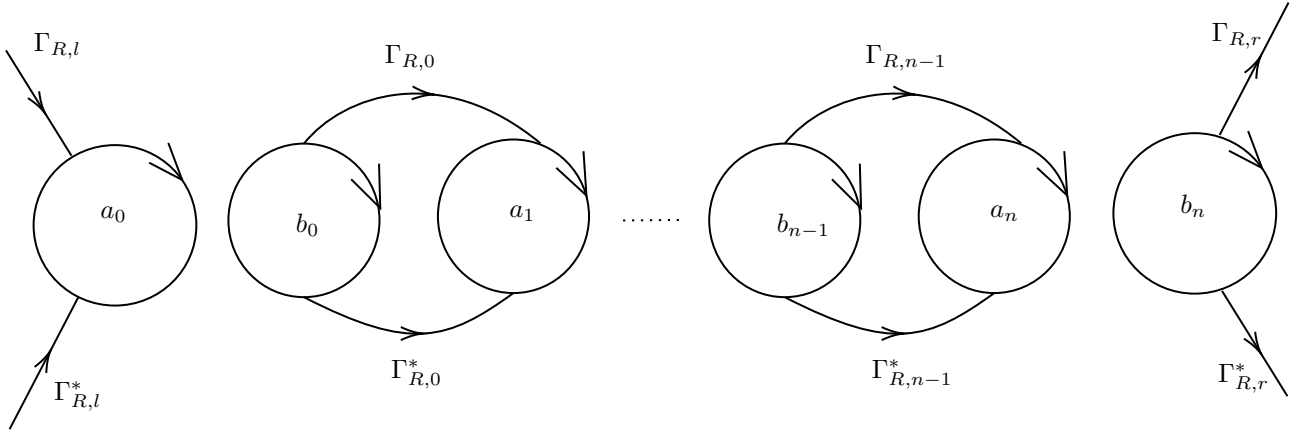
\begin{figure}[htbp]
\centering
\tikzset{every picture/.style={line width=0.75pt}} 
\begin{tikzpicture}[x=0.75pt,y=0.75pt,yscale=-1,xscale=1]
\draw   (23.76,141.46) .. controls (23.76,119.17) and (41.95,101.1) .. (64.38,101.1) .. controls (86.81,101.1) and (105,119.17) .. (105,141.46) .. controls (105,163.74) and (86.81,181.81) .. (64.38,181.81) .. controls (41.95,181.81) and (23.76,163.74) .. (23.76,141.46) -- cycle ;
\draw   (121,138.96) .. controls (121,117.63) and (137.88,100.34) .. (158.71,100.34) .. controls (179.54,100.34) and (196.42,117.63) .. (196.42,138.96) .. controls (196.42,160.29) and (179.54,177.58) .. (158.71,177.58) .. controls (137.88,177.58) and (121,160.29) .. (121,138.96) -- cycle ;
\draw   (225.58,136.9) .. controls (225.58,115.57) and (242.46,98.28) .. (263.29,98.28) .. controls (284.12,98.28) and (301,115.57) .. (301,136.9) .. controls (301,158.23) and (284.12,175.52) .. (263.29,175.52) .. controls (242.46,175.52) and (225.58,158.23) .. (225.58,136.9) -- cycle ;
\draw    (158.71,100.34) .. controls (178.32,75.63) and (227.48,59.01) .. (276.75,100.2) ;
\draw [shift={(223.6,75.7)}, rotate = 184.46] [color={rgb, 255:red, 0; green, 0; blue, 0 }  ][line width=0.75]    (10.93,-3.29) .. controls (6.95,-1.4) and (3.31,-0.3) .. (0,0) .. controls (3.31,0.3) and (6.95,1.4) .. (10.93,3.29)   ;
\draw    (158.71,177.58) .. controls (212.51,205.39) and (230.61,199.21) .. (263.29,175.52) ;
\draw [shift={(218,195.91)}, rotate = 180.06] [color={rgb, 255:red, 0; green, 0; blue, 0 }  ][line width=0.75]    (10.93,-3.29) .. controls (6.95,-1.4) and (3.31,-0.3) .. (0,0) .. controls (3.31,0.3) and (6.95,1.4) .. (10.93,3.29)   ;
\draw   (91,100) -- (97.93,118.62) -- (81.41,109.93) ;
\draw   (196.18,110.44) -- (196.8,132.79) -- (182.28,118.29) ;
\draw   (298.24,104.15) -- (299.72,126.46) -- (284.66,112.52) ;
\draw    (10,53.47) -- (42.63,106.39) ;
\draw [shift={(29.46,85.04)}, rotate = 238.35] [color={rgb, 255:red, 0; green, 0; blue, 0 }  ][line width=0.75]    (10.93,-3.29) .. controls (6.95,-1.4) and (3.31,-0.3) .. (0,0) .. controls (3.31,0.3) and (6.95,1.4) .. (10.93,3.29)   ;
\draw    (46.18,177.4) -- (11.78,244) ;
\draw [shift={(32.19,204.48)}, rotate = 117.32] [color={rgb, 255:red, 0; green, 0; blue, 0 }  ][line width=0.75]    (10.93,-3.29) .. controls (6.95,-1.4) and (3.31,-0.3) .. (0,0) .. controls (3.31,0.3) and (6.95,1.4) .. (10.93,3.29)   ;
\draw  [dash pattern={on 0.84pt off 2.51pt}]  (318,139) -- (349.02,138.72) ;
\draw   (361,138.96) .. controls (361,117.63) and (377.88,100.34) .. (398.71,100.34) .. controls (419.54,100.34) and (436.42,117.63) .. (436.42,138.96) .. controls (436.42,160.29) and (419.54,177.58) .. (398.71,177.58) .. controls (377.88,177.58) and (361,160.29) .. (361,138.96) -- cycle ;
\draw   (465.58,136.9) .. controls (465.58,115.57) and (482.46,98.28) .. (503.29,98.28) .. controls (524.12,98.28) and (541,115.57) .. (541,136.9) .. controls (541,158.23) and (524.12,175.52) .. (503.29,175.52) .. controls (482.46,175.52) and (465.58,158.23) .. (465.58,136.9) -- cycle ;
\draw    (398.71,100.34) .. controls (418.32,75.63) and (467.48,59.01) .. (516.75,100.2) ;
\draw [shift={(463.6,75.7)}, rotate = 184.46] [color={rgb, 255:red, 0; green, 0; blue, 0 }  ][line width=0.75]    (10.93,-3.29) .. controls (6.95,-1.4) and (3.31,-0.3) .. (0,0) .. controls (3.31,0.3) and (6.95,1.4) .. (10.93,3.29)   ;
\draw    (398.71,177.58) .. controls (452.51,205.39) and (470.61,199.21) .. (503.29,175.52) ;
\draw [shift={(458,195.91)}, rotate = 180.06] [color={rgb, 255:red, 0; green, 0; blue, 0 }  ][line width=0.75]    (10.93,-3.29) .. controls (6.95,-1.4) and (3.31,-0.3) .. (0,0) .. controls (3.31,0.3) and (6.95,1.4) .. (10.93,3.29)   ;
\draw   (436.18,110.44) -- (436.8,132.79) -- (422.28,118.29) ;
\draw   (538.24,104.15) -- (539.72,126.46) -- (524.66,112.52) ;
\draw   (562.76,135.46) .. controls (562.76,113.17) and (580.95,95.1) .. (603.38,95.1) .. controls (625.81,95.1) and (644,113.17) .. (644,135.46) .. controls (644,157.74) and (625.81,175.81) .. (603.38,175.81) .. controls (580.95,175.81) and (562.76,157.74) .. (562.76,135.46) -- cycle ;
\draw   (630,94) -- (636.93,112.62) -- (620.41,103.93) ;
\draw    (617,174.47) -- (649.63,227.39) ;
\draw [shift={(636.46,206.04)}, rotate = 238.35] [color={rgb, 255:red, 0; green, 0; blue, 0 }  ][line width=0.75]    (10.93,-3.29) .. controls (6.95,-1.4) and (3.31,-0.3) .. (0,0) .. controls (3.31,0.3) and (6.95,1.4) .. (10.93,3.29)   ;
\draw    (650.18,29.4) -- (615.78,96) ;
\draw [shift={(636.19,56.48)}, rotate = 117.32] [color={rgb, 255:red, 0; green, 0; blue, 0 }  ][line width=0.75]    (10.93,-3.29) .. controls (6.95,-1.4) and (3.31,-0.3) .. (0,0) .. controls (3.31,0.3) and (6.95,1.4) .. (10.93,3.29)   ;
\draw (55.71,130.55) node [anchor=north west][inner sep=0.75pt]  [font=\small]  {$a_{0}$};
\draw (153.22,133.94) node [anchor=north west][inner sep=0.75pt]  [font=\small]  {$b_{0}$};
\draw (259.82,129.82) node [anchor=north west][inner sep=0.75pt]  [font=\small]  {$a_{1}$};
\draw (22.36,43.22) node [anchor=north west][inner sep=0.75pt]  [font=\small]  {$\Gamma _{R,l}$};
\draw (32.57,218.58) node [anchor=north west][inner sep=0.75pt]  [font=\small]  {$\Gamma _{R,l}^{*}$};
\draw (197.58,49.28) node [anchor=north west][inner sep=0.75pt]  [font=\small]  {$\Gamma _{R,0}$};
\draw (200.67,207.36) node [anchor=north west][inner sep=0.75pt]  [font=\small]  {$\Gamma _{R,0}^{*}$};
\draw (393.22,133.94) node [anchor=north west][inner sep=0.75pt]  [font=\small]  {$b_{n-1}$};
\draw (499.82,129.82) node [anchor=north west][inner sep=0.75pt]  [font=\small]  {$a_{n}$};
\draw (437.58,49.28) node [anchor=north west][inner sep=0.75pt]  [font=\small]  {$\Gamma _{R,n-1}$};
\draw (440.67,207.36) node [anchor=north west][inner sep=0.75pt]  [font=\small]  {$\Gamma _{R,n-1}^{*}$};
\draw (594.71,124.55) node [anchor=north west][inner sep=0.75pt]  [font=\small]  {$b_{n}$};
\draw (610.36,38.22) node [anchor=north west][inner sep=0.75pt]  [font=\small]  {$\Gamma _{R,r}$};
\draw (613.57,209.58) node [anchor=north west][inner sep=0.75pt]  [font=\small]  {$\Gamma _{R,r}^{*}$};
\end{tikzpicture}
\caption{The jump contours of the RH problem for $R$.}
\label{fig:RHP R}
\end{figure}
Define
\begin{align}\label{def: R}
R(z):=\left\{
\begin{aligned}
& S(z){P^{(p)}(z)}^{-1}, && z\in U^{(p)}, \ p\in\mathcal{I}_{e},\\
& S(z){P^{(\infty)}(z)}^{-1}, && z\in\mathbb{C}\setminus 
\cup_{p\in\mathcal{I}_{e}} U^{(p)}.
\end{aligned}
\right.
\end{align}
In view of RH problems for $S$, $P^{(p)}$ and $P^{(\infty)}$, it is then readily seen that $R$ solves the following RH problem.
\paragraph{RH problem for $R$}
\begin{itemize}
\item [\rm (a)] $R(z)$ is holomorphic for $z\in\mathbb{C}\setminus\Gamma_{R}$, where 
\begin{align}
\Gamma_{R}:=\left(\gamma_{R}\cup\gamma_{R}^{*}\cup\bigcup_{p\in\mathcal{I}_{e}}\partial U^{(p)}\right)\setminus\bigcup_{p\in\mathcal{I}_{e}}U^{(p)},
\end{align}
with 
\begin{align}
&\gamma_{R}:=\Gamma_{R,l}\cup\Gamma_{R,r}\cup
\bigcup_{\substack{j=0}}^{n-1}\Gamma_{R,j}, \qquad \gamma_{R}^*:=\Gamma_{R,l}^{*}\cup\Gamma_{R,r}^{*}\cup
\bigcup_{\substack{j=0}}^{n-1}\Gamma_{R,j}^{*};
\end{align}
see Figure \ref{fig:RHP R} for an illustration and the orientation.
\item [\rm (b)] For $z\in\Gamma_{R}$, we have 
\begin{align}
R_{+}(z)=R_{-}(z)J_{R}(z), 
\end{align}
where
\begin{align}\label{equ:J_{R}}
J_{R}(z)=\left\{
\begin{aligned}
& P^{(\infty)}(z)J_{S}(z){P^{(\infty)}(z)}^{-1}, &&z\in \Gamma_{R}\setminus \cup_{p\in\mathcal{I}_{e}} U^{(p)}, \\
& P^{(p)}(z){P^{(\infty)}(z)}^{-1}, &&z\in\cup_{p\in\mathcal{I}_{e}} \partial U^{(p)}.
\end{aligned}
\right.
\end{align}
\item [\rm (c)]  As $z\to\infty$, we have
\begin{align}\label{eq:R(z) asym as z to infty}
R(z)=I+\frac{R_1(s)}{z}+\mathcal{O}(z^{-2}),
\end{align}
where $R_1$ is independent of $z$.
\end{itemize}
For large positive $s$, we have the following estimate for the jump matrix $J_{R}$ 
defined in \eqref{equ:J_{R}}.
\begin{proposition}\label{prop:JRest}
As $s\to+\infty$, we have
\begin{align}\label{equ:J_{R} asy as s to infty}
J_{R}(z)=\left\{
\begin{aligned}
    & I+\mathcal{O}\left(e^{-c_{0}s|z|}\right), && z\in\Gamma_{R}\setminus\cup_{p\in\mathcal{I}_{e}}\partial U^{(p)},\\
& I+\frac{J_{R}^{(1)}(z)}{s}+\mathcal{O}(s^{-2}), && z\in\cup_{p\in\mathcal{I}_{e}}\partial U^{(p)},
\end{aligned}
\right.
\end{align}
where $c_0$ is a small positive constant and 
\begin{align}\label{equ:J_R in neighbor}
J_{R}^{(1)}(z)=\left\{
\begin{aligned}
&\pm \frac{P^{(\infty)}(z)e^{\left(-is\frac{\Omega_j}{2}+\sign{a_j}\frac{(\alpha-\beta)\pi i}{2}\right)\sigma_3}\Phi_{\rm Be,1}
e^{\left(is\frac{\Omega_j}{2}-\sign{a_j}\frac{(\alpha-\beta)\pi i} {2}\right)\sigma_3}{P^{(\infty)}(z)}^{-1}}{u_{a_j}(z)^{\frac{1}{2}}}, &z\in\partial U^{(a_j)},\\
&\pm \frac{P^{(\infty)}(z)e^{\left(-is\frac{\Omega_j}{2}+\sign{b_j}\frac{(\alpha-\beta)\pi i}{2}\right)\sigma_3}\sigma_3\Phi_{\rm Be,1}\sigma_3
e^{\left(is\frac{\Omega_j}{2}-\sign{b_j}\frac{(\alpha-\beta)\pi i} {2}\right)\sigma_3}{P^{(\infty)}(z)}^{-1}}{u_{b_j}(z)^{\frac{1}{2}}}, &z\in\partial U^{(b_j)}.
\end{aligned}
\right.
\end{align}
\end{proposition}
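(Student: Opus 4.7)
The plan is to split $\Gamma_R$ into the union of local circles $\bigcup_{p\in\mathcal{I}_e}\partial U^{(p)}$ and its complement, and verify each of the two estimates in \eqref{equ:J_{R} asy as s to infty} on the corresponding piece. Both ingredients have essentially been prepared in the preceding sections, so the proof amounts to assembling them carefully.

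For $z\in\Gamma_R\setminus\bigcup_p\partial U^{(p)}$, formula \eqref{equ:J_{R}} gives $J_R(z)=P^{(\infty)}(z)J_S(z)P^{(\infty)}(z)^{-1}$. The first step would be to justify \eqref{equ:JS-I} by inspecting \eqref{equ:jump of S} entry by entry: every off-diagonal factor of $J_S-I$ contains $e^{\pm 2isg(z)}$, and this exponential decays as $s\to+\infty$ on each constituent piece of the contour. On the outer rays $\Gamma_{S,l},\Gamma_{S,r}$ and their conjugates, decay follows from item (c) of Proposition \ref{prop:g-func}, which gives $g(z)\sim z$ at infinity and hence $\pm\im g(z)>0$ on the rays chosen into $\mathbb{C}^\pm$. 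On the lens edges $\Gamma_{S,j},\Gamma_{S,j}^*$, the lenses were constructed inside the neighborhood $U$ of Proposition \ref{prop:g-func}(d), where $\pm\im g(z)>0$ with strict inequality off the gaps. Since the contour under consideration stays a positive distance away from $\mathcal{I}_e$, both $P^{(\infty)}(z)$ and its inverse are uniformly bounded there, and conjugation by a uniformly bounded matrix preserves the exponential decay. This gives the first line of \eqref{equ:J_{R} asy as s to infty}.

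For $z\in\partial U^{(p)}$ we have $J_R(z)=P^{(p)}(z)P^{(\infty)}(z)^{-1}$, and the required expansion is already encoded in the matching estimates \eqref{equ:PajPinfty-1 est} and \eqref{equ:PbjPinfty-1 est} obtained when the Bessel parametrices were constructed. Extracting the factor $1/s$ from those formulas produces exactly the expression for $J_R^{(1)}(z)$ displayed in \eqref{equ:J_R in neighbor}, while the $\mathcal{O}(s^{-2})$ remainder comes from the next order in the asymptotic expansion \eqref{equ:asy of PhiBe at infty} of $\Phi_{\rm Be}$ at infinity. The uniformity in $z\in\partial U^{(p)}$ is automatic because $\partial U^{(p)}$ is compact and $u_p(z)^{1/2}$ is bounded away from zero there.

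The only subtle bookkeeping is the $\pm$ sign in \eqref{equ:J_R in neighbor}, which is the principal (and quite mild) obstacle: it comes from the branch identification $u_p(z)^{1/2}=\mp i(g(z)-\Omega_j/2)$ in $\mathbb{C}^\pm$ given by \eqref{equ:relation uaj and g-omegaj/2-1}, and must be aligned with the sign produced when $\Phi_{\rm Be,1}/\zeta$ is substituted into \eqref{def:sol of local near aj} or \eqref{def:sol of local near bj} and then expanded. Once these sign conventions are tracked consistently between the upper and lower halves of each disk, the second line of \eqref{equ:J_{R} asy as s to infty} with $J_R^{(1)}$ as in \eqref{equ:J_R in neighbor} follows, completing the proof.
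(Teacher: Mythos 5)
Your proposal reproduces the paper's argument: for $z$ away from the disks you conjugate $J_S-I$ by the bounded matrix $P^{(\infty)}$ and invoke the exponential decay \eqref{equ:JS-I}, while on $\partial U^{(p)}$ you read off the expansion from the matching estimates \eqref{equ:PajPinfty-1 est} and \eqref{equ:PbjPinfty-1 est}. Your additional commentary on how items (c) and (d) of Proposition \ref{prop:g-func} yield \eqref{equ:JS-I}, and on the origin of the $\pm$ sign from the branch of $u_p^{1/2}$, is a faithful elaboration of exactly what the paper cites rather than a different route.
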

\begin{proof}
For $z\in\Gamma_{R}\setminus\cup_{p\in\mathcal{I}_{e}}U^{(p)}$, one has from 
\eqref{equ:J_{R}} that
\begin{align}
J_{R}(z)-I=P^{(\infty)}(z)(J_{S}(z)-I){P^{(\infty)}(z)}^{-1}.
\end{align}
On account of \eqref{def:Q(z)} and \eqref{equ:sol of Pinfty}, we have that $P^{(\infty)}(z)$ is bounded as $s\to+\infty$ uniformly for $z$ bounded away from the $\varrho$-neighborhood of $p\in\mathcal{I}_{e}$. This, together with \eqref{equ:JS-I}, gives us the first estimate in \eqref{equ:J_{R} asy as s to infty}. For $z\in\partial U^{(p)}$, by using \eqref{equ:PajPinfty-1 est} and \eqref{equ:PbjPinfty-1 est}, we arrive at the desired estimate.
\end{proof}

Combining Proposition \ref{prop:JRest} and the standard small norm theory \cites{DZ-Annals-1993}, we conclude that there exists a unique solution to the RH problem for $R$ for sufficiently large $s$. In addition, 
\begin{align}\label{equ: R(z) expansion as s to infty}
R(z)=I+\frac{R^{(1)}(z)}{s}+\mathcal{O}(s^{-2}), \quad s\to+\infty,
\end{align}
uniformly for $z\in\mathbb{C}\setminus{\Gamma_{R}}$. Together with \eqref{equ:J_{R} asy as s to infty} and the integral equation
\begin{align}\label{equ: R(z) BC expression}
R(z)=I+\frac{1}{2\pi i}\int_{\Gamma_{R}}\frac{R_{-}(\xi)(J_{R}(\xi)-I)}{\xi-z}\dif\xi, \quad 
z\in\mathbb{C}\setminus\Gamma_{R},
\end{align}
it is deduced that
\begin{align}\label{eq:R^(1)(z) except the disk edge}
R^{(1)}(z)=\frac{1}{2\pi i}\int_{\cup_{p\in\mathcal{I}_{e}}\partial U^{(p)}}\frac{J_{R}^{(1)}(\xi)}{\xi-z}\dif\xi, \quad z\in\mathbb{C}\setminus\cup_{p\in\mathcal{I}_{e}}\partial U^{(p)}.
\end{align}
It's readily seen from \eqref{equ:J_R in neighbor} that $J_R^{(1)}$ 
admits an analytic continuation from $\cup_{p\in\mathcal{I}_{e}}\partial U^{(p)}$ to  $\cup_{p\in\mathcal{I}_{e}} \overline{{U}^{(p)}}\setminus{\left\{p\right\}}$ and
\begin{align}
    J_R^{(1)}(z)=(J_R^{(1)})^{(-1)}_{p}(z-p)^{-1}+\mathcal{O}(1) \qquad z \to p.
\end{align}
For $z\in\mathbb{C}\setminus\cup_{p\in\mathcal{I}_{e}}U^{(p)}$, by the residue theorem, we have 
\begin{align}
    R^{(1)}(z)=\sum_{p\in\mathcal{I}_{e}}\frac{1}{z-p}(J_R^{(1)})^{(-1)}_{p}, 
\end{align}
and moreover
\begin{align}\label{eq: R^{(1)}(z) as z to infty}
    R^{(1)}(z)=\frac{1}{z}\sum_{p\in\mathcal{I}_{e}}(J_R^{(1)})^{(-1)}_{p}+\mathcal{O}(z^{-2}), \quad {\rm as} \ z\to \infty.
\end{align}
It's then inferred from \eqref{eq:R(z) asym as z to infty} and \eqref{eq: R^{(1)}(z) as z to infty} that
\begin{align}\label{equ: asy for R_1(s) and R(1)_1}
    R_1(s)=\frac{R_1^{(1)}}{s}+\mathcal{O}(s^{-2}), \quad \quad R_1^{(1)}=\sum_{p\in\mathcal{I}_{e}}(J_R^{(1)})^{(-1)}_{p}.
\end{align}
Using \eqref{equ: Pinfty expansion at z=p}, \eqref{equ: Pinftyinverse expansion at z=p}, \eqref{equ: u_aj expansion at a_j}, \eqref{equ: u_bj expansion at b_j} and \eqref{equ:J_R in neighbor}, we have for $j=0,1,\dots,n$,
\begin{align}
&(J_{R}^{(1)})_{a_j}^{(-1)}=c_{a_j}^{-\frac{1}{2}}\left(P^{(\infty)}\right)_{a_j}^{(-\frac{1}{4})}e^{\left(-is\frac{\Omega_j}{2}+\sign{a_j}\frac{(\alpha-\beta)\pi i}{2}\right)\sigma_3}\Phi_{\rm Be,1}
e^{\left(is\frac{\Omega_j}{2}-\sign{a_j}\frac{(\alpha-\beta)\pi i} {2}\right)\sigma_3}\left(P^{(\infty)}_{\rm inv}\right)_{a_j}^{(-\frac{1}{4})}, \\
&(J_{R}^{(1)})_{b_j}^{(-1)}=c_{b_j}^{-\frac{1}{2}}\left(P^{(\infty)}\right)_{b_j}^{(-\frac{1}{4})}e^{\left(-is\frac{\Omega_j}{2}+\sign{b_j}\frac{(\alpha-\beta)\pi i}{2}\right)\sigma_3}\sigma_3\Phi_{\rm Be,1}\sigma_3
e^{\left(is\frac{\Omega_j}{2}-\sign{b_j}\frac{(\alpha-\beta)\pi i} {2}\right)\sigma_3}\left(P^{(\infty)}_{\rm inv}\right)_{b_j}^{(-\frac{1}{4})}.
\end{align}
In particular, we can write the $(1,1)$ and $(2,2)$-entries of $(J_{R}^{(1)})_{p}^{(-1)}$ in a unified way by using \eqref{equ:Pinftya_j(-1/4)} and \eqref{equ:Pinftyb_j(-1/4)}, i.e.,
\begin{align}\label{equ: JR(1)p(-1) (11) and (22)-a}
\left[(J_{R}^{(1)})_{p}^{(-1)}\right]_{11}=-\left[(J_{R}^{(1)})_{p}^{(-1)}\right]_{22}=\frac{1}{16}c_{p}^{-\frac{1}{2}}
\left(\mathcal{E}_{p}^{(-\frac{1}{4})}\right)^{2}
\frac{\mathcal{G}\left(\vec{\mathcal{A}}(p)+\vec{d}\right)\mathcal{G}\left(-\vec{\mathcal{A}}(p)-\vec{d}\right)}{\mathcal{G}\left(\vec{\mathcal{A}}(\infty)+\vec{d}\right)\mathcal{G}\left(-\vec{\mathcal{A}}(\infty)-\vec{d}\right)},
\end{align}
where one also needs to use the relations \eqref{equ:mathcalD_p}, \eqref{def:Vj}, \eqref{equ:jump of mathcalG-1} and \eqref{equ:jump of mathcalG-2} for further simplifications.


We are now ready to prove our main results.

\section{Proofs of the main results} \label{sec: proof of theorems}
\subsection{Proof of Theorem \ref{thm: general case}}
By tracing back the transformations $X \to T \to S \to R$ given by \eqref{def: T}, \eqref{def: S} and \eqref{def: R}, it follows that, for sufficiently large $z$,
\begin{align}
X(z)=e^{is\ell\sigma_3}R(z)P^{(\infty)}(z)e^{-isg(z)\sigma_3}.
\end{align}
Taking $z\to\infty$ and using \eqref{eq:gAsy}, \eqref{equ:asy Pinfty at z=infty} and \eqref{eq:R(z) asym as z to infty}, we obtain
\begin{align}
X(z)=\left(I+e^{is\ell\sigma_3}\frac{P^{(\infty)}_{1}+R_1+is\gamma_0\sigma_3}{z}e^{-is\ell\sigma_3}+\mathcal{O}(z^{-2})\right)z^{-\beta\sigma_3}e^{-isz\sigma_3}.
\end{align}
Comparing the above formula with \eqref{equ:X asym at inf}, it follows from \eqref{equ: asy for R_1(s) and R(1)_1} that, as $s\to+\infty$,
\begin{align}
{(X_{1}(s))}_{11}&=is\gamma_0+{(P^{(\infty)}_{1})}_{11}+\frac{{(R^{(1)}_{1})}_{11}}{s}+\mathcal{O}(s^{-2}), 
\\
{(X_{1}(s))}_{22}&=-is\gamma_0+{(P^{(\infty)}_{1})}_{22}+\frac{{(R^{(1)}_{1})}_{22}}{s}+\mathcal{O}(s^{-2}).
\end{align}
This, together with Proposition \ref{prop: differential identity}, implies 
\begin{multline}
    \label{equ: differential identity again}
     \partial_s\log\mathcal{F}(s\Sigma)=-2s\gamma_0+i\left({(P^{(\infty)}_{1})}_{11}-{(P^{(\infty)}_{1})}_{22}\right)+\frac{i}{s}\left({(R^{(1)}_{1})}_{11}-{(R^{(1)}_{1})}_{22}\right)
     \\
     -\frac{\alpha^2-\beta^2}{s}+\mathcal{O}(s^{-2}).
\end{multline}

We next rewrite $i\left({(P^{(\infty)}_{1})}_{11}-{(P^{(\infty)}_{1})}_{22}\right)$ as a derivative of a function. Recall the holomorphic one-form $\vec{\omega}=(\omega_1, \omega_2, \dots,\omega_{n})$ given in \eqref{equ:holo differential}, one has 
\begin{align}
\omega_j=\sum_{m=1}^{n}\frac{z^{m-1}\dif z}{\sqrt{\mathcal{R}(z)}}(\tilde{\mathbb{A}}^{-1})_{mj},\qquad j=1,\dots,n.
\end{align}
Since $1/\sqrt{\mathcal{R}(z)}=z^{-n-1}+\mathcal{O}(z^{-n-2})$ as $z\to\infty$, it follows 
\begin{align}\label{eq:omegajz}
\omega_j=\frac{(\tilde{\mathbb{A}}^{-1})_{nj}}{z^2}\left(1+\mathcal{O}(z^{-1})\right)\dif z, \qquad 
 \ z\to\infty. 
\end{align}
\begin{lemma}
We have 
\begin{align}\label{equ: tildeA = omega_j-0}
-4\pi i\cdot\left(\tilde{\mathbb{A}}^{-1}\right)_{nj}=\Omega_j=2\pi\frac{\dif V_j(s)}{\dif s}, \quad j=1,\dots,n,
\end{align}
where $V_{j}$ is given in \eqref{def:Vj}.
\end{lemma}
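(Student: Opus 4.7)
The second equality is immediate from \eqref{def:Vj}: since $\zeta_j$ is $s$-independent, $V_j(s) = s\Omega_j/(2\pi) + \mathrm{Im}(\zeta_j)/(2\pi)$, and differentiating gives $\Omega_j = 2\pi\,dV_j/ds$. For the first equality, my plan is to evaluate $\oint_{B_j} dg$ in two complementary ways, where $dg = \mathrm{p}(z)/\sqrt{\mathcal{R}(z)}\,dz$ is the meromorphic differential underlying the $g$-function.

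The first evaluation is direct. Parametrize $B_j$ as the lift of a planar path from some $q_0 \in (a_0,b_0)$ to some $q_j \in (a_j,b_j)$ on the first sheet, closed up by a return path on the second sheet. Since $\sqrt{\mathcal{R}}$ (hence $dg$) changes sign between the two sheets, the two halves combine into the sum of the upper-edge and lower-edge boundary-value integrals along the real segment from $q_0$ to $q_j$. The integrals over the full interior cuts $(a_k,b_k)$ with $1\le k\le j-1$ vanish by the $A$-normalization \eqref{equ:A-cycle=0}; the partial-cut pieces at $q_0$ and $q_j$ cancel between the $\pm$-boundary contributions since $\sqrt{\mathcal{R}}_+ = -\sqrt{\mathcal{R}}_-$ on each cut (cf.\ \eqref{equ:mathcalR-jump-relation1}). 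What survives is $\oint_{B_j} dg = 2\sum_{k=0}^{j-1}\int_{b_k}^{a_{k+1}} \mathrm{p}(s)/\sqrt{\mathcal{R}(s)}\,ds$, which is exactly $\Omega_j$ by \eqref{equ:hatOmega_k}--\eqref{equ:Omega_j}.

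The second evaluation uses Riemann's bilinear identity applied to the pair $(\omega_j,\,dg)$: because $\omega_j$ is $A$-normalized to $\delta_{jk}$ and all $A$-periods of $dg$ vanish by \eqref{equ:A-cycle=0}, the identity reduces to
\begin{equation*}
\oint_{B_j} dg \;=\; 2\pi i \sum_{P\in\pi^{-1}(\infty)} \mathrm{Res}_{P}\bigl(g\,\omega_j\bigr),
\end{equation*}
where the sum runs over the two points over $z=\infty$ on $\mathcal{W}$. In the local coordinate $t = 1/z$, \eqref{eq:gAsy} gives $g = 1/t + O(1)$ on the first sheet and $g = -1/t + O(1)$ on the second (from the branch change of $\sqrt{\mathcal{R}}$), while \eqref{eq:omegajz} translates to $\omega_j = \mp(\tilde{\mathbb{A}}^{-1})_{n,j}\,dt + O(t)\,dt$ with the upper/lower sign on the first/second sheet. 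Each infinity therefore contributes a residue of $-(\tilde{\mathbb{A}}^{-1})_{n,j}$, so $\oint_{B_j} dg = -4\pi i(\tilde{\mathbb{A}}^{-1})_{n,j}$; equating this with $\Omega_j$ from the direct computation proves the claim. The main obstacle is the careful sign bookkeeping: both the contour deformation and the residue calculation are sensitive to the branch of $\sqrt{\mathcal{R}}$ on each edge of each cut and to the orientation of $B_j$ matching the one implicit in \eqref{def:tau matrix}, and a misplaced sign at any stage alters the final sign relation between $\Omega_j$ and $(\tilde{\mathbb{A}}^{-1})_{n,j}$.
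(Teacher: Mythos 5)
Your overall strategy is the same as the paper's: evaluate $\oint_{B_j}\dif g$ once via Riemann's bilinear (reciprocity) relation and once directly, then equate the two. The one point that needs care, as you yourself flag, is the sign bookkeeping, and your intermediate statements do come out with the opposite sign from the paper's. The paper invokes the reciprocity relation in the form of its equation \eqref{equ: Riemann bilinear relation}, with $\res(f\cdot \dif g)$ on the right and $f$ a local primitive of $\omega_j$; you write it with $\res(g\,\omega_j)$ instead. Because $\dif(fg)=g\,\omega_j+f\,\dif g$ is exact, $\res(g\,\omega_j)=-\res(f\,\dif g)$, so the standard reciprocity relation for $A$-normalized $\omega_j$ and $\oint_{A}\dif g=0$ is $\oint_{B_j}\dif g=-2\pi i\sum_{P}\res_{P}(g\,\omega_j)=+4\pi i(\tilde{\mathbb{A}}^{-1})_{nj}$, not $-4\pi i(\tilde{\mathbb{A}}^{-1})_{nj}$ as you write. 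Correspondingly, the paper's direct evaluation (its \eqref{equ: direct compu for B-cycle dg}) returns $\oint_{B_j}\dif g=-\Omega_j$, whereas your parametrization of $B_j$ gives $+\Omega_j$: your implicit traversal of $B_j$ is opposite to the one that makes $\im\tau>0$ in \eqref{def:tau matrix}. These two sign slips cancel and you land on the correct identity $\Omega_j=-4\pi i(\tilde{\mathbb{A}}^{-1})_{nj}$, and your residue computations at the two points over $z=\infty$ and the structure of the contour argument (full interior cuts killed by \eqref{equ:A-cycle=0}, partial-cut edges canceling, gaps contributing twice) are otherwise sound. You should, however, verify that the orientation you use for $B_j$ and the form of the reciprocity relation you quote are individually consistent with the paper's homology basis, rather than only jointly so.
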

\begin{proof}
It's readily seen from \eqref{equ: asy of p/mathcalR at infty} and \eqref{eq:gAsy} that $\dif g(z)=\mathrm{p}(z)\dif z/\sqrt{\mathcal{R}(z)}$ is a meromorphic 
Abelian differential of the second kind on $\mathcal{W}$. In terms of 
a local holomorphic coordinate $u=z^{-1}$, it follows from \eqref{equ: asy of p/mathcalR at infty} that 
\begin{align}\label{equ: asy of dg(u) as u to 0}
\dif g(u)=\mp(1+\gamma_0u^2+o(u^2))\frac{\dif u}{u^2}, \quad {\rm as} \ u\to 0,
\end{align}
on the first (respectively, second) sheet  of $\mathcal{W}$. Similarly, as $u\to0$, by \eqref{eq:omegajz},
\begin{align}
\omega_j=\mp\left(\tilde{\mathbb{A}}^{-1}\right)_{nj}(1+\mathcal{O}(u))\dif u
\end{align}
on the first (respectively, second) sheet  of $\mathcal{W}$. Applying the 
Riemann's bilinear identity (cf. \cite[Page 64, Equation (3.0.2)]{FK-Riemann-Surface}) to $\dif g(z)$ and $\omega_j$, it follows that
\begin{align}\label{equ: Riemann bilinear relation}
\sum_{k=1}^{n}\left[\left(\int_{A_k}\omega_j\right)\left(\int_{B_k}\dif g\right)-\left(\int_{B_k}\omega_j\right)\left(\int_{A_k}\dif g\right)\right]=2\pi i
\res_{u=0}(f\cdot\dif g),
\end{align}
where $f$ is a function such that $\dif f=\omega_k$ near $u=0$, and 
\begin{align}\label{equ: asy of f(u) at u=0}
f(u)=f(0)\mp\left(\tilde{\mathbb{A}}^{-1}\right)_{nj}u+\mathcal{O}(u^2), \quad {\rm as} \ u\to 0.
\end{align}
Together with \eqref{equ: asy of f(u) at u=0} and \eqref{equ: asy of dg(u) as u to 0}, we have
\begin{align}\label{equ: compu 2pi i resu=0}
2\pi i\res_{u=0}(f\cdot \dif g)= 4\pi i \cdot \left(\tilde{\mathbb{A}}^{-1}\right)_{nj}.
\end{align}
On the other hand, we see from \eqref{equ:A-cycle=0} and \eqref{equ: normalization relation of omega} that the left-hand side of \eqref{equ: Riemann bilinear relation} equals to $\int_{B_j}\dif g$.
A straightforward computation with the aid of \eqref{equ:Omega_j} and \eqref{equ:hatOmega_k} shows that 
\begin{align}\label{equ: direct compu for B-cycle dg}
\int_{B_j}\dif g=\sum_{l=1}^{j}2(-1)^{n-l}\int_{b_{l-1}}^{a_l}\frac{\mathrm{p}(z)}{|\mathcal{R}(z)|^{\frac{1}{2}}}\dif z=-\Omega_j. 
\end{align}
The first equality in \eqref{equ: tildeA = omega_j-0} then follows by inserting the above two formulae into \eqref{equ: Riemann bilinear relation}, and the second one is a consequence of the definition of $V_j$ given in \eqref{def:Vj}.
\end{proof}

To proceed, we see from \eqref{def:mathcalG} that 
\begin{align}\label{equ:nablamathcalG/mathcalG-a}
&\frac{\nabla\mathcal{G}\left(\vec{\mathcal{A}}(\infty)+\vec{d}\right)}{\mathcal{G}\left(\vec{\mathcal{A}}(\infty)+\vec{d}\right)}=\frac{\nabla\theta\left(\vec{\mathcal{A}}(\infty)+\vec{V}(s)+\vec{d}\right)}{\theta\left(\vec{\mathcal{A}}(\infty)+\vec{V}(s)+\vec{d}\right)}-
\frac{\nabla\theta\left(\vec{\mathcal{A}}(\infty)+\vec{d}\right)}{\theta\left(\vec{\mathcal{A}}(\infty)+\vec{d}\right)}.
\end{align}
As $\theta(\vec{z})=\theta(-\vec{z})$, if follows that 
$\partial_j\theta(\vec{z})=\partial_{z_j}\theta(z_j)=\partial_{z_j}\theta(-z_j)=-\partial_{j}\theta(-\vec{z})$, and thus
\begin{align}\label{equ:nablamathcalG/mathcalG-b}
&\frac{\nabla\mathcal{G}\left(-\vec{\mathcal{A}}(\infty)-\vec{d}\right)}{\mathcal{G}\left(-\vec{\mathcal{A}}(\infty)-\vec{d}\right)}=-\frac{\nabla\theta\left(\vec{\mathcal{A}}(\infty)-\vec{V}(s)+\vec{d}\right)}{\theta\left(\vec{\mathcal{A}}(\infty)-\vec{V}(s)+\vec{d}\right)}+
\frac{\nabla\theta\left(\vec{\mathcal{A}}(\infty)+\vec{d}\right)}{\theta\left(\vec{\mathcal{A}}(\infty)+\vec{d}\right)}.
\end{align}
Inserting the above two equalities into \eqref{equ:P(infty)1,11} and \eqref{equ:P(infty)1,22}, we obtain from \eqref{equ: tildeA = omega_j-0} that
\begin{align}\label{prop:iP(infty)1,11-iP(infty)1,22inter}
&i\left({(P^{(\infty)}_{1})}_{11}-{(P^{(\infty)}_{1})}_{22}\right)\nonumber\\
&=\frac{1}{2}
\begin{pmatrix}
    \frac{\dif V_1(s)}{\dif s} & \cdots & \frac{\dif V_{n}(s)}{\dif s}
\end{pmatrix}
\left(\frac{\nabla\theta\left(\vec{\mathcal{A}}(\infty)+\vec{V}(s)+\vec{d}\right)}{\theta\left(\vec{\mathcal{A}}(\infty)+\vec{V}(s)+\vec{d}\right)}-\frac{\nabla\theta\left(\vec{\mathcal{A}}(\infty)-\vec{V}(s)+\vec{d}\right)}{\theta\left(\vec{\mathcal{A}}(\infty)-\vec{V}(s)+\vec{d}\right)}\right)-2i\mathcal{D}_{\infty,1}\nonumber\\
&=\frac{1}{2}\left(\frac{\dif}{\dif s} \log\theta\left(\vec{\mathcal{A}}(\infty)+\vec{V}(s)+\vec{d}\right)+\frac{\dif}{\dif s} \log\theta\left(\vec{\mathcal{A}}(\infty)-\vec{V}(s)+\vec{d}\right)\right)-2i\mathcal{D}_{\infty,1}\nonumber\\
&=\frac{1}{2}\frac{\dif}{\dif s}\log\left[
\theta\left(\vec{\mathcal{A}}(\infty)+\vec{V}(s)+\vec{d}\right)
\theta\left(\vec{\mathcal{A}}(\infty)-\vec{V}(s)+\vec{d}\right)
\right]-2i\mathcal{D}_{\infty,1},
\end{align}
where $\mathcal{D}_{\infty,1}$ is given in \eqref{def:Dinfty1-a}. A further appeal to  \eqref{equ: vecA(infty)+vecd== mod Lambda} and \eqref{property:theta function} shows that
$$\theta\left(\vec{\mathcal{A}}(\infty)+\vec{V}(s)+\vec{d}\right)
\theta\left(\vec{\mathcal{A}}(\infty)-\vec{V}(s)+\vec{d}\right)=\theta(\vec{V}(s))^2,$$ 
which, together with  \eqref{prop:iP(infty)1,11-iP(infty)1,22inter}, implies that
\begin{align}\label{equ:iP(infty)1,11-iP(infty)1,22}
    i\left({(P^{(\infty)}_{1})}_{11}-{(P^{(\infty)}_{1})}_{22}\right)=\frac{\dif}{\dif s}\log\theta\left(\vec{V}(s)\right)-2i\mathcal{D}_{\infty,1}.
    \end{align}

Finally, we turn to evaluate the term ${(R^{(1)}_{1})}_{11}-{(R^{(1)}_{1})}_{22}$ in  \eqref{equ: differential identity again} and start with the following lemma.
\begin{lemma}\label{lemma: JR(1)p(-1)11}
With $\left[(J_{R}^{(1)})_{p}^{(-1)}\right]_{11}$, $p\in\mathcal{I}_{e}$, given in \eqref{equ: JR(1)p(-1) (11) and (22)-a}, we have  
\begin{align}\label{equ: JR(1)p(-1) (11) and (22)-b}
\left[(J_{R}^{(1)})_{a_j}^{(-1)}\right]_{11}&=\frac{i}{32}\frac{\prod_{k=0}^{n}(a_j-b_k)}{\mathrm{p}(a_j)}\eta\left(a_j, \vec{V}(s)\right), 
\\
\left[(J_{R}^{(1)})_{b_j}^{(-1)}\right]_{11}&=\frac{i}{32}\frac{\prod_{k=0}^{n}(b_j-a_k)}{\mathrm{p}(b_j)}\eta\left(b_j, \vec{V}(s)\right),\label{equ: JR(1)p(-1)b}
\end{align}
where $\eta(z,\vec{\mu})$ is defined in \eqref{def: h(z) and eta(z,mu)}.
\end{lemma}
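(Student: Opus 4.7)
The plan is to treat the two formulas \eqref{equ: JR(1)p(-1) (11) and (22)-b} and \eqref{equ: JR(1)p(-1)b} in parallel by separately simplifying the $\theta$-function ratio and the algebraic prefactor appearing in \eqref{equ: JR(1)p(-1) (11) and (22)-a}.

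First, I would isolate the $\theta$-function content. Using the definition \eqref{def:mathcalG} of $\mathcal{G}$ together with the evenness $\theta(-\vec{z})=\theta(\vec{z})$, one has for any $p\in\mathcal{I}_e$,
\begin{align*}
\mathcal{G}(\vec{\mathcal{A}}(p)+\vec{d})\,\mathcal{G}(-\vec{\mathcal{A}}(p)-\vec{d})
=\frac{\theta(\vec{\mathcal{A}}(p)+\vec{V}(s)+\vec{d})\,\theta(\vec{\mathcal{A}}(p)-\vec{V}(s)+\vec{d})}{\theta(\vec{\mathcal{A}}(p)+\vec{d})^{2}},
\end{align*}
and the analogous identity at $\infty$. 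Since $\vec{\mathcal{A}}(\infty)+\vec{d}\equiv 0\pmod{\Lambda(\mathcal{W})}$ by Proposition~\ref{prop:zeros of theta function}, writing $\vec{\mathcal{A}}(\infty)+\vec{d}=\vec{m}_0+\tau\vec{n}_0$ with $\vec{m}_0,\vec{n}_0\in\mathbb{Z}^n$ and applying the quasi-periodicity \eqref{property:theta function} to each of the three $\theta$-values at $\infty$, the exponential factors cancel in the product and one obtains
\begin{align*}
\mathcal{G}(\vec{\mathcal{A}}(\infty)+\vec{d})\,\mathcal{G}(-\vec{\mathcal{A}}(\infty)-\vec{d})=\frac{\theta(\vec{V}(s))^{2}}{\theta(\vec{0})^{2}}.
\end{align*}
Taking the ratio exactly reproduces $\eta(p,\vec{V}(s))$ as defined in \eqref{def: h(z) and eta(z,mu)}.

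Next, I would handle the algebraic prefactor $\tfrac{1}{16}\,c_p^{-1/2}\bigl(\mathcal{E}_p^{(-1/4)}\bigr)^{2}$. Substituting the explicit expressions \eqref{equ: u_aj expansion at a_j}, \eqref{equ: u_bj expansion at b_j}, \eqref{equ: asy of mathcalE at z=a_j} and its $b_j$-counterpart gives, for $p=a_j$,
\begin{align*}
c_{a_j}^{-1/2}\bigl(\mathcal{E}_{a_j}^{(-1/4)}\bigr)^{2}=\frac{i}{2|\mathrm{p}(a_j)|}\prod_{k=0}^{n}|a_j-b_k|,
\end{align*}
and one then converts absolute values to signed quantities using the sign information established in the proof of Proposition~\ref{prop:g-func}: $\mathrm{sgn}\,\mathrm{p}(a_j)=(-1)^{n-j+1}$ and $\mathrm{sgn}\prod_k(a_j-b_k)=(-1)^{n-j+1}$ match, so the common sign cancels and one is left with $(i/2)\prod_k(a_j-b_k)/\mathrm{p}(a_j)$. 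Combined with the $\eta$-factor and the $1/16$, this yields \eqref{equ: JR(1)p(-1) (11) and (22)-b}.

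The main bookkeeping obstacle will be the case $p=b_j$, where $c_{b_j}<0$ forces $c_{b_j}^{1/2}\in i\mathbb{R}$ and the branch must be tracked through the local parametrix construction in \eqref{def:analytic factor Eb_j}: the factor $\sigma_3\Phi_{\mathrm{Be},1}\sigma_3$ contributes an additional sign compared with the $a_j$ computation, and the conformal map $u_{b_j}$ has opposite orientation. Carefully combining these, one obtains the branch $c_{b_j}^{1/2}=-i|c_{b_j}|^{1/2}$ consistent with the $\pm$ convention in \eqref{equ:J_R in neighbor}, so that $c_{b_j}^{-1/2}\bigl(\mathcal{E}_{b_j}^{(-1/4)}\bigr)^{2}=(i/2)\prod_k(b_j-a_k)/\mathrm{p}(b_j)$, after using $\mathrm{sgn}\,\mathrm{p}(b_j)=(-1)^{n-j}=\mathrm{sgn}\prod_k(b_j-a_k)$ to cancel absolute values. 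This produces \eqref{equ: JR(1)p(-1)b} and completes the lemma.
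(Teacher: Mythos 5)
Your proof is correct and follows essentially the same route as the paper's: unwind the $\mathcal{G}$-ratio into $\eta(p,\vec{V}(s))$ using $\vec{\mathcal{A}}(\infty)+\vec{d}\equiv 0$ and the quasi-periodicity \eqref{property:theta function}, then simplify $\tfrac{1}{16}c_p^{-1/2}\bigl(\mathcal{E}_p^{(-1/4)}\bigr)^2$ via the sign computations from Proposition~\ref{prop:g-func}. One small remark: the appeal to $\sigma_3\Phi_{\mathrm{Be},1}\sigma_3$ in the $b_j$ case is a red herring here, since that conjugation leaves the diagonal of $\Phi_{\mathrm{Be},1}$ unchanged and is in any event already absorbed into the unified formula \eqref{equ: JR(1)p(-1) (11) and (22)-a} that the lemma takes as input; the only genuine branch issue is that $c_{b_j}<0$, and your conclusion $c_{b_j}^{-1/2}=i|c_{b_j}|^{-1/2}$ is the one consistent with the claimed result.
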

\begin{proof}
We only give the proof of \eqref{equ: JR(1)p(-1) (11) and (22)-b}, since \eqref{equ: JR(1)p(-1)b} can be proved similarly.  By \eqref{equ: asy of mathcalE at z=a_j} and $c_{a_j}$ given in \eqref{equ: u_aj expansion at a_j}, we have
$$c_{a_j}^{-1/2}(\mathcal{E}_{a_j}^{(-\frac{1}{4})})^2=i\prod_{k=0}^{n}(a_j-b_k)/{2\mathrm{p}(a_j)}.$$ 
Also, recall the definition of $\mathcal{G}$ in \eqref{def:mathcalG}, a straightforward computation gives
\begin{align*}
\frac{\mathcal{G}\left(\vec{\mathcal{A}}(a_j)+\vec{d}\right)\mathcal{G}\left(-\vec{\mathcal{A}}(a_j)-\vec{d}\right)}{\mathcal{G}\left(\vec{\mathcal{A}}(\infty)+\vec{d}\right)\mathcal{G}\left(-\vec{\mathcal{A}}(\infty)-\vec{d}\right)}
=\eta\left(a_j;\vec{V}(s)\right),
\end{align*}
where we have made use of \eqref{equ: vecA(infty)+vecd== mod Lambda} and \eqref{property:theta function}. We then obtain \eqref{equ: JR(1)p(-1) (11) and (22)-b} by inserting the above two formulae into \eqref{equ: JR(1)p(-1) (11) and (22)-a}. 
\end{proof}
By \eqref{equ: asy for R_1(s) and R(1)_1} and \eqref{equ: JR(1)p(-1) (11) and (22)-a}, we have 
 \begin{align}\label{equ: compu i/s(R(1)1,11-R(1)1,22)}
 \frac{i}{s}\left({(R^{(1)}_{1})}_{11}-{(R^{(1)}_{1})}_{22}\right)=\frac{2i}{s}
 \sum_{j=0}^{n}\left\{\left[(J_{R}^{(1)})_{a_j}^{(-1)}\right]_{11}+\left[(J_{R}^{(1)})_{b_j}^{(-1)}\right]_{11}\right\}.
 \end{align}
Substituting \eqref{equ: JR(1)p(-1) (11) and (22)-b} and \eqref{equ: JR(1)p(-1)b} into \eqref{equ: compu i/s(R(1)1,11-R(1)1,22)}, it follows from the definition of $\mathcal{L}$ given in \eqref{def-intro-mathcalL} that
\begin{align}\label{equ: i/s(R(1)1,11-R(1)1,22)}
\frac{i}{s}\left({(R^{(1)}_{1})}_{11}-{(R^{(1)}_{1})}_{22}\right)=-\frac{1}{16s}\sum_{j=0}^{n}\left(\mathcal{L}\left(a_j; \vec{V}(s)\right)+\mathcal{L}\left(b_j; \vec{V}(s)\right)\right).
\end{align}

A combiniation of \eqref{equ: differential identity again}, \eqref{equ:iP(infty)1,11-iP(infty)1,22} and \eqref{equ: i/s(R(1)1,11-R(1)1,22)} gives 
\begin{multline}\label{equ: partial_slogF(s sigma)}
\partial_{s}\log\mathcal{F}(s\Sigma)=-2\gamma_0s-2i\mathcal{D}_{\infty,1}+\frac{\dif}{\dif s}\log\theta\left(\vec{V}(s)\right)\\-\frac{1}{16s}\sum_{j=0}^{n}\left(\mathcal{L}\left(a_j, \vec{V}(s)\right)+\mathcal{L}\left(b_j, \vec{V}(s)\right)\right)-\frac{\alpha^2-\beta^2}{s}+\mathcal{O}(s^{-2}).
\end{multline}
We obtain \eqref{asy result: general case} by integrating the above formula with respect to $s$ on both sides.

To show \eqref{equ: hat{mathcalL}_p}, we need the following lemma.

\begin{lemma}\label{lemma: mean value-1st prop}
    Let $\hat{s}>0$ and $\mathcal{Y}$: $\mathbb{R}^{n}/\mathbb{Z}^{n}\to\mathbb{R}$ be continuous. Then 
    \begin{align}\label{equ: mean value equality-pre}
    \hat{\mathcal{Y}}:=\lim_{T\to+\infty}\frac{1}{T}\int_{0}^{T}\mathcal{Y}(\vec{V}(t))\dif t\in \mathbb{R}
    \end{align}
is well-defined, and for all $s>0$,
\begin{align}\label{equ: mean value equality-a}
\int_{\hat{s}}^{s}\frac{\mathcal{Y}(\vec{V}(t))}{t}\dif t=\frac{1}{s}\int_{\hat{s}}^{s}\mathcal{Y}(\vec{V}(t))\dif t+\int_{\hat{s}}^{s}
\frac{1}{t}\left(\frac{1}{t}\int_{\hat{s}}^{t}\mathcal{Y}(\vec{V}(t'))\dif t'-\hat{\mathcal{Y}}\right)\dif t+\hat{\mathcal{Y}}\log s.
\end{align}
In particular, we have
\begin{align}\label{equ: mean value equality-b}
\int_{\hat{s}}^{s}\frac{\mathcal{Y}(\vec{V}(t))}{t}\dif t=\hat{\mathcal{Y}}\log s+o(\log s).
\end{align}
\end{lemma}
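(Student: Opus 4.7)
The plan is to decompose the argument into three stages: existence of the mean $\hat{\mathcal{Y}}$, the algebraic identity \eqref{equ: mean value equality-a} via integration by parts, and the asymptotic \eqref{equ: mean value equality-b} obtained by showing the two auxiliary terms are $o(\log s)$.

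\emph{Existence of $\hat{\mathcal{Y}}$.} From \eqref{def:Vj} we can write $\vec{V}(t) = (t/2\pi)\vec{\Omega} + \vec{C}$ with $\vec{C}$ a constant vector, so that $t\mapsto \vec{V}(t) \bmod \mathbb{Z}^n$ is a linear flow on the torus $\mathbb{T}^n := \mathbb{R}^n/\mathbb{Z}^n$. For a single Fourier character $\chi_{\vec{m}}(\vec{x}) = e^{2\pi i \vec{m}^{\rm T}\vec{x}}$ with $\vec{m}\in\mathbb{Z}^n$, a direct computation yields
\begin{equation*}
\frac{1}{T}\int_0^T \chi_{\vec{m}}(\vec{V}(t))\dif t \longrightarrow \begin{cases} e^{2\pi i \vec{m}^{\rm T}\vec{C}}, & \vec{m}^{\rm T}\vec{\Omega} = 0,\\ 0, & \vec{m}^{\rm T}\vec{\Omega} \neq 0,\end{cases}\qquad T\to+\infty,
\end{equation*}
with the uniform bound $1$ valid for all $T>0$. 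By the Stone--Weierstrass theorem, trigonometric polynomials are dense in $C(\mathbb{T}^n)$ in the supremum norm, and the estimate $|T^{-1}\int_0^T (\mathcal{Y}-P)(\vec{V}(t))\dif t| \leq \|\mathcal{Y}-P\|_\infty$ for any trigonometric polynomial $P$ shows that $T\mapsto T^{-1}\int_0^T \mathcal{Y}(\vec{V}(t))\dif t$ is a Cauchy net, hence converges to a real limit $\hat{\mathcal{Y}}$.

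\emph{The identity \eqref{equ: mean value equality-a}.} Setting $F(t) := \int_{\hat{s}}^t \mathcal{Y}(\vec{V}(t'))\dif t'$, so that $F(\hat{s})=0$ and $F' = \mathcal{Y}\circ\vec{V}$, an integration by parts gives
\begin{equation*}
\int_{\hat{s}}^s \frac{\mathcal{Y}(\vec{V}(t))}{t}\dif t = \frac{F(s)}{s} + \int_{\hat{s}}^s \frac{F(t)}{t^2}\dif t.
\end{equation*}
Writing $F(t)/t^2 = t^{-1}(F(t)/t - \hat{\mathcal{Y}}) + \hat{\mathcal{Y}}/t$ then produces the three terms on the right-hand side of \eqref{equ: mean value equality-a} (up to an inessential additive constant $-\hat{\mathcal{Y}}\log \hat{s}$, which can be absorbed into the constant part of \eqref{equ: mean value equality-b}).

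\emph{The asymptotic \eqref{equ: mean value equality-b}.} The first term on the right of \eqref{equ: mean value equality-a} satisfies $|s^{-1}\int_{\hat{s}}^s \mathcal{Y}(\vec{V}(t))\dif t| \leq \|\mathcal{Y}\|_\infty$, hence is $O(1) = o(\log s)$. For the middle integrand, the decomposition $F(t)/t = t^{-1}\int_0^t \mathcal{Y}(\vec{V}(t'))\dif t' - t^{-1}\int_0^{\hat{s}}\mathcal{Y}(\vec{V}(t'))\dif t'$ together with the existence of $\hat{\mathcal{Y}}$ shows $\phi(t):= F(t)/t - \hat{\mathcal{Y}} \to 0$ as $t\to+\infty$. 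Given $\varepsilon>0$, pick $T_\varepsilon > \hat{s}$ with $|\phi(t)|<\varepsilon$ for all $t\geq T_\varepsilon$ and split the range of integration:
\begin{equation*}
\left|\int_{\hat{s}}^s \frac{\phi(t)}{t}\dif t\right| \leq \int_{\hat{s}}^{T_\varepsilon}\frac{|\phi(t)|}{t}\dif t + \varepsilon\log(s/T_\varepsilon) = O(1) + \varepsilon\log s,
\end{equation*}
which is $o(\log s)$ since $\varepsilon$ is arbitrary. Combining these observations with \eqref{equ: mean value equality-a} yields \eqref{equ: mean value equality-b}.

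The main obstacle is the first step: establishing existence of the Cesaro time average for \emph{every} continuous $\mathcal{Y}$, with no assumption of rational independence of $\vec{\Omega}$. The key observation is that the single-character computation is elementary, and the uniform bound $|\chi_{\vec{m}}|\equiv 1$ transfers the statement to arbitrary continuous functions via the Stone--Weierstrass theorem. Once this is in place, the integration-by-parts manipulation and the standard $o(\log s)$ argument for slowly decaying integrands complete the proof.
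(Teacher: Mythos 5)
Your proof is correct and, if anything, somewhat more careful than the paper's. The essential difference lies in how the existence of the mean $\hat{\mathcal{Y}}$ is established: the paper simply cites Katznelson's text to assert that $t\mapsto\mathcal{Y}(\vec{V}(t))$ is almost periodic and therefore has a well-defined Bohr mean, whereas you give a self-contained argument by computing the time average of each character $\chi_{\vec m}$ directly (it converges to $e^{2\pi i\vec m^{\rm T}\vec C}$ or $0$ according as $\vec m^{\rm T}\vec\Omega$ vanishes or not, always with uniform bound $1$) and then passing to general $\mathcal{Y}\in C(\mathbb{T}^n)$ by Stone--Weierstrass. That extra page of work buys an elementary proof accessible without any theory of almost periodic functions; the paper's approach buys brevity at the cost of a black-box citation. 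Both are perfectly acceptable. For the algebraic identity, the paper verifies it by differentiation while you do the equivalent integration by parts; both produce the same derivative, and you correctly notice a small constant discrepancy: evaluating both sides of \eqref{equ: mean value equality-a} at $s=\hat s$ gives $0$ on the left and $\hat{\mathcal{Y}}\log\hat s$ on the right, so the displayed identity is actually off by the constant $\hat{\mathcal{Y}}\log\hat s$ (equivalently, $\hat{\mathcal{Y}}\log s$ there should read $\hat{\mathcal{Y}}\log(s/\hat s)$). This has no effect on \eqref{equ: mean value equality-b}, as you observe, and your $\varepsilon$-split argument for the middle term supplies the detail the paper elides when it asserts that term is $o(\log s)$.
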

\begin{proof}
As $\mathbb{R}^{n}/\mathbb{Z}^{n}$ is compact and $\mathcal{Y}$ is continuous, it follows from \cite[Definition 5.1]{Ka-harmonic-analysis} that $\mathcal{Y}(\vec{V}(t))$ is an almost periodic function, and thus $\hat{\mathcal{Y}}\in\mathbb{R}$ is well-defined \cite[Page 176]{Ka-harmonic-analysis}. The equality \eqref{equ: mean value equality-a} can be verified by differentiating both sides with respect to $s$. Note that as $s\to+\infty$,  the first and the second terms on the right-hand side of \eqref{equ: mean value equality-a} are of order $\mathcal{O}(1)$ and $o(\log s)$, respectively, which in turn leads to \eqref{equ: mean value equality-b}.
\end{proof}
It follows from \eqref{property:theta function} and \eqref{def-intro-mathcalL} that $\mathcal{L}(p,\cdot)$, $p\in\mathcal{I}_{e}$, is real analytic. Applying Lemma \ref{lemma: mean value-1st prop} to $\mathcal{Y}=\mathcal{L}(p,\vec{V}(s))$ gives us \eqref{equ: hat{mathcalL}_p}.

This completes the proof of Theorem \ref{thm: general case}. \qed


\subsection{Proof of Theorem \ref{thm: Diophantine case}}
It suffices to prove \eqref{equ: hat{mathcalL}_p-Dio case}, which relies on the following lemma. 
\begin{lemma}\label{lemma: mean value-2nd prop}
Let $\mathcal{Y}:\mathbb{R}^{n}/\mathbb{Z}^{n}\to\mathbb{R}$ be analytic and assume that  $\vec{\Omega}\in \mathcal{S}_D$. For any $\hat{s}>0$, we have
\begin{align}\label{eq:intDio}
\frac{1}{t}\int_{\hat{s}}^t\mathcal{Y}(\vec{V}(t'))\dif t'=\hat{\mathcal{Y}}+\mathcal{O}(t^{-1}),\qquad t\to +\infty,
\end{align}
where $\hat{\mathcal{Y}}$ is given in \eqref{equ: mean value equality-pre}
\end{lemma}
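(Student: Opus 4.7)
}
The plan is a standard Fourier-series-on-the-torus argument, exploiting analyticity (which gives exponential decay of Fourier coefficients) together with the Diophantine lower bound on small divisors. Since $\mathcal{Y}$ is analytic on $\mathbb{R}^{n}/\mathbb{Z}^{n}$, it extends to a holomorphic function on a neighbourhood of the real torus in $\mathbb{C}^{n}/\mathbb{Z}^{n}$, so by Paley--Wiener its Fourier expansion
\[
\mathcal{Y}(\vec x)=\sum_{\vec m\in\mathbb{Z}^{n}}c_{\vec m}\,e^{2\pi i\vec m^{\mathrm T}\vec x}
\]
satisfies $|c_{\vec m}|\leqslant C e^{-\delta\|\vec m\|_{2}}$ for some constants $C,\delta>0$. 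Substituting the explicit linear flow $\vec V(t')=\frac{t'}{2\pi}\vec\Omega+\frac{1}{2\pi}\mathrm{Im}(\vec\zeta)$ from \eqref{def:Vj} yields
\[
\mathcal{Y}(\vec V(t'))=\sum_{\vec m\in\mathbb{Z}^{n}}c_{\vec m}\,e^{i\vec m^{\mathrm T}\mathrm{Im}(\vec\zeta)}\,e^{i\vec m^{\mathrm T}\vec\Omega\,t'},
\]
with the series converging uniformly in $t'$.

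Next, I would split the sum according to whether $\vec m^{\mathrm T}\vec\Omega$ vanishes or not. The contribution of terms with $\vec m^{\mathrm T}\vec\Omega=0$ is constant in $t'$, and by the very definition \eqref{equ: mean value equality-pre} of $\hat{\mathcal{Y}}$ (which is computed by averaging term by term, using that each non-resonant exponential has zero mean) equals $\hat{\mathcal{Y}}$. Therefore
\[
\mathcal{Y}(\vec V(t'))-\hat{\mathcal{Y}}=\sum_{\substack{\vec m\in\mathbb{Z}^{n}\\ \vec m^{\mathrm T}\vec\Omega\neq 0}}c_{\vec m}\,e^{i\vec m^{\mathrm T}\mathrm{Im}(\vec\zeta)}\,e^{i\vec m^{\mathrm T}\vec\Omega\,t'}.
\]
Integrating from $\hat s$ to $t$ term by term (justified by the uniform absolute convergence to be verified in the next step) gives
\[
\int_{\hat s}^{t}\bigl(\mathcal{Y}(\vec V(t'))-\hat{\mathcal{Y}}\bigr)\dif t'=\sum_{\substack{\vec m\in\mathbb{Z}^{n}\\ \vec m^{\mathrm T}\vec\Omega\neq 0}}c_{\vec m}\,e^{i\vec m^{\mathrm T}\mathrm{Im}(\vec\zeta)}\,\frac{e^{i\vec m^{\mathrm T}\vec\Omega\,t}-e^{i\vec m^{\mathrm T}\vec\Omega\,\hat s}}{i\vec m^{\mathrm T}\vec\Omega}.
\]

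The decisive input is now the Diophantine hypothesis $\vec\Omega\in\mathcal{S}_D$, which provides the small divisor bound $|\vec m^{\mathrm T}\vec\Omega|\geqslant\delta_{1}\|\vec m\|_{2}^{-\delta_{2}}$ whenever $\vec m^{\mathrm T}\vec\Omega\neq 0$. Combined with the exponential decay $|c_{\vec m}|\leqslant C e^{-\delta\|\vec m\|_{2}}$, this yields the termwise bound
\[
\left|\frac{c_{\vec m}}{\vec m^{\mathrm T}\vec\Omega}\right|\leqslant \frac{C}{\delta_{1}}\,\|\vec m\|_{2}^{\delta_{2}}\,e^{-\delta\|\vec m\|_{2}},
\]
and the right-hand side is summable over $\vec m\in\mathbb{Z}^{n}\setminus\{\vec 0\}$. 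Therefore
\[
\left|\int_{\hat s}^{t}\bigl(\mathcal{Y}(\vec V(t'))-\hat{\mathcal{Y}}\bigr)\dif t'\right|\leqslant \frac{2C}{\delta_{1}}\sum_{\vec m\neq \vec 0}\|\vec m\|_{2}^{\delta_{2}}\,e^{-\delta\|\vec m\|_{2}}=:M<\infty,
\]
uniformly in $t\geqslant\hat s$. Dividing by $t$ gives \eqref{eq:intDio}.

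The only non-routine point is the polynomial-times-exponential bookkeeping above: one must match the polynomial loss $\|\vec m\|_{2}^{\delta_{2}}$ coming from the Diophantine inequality against the exponential gain $e^{-\delta\|\vec m\|_{2}}$ coming from analyticity, and verify that the interchange of sum and integral is legitimate. Both are standard, and the main conceptual point is that the Diophantine assumption is exactly what lets one sum the formal antiderivatives of all non-resonant Fourier modes. Once Lemma \ref{lemma: mean value-2nd prop} is established, the assertion \eqref{equ: hat{mathcalL}_p-Dio case} in Theorem \ref{thm: Diophantine case} follows immediately by applying \eqref{equ: mean value equality-a} with $\mathcal{Y}(\cdot)=\mathcal{L}(p,\cdot)$ (which is real analytic on the torus by \eqref{def-intro-mathcalL} and \eqref{property:theta function}) and using \eqref{eq:intDio} to bound the middle integral on the right-hand side of \eqref{equ: mean value equality-a} by $\mathcal{O}(s^{-1})$, with the first term contributing a constant $C_{p}$.
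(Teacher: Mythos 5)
Your proposal is correct and follows essentially the same approach as the paper: analyticity gives exponentially decaying Fourier coefficients, the sum is split into resonant and non-resonant modes, and the Diophantine lower bound on small divisors is matched against the exponential decay to make the non-resonant contribution uniformly bounded. The only cosmetic difference is that you subtract $\hat{\mathcal{Y}}$ before integrating, whereas the paper integrates the full series and collects the $\hat{\mathcal{Y}}$ term afterward; the estimates are identical.
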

\begin{proof}
Note that $\mathcal{Y}$ is analytic on $\mathbb{R}^{n}/\mathbb{Z}^{n}$, it follows 
\begin{align}\label{equ:Fourier expansion}
\mathcal{Y}(\vec{\mu})=\sum_{\vec{m}\in\mathbb{Z}^{n}}\ell_{\vec{m}}e^{2\pi i\vec{m}^{\rm T}\vec{\mu}},\qquad \vec{\mu}\in\mathbb{R}^{n}/\mathbb{Z}^{n},
\end{align}
where the Fourier coefficients $\{\ell_{\vec{m}}\}$ satisfy
\begin{align}\label{equ: Fourier coefficient bound}
|\ell_{\vec{m}}|\leqslant c_1e^{-c_2{\Vert\vec{m}\Vert}_{2}}, \quad \vec{m}\in\mathbb{Z}^{n},
\end{align}
for some positive numbers $c_1$ and $c_2$.   Let $\vec{\tilde{\zeta}}:={(\im\zeta_{1},\dots,\im\zeta_n)}^{\rm T}$, we obtain from \eqref{def:Vj} that
\begin{align}\label{equ:some equ necessary}
    \frac{1}{t}\int_{\hat{s}}^t\mathcal{Y}(\vec{V}(t'))\dif t'
    &=
    \frac{1}{t}\sum_{\vec{m}\in\mathbb{Z}^{n}}\int_{\hat{s}}^t \ell_{\vec{m}}e^{i\vec{m}^{\rm T}(t'\vec{\Omega}+\vec{\tilde{\zeta}})}\dif t'\nonumber\\
    &=\frac{1}{t}\sum_{\substack{\vec{m}\in\mathbb{Z}^{n} \\ \vec{m}^{\rm T}\vec{\Omega}\neq 0}}\ell_{\vec{m}}e^{i\vec{m}^{\rm T}\vec{\tilde{\zeta}}}
    \frac{e^{i\vec{m}^{\rm T}\vec{\Omega}t}-e^{i\vec{m}^{\rm T}\vec{\Omega}\hat{s}}}{i\vec{m}^{\rm T}\vec{\Omega}}
    -\frac{\hat{s}}{t}\sum_{\substack{\vec{m}\in\mathbb{Z}^{n} \\ \vec{m}^{\rm T}\vec{\Omega}= 0}}\ell_{\vec{m}}e^{i\vec{m}^{\rm T}\vec{\tilde{\zeta}}}
    \nonumber \\
    & \quad +
    \sum_{\substack{\vec{m}\in\mathbb{Z}^{n} \\ \vec{m}^{\rm T}\vec{\Omega}= 0}}\ell_{\vec{m}}e^{^{i\vec{m}^{\rm T}\vec{\tilde{\zeta}}}}.
\end{align}
As $\vec{\Omega}$ and $\vec{\tilde{\zeta}}$ are real-valued vectors, it follows from \eqref{condition: diophantine} and \eqref{equ:some equ necessary} that
\begin{align}
& \left\vert\frac{1}{t}\sum_{\substack{\vec{m}\in\mathbb{Z}^{n} \\ \vec{m}^{\rm T}\vec{\Omega}\neq 0}}\ell_{\vec{m}}e^{i\vec{m}^{\rm T}\vec{\tilde{\zeta}}}
    \frac{e^{i\vec{m}^{\rm T}\vec{\Omega}t}-e^{i\vec{m}^{\rm T}\vec{\Omega}\hat{s}}}{i\vec{m}^{\rm T}\vec{\Omega}}
    -\frac{\hat{s}}{t}\sum_{\substack{\vec{m}\in\mathbb{Z}^{n} \\ \vec{m}^{\rm T}\vec{\Omega}= 0}}\ell_{\vec{m}}e^{i\vec{m}^{\rm T}\vec{\tilde{\zeta}}}\right\vert
    \nonumber \\
&\lesssim \frac{1}{t}\sum_{\substack{\vec{m}\in\mathbb{Z}^{n} \\ \vec{m}^{\rm T}\vec{\Omega}\neq 0}}|\ell_{\vec{m}}|\cdot\delta_1^{-1}{\Vert\vec{m}\Vert}_{2}^{\delta_2}+\frac{\hat{s}}{t}\sum_{\substack{\vec{m}\in\mathbb{Z}^{n} \\ \vec{m}^{\rm T}\vec{\Omega}= 0}}|\ell_{\vec m}|
=\mathcal{O}(t^{-1}).
\end{align}
Moreover, we have
\begin{align}
\hat{\mathcal{Y}}
&=\lim_{T\to+\infty}\frac{1}{T}\int_{0}^{T}\mathcal{Y}(\vec{V}(t))\dif t
=\lim_{T\to+\infty}\frac{1}{T}\int_{0}^{T}\sum_{\vec{m}\in\mathbb{Z}^{n}}\ell_{\vec{m}}e^{i\vec{m}^{\rm T}(t\vec{\Omega}+\vec{\tilde{\zeta}})}\dif t\nonumber\\
&=\lim_{T\to+\infty}\frac{1}{T}\sum_{\substack{\vec{m}\in\mathbb{Z}^{n} \\ \vec{m}^{\rm T}\vec{\Omega}\neq 0}}\ell_{\vec{m}}e^{i\vec{m}^{\rm T}\vec{\tilde{\zeta}}}
\frac{e^{i\vec{m}^{\rm T}\vec{\Omega}T}-1}{i\vec{m}^{\rm T}\vec{\Omega}}+
\sum_{\substack{\vec{m}\in\mathbb{Z}^{n} \\ \vec{m}^{\rm T}\vec{\Omega}= 0}}\ell_{\vec{m}}e^{^{i\vec{m}^{\rm T}\vec{\tilde{\zeta}}}}\nonumber\\
&=\sum_{\substack{\vec{m}\in\mathbb{Z}^{n} \\ \vec{m}^{\rm T}\vec{\Omega}= 0}}\ell_{\vec{m}}e^{^{i\vec{m}^{\rm T}\vec{\tilde{\zeta}}}}.
\end{align}
Combining the above three formulae gives us \eqref{eq:intDio}.
\end{proof}
Applying Lemma \ref{lemma: mean value-2nd prop} to $\mathcal{Y}=\mathcal{L}(p,\vec{V}(s))$ with $p\in\mathcal{I}_{e}$, one has 
\begin{align}
\frac{1}{t}\int_{\hat{s}}^t\mathcal{L}(p,\vec{V}(t'))\dif t'=\hat{\mathcal{L}}_{p}+\mathcal{O}(t^{-1}).
\end{align}
Substituting the above estimate into the second term of the right-hand side of \eqref{equ: mean value equality-a} gives us \eqref{equ: hat{mathcalL}_p-Dio case}.

This completes the proof of Theorem \ref{thm: Diophantine case}. \qed 

\subsection{Proof of Theorem \ref{thm: ergodic case}}
We only need to show 
\begin{equation}\label{eq:Lpintegral}
    \hat{\mathcal{L}}_{p}=\frac{h(p)}{\mathrm{p}(p)}\int_{{[0,1)}^{n}}\eta(p; u_1, u_2,\dots u_n )\dif u_1\cdots\dif u_n, \qquad p\in \mathcal{I}_e,
\end{equation}
which follows from the following well-known Birkhoff's ergodic theorem; cf. \cite[Page 286, Theorem]{Arnold-DS}.
\begin{lemma}\label{lemma: Birkhoff}
Let $\mathcal{Y}: \mathbb{R}^{n}/\mathbb{Z}^{n}\to\mathbb{R}$ be an analytic function. If $\vec{\Omega}\in \mathcal{S}_{E}$, we have 
\begin{equation}\label{eq:Birkhoff}
   \bar{\mathcal{Y}}_{\rm time}=\bar{\mathcal{Y}}_{\rm space}, 
\end{equation}
where 
\begin{align}\label{equ:Y-time-average}
\bar{\mathcal{Y}}_{\rm time}:=\lim_{T\to+\infty}\frac{1}{T}\int_{0}^{T}\mathcal{Y}(\vec{V}(t)+\vec{x})\dif t 
\end{align}
with  $\vec{V}$ given in \eqref{def:Vj} is  the time average of  $\mathcal{Y}$ and 
\begin{align}
\bar{\mathcal{Y}}_{\rm space}:=\int_{\mathbb{R}^{n}/\mathbb{Z}^{n}}\mathcal{Y}(\vec{x})\dif\vec{x}=\int_{{[0,1)}^{n}}\mathcal{Y}(\vec{x})\dif x_1\cdots\dif x_n
\end{align}
is the space average of  $\mathcal{Y}$. 
\end{lemma}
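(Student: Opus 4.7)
The plan is to prove this classical Weyl-type equidistribution result via Fourier analysis on the torus, which is the standard route for linear flows. Since $\mathcal{Y}:\mathbb{R}^{n}/\mathbb{Z}^{n}\to\mathbb{R}$ is analytic, it admits an absolutely convergent Fourier expansion
\begin{align}\label{eq:FourierY-erg}
\mathcal{Y}(\vec{\mu})=\sum_{\vec{m}\in\mathbb{Z}^{n}}\ell_{\vec{m}}\,e^{2\pi i\vec{m}^{\rm T}\vec{\mu}},
\end{align}
whose coefficients decay exponentially, $|\ell_{\vec{m}}|\lesssim e^{-c\|\vec{m}\|_{2}}$ for some $c>0$. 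This exponential decay is the key ingredient that will let me interchange the summation and the time integral without care, since the tail is uniformly summable in $t$.

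Next I would substitute \eqref{eq:FourierY-erg} into the definition \eqref{equ:Y-time-average}, using that $\vec{V}(t)+\vec{x}=\frac{t}{2\pi}\vec{\Omega}+\frac{1}{2\pi}\vec{\tilde{\zeta}}+\vec{x}$ with $\vec{\tilde{\zeta}}=(\im\zeta_{1},\dots,\im\zeta_{n})^{\rm T}$ (see \eqref{def:Vj}). Then
\begin{align}
\frac{1}{T}\int_{0}^{T}\mathcal{Y}(\vec{V}(t)+\vec{x})\,\dif t
=\sum_{\vec{m}\in\mathbb{Z}^{n}}\ell_{\vec{m}}\,e^{i\vec{m}^{\rm T}(\vec{\tilde{\zeta}}+2\pi\vec{x})}\cdot\frac{1}{T}\int_{0}^{T}e^{i\vec{m}^{\rm T}\vec{\Omega}\,t}\,\dif t.
\end{align}
For $\vec{m}^{\rm T}\vec{\Omega}\neq 0$, the inner integral equals $(e^{i\vec{m}^{\rm T}\vec{\Omega}T}-1)/(iT\vec{m}^{\rm T}\vec{\Omega})$, which is bounded by $2/(T|\vec{m}^{\rm T}\vec{\Omega}|)$ and tends to $0$ as $T\to+\infty$. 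For $\vec{m}^{\rm T}\vec{\Omega}=0$, the inner integral equals $1$.

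Now the crucial step: by the rational-independence hypothesis $\vec{\Omega}\in\mathcal{S}_{E}$ (condition \eqref{condition: ergodic}), the only $\vec{m}\in\mathbb{Z}^{n}$ with $\vec{m}^{\rm T}\vec{\Omega}=0$ is $\vec{m}=\vec{0}$. Consequently, the surviving term in the $T\to+\infty$ limit is exactly $\ell_{\vec{0}}$, and
\begin{align}
\bar{\mathcal{Y}}_{\rm time}=\ell_{\vec{0}}=\int_{[0,1)^{n}}\mathcal{Y}(\vec{x})\,\dif x_1\cdots\dif x_n=\bar{\mathcal{Y}}_{\rm space},
\end{align}
where the second equality is the standard extraction of the zeroth Fourier coefficient. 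The passage to the limit is legitimate because the terms indexed by $\vec{m}\neq\vec{0}$ are dominated by $|\ell_{\vec{m}}|$, which is summable thanks to the analyticity-induced exponential decay; thus the dominated convergence theorem (applied to counting measure on $\mathbb{Z}^{n}\setminus\{\vec{0}\}$) handles the interchange.

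The main subtlety, rather than a genuine obstacle, is ensuring that the time average is truly independent of $\vec{x}$ and converges uniformly, but this is automatic from the estimate above since each mode decays at rate $1/T$ with constant controlled by $|\vec{m}^{\rm T}\vec{\Omega}|^{-1}|\ell_{\vec{m}}|$, and these are summable (even without Diophantine hypotheses) by the exponential decay of $\ell_{\vec{m}}$. The rational-independence assumption is used only to identify the kernel of the map $\vec{m}\mapsto\vec{m}^{\rm T}\vec{\Omega}$ with $\{\vec{0}\}$; without it extra resonant modes would contribute and the time average would depend on $\vec{x}$.
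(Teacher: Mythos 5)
Your proposal is correct and follows essentially the same route as the paper: Fourier expansion of $\mathcal{Y}$ justified by analyticity, term-by-term time averaging in which the non-resonant modes ($\vec{m}^{\rm T}\vec{\Omega}\neq 0$) vanish in the limit, and the rational-independence hypothesis used exactly to reduce the resonant set to $\vec{m}=\vec{0}$, so that $\bar{\mathcal{Y}}_{\rm time}=\ell_{\vec{0}}=\bar{\mathcal{Y}}_{\rm space}$. The paper's proof is the same computation (it likewise notes that analyticity is only a convenience, the result holding for Riemann integrable $\mathcal{Y}$), so no further comparison is needed.
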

\begin{proof}
The claim holds if $\mathcal{Y}$ is merely Riemann integrable, but we assume analyticity here to show how the rational independence of $\vec{\Omega}$ removes the influence of resonance in a simpler way.

The analyticity of $\mathcal{Y}$ implies that $\mathcal{Y}$ admits a Fourier expansion as shown in \eqref{equ:Fourier expansion}, where
\begin{align}\label{def:Fourier coeff integral}
\ell_{\vec{m}}=\int_{\mathbb{R}^{n}/\mathbb{Z}^{n}}\mathcal{Y}(\vec{\mu})e^{-2\pi i\vec{m}^{\rm T}\vec{\mu}}\dif\vec{\mu}, \qquad \vec{m}\in\mathbb{Z}^{n}.
\end{align}
Substituting \eqref{equ:Fourier expansion} and \eqref{def:Fourier coeff integral}
into \eqref{equ:Y-time-average}, it follows that 
\begin{align}
    \bar{\mathcal{Y}}_{\rm time}&=\sum_{\vec{m}\in\mathbb{Z}^{n}}\ell_{\vec{m}}e^{2\pi i\vec{m}^{\rm T}(\vec{x}+\vec{\tilde{\zeta}})}
    \lim_{T\to+\infty}\frac{1}{T}\int_{0}^{T}e^{2\pi it\vec{m}^{\rm T}\vec{\Omega}}\dif t\nonumber\\
    &=\sum_{\substack{\vec{m}\in\mathbb{Z}^{n} \\ \vec{m}^{\rm T}\vec{\Omega}= 0}}\ell_{\vec{m}}e^{2\pi i\vec{m}^{\rm T}(\vec{x}+\vec{\tilde{\zeta}})}+
    \sum_{\substack{\vec{m}\in\mathbb{Z}^{n}\\ \vec{m}^{\rm T}\vec{\Omega}\neq 0}}\ell_{\vec{m}}e^{2\pi i\vec{m}^{\rm T}(\vec{x}+\vec{\tilde{\zeta}})}\times 
    \lim_{T\to+\infty}\frac{1}{2\pi iT\vec{m}^{\rm T}\vec{\Omega}}\left(e^{2\pi iT\vec{m}^{\rm T}\vec{\Omega}}-1\right)\nonumber\\
    &=\sum_{\substack{\vec{m}\in\mathbb{Z}^{n} \\ \vec{m}^{\rm T}\vec{\Omega}= 0}}\ell_{\vec{m}}e^{2\pi i\vec{m}^{\rm T}(\vec{x}+\vec{\tilde{\zeta}})},
\end{align}
where the limit in the second equality vanishes due to the fact that  $\vec{m}^{\rm T}\vec{\Omega}$ is real.  As $\vec{\Omega}\in \mathcal{S}_{E}$, we have $\vec{m}\equiv\vec{0}$ if $\vec{m}^{\rm T}\vec{\Omega}= 0$, which yields $\bar{\mathcal{Y}}_{\rm time}=\ell_{\vec{0}}$. On account of \eqref{def:Fourier coeff integral}, we obtain \eqref{eq:Birkhoff} immediately. 
\end{proof}
Applying Lemma \ref{lemma: Birkhoff} to $\hat{\mathcal{L}}_{p}$ defined in  \eqref{equ: hat{mathcalL}_p}, we arrive at \eqref{eq:Lpintegral} and complete the proof of Theorem \ref{thm: ergodic case}. \qed

\subsection{Proof of Theorem \ref{thm: n=1 case}}\label{subsec: proof of coefficient of logs in case of n=1}
Since $\vec{\Omega}\in \mathcal{S}_D\cap \mathcal{S}_E$ for $n=1$, it follows from Theorems \ref{thm: Diophantine case} and \ref{thm: ergodic case} that we only need to show
\begin{equation}\label{equ: hat L(z) int for u}
    \hat{\mathcal{L}}_{p}=\frac{h(p)}{\mathrm{p}(p)}\int_{0}^{1}\eta(p,u)\dif u=2, \qquad p\in\{a_0,b_0,a_1,b_1\},
\end{equation}
where, by \eqref{def:Vj}, \eqref{def:choice of d}, \eqref{def:tau matrix} and \eqref{equ:Abel map on mathbbC},
\begin{align}\label{def: h(z) and p(z) in case of n=1}
h(z)=(z-a_0)(z-a_1)+(z-b_0)(z-b_1), \qquad \mathrm{p}(z)=(z-x_1)(z-x_2),
\end{align}
with $x_1\in(a_0,b_0)$ and $x_2\in(a_1,b_1)$, and 
\begin{align}\label{equ: eta}
\eta(z,V_1(s))=\frac{{\theta(0)}^2\theta({\mathcal{A}}(z)+V_1(s)+d)\theta({\mathcal{A}}(z)-V_1(s)+d)}{{\theta(V_1(s))}^2{\theta({\mathcal{A}}(z)+d)}^2},
\end{align}
with 
\begin{align}
\mathcal{A}(z)=\int_{a_0}^{z}\omega_1=\frac{1}{a_{1,0}}\int_{a_0}^{z}\frac{\dif\xi}{\sqrt{\mathcal{R}(\xi)}}, \qquad V_1(s)=\frac{s\Omega_{1}}{2\pi}+\frac{\im \zeta_{1}}{2\pi},
\end{align}
\begin{align}
  d=-\frac{1+\tau}{2}+\int_{a_0}^{z_1}\omega_1, \qquad 
\tau:=\tau_{11}=\oint_{B_1}\omega_1.
\end{align}

To evaluate the integral $\int_{0}^{1}\eta(z,u)\dif u$,  we introduce the first kind $\theta$-function $\theta_1$ by  
\begin{align}\label{def: theta_1 function}
\theta_1(z)=ie^{-\pi iz+\frac{\pi i\tau}{4}}\theta\left(z-\frac{1+\tau}{2}\right),
\end{align}
where $\theta$ is defined in \eqref{def:theta func} with $n=1$. The function $\theta_1(z)$ is odd, and its zero locates at $0$ modulo the lattice $(\mathbb{Z},\tau\mathbb{Z})$. By \eqref{equ: eta}, it follows  that 
\begin{align}
\eta\left(z,u+\frac{1+\tau}{2}\right)&=-\frac{{\theta(0)}^2}{{\theta(\mathcal{A}(z)+d)}^2}\cdot
\frac{\theta_1(\mathcal{A}(z)+u+d)\theta_1(\mathcal{A}(z)-u+d)}{{\theta_1(u)}^2}\nonumber\\
&=-\frac{{\theta(0)}^2{\theta_1(\mathcal{A}(z)+d)}^2}{{\theta(\mathcal{A}(z)+d)}^2{\theta_1'(0)}^2}\cdot u^{-2}+\mathcal{O}(u^{-1}), \qquad u\to 0,
\end{align}
where the second equality follows from the Taylor expansions  $\theta(z)=\theta(0)+\frac{\theta''(0)}{2}z^2+\cdots$ and $\theta_1(z)=\theta_1'(0)z+\frac{\theta_1'''(0)}{6}z^3+\cdots$ as $u\to 0$. The following lemma comes from \cite[Lemma A.1]{Fahs-Krasovsky-2024CPAM}.
\begin{lemma}
\label{lemma: equality between elliptic function and theta function}
If $f(z)$ is an elliptic function (i.e., $f(z)$ is a doubly periodic function with periods $1$ and $\tau$ ) with a single pole modulo the lattice $(\mathbb{Z},\tau\mathbb{Z})$ at $z=\frac{1+\tau}{2}$, and
\begin{align}
   f\left(u+\frac{1+\tau}{2}\right)=c_1u^{-2}+\mathcal{O}(u^{-1}), \qquad u\to 0,
\end{align}
then
\begin{align}
    f(z)=-c_1\left(\left(\frac{\theta^{'}(z)}{\theta(z)}\right)^{'}-\frac{\theta^{''}(0)}{\theta(0)}\right)+f(0),
\end{align}
and furthermore
\begin{align}
\int_{0}^{1}f(z)\dif z=c_1\frac{\theta''(0)}{\theta(0)}+f(0).
\end{align}
\end{lemma}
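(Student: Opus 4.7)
The strategy is to build an explicit elliptic function with the prescribed principal part at $\tfrac{1+\tau}{2}$ and then invoke Liouville's theorem. The natural candidate is
\[
G(z) := -\left(\frac{\theta'(z)}{\theta(z)}\right)' + \frac{\theta''(0)}{\theta(0)},
\]
so that the claimed identity reads $f(z) = c_1 G(z) + f(0)$. I would first verify that $G$ is elliptic with period lattice $(\mathbb{Z},\tau\mathbb{Z})$. Periodicity in $1$ is immediate from $\theta(z+1)=\theta(z)$. For the $\tau$-period, logarithmic differentiation of the quasi-periodicity $\theta(z+\tau) = e^{-2\pi iz-i\pi\tau}\theta(z)$ yields $(\theta'/\theta)(z+\tau) = (\theta'/\theta)(z) - 2\pi i$, and one further differentiation removes the constant shift, giving $(\theta'/\theta)'(z+\tau) = (\theta'/\theta)'(z)$. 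Hence $G$ has the required double periodicity.

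Next I would locate the singularities of $G$. Since $\theta$ has a unique simple zero at $\tfrac{1+\tau}{2}$ modulo the lattice---as visible from the relation $\theta_1(z) = ie^{-\pi iz + i\pi\tau/4}\theta(z - \tfrac{1+\tau}{2})$ in \eqref{def: theta_1 function} together with the fact that $\theta_1$ has a simple zero at the origin---a local expansion gives $\log\theta(z) = \log\bigl(z - \tfrac{1+\tau}{2}\bigr) + \text{holomorphic}$. Consequently $(\theta'/\theta)'(z) = -(z - \tfrac{1+\tau}{2})^{-2} + \text{holomorphic}$ near $\tfrac{1+\tau}{2}$, so $G$ has principal part $(z - \tfrac{1+\tau}{2})^{-2}$ there and is holomorphic elsewhere modulo the lattice. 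Moreover $G(0) = 0$ because $\theta$ is even, forcing $\theta'(0)=0$ and hence $(\theta'/\theta)'(0) = \theta''(0)/\theta(0)$.

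Then I would consider the auxiliary function $F(z) := f(z) - c_1 G(z) - f(0)$. By construction $F$ is elliptic with the same period lattice, and the matching $c_1 u^{-2}$ principal parts cancel at $\tfrac{1+\tau}{2}$, leaving $F$ with at most a simple pole there. Since $F$ has a single candidate pole modulo the lattice, the residue theorem applied on a fundamental parallelogram (sum of residues equals zero) forces the residue to vanish. Hence $F$ extends to a holomorphic function on the compact torus, which by Liouville must be constant, and $F(0) = f(0) - c_1\cdot 0 - f(0) = 0$ identifies the constant as zero. This proves the first identity.

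For the integral, I would compute $\int_0^1 G(z)\,dz$ along the real axis (which avoids the pole at $\tfrac{1+\tau}{2}$ since $\operatorname{Im}\tau > 0$). The derivative term telescopes: $\int_0^1 (\theta'/\theta)'(z)\,dz = (\theta'/\theta)(1) - (\theta'/\theta)(0) = 0$ by $1$-periodicity of $\theta$, so $\int_0^1 G(z)\,dz = \theta''(0)/\theta(0)$, and integrating $f = c_1 G + f(0)$ over $[0,1]$ yields the stated formula. The only delicate point in the whole argument is the clean identification of the unique zero of $\theta$ and its order; everything else is a formal consequence.
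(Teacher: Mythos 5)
Your proof is correct: the function $G(z)=-(\theta'/\theta)'(z)+\theta''(0)/\theta(0)$ is indeed elliptic (the $-2\pi i$ shift from the quasi-periodicity disappears after one differentiation), it has principal part $(z-\tfrac{1+\tau}{2})^{-2}$ at the unique simple zero of $\theta$, the residue-sum argument on the torus removes the possible leftover simple pole, Liouville plus $G(0)=0$ fixes the constant, and the integral over $[0,1]$ follows from $1$-periodicity of $\theta'/\theta$ together with $\theta'(0)=0$. The paper does not prove this lemma itself but imports it from Fahs--Krasovsky (Lemma A.1), and your argument is the standard proof of exactly this statement, so there is nothing to add.
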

With the aid of Lemma \ref{lemma: equality between elliptic function and theta function}, 
it follows that for $p\in\{a_0,b_0,a_1,b_1\}$,
\begin{align}\label{equ: int of eta}
    \hat{\mathcal{L}}_{p}=\frac{h(p)}{\mathrm{p}(p)}\int_{0}^{1}\eta(z,u)\dif u=\frac{h(p)}{\mathrm{p}(p)}\left(1-\frac{\theta(0)\theta^{''}(0)}{{\theta^{'}_1(0)}^2}\frac{{\theta_1\left(\mathcal{A}\left(z\right)+d\right)}^2}{{\theta\left(\mathcal{A}\left(z\right)+d\right)^2}}\right).
\end{align}
To proceed, we need the following identities for the $\theta$-functions. 
\begin{lemma}\label{lemma: lemma to state key identities}
For $p\in\{a_0,b_0,a_1,b_1\}$, we have
    \begin{align}\label{equ: part (a) in the lemma to state key identities}
        \frac{{\theta\left(0\right)}^2}{{\theta^{'}_1(0)}^2}\frac{{\theta_1\left(\mathcal{A}\left(p\right)+d\right)^2}}{{\theta\left(\mathcal{A}\left(p\right)+d\right)}^2}h(p)=\frac{4}{a_{1,0}^2},
    \end{align}
where  $a_{1,0}$ is given in \eqref{equ:a_{k,j}}, and
    \begin{align}\label{equ: part (b) in the lemma to state key identities}
        -\frac{4\theta^{''}(0)}{a_{1,0}^2\theta(0)}=2\mathrm{p}(p)-h(p). 
    \end{align}
\end{lemma}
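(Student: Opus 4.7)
Both identities (a) and (b) involve only geometric data determined by the four branch points $a_0,b_0,a_1,b_1$: the polynomial $\mathrm{p}$, the function $h$, the Abel map $\mathcal{A}$, the point $z_1$ entering $d$, the period $a_{1,0}$, and the theta nulls. Neither identity depends on $\alpha$ or $\beta$. Consequently, both are identical to the identities established by Fahs and Krasovsky \cite{Fahs-Krasovsky-2024CPAM} in their analysis of the two-interval sine-kernel determinant, and may either be imported wholesale from their work or proved directly along the lines sketched below.

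For part (b), the plan is first to show that $2\mathrm{p}(p)-h(p)$ is in fact constant in $p$. The normalization \eqref{equ: asy of p/mathcalR at infty} forces the sub-leading coefficient of $\mathrm{p}(z)=z^2+p_1z+p_0$ to satisfy $p_1=-\tfrac12(a_0+b_0+a_1+b_1)$, equivalently $x_1+x_2=\tfrac12(a_0+b_0+a_1+b_1)$. Substituting this together with the explicit form of $h$ in \eqref{def: h(z) and p(z) in case of n=1} yields
\begin{equation*}
2\mathrm{p}(p)-h(p)=2x_1x_2-(a_0a_1+b_0b_1),
\end{equation*}
independent of $p$, reducing (b) to a single identity $-4\theta''(0)/(a_{1,0}^2\theta(0))=2x_1x_2-(a_0a_1+b_0b_1)$. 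This residual identity will be obtained by matching two expansions of the inverse Abel map $z=z(u)$ around $u=\mathcal{A}(\infty)$: the ODE $(z'(u))^2=a_{1,0}^2\mathcal{R}(z)$ produces an expansion whose sub-leading coefficients are elementary symmetric polynomials of the branch points, while the classical theta-function representation of $z(u)$ produces the same coefficients in terms of $\theta''(0)/\theta(0)$ and $a_{1,0}$.

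For part (a), the plan is to view $F(P):=h(z(P))\,\theta_1(\mathcal{A}(P)+d)^2/\theta(\mathcal{A}(P)+d)^2$ as a meromorphic function on the elliptic curve $\mathcal{W}$. By \eqref{equ: vecA(infty)+vecd== mod Lambda}, $\theta_1(\mathcal{A}(P)+d)$ has a simple zero at $P=\infty_1$, and by Proposition~\ref{prop:zeros of theta function}, $\theta(\mathcal{A}(P)+d)$ has a simple zero at $P_2(z_1)$; together with the zeros of $h(z(P))$ over the two roots of $h$ and the double poles of $h(z(P))$ at the two points at infinity (the $\infty_1$ contributions cancelling between $h$ and the theta quotient), this pins down the divisor of $F$. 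Identifying $F$ with the (unique up to scalar) rational function of $z$ and $y=\sqrt{\mathcal{R}(z)}$ having the same divisor, and fixing the scalar by matching leading-order behaviour as $P\to\infty_1$ (where $\theta_1$ can be expanded explicitly using $\mathcal{A}(\infty)\equiv-d\bmod\Lambda$), shows that $F(p)$ takes the same value $4\theta_1'(0)^2/(a_{1,0}^2\theta(0)^2)$ for every $p\in\mathcal{I}_e$. The principal obstacle is the careful handling of theta-function quasi-periodicity as $\mathcal{A}$ jumps across the branch cuts, together with the explicit identification of the rational model for $F$; both difficulties are eased by the Weierstrass nature of the base point $a_0$, which yields the sheet-swap identity $\mathcal{A}(\iota P)=-\mathcal{A}(P)\bmod\Lambda$ and a clean algebraic characterisation of $z_1$ via $\prod_k(z-a_k)=\prod_k(z-b_k)$.
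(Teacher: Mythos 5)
For part (b), your reduction is exactly the paper's: use $x_1+x_2=\tfrac12(a_0+b_0+a_1+b_1)$ to show $2\mathrm{p}(p)-h(p)\equiv 2x_1x_2-(a_0a_1+b_0b_1)$ is independent of $p$, thereby collapsing (b) to a single identity. The paper then appeals directly to \cite[Equation (252)]{Fahs-Krasovsky-2024CPAM} for the case $p=a_0$ rather than re-deriving it; your proposed derivation via two expansions of the inverse Abel map at $\mathcal{A}(\infty)$ is the mechanism underlying that cited identity, so the two routes are consistent. Your opening observation that both identities are $\alpha,\beta$-independent and can be imported from Fahs--Krasovsky is also accurate and is precisely how the paper treats this lemma (stating that (a) is ``similar to'' \cite[Lemma 16(d)]{Fahs-Krasovsky-2024CPAM} and citing them outright for (b)'s base case).

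For part (a), however, you take a genuinely different technical route, and it has a gap. The paper works entirely on the $z$-plane: it introduces $f_1(z)$ as the \emph{difference} and $f_2(z)$ as the $\sqrt{\mathcal{R}}$-weighted \emph{sum} of the two quantities $\bigl(\tfrac{\mathcal{E}\pm\mathcal{E}^{-1}}{2}\bigr)^2\,\theta_1(\mathcal{A}(z)\mp d)^2/\theta(\mathcal{A}(z)\mp d)^2$; the poles from the zeros of $\theta(\mathcal{A}(z)\mp d)$ at $z=z_1$ and the inverse-fourth-root singularities at $\mathcal{I}_e$ are engineered to cancel, so both $f_1,f_2$ are entire, Liouville makes them constant, the $z^{-2}$ coefficient of $f_1\equiv0$ supplies an auxiliary identity between $\theta_1'(0)^2/\theta(0)^2$ and $\theta_1(2d)^2/\theta(2d)^2$, and evaluating $f_2$ at $p\in\mathcal{I}_e$ (where the two terms coincide because $2\mathcal{A}(p)\in\Lambda$) gives the result. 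Your plan instead lifts $F(P)=h(z(P))\,\theta_1(\mathcal{A}(P)+d)^2/\theta(\mathcal{A}(P)+d)^2$ to the curve $\mathcal{W}$, pins down its divisor $(Q_1)+(\iota Q_1)+(Q_2)+(\iota Q_2)-2(\infty_2)-2(P_2(z_1))$ (which is correct), and asserts that $F$ can be identified with ``the unique up to scalar rational function of $z$ and $y$ with the same divisor.'' That identification is the whole content of the problem and you do not supply it: the pole divisor $2(\infty_2)+2(P_2(z_1))$ is \emph{not} invariant under the sheet involution $\iota$, so $F$ is not a function of $z$ alone, and writing an explicit model in $(z,y)$ with these poles (then evaluating it at the branch points $p$) requires essentially the same bookkeeping that the paper packages into $f_1$ and $f_2$. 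Moreover, your plan to ``fix the scalar by matching leading-order behaviour as $P\to\infty_1$'' will not directly give the value at $p$: a short computation using $\mathcal{A}(z)=\mathcal{A}(\infty)-\frac{1}{a_{1,0}z}+\mathcal{O}(z^{-2})$ and $\mathcal{A}(\infty)+d\equiv 0$ gives $F(\infty_1)=2\theta_1'(0)^2/(a_{1,0}^2\theta(0)^2)$, which is exactly half the target value $F(p)=4\theta_1'(0)^2/(a_{1,0}^2\theta(0)^2)$; so $F$ is not constant, and the matching step only normalizes whatever rational model you produce. Supplying that model concretely is the missing ingredient.
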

\begin{proof}
The proof of \eqref{equ: part (a) in the lemma to state key identities} is similar to that of \cite[Lemma 16, part (d)]{Fahs-Krasovsky-2024CPAM}. Let
\begin{align}
& f_{1}(z):=\left(\frac{\mathcal{E}(z)+\mathcal{E}^{-1}(z)}{2}\right)^2\frac{{\theta_1(\mathcal{A}(z)+d)}^2}{{\theta(\mathcal{A}(z)+d)}^2}-
\left(\frac{\mathcal{E}(z)-\mathcal{E}^{-1}(z)}{2}\right)^2\frac{{\theta_1(\mathcal{A}(z)-d)}^2}{{\theta(\mathcal{A}(z)-d)}^2},  \label{def: f_1(z)}\\
& f_2(z):=\sqrt{\mathcal{R}(z)}\left[\left(\frac{\mathcal{E}(z)+\mathcal{E}^{-1}(z)}{2}\right)^2\frac{{\theta_1(\mathcal{A}(z)+d)}^2}{{\theta(\mathcal{A}(z)+d)}^2}+
\left(\frac{\mathcal{E}(z)-\mathcal{E}^{-1}(z)}{2}\right)^2\frac{{\theta_1(\mathcal{A}(z)-d)^2}}{{\theta(\mathcal{A}(z)-d)}^2}\right],\label{def: f_2(z)}
\end{align}
where recall that $\sqrt{\mathcal{R}(z)}=\sqrt{(z-a_0)(z-a_1)(z-b_0)(z-b_1)}$ and $\mathcal{E}(z)=[(z-b_0)(z-b_1)]^{\frac{1}{4}}/[(z-a_0)(z-a_1)]^{\frac{1}{4}}$. It follows from Proposition \ref{prop:zeros of theta function} and the fact that $\theta_1(\cdot)/\theta(\cdot)$ is an elliptic function that $f_1(z)$ is meromorphic on $\mathbb{C}$. Also note that the fact the zeros of $\mathcal{E}(z)\pm{\mathcal{E}(z)}^{-1}$ and $\theta(\mathcal{A}(z)\pm d)$ are the same, thus $f_1(z)$ is an entire function. Analogous arguments shows that $f_2(z)$ is entire as well. By Liouville's theorem, it follows 
\begin{align}
&f_1(z)\equiv\lim_{z\to\infty}f_1(z)=0, \label{equ: f1 Liouville}\\
&f_2(z)\equiv\lim_{z\to\infty}f_2(z)=\frac{1}{a_{1,0}}\frac{{\theta_1'(0)}^2}{{\theta(0)}^2}+\frac{(b_1+b_0-a_0-a_1)^2}{16}\frac{{\theta_1(2d)}^2}{{\theta(2d)}^2},\label{equ: f2 Liouville}
\end{align}
where we have used the fact that  $\theta_1$ is odd (which implies that $\theta_1(0)=0$) and \eqref{equ: vecA(infty)+vecd== mod Lambda} in \eqref{equ: f1 Liouville}. 
Since
\begin{align}
f_1(z)=-\frac{1}{z^2}\left[-\frac{1}{a_{1,0}}\frac{{\theta_1'(0)}^2}{{\theta(0)}^2}+\frac{(b_1+b_0-a_0-a_1)^2}{16}\frac{{\theta_1(2d)}^2}{{\theta(2d)}^2}\right]+\mathcal{O}(z^{-3}), \quad z\to \infty, 
\end{align}
from which we obtain 
\begin{align}
\frac{1}{a^2_{1,0}}\frac{{\theta_1'(0)}^2}{{\theta(0)}^2}=\frac{(b_1+b_0-a_0-a_1)^2}{16}\frac{{\theta_1(2d)}^2}{{\theta(2d)^2}}.
\end{align}
Substituting this equality into \eqref{equ: f2 Liouville}, it follows that 
\begin{align}\label{equ: f_2(z) after simplification}
f_2(z)\equiv\frac{2}{a^2_{1,0}}\frac{{\theta_1'(0)}^2}{{\theta(0)}^2}.
\end{align}
By \eqref{def: f_2(z)} and Proposition \ref{prop:jump of mathcalA}, we also have
\begin{align}
f_2(p)=\frac{1}{2}h(p)\frac{{\theta_1(\mathcal{A}(p)+d)}^2}{{\theta(\mathcal{A}(p)+d)}^2}, \qquad 
p\in\{a_0,b_0,a_1,b_1\},
\end{align}
which gives us \eqref{equ: part (a) in the lemma to state key identities} by equating the above two formulae. 

To show \eqref{equ: part (b) in the lemma to state key identities}, similar to the proof of \cite[Equation (252)]{Fahs-Krasovsky-2024CPAM}, we have that \eqref{equ: part (b) in the lemma to state key identities} holds for $p=a_0$. It then suffices to show 
\begin{align}
2\mathrm{p}(a_0)-h(a_0)=2\mathrm{p}(a_1)-h(a_1)=2\mathrm{p}(b_0)-h(b_0)=2\mathrm{p}(b_1)-h(b_1),
\end{align}
where $\mathrm{p}(z)$ and $h(z)$ are given in \eqref{def: h(z) and p(z) in case of n=1}.
Indeed, 
\begin{equation}\label{cal: 2mathrmp(a_0)-h(a_0)}
2\mathrm{p}(a_0)-h(a_0)=a_0^2-2a_0(x_1+x_2)+a_0b_1+a_0b_0-b_0b_1+2x_1x_2.
\end{equation}
Note that
\begin{equation}
\frac{\mathrm{p}(z)}{\sqrt{{\mathcal{R}}(z)}}=1+\frac{1}{z}\left(\frac{a_0+b_0+a_1+b_1}{2}-(x_1+x_2)\right)+\mathcal{O}(z^{-2}), \qquad z\to \infty.
\end{equation}
This, together with the normalization conditions $\oint_{A_j}\mathrm{p}(z)/\sqrt{{\mathcal{R}}(z)}\dif z=0$, $j=0,1$, implies that  
\begin{align}
x_1+x_2=\frac{a_0+b_0+a_1+b_1}{2}.
\end{align}
Hence, by \eqref{cal: 2mathrmp(a_0)-h(a_0)},
\begin{equation}\label{cal: 2mathrmp(a_0)-h(a_0) (b)}
2\mathrm{p}(a_0)-h(a_0)=-(a_0a_1+b_0b_1)+2x_1x_2.
\end{equation}
A straightforward calculation shows that $2\mathrm{p}(p)-h(p)\equiv -a_0a_1-b_0b_1+2x_1x_2$ for $p\in\{b_0,a_1,b_1\}$, which leads us to the desired result.
\end{proof}
By Lemma \ref{lemma: lemma to state key identities}, it follows from \eqref{equ: int of eta} that 
\begin{align}
\hat{\mathcal{L}}_{p}=\frac{1}{\mathrm{p}(p)}\left(h(p)-\frac{4\theta^{''}(0)}{a_{1,0}^2\theta(0)}\right)=\frac{1}{\mathrm{p}(p)}(h(p)+2\mathrm{p}(p)-h(p))=2,
\end{align}
which is \eqref{equ: hat L(z) int for u}.
This completes the proof of Theorem \ref{thm: n=1 case}. \qed 

\addcontentsline{toc}{section}{Acknowledgments} 
\section*{Acknowledgments}
Taiyang Xu is partially supported by China Postdoctoral Science Foundation under grant number 2024M760480 and Shanghai Post-Doctoral Excellence Program under grant number 2024100. Lun Zhang is partially supported by National Natural Science Foundation of China under grant number 12271105. 

\appendix
\section{The confluent hypergeometric parametrix}\label{appendix:PCP}
The confluent hypergeometric parametrix  $\Phi_{\rm CH}(z)=\Phi_{\rm CH}(z;\alpha,\beta)$ with $\alpha$ and $\beta$ being parameters is a solution of the following RH problem. 

\paragraph{RH problem for $\Phi_{\rm CH}$} \label{appendix-A-paragraph-RHP Phi}

\begin{itemize}
\item[\rm (a)]
  $\Phi_{\rm CH}(z)$ is holomorphic for $z\in\mathbb{C}\backslash \Gamma_{\Phi_{\rm CH}}$, where $\Gamma_{\Phi_{\rm CH}}:=\cup_{j=1}^5 \Gamma_{\Phi_{\rm CH}, j}$ 
 with
    $$
\begin{aligned}
& \Gamma_{\Phi_{\rm CH}, 1}=e^{\frac{\pi i}{4}} \mathbb{R}^{+}, \quad \Gamma_{\Phi_{\rm CH}, 2}=e^{\frac{3 \pi i}{4}} \mathbb{R}^{+}, \quad \Gamma_{\Phi_{\rm CH}, 3}=e^{-\frac{3 \pi i}{4}} \mathbb{R}^{+}, \\
& \Gamma_{\Phi_{\rm CH}, 4}=e^{-\frac{\pi i}{2}} \mathbb{R}^{+}, \quad \Gamma_{\Phi_{\rm CH}, 5}=e^{-\frac{\pi i}{4}} \mathbb{R}^{+};
\end{aligned}
$$ 
see Figure \ref{fig:RHP Phi} for an illustration.
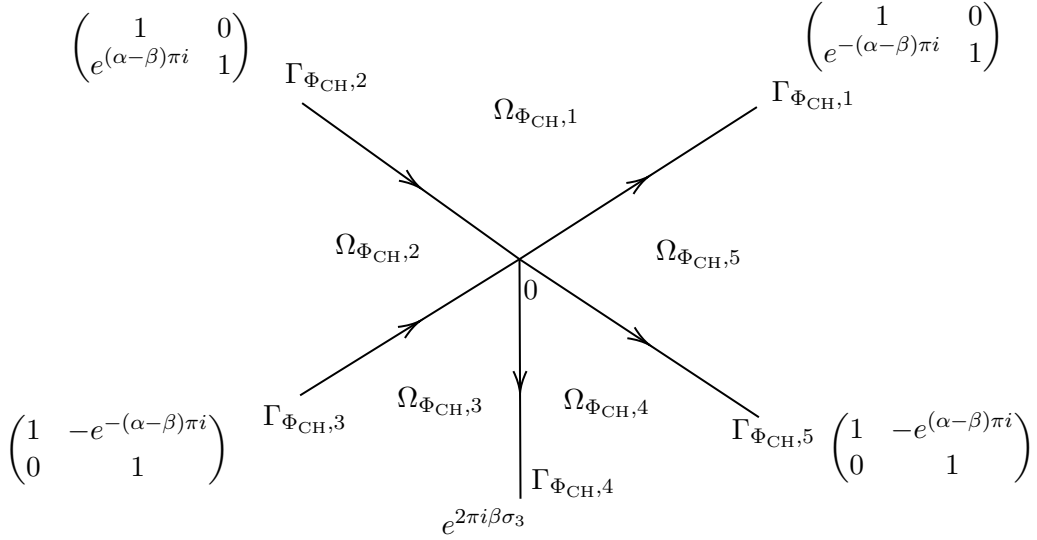
\begin{figure}[htbp]
\centering
\tikzset{every picture/.style={line width=0.75pt}} 
\begin{tikzpicture}[x=0.75pt,y=0.75pt,yscale=-1,xscale=1]
\draw    (221,72) -- (329.5,150.5) ;
\draw [shift={(280.11,114.77)}, rotate = 215.89] [color={rgb, 255:red, 0; green, 0; blue, 0 }  ][line width=0.75]    (10.93,-3.29) .. controls (6.95,-1.4) and (3.31,-0.3) .. (0,0) .. controls (3.31,0.3) and (6.95,1.4) .. (10.93,3.29)   ;
\draw    (329.5,150.5) -- (449,230) ;
\draw [shift={(394.25,193.57)}, rotate = 213.63] [color={rgb, 255:red, 0; green, 0; blue, 0 }  ][line width=0.75]    (10.93,-3.29) .. controls (6.95,-1.4) and (3.31,-0.3) .. (0,0) .. controls (3.31,0.3) and (6.95,1.4) .. (10.93,3.29)   ;
\draw    (329.5,150.5) -- (448,74) ;
\draw [shift={(393.79,109)}, rotate = 147.15] [color={rgb, 255:red, 0; green, 0; blue, 0 }  ][line width=0.75]    (10.93,-3.29) .. controls (6.95,-1.4) and (3.31,-0.3) .. (0,0) .. controls (3.31,0.3) and (6.95,1.4) .. (10.93,3.29)   ;
\draw    (220,219) -- (329.5,150.5) ;
\draw [shift={(279.84,181.57)}, rotate = 147.97] [color={rgb, 255:red, 0; green, 0; blue, 0 }  ][line width=0.75]    (10.93,-3.29) .. controls (6.95,-1.4) and (3.31,-0.3) .. (0,0) .. controls (3.31,0.3) and (6.95,1.4) .. (10.93,3.29)   ;
\draw    (330,271) -- (329.5,150.5) ;
\draw [shift={(329.78,217.75)}, rotate = 269.76] [color={rgb, 255:red, 0; green, 0; blue, 0 }  ][line width=0.75]    (10.93,-3.29) .. controls (6.95,-1.4) and (3.31,-0.3) .. (0,0) .. controls (3.31,0.3) and (6.95,1.4) .. (10.93,3.29)   ;
\draw (330,159.4) node [anchor=north west][inner sep=0.75pt]    {$0$};
\draw (454,59.4) node [anchor=north west][inner sep=0.75pt]    {$\Gamma _{\Phi_{\rm CH}, 1}$};
\draw (211,50.4) node [anchor=north west][inner sep=0.75pt]    {$\Gamma_{\Phi_{\rm CH}, 2}$};
\draw (315,68.4) node [anchor=north west][inner sep=0.75pt]    {$\Omega _{\Phi_{\rm CH}, 1}$};
\draw (236,136.4) node [anchor=north west][inner sep=0.75pt]    {$\Omega _{\Phi_{\rm CH}, 2}$};
\draw (267,213.4) node [anchor=north west][inner sep=0.75pt]    {$\Omega _{\Phi_{\rm CH}, 3}$};
\draw (350,214.4) node [anchor=north west][inner sep=0.75pt]    {$\Omega _{\Phi_{\rm CH}, 4}$};
\draw (396,139.4) node [anchor=north west][inner sep=0.75pt]    {$\Omega _{\Phi_{\rm CH}, 5}$};
\draw (469,18.4) node [anchor=north west][inner sep=0.75pt]    {$\begin{pmatrix}
1 & 0\\
e^{-( \alpha -\beta ) \pi i} & 1
\end{pmatrix}$};
\draw (482,225.5) node [anchor=north west][inner sep=0.75pt]    {$\begin{pmatrix}
1 & -e^{( \alpha -\beta ) \pi i}\\
0 & 1
\end{pmatrix}$};
\draw (103,24.4) node [anchor=north west][inner sep=0.75pt]    {$\begin{pmatrix}
1 & 0\\
e^{( \alpha -\beta ) \pi i} & 1
\end{pmatrix}$};
\draw (71,226.4) node [anchor=north west][inner sep=0.75pt]    {$\begin{pmatrix}
1 & -e^{-( \alpha -\beta ) \pi i}\\
0 & 1
\end{pmatrix}$};
\draw (200,222.4) node [anchor=north west][inner sep=0.75pt]    {$\Gamma _{\Phi_{\rm CH}, 3}$};
\draw (334,254.4) node [anchor=north west][inner sep=0.75pt]    {$\Gamma _{\Phi_{\rm CH}, 4}$};
\draw (434,229.4) node [anchor=north west][inner sep=0.75pt]    {$\Gamma _{\Phi_{\rm CH}, 5}$};
\draw (289,274.4) node [anchor=north west][inner sep=0.75pt]    {$e^{2\pi i\beta \sigma _{3}}$};
\end{tikzpicture}
\caption{The jump contours and regions in the RH problem for $\Phi_{\rm CH}$.}
\label{fig:RHP Phi}
\end{figure}

\item[\rm (b)]  For $z\in\Gamma_{\Phi_{\rm CH}}$, we have $\Phi_{{\rm CH}, +}(z)=\Phi_{{\rm CH}, -}(z)J_{\Phi_{\rm CH}}(z)$, where
\begin{equation}\label{jump4phi}
J_{\Phi_{\rm CH}}(z)=\begin{cases}{\begin{pmatrix}
    1 & 0 \\
e^{-(\alpha-\beta)\pi i} & 1
\end{pmatrix}}, & z \in \Gamma_{\Phi_{\rm CH}, 1}, \\[.4cm]
\begin{pmatrix}
    1 & 0 \\
e^{(\alpha-\beta)\pi i} & 1
\end{pmatrix}, & z \in \Gamma_{\Phi_{\rm CH}, 2}, \\[.4cm]
\begin{pmatrix}
    1 & -e^{-(\alpha-\beta)\pi i} \\
0 & 1
\end{pmatrix}, & z \in \Gamma_{\Phi_{\rm CH}, 3}, \\[.4cm]
e^{2 \pi i \beta \sigma_3}, & z \in \Gamma_{\Phi_{\rm CH}, 4}, \\
\begin{pmatrix}
    1 & -e^{(\alpha-\beta)\pi i} \\
0 & 1
\end{pmatrix}, & z \in \Gamma_{\Phi_{\rm CH}, 5} .\end{cases}
\end{equation}

\item [\rm (c)] As $z\rightarrow\infty$,  we have
\begin{equation}\label{phi aym inf}
  \Phi_{\rm CH}(z)=\left(I+\frac{\Phi_{{\rm CH},1}}{z}+\mathcal{O}\left(\frac{1}{z^2}\right)\right) z^{-\beta \sigma_3} e^{-\frac{i}{2} z \sigma_3},
\end{equation}
where 
\begin{equation}\label{def:PhiCH1}
    \Phi_{{\rm CH},1}=\begin{pmatrix}
        -\left(\alpha^2-\beta^2\right) i & -e^{-\pi i \beta} \frac{\Gamma(1+\alpha-\beta)}{i \Gamma(\alpha+\beta)} \\
e^{\pi i \beta} \frac{\Gamma(1+\alpha+\beta)}{i \Gamma(\alpha-\beta)} & \left(\alpha^2-\beta^2\right) i
    \end{pmatrix}.
\end{equation}

\item [\rm (d)]
If $\alpha>-\frac{1}{2}$ and $ 2\alpha \not\in \mathbb{N}$, we have as $z\to 0$,
\begin{equation}\label{eq: Phi0}
\Phi_{\rm CH}(z)= \Phi^{(0)}_{\rm CH}(z)z^{\alpha \sigma_3} C_j, \qquad ~z\in \Omega_{\Phi_{\rm CH}, j}, \quad j=1,2,3,4,5, 
\end{equation}
for some constant matrix $C_j$, where $\Phi_{\rm CH}^{(0)}(z)$ is holomorphic in a neighborhood of the origin, the regions $\Omega_{\Phi_{\rm CH}, j}$ are illustrated in Figure \ref{fig:RHP Phi}, and the branch for $z^{\alpha \sigma_3}$ is chosen such that $\arg z\in (-\pi/2, 3\pi/2)$.
If $ 2\alpha \in \mathbb{N}$, we have as $z\to 0$,
  \begin{equation}\label{eq: Phi02}
  \Phi_{\rm CH}(z)= \widehat{\Phi}^{(0)}_{\rm CH}(z)z^{\alpha \sigma_3}
  \begin{pmatrix}
                                 1 &(-1)^{2\alpha}\frac{\sin(\pi(\alpha+\beta))}{\pi} \log  z \\
                                 0 &1
                                 \end{pmatrix} 
                                 \widehat{C}_j, ~z\in \Omega_{\Phi_{\rm CH}, j}, ~j=1,2,3,4,5, \end{equation}
for some constant matrix $\hat{C}_j$,  where $\widehat{\Phi}_{\rm CH}^{(0)}(z)$ is holomorphic near the origin and the branches for $z^{\alpha \sigma_3}$ and $\log z$ are chosen such that $\arg z\in (-\pi/2, 3\pi/2)$.
\end{itemize}
This model RH problem has appeared in \cites{TC-Its-Kra-Duke2011, Its-Kra-ContempMath, Xu-Zhao-CMP-2020}, which can also be obtained from the model RH problem in \cites{Dei-Its-Kra-Annals2011, Foul-Mart-Sousa-JofApp} by performing a suitable transformation. The function $\Phi_{\rm CH}$ can be constructed explicitly by using the confluent hypergeometric function. More precisely, we have
\begin{align}\label{solution4phi}
\Phi_{\rm CH}(z)
    & =   e^{-\frac{i z}{2}}\begin{pmatrix}
e^{-\frac{\pi i(\alpha+\beta)}{2}} \frac{\Gamma(1+\alpha-\beta)}{\Gamma(1+2 \alpha)} \phi(\alpha+\beta, 1+2 \alpha; i z) & -e^{\frac{\pi i(\alpha-\beta)}{2}} \frac{\Gamma(2 \alpha)}{\Gamma(\alpha+\beta)} \phi(-\alpha+\beta, 1-2 \alpha; i z) \\
e^{-\frac{\pi i(\alpha-\beta)}{2}} \frac{\Gamma(1+\alpha+\beta)}{\Gamma(1+2 \alpha)} \phi(1+\alpha+\beta, 1+2 \alpha; i z) & e^{\frac{\pi i(\alpha+\beta)}{2}} \frac{\Gamma(2 \alpha)}{\Gamma(\alpha-\beta)} \phi(1-\alpha+\beta, 1-2 \alpha; i z)
 \end{pmatrix}
 \nonumber 
 \\
& ~~\times z^{\alpha \sigma_3} \begin{pmatrix}
1 & \frac{\sin (\pi(\alpha+\beta))}{\sin (2 \pi \alpha)} \\
0 & 1
\end{pmatrix}, \qquad z\in \Omega_{\Phi,1},
\end{align}
for $\alpha>-1/2$ and $2 \alpha \notin \mathbb{N}$. Here, $\phi(a,b;z)$ is the confluent hypergeometric function defined in \eqref{equ:expansion for phi}. The expression of  $\Phi_{\rm CH}$ in the other regions is then determined by using \eqref{solution4phi} and the jump condition \eqref{jump4phi}. If $2\alpha\in \mathbb{N}$, $\Phi_{\rm CH}$ can be built in a similar way and we omit the details here.

\section{The Bessel parametrix}\label{appendix:Bessel}
The Bessel parametrix  $\Phi_{\rm Be}$ is a solution of the following RH problem. 
\paragraph{RH problem for $\Phi_{\rm Be}$}
\begin{itemize}
    \item[\rm (a)] $\Phi_{\rm Be}(z)$ is holomorphic for $z\in\mathbb{C}\setminus\Gamma_{\rm Be}$, where 
    $\Gamma_{\rm Be}:=(-\infty,0)\cup e^{\frac{2\pi i}{3}}(0,+\infty)\cup e^{-\frac{2\pi i}{3}}(0,+\infty)$ with the orientation illustrated in the Figure \ref{fig:RHP Phi_{Be}}.

\begin{figure}[h]
\centering
\tikzset{every picture/.style={line width=0.75pt}} 
\begin{tikzpicture}[x=0.75pt,y=0.75pt,yscale=-1,xscale=1]
\draw    (224,151) -- (384,150) ;
\draw [shift={(310,150.46)}, rotate = 179.64] [color={rgb, 255:red, 0; green, 0; blue, 0 }  ][line width=0.75]    (10.93,-3.29) .. controls (6.95,-1.4) and (3.31,-0.3) .. (0,0) .. controls (3.31,0.3) and (6.95,1.4) .. (10.93,3.29)   ;
\draw    (284,50) -- (384,150) ;
\draw [shift={(338.24,104.24)}, rotate = 225] [color={rgb, 255:red, 0; green, 0; blue, 0 }  ][line width=0.75]    (10.93,-3.29) .. controls (6.95,-1.4) and (3.31,-0.3) .. (0,0) .. controls (3.31,0.3) and (6.95,1.4) .. (10.93,3.29)   ;
\draw    (384,150) -- (290,250) ;
\draw [shift={(341.79,194.9)}, rotate = 133.23] [color={rgb, 255:red, 0; green, 0; blue, 0 }  ][line width=0.75]    (10.93,-3.29) .. controls (6.95,-1.4) and (3.31,-0.3) .. (0,0) .. controls (3.31,0.3) and (6.95,1.4) .. (10.93,3.29)   ;
\draw (392,141.4) node [anchor=north west][inner sep=0.75pt]    {$0$};
\draw (328,46.4) node [anchor=north west][inner sep=0.75pt]    {$\begin{pmatrix}
1 & 0\\
1 & 1
\end{pmatrix}$};
\draw (333,207.4) node [anchor=north west][inner sep=0.75pt]    {$\begin{pmatrix}
1 & 0\\
1 & 1
\end{pmatrix}$};
\draw (218,106.4) node [anchor=north west][inner sep=0.75pt]    {$\begin{pmatrix}
0 & 1\\
-1 & 0
\end{pmatrix}$};
\end{tikzpicture}
\caption{The jump contour $\Gamma_{\rm Be}$ of the RH problem for $\Phi_{\mathrm{Be}}$.}
\label{fig:RHP Phi_{Be}}
\end{figure}
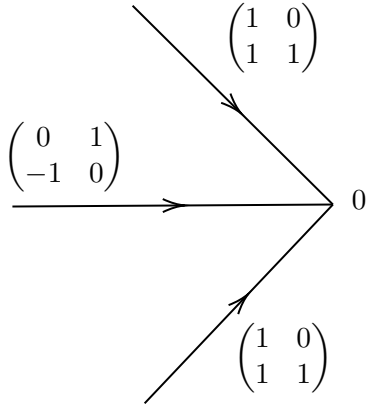

\item[\rm (b)] For $z\in \Gamma_{\rm Be}$, we have  $
    \Phi_{{\rm Be}, +}(z)= \Phi_{{\rm Be}_-}(z)J_{\Phi_{\rm Be}}(z)$,
  where
\begin{equation}\label{jump of PhiBe}
      J_{\Phi_{\rm Be}}(z)=\left\{\begin{array}{ll}
  {\begin{pmatrix}
0 & 1\\
-1 & 0
\end{pmatrix}},& {z\in(-\infty,0)}, \\
  {
\begin{pmatrix}
1 & 0\\
1 & 1
\end{pmatrix}},& z\in e^{\pm\frac{2}{3}\pi i}(0,+\infty).
\end{array}\right.
\end{equation}

\item[\rm (c)]As $z\to\infty$, we have
\begin{align}\label{equ:asy of PhiBe at infty}
\Phi_{\rm Be}(z)=\left(\pi z^{\frac{1}{2}}\right)^{-\frac{\sigma_3}{2}}M
\left(I+\frac{\Phi_{{\rm Be}, 1}}{z^{\frac{1}{2}}}+
\frac{\Phi_{{\rm Be},2}}{z}+\mathcal{O}\left(z^{-\frac{3}{2}}\right)\right)e^{z^{\frac{1}{2}}\sigma_3},
\end{align}
where
\begin{align}\label{equ: Bessel coe M,be1,be2}
M=\frac{1}{\sqrt{2}}\begin{pmatrix} 1 & i\\ i& 1 \end{pmatrix}, \quad
\Phi_{{\rm Be}, 1}=\frac{1}{8}\begin{pmatrix}-1 & -2i \\ -2i & 1 \end{pmatrix}, \quad
\Phi_{{\rm Be}, 2}=-\frac{3}{128}\begin{pmatrix} 1 & -4i \\ 4i & 1 \end{pmatrix}.
\end{align}
\item[\rm (d)] As $z\to 0$, we have 
\begin{align}\label{asy: PhiBe at 0}
\Phi_{\rm Be}(z)=\left\{
\begin{aligned}
&\begin{pmatrix}
    \mathcal{O}(1) & \mathcal{O}(\log |z|)\\
    \mathcal{O}(z) & \mathcal{O}(\log |z|)
\end{pmatrix}, & \vert \arg z  \vert<\frac{2\pi}{3},\\
&\begin{pmatrix}
    \mathcal{O}(\log |z|) & \mathcal{O}(\log |z|)\\
    \mathcal{O}(\log |z|) & \mathcal{O}(\log |z|)
\end{pmatrix}, & \frac{2\pi}{3}<\vert \arg z  \vert<\pi.
\end{aligned}
\right.
\end{align}
\end{itemize}
By \cite{KMAV-AdV-2004}, the unique solution to the above RH problem is given by
\begin{align}\label{equ:PhiBe in cone}
\Phi_{\rm Be}(z)=\left\{
\begin{aligned}
&\begin{pmatrix}
I_0(z^{\frac{1}{2}}) & \frac{i}{\pi}K_0(z^{\frac{1}{2}}) \\
\pi iz^{\frac{1}{2}}I_0'(z^{\frac{1}{2}}) & -z^{\frac{1}{2}}K_0'(z^{\frac{1}{2}})
\end{pmatrix}, & |\arg z|<\frac{2}{3}\pi,\\
&\frac{1}{2}\begin{pmatrix}
H_{0}^{(1)}(-iz^{\frac{1}{2}}) &  H_{0}^{(2)}(-iz^{\frac{1}{2}}) \\
\pi z^{\frac{1}{2}}{H_{0}^{(1)}}'(-iz^{\frac{1}{2}}) & \pi z^{\frac{1}{2}}{H_{0}^{(2)}}'(-iz^{\frac{1}{2}})
\end{pmatrix}, & \frac{2}{3}\pi<\arg z<\pi,\\
&\frac{1}{2}\begin{pmatrix}
H_{0}^{(2)}(iz^{\frac{1}{2}}) &  -H_{0}^{(1)}(iz^{\frac{1}{2}}) \\
-\pi z^{\frac{1}{2}}{H_{0}^{(2)}}'(iz^{\frac{1}{2}}) & \pi z^{\frac{1}{2}}{H_{0}^{(1)}}'(iz^{\frac{1}{2}})
\end{pmatrix}, & -\pi<\arg z<-\frac{2}{3}\pi,
\end{aligned}
\right.
\end{align}
where $I_0$ and $K_0$ are the modified first and second Bessel functions respectively, and $H_0^{(1)}$, $H_0^{(2)}$ are the Hankel functions of the first
and second kind; see \cite{NISTbook} for definitions and more properties of these special functions.

\addcontentsline{toc}{section}{References}

\end{document}